\pgfplotsset{compat=1.14}
\newcommand*{\rom}[1]{\expandafter\@slowromancap\romannumeral #1@}
\newcommand{\gettikzxy}[3]{%
  \tikz@scan@one@point\pgfutil@firstofone#1\relax
  \edef#2{\the\pgf@x}%
  \edef#3{\the\pgf@y}%
}
\def\Nn{50}
\newcommand{\sA}{\mathcal A}
\newcommand{\sB}{\mathcal B}
\newcommand{\sC}{\mathcal C}
\newcommand{\sD}{\mathcal D}
\newcommand{\sF}{\mathcal F}
\newcommand{\sH}{\mathcal H}
\newcommand{\sI}{\mathcal I}
\newcommand{\sL}{\mathcal L}
\newcommand{\sM}{\mathcal M}
\newcommand{\sP}{\mathcal P}
\newcommand{\sS}{\mathcal S}
\newcommand{\sT}{\mathcal T}
\newcommand{\R}{\mathbb R}
\newcommand{\E}{\mathbb E}
\newcommand{\NN}{\mathbb N}
\newcommand{\bF}{\mathbb F}
\newcommand{\bT}{\mathbb T}
\newcommand{\QQ}{\mathbb Q}
\newcommand{\PP}{\mathbb P}
\newtheorem{theorem}{Theorem}[section]
\newtheorem{prop}[theorem]{Proposition}
\newtheorem{lem}[theorem]{Lemma}
\newtheorem{cor}[theorem]{Corollary}
\newtheorem{remark}[theorem]{Remark}
\newtheorem{rem}[theorem]{Remark}
\newtheorem{sass}{Standing Assumption}
\newtheorem{ass}[theorem]{Assumption}
\newtheorem{defn}[theorem]{Definition}
\newtheorem{prob}{Problem}
\newcommand{\cgreen}{\color{green}}
\newcommand{\cblue}{\color{blue}}
\newcommand{\cpur}{\color{purple}}
\begin{document}
\title{Model-independent upper bounds for the prices of Bermudan options with convex payoffs}
\author{
David Hobson\thanks{Department of Statistics, University of Warwick \textit{d.hobson@warwick.ac.uk}} \hspace{10mm}
Dominykas Norgilas\thanks{Department of Mathematics, NC State University \textit{dnorgil@ncsu.edu}}
}
\date{\today}
\maketitle

\begin{abstract}
Suppose $\mu$ and $\nu$ are probability measures on $\R$ satisfying $\mu \leq_{cx} \nu$. Let $a$ and $b$ be convex functions on $\R$ with $a \geq b \geq 0$. We are interested in finding
\[ \sup_{\sM} \sup_{\tau} \E^{\sM} \left[ a(X) I_{ \{ \tau = 1 \} } + b(Y) I_{ \{ \tau = 2 \} } \right] \]
where the first supremum is taken over consistent models $\sM$ (i.e., filtered probability spaces $(\Omega, \sF, \bF, \PP)$ such that $Z=(z,Z_1,Z_2)=(\int_{\R} x \mu(dx) = \int_{\R} y \nu(dy), X, Y)$ is a $(\bF,\PP)$ martingale, where $X$ has law $\mu$ and $Y$ has law $\nu$ under $\PP$) and $\tau$ in the second supremum is a $(\bF,\PP)$-stopping time taking values in $\{1,2\}$.

Our contributions are first to characterise and simplify the dual problem, and second to completely solve the problem {under some structural assumptions on the measures $\mu$ and $\nu$ (namely that $\mu$ and $\nu$ are absolutely continuous probability measures that satisfy the Dispersion Assumption).}
%in the symmetric case under the dispersion assumption. { %Our results show that the models equipped with the natural filtration of $Z$ are not rich enough, even when marginal laws $(\mu,\nu)$ are atom-free.
A key finding is that the canonical set-up in which the filtration is that generated by $Z$ is not rich enough to define an optimal model and additional randomisation is required. This holds even though the  marginal laws $\mu$ and $\nu$ are atom-free.

The problem has an interpretation of finding the robust, or model-free, no-arbitrage bound on the price of a Bermudan option with two possible exercise dates, given the prices of co-maturing European options. 

\indent Keywords: Robust pricing, Bermudan option, Martingale optimal transport, duality, superhedging. \\
\indent 2020 Mathematics Subject Classification:  60G40, 60G42, 91G20.
\end{abstract}
\tableofcontents

\section{The model-free approach to derivative pricing: problem motivation}

Suppose $S = (S_t)_{t \geq 0}$ is the price process of a risky asset in a financial market with riskless bank account paying deterministic rate of interest $r = (r_t)_{t \geq 0}$. According to standard no-arbitrage theory the price of a call option with strike $k$ and maturity $T$ (i.e., a payoff of $(S_T-k)^+$ at time $T$) is given by $\E^{\QQ}[e^{- \int_0^T r_t dt} (S_T - k)^+]$ where $\QQ$ is a risk neutral measure and $\E^{\QQ}$ denotes expectations with respect to $\QQ$. In the classical approach we work on a filtered probability space $(\Omega,\sF,\bF = (\sF_t)_{0 \leq t \leq T},\PP)$  and assume that there exists an equivalent martingale measure $\QQ$ such that the discounted price process $Z = (Z_t)_{0 \leq t \leq T}$, defined by $Z_t = e^{- \int_0^t r_s ds} S_t$, is a $(\bF,\QQ)$-martingale (and $\QQ$ is equivalent to $\PP$ on $\sF=\sF_T$). Then $\E^{\QQ}[e^{- \int_0^T r_t dt} (S_T - k)^+] = \E^{\QQ}[(Z_T - {K})^+]$, where ${K} = ke^{- \int_0^T r_t dt}$ is the discounted strike. In a complete market, the price of the call can be justified as the lowest price with which it is possible to replicate the call option. As a simple example, $(\Omega,\sF,\bF = (\sF_t)_{0 \leq t \leq T},\PP)$ may support a Brownian motion $W$ and if $S$ is given by $S_t = S_0 e^{\sigma W_t + \mu t}$ and $r$ is constant then call prices are given by the Black-Scholes option pricing formula.

In well-functioning markets vanilla option prices are not given by a model, but rather are fixed by supply and demand. Then we may still have $C(K,T) = \E^\QQ[e^{- \int_0^T r_t dt} (S_T - K e^{\int_0^T r_t dt})^+] = \E^\QQ[(Z_T - K)^+]$, but now it is the call prices which are given (as traded prices on the financial market) and the probabilistic model as represented by $\QQ$ which is unknown---typically we care about the risk-neutral probabilities $\QQ$ rather than the physical measure $\PP$. Nonetheless, if the set of call prices is sufficiently rich, then we can infer quantities such as $\QQ(Z_T>K)$. As Breeden and Litzenberger~\cite{BreedenLitzenberger:78} conclude, this means that we do not need a model to price an option with payoff $a(S_T)$ at time $T$: instead we can write it as a combination of call and put payoffs whose prices are known.

What can we say about the prices of exotic or path-dependent options? Assume that we are given the prices of a class of derivatives (these become our vanilla, liquidly traded derivatives whose prices can be observed in the financial market), and that there exists a stochastic model such that in the model the discounted price process $Z = (Z_t)_{t \geq 0}$ is a martingale under the risk-neutral measure $\QQ$ and the prices of vanilla derivatives are given by expectations under $\QQ$. Then the expected payoff under $\QQ$ is a candidate price for the exotic option. But, there may be many models which are consistent with the given prices of the vanilla derivatives. Then the robust derivative pricing problem becomes to find the supremum (and infimum) of the possible prices given by expectation, where the supremum (resp., infimum) 
is taken over all models (for which $Z$ is a martingale) which agree with the quoted prices of the vanilla options in the sense that the expected discounted payoff under the model agrees with the traded price for each vanilla derivative. See Hobson~\cite{Hobson:11} for a survey of this approach.

Suppose that the time index set is $\bT= \{ 0,1,2 \}$, suppose that the initial price of the risky asset is known, and that we know the prices of call and put options of all strikes with maturities $T=1$ and $T=2$. This is a reasonable class to take as the class of vanilla options. Then, with $Z_1 = X$ and $Z_2=Y$, we know $C(K,1) = \E^{\QQ}[(X-K)^+]$ and $C(K,2) = \E^{\QQ}[(Y-K)^+]$ for all $K>0$. It follows that we know the laws of both $X$ and $Y$ (but note that we have no information about the joint law beyond the marginals). We denote these laws by $\mu$ and $\nu$, respectively. We also know that $(Z_0=z,Z_1,Z_2)$
is a martingale. It then follows that $Z_0 = \int x \mu(dx) = \int y \nu(dy)$ and that $\mu$ and $\nu$ are in convex order\footnote{Two integrable (Borel) measures $\eta,\chi$ on $\R$, with $\eta(\R)=\chi(\R)<\infty$, are in convex order ($\eta\leq_{cx}\chi$) if $\int fd\eta\leq\int f d\chi$ for all convex $f:\R\to\R$.}, denoted by $\mu \leq_{cx} \nu$. For a given (Borel) $c:\R^2\to\R$ and a path-dependent random payoff $c = c(X,Y)$, the problem is to find 
$\sup \E[ c(X,Y) ]$, where the supremum is taken over possible joint laws of $(X,Y)$ which respect the marginals $(X \sim \mu,Y \sim \nu)$ and the martingale property $\E[Y|X] = X$. The case $c(x,y)=\pm|y-x|$ corresponding to a forward start straddle was studied by Hobson and Neuberger~\cite{HobsonNeuberger:12} and Hobson and Klimmek~\cite{HobsonKlimmek:15} (see also Beiglb\"ock and Juillet~\cite{BeiglbockJuillet:16} and Henry-
Labord\`ere and Touzi~\cite{HenryLabordereTouzi:16}, where the authors construct a model that is optimal for a certain (but large) class of cost functions $c$). More generally, this is the martingale optimal transport problem, as introduced by Beiglb\"ock et al~\cite{BHLP:13} and Galichon et al~\cite{GHLT:14}.

One fruitful approach to the martingale optimal transport problem is via the dual. In the context of the previous paragraph, the dual approach involves searching for
univarite functions $\phi, \psi$ and $\theta$ such that
\begin{equation}
    \label{eq:mot}
    c(x,y) \leq \phi(x) + \psi(y) + \theta(x) (y-x),\quad  x,y \in \R.
\end{equation}
If \eqref{eq:mot} holds and $\E[Y|X]=X$, then, since $\E[\theta(X)(Y-X)|X] = \theta(X)(\E[Y|X]-X) = 0$, we have $\E[c(X,Y)] \leq \E[\phi(X) + \psi(Y)]$. The primal problem of finding $\sP = \sup \E[c(X,Y)]$, where the supremum is taken 
over joint laws with $X \sim \mu$ and $Y \sim \nu$ which respect the martingale property, is thus related to the dual problem of finding $\sD = \inf \left( \int \phi(x) \mu(dx) + \int \psi(y) \nu(dy) \right)$, where the infimum is taken over all trios 
$(\phi, \psi, \theta)$ for which \eqref{eq:mot} holds. The martingale optimal transport literature %(Beiglb\"ock and Juilett~\cite{BeiglbockJuilett}, Henry-Labord\`{e}re and Touzi~{HenryLabordereTouzi:18} \ldots) 
is concerned with formalising the above set-up, with deriving sufficient conditions for strong duality $\sP = \sD$ (rather than the weak duality $\sP \leq \sD$, which follows very easily) and with (explicitly constructing or) characterising the form of the primal and dual optimisers (where they exist) for particular choices of objective function $c$.

In this article we are concerned with Bermudan-style payoffs in a two-period model. In the setting of the previous paragraph, given laws $\mu$ and $\nu$ in convex order, (Borel) functions $a,b:\R\to\R$, and setting $c(\cdot,1)=a$, $c(\cdot,2)=b$, the primal problem is to find
\begin{equation}\label{eq:primal} \sP = \sP(\mu,\nu;a,b) = \sup_{\sM \in M(\mu,\nu)} \sup_{\tau \in \sT_{1,2}} \E^{\sM}[ c(Z_\tau,\tau)], 
\end{equation}
where $M=M(\mu,\nu)$ is the set of  $(\mu,\nu)$-consistent models (recall, a model is a filtered probability space $(\Omega, \sF, \bF, \QQ)$ supporting a stochastic process $Z=(Z_0,Z_1,Z_2)$ such that $Z$ is a $(\bF,\QQ)$-martingale with given marginals $X \equiv Z_1 \sim \mu$ and $Y \equiv Z_2 \sim \nu$) and $\sT=\sT_{1,2}$ is the set of $\bF$-stopping times taking values in $\{1,2\}$.
As introduced in Neuberger~\cite{Neuberger:07} and Hobson and Neuberger~\cite{HobsonNeuberger:17}, the dual problem is to find %{\cblue $\E$ wrt $\sM$?}
%\begin{eqnarray} 
\begin{equation}
\label{eq:dual}
\sD = \sD(\mu,\nu;a,b)  =  \inf_{\phi,\psi,\theta_1,\theta_2}  
%\E[ \phi(X)+\psi(Y)]  
\left(  \int \phi(x) \mu(dx) + \int \psi(y) \nu(dy) \right),
\end{equation}
where the infimum is taken over quadruples $(\phi,\psi,\theta_1,\theta_2): \R \to \R$ such that, for all $x,y\in\R$,
\begin{eqnarray}
    a(x) & \leq & \phi(x) + \psi(y) + \theta_1(x)(y-x), \label{eq:adual}\\
    b(y) & \leq & \phi(x) + \psi(y) + \theta_2(x)(y-x).
    \label{eq:bdual}
\end{eqnarray}
Then, if \eqref{eq:adual} and \eqref{eq:bdual} hold, for any $\sigma\in\sT$ we have that (almost surely)
\[ c(Z_\sigma,\sigma) = a(Z_\sigma)I_{ \{\sigma=1\} } + b(Z_\sigma) I_{ \{ \sigma= 2 \} } \leq \phi(Z_1) + \psi(Z_2) + \theta_\sigma(Z_1) (Z_2 - Z_1) .\]
In particular, whatever the stopping strategy of the Bermudan option holder, a hedging strategy of 
\begin{enumerate}
    \item holding a portfolio of call and put options with maturity $T=1$ and payoff $\phi$, 
    \item holding a portfolio of call and put options with maturity $T=2$ and payoff $\psi$,
    \item if the Bermudan option is exercised at $t=1$, holding $\theta_1 =\theta_1(Z_1)$ units of the risky asset between times one and two,
    \item otherwise, if the Bermudan option is not exercised at $t=1$ holding $\theta_2=\theta_2(Z_1)$ units of the risky asset between times one and two 
\end{enumerate}
is a superreplicating strategy. Furthermore, under each $(\mu,\nu)$-consistent model $\sM\in M(\mu,\nu)$ we have that $\E^\sM[c(Z_\sigma,\sigma)]\leq\int \phi(x) \mu(dx) + \int \psi(y) \nu(dy)$, from which the weak duality $\sP\leq\sD$ follows.

Neuberger~\cite{Neuberger:07} and Hobson and Neuberger~\cite{HobsonNeuberger:17} studied the Bermudan option pricing problem for assets taking values on a lattice and showed (using linear programming methods) that  there is no duality gap $\sP=\sD$. One of the key insights was that the filtration matters and it is not enough to simply consider the primal problem as one of finding the optimal martingale transport in the canonical filtration for the price process. The results in \cite{HobsonNeuberger:17} were re-proved and extended (e.g., to the non-lattice case) by Aksamit et al.~\cite{ADOT:19}, where the authors, instead of focusing on the filtration, considered the impact of enlarging the set of traded assets. They show that in a wide set of circumstances strong duality holds. Bayraktar et al.~\cite{BHZ:15} also consider the robust hedging of Bermudan-style options in a discrete-time framework. They consider both upper and lower bounds, but, since they restrict attention to a setting where the filtration is the canonical filtration, they find a duality gap---subsequently Bayraktar and Zhou~\cite{BZ:17} show that this gap can be removed if the set-up is extended to allow for randomized stopping times.

Hobson and Norgilas~\cite{HobsonNorgilas:19} studied the Bermudan option pricing problem for the case of put
options. (In the case when the risk-free interest rate is positive, a Bermudan call is trivial since the optimal strategy is to wait until maturity to exercise the call.) 
The authors showed that there is no duality gap, the model which achieves the highest price for the put is associated to the left-curtain martingale coupling of Beiglbock and Juillet~\cite{BeiglbockJuillet:16}, and it is possible to write down the cheapest superhedging portfolio. For a given strike for the Bermudan put, the optimal portfolio involves vanilla puts and calls with a finite number of strikes. The results in \cite{HobsonNorgilas:19} are obtained under the assumption that the initial law $\mu$ is atom-free, and, {in the setting of \cite{HobsonNorgilas:19},} %particular, (at least for puts) 
it is enough to look for models that are equipped with the canonical filtration of the price process. Later, Hobson and Norgilas \cite{HobsonNorgilas:19x} extended the results of \cite{HobsonNorgilas:19} {to} the case of a general initial law $\mu$: in this setting, the optimal model is (still) associated to the \textit{lifted} left-curtain coupling, but the information generated by the price process {alone} is no longer sufficient and additional randomization is {required.}

In this paper we extend the results of \cite{HobsonNeuberger:17} and \cite{HobsonNorgilas:19} to general convex payoffs. 
Our results are in two directions. First, we show that the set of superreplicating strategies over which we search in the dual problem can be greatly simplified. Second, {under some structural assumptions on the laws $\mu$ and $\nu$ (which are trivially satisfied in the case where $\mu$ and $\nu$ are (in convex order and) both normal, log-normal or shifted exponential)
%in the case of symmetric payoff functions and symmetric laws $\mu$ and $\nu$ satisfying the dispersion assumption introduced in \cite{HobsonKlimmek:15} (in both cases the symmetry is about the same point), 
we characterize the optimal model and the optimal superhedge.} Based on \cite{HobsonNorgilas:19}, one could conjecture that, in the case the marginals $(\mu,\nu)$ are atom-free, it is enough to restrict the search {to} the set of models that are equipped with the canonical filtration of the price process. From our results, however, it follows that even in this regular case {there is no maximiser within this class.} 
Instead a richer class of models is required.

{The structural conditions that we require are that $\mu$ and $\nu$ satisfy the Dispersion Assumption  (see Definition~\ref{def:dispersion})
introduced in
Hobson and Klimmek \cite{HobsonKlimmek:15}.
This assumption states that there exist points $e^{L}=e^L(\mu,\nu)< e^R=e^R(\mu,\nu)$ such that if $\rho$ (resp., $\eta$) is the density of $\mu$ (resp., $\nu$) and $(\alpha_\mu,\beta_\mu)$ (resp.,   $(\alpha_\nu,\beta_\nu)$) is the support of $\mu$ (resp., $\nu$), then $\rho < \eta$ on $(\alpha_\nu,e^L) \cup (e^R,\beta_\nu)$ whereas $\eta < \rho$ on $(e^L,e^R)$.}

%such a probabilistic structure is not rich enough and leads to the duality gap.

\textit{Notation}: for a measurable function $h$ we write $h^+$ for its positive part, and define $h^c$ to be the convex hull of $h$, so that $h^c$ is the largest convex function $H$ satisfying $H \leq h$; for a convex function $g$ we write $g'$ for its right-derivative -- we could in fact use any subdifferential. Sometimes we abbreviate $\int a(x) \mu(dx)$ to $\int a d\mu$.

For any $w,z\in \R$ with $w < z$, and two functions $g,h:\R\to \R \cup\{\infty\}$ with $g(w),h(z)\in\R$, the line that goes through $(w,g(w))$ and $(z,h(z))$ is denoted by $T^{w,z}_{g,h}:\R\to\R$. 

We denote the point mass at $x$ by $\delta_x$. If $f,g\in\R$ are such that $f<x < g$, then 
$\pi_x^{f,g}$, given by $\pi_x^{f,g} = \frac{x-f}{g-f} \delta_g + \frac{g-x}{g-f} \delta_f$, is the two-point distribution with support $\{ f,g \}$ and mean $x$.
If $f \leq x=g$ or $f=x < g$ then $\pi_x^{f,g} = \delta_x$.

For a (Borel) measure $\xi$ on $\R$, the barycenter of $\xi$ is defined by $\bar\xi=\int_\R x\xi(dx)/\xi(\R)$.

\section{Simplifying the dual problem}

In this section we want to study 
the cheapest superhedge for a Bermudan-style option which pays $a(Z_1)$ if exercised at time-1 and $b(Z_2)$ if exercised at time-2, where we assume that the prices of European options imply that $X \equiv Z_1$ has law $\mu$ and $Y \equiv Z_2$ has law $\nu$. Necessarily, we must have $\mu \leq_{cx} \nu$, so that both $\mu$ and $\nu$ are integrable (i.e., elements of $L^1$) and $\int x \mu(dx) = \int y \nu(dy)$. We also assume that both $a,b :\R\to\R_+$ are non-negative.

\begin{defn}[Superhedge, \protect{Hobson and Neuberger~\cite[Definition 2.7, 2.8]{HobsonNeuberger:17}, Hobson and Norgilas~\cite[Definition 2]{HobsonNorgilas:19}}] \label{def:superhedge}
%Suppose $a: \R \to \R_+$ and $b : \R \to \R_+$.
$(\phi,\psi, \theta=\{ \theta_i \}_{i=1,2} )$ is a superhedge for the Bermudan option with payoff $(a,b)$ if
\eqref{eq:adual} and \eqref{eq:bdual} hold for all $x,y\in\R$.
\begin{comment}
\begin{eqnarray}
a(x) & \leq & \phi(x) + \psi(y) + \theta_1(x)(y-x)  \label{eq:ainequality} \\
b(y) & \leq & \phi(x) + \psi(y) + \theta_2(x)(y-x) \label{eq:binequality}
\end{eqnarray}
\end{comment}
\end{defn}
The terminology is explained by the fact that if $(\phi,\psi, \theta)$ is a superhedge then $a(Z_1) I_{ \{\tau = 1 \} } + b(Z_2)I_{ \{\tau > 1 \} } \leq  \phi(Z_1) + \psi(Z_2) + \theta(Z_1) (Z_2-Z_1)$ holds almost surely, where
$\theta = I_{ \{\tau = 1 \} } \theta_1 + I_{ \{\tau = 2 \} } \theta_2$.
We write $\sS = \sS(a,b)$ for the set of superhedges, in the sense of Definition \ref{def:superhedge}, for the Bermudan option with payoff $(a,b)$.

\begin{defn}[Hedging cost]
Suppose $(\phi,\psi, \theta=\{ \theta_i \}_{i=1,2} )$ is a superhedge for the Bermudan option. The hedging cost (HC) associated to $(\phi,\psi, \theta=\{ \theta_i \}_{i=1,2} )$ is defined as
$HC(\phi,\psi,\theta) = \int \phi(x) \mu(dx) + \int \psi(y) \nu(dy)$.
\end{defn}
{Note that the component $\theta(Z_1)(Z_2-Z_1)$ does not contribute to the cost of the hedge. This corresponds to the notion that 
$Z$ is the discounted price of a traded asset, and is related to the notion that $Z$ is a martingale under any consistent martingale model.}

\begin{remark}
At this stage we do not assume that the integrals in the definition of the hedging cost are finite. However, we use the convention that $(-\infty) + (+\infty) = +\infty$. Thus, if $\int \phi(x) I_{  \{ \phi(x) < 0 \} } \mu(dx) = - \infty$ and $\int \phi(x) I_{  \{ \phi(x) > 0 \} } \mu(dx) = \infty$ then we define $\int \phi(x) \mu(dx) =  \infty$ (similarly for integrals of $\psi$ against $\nu$) and if
either $\int \phi(x) \mu(dx)=\infty$ or $\int \psi(x) \nu(dx)=\infty$ then we define $\int \phi(x) \mu(dx) + \int \psi(x) \nu(dx)=\infty$.
\end{remark}
%{\bf Note: do we need to worry about whether these integrals exist? By the end, $\phi \geq 0$ and $\psi \geq b \geq 0$ so they do (they may be $+\infty$) But now?}

Since $H(\phi,\psi,\theta)$ does not depend on $\theta$ we write $HC(\phi,\psi)$ instead of $HC(\phi,\psi,\theta)$.

The problem of finding the cheapest superhedging strategy is the dual problem:
\begin{prob}[Dual (superhedging) problem]
\label{prob:dual}
Find
\[ \sD = {\sD}(\mu, \nu;a,b) = \inf_{(\phi,\psi, \theta) \in \sS(a,b)} \left\{ \int \phi(x) \mu(dx) + \int \psi(y) \nu(dy) \right\} =\inf_{(\phi,\psi, \theta) \in \sS(a,b)} \sH(\phi,\psi). \]
\end{prob}

It follows from Hobson and Norgilas~\cite{HobsonNorgilas:19} that any function $\psi \geq b$, with $\psi$ convex, can be used to generate a superhedge:

\begin{lem}[\protect{Hobson and Norgilas~\cite[Lemma 2]{HobsonNorgilas:19}}]
\label{lem:HN}
Suppose $\psi \geq b$ with $\psi$ convex. Define $\phi = (a-\psi)^+$ and set $\theta_2=0$ and $\theta_1= - \psi'$. Then
$(\phi,\psi, \{\theta_i \}_{i = 1,2})$ is a superhedge.
\label{lem:phifrompsi}
\end{lem}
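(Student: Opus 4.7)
The plan is to verify the two defining inequalities of a superhedge, namely \eqref{eq:adual} and \eqref{eq:bdual}, directly from the definitions of $\phi$, $\theta_1$ and $\theta_2$, using only the convexity of $\psi$ and the inequality $\psi \geq b$.

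First I would dispatch \eqref{eq:bdual}. Since $\theta_2 = 0$, this inequality reduces to $b(y) \leq \phi(x) + \psi(y)$. By construction $\phi = (a-\psi)^+ \geq 0$, and by hypothesis $\psi(y) \geq b(y)$, so $\phi(x) + \psi(y) \geq 0 + b(y) = b(y)$, and \eqref{eq:bdual} holds for all $x,y \in \R$.

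Next I would turn to \eqref{eq:adual}. With $\theta_1 = -\psi'$, this asks that $a(x) \leq \phi(x) + \psi(y) - \psi'(x)(y-x)$ for all $x,y$. The natural move is to split on the sign of $a(x) - \psi(x)$. If $a(x) \leq \psi(x)$, then $\phi(x) = 0$, and the right-hand side equals $\psi(y) - \psi'(x)(y-x)$; convexity of $\psi$ (with $\psi'$ a subdifferential) gives $\psi(y) \geq \psi(x) + \psi'(x)(y-x)$, i.e., $\psi(y) - \psi'(x)(y-x) \geq \psi(x) \geq a(x)$. If instead $a(x) > \psi(x)$, then $\phi(x) = a(x) - \psi(x)$, and \eqref{eq:adual} becomes the tautological rearrangement $\psi(x) \leq \psi(y) - \psi'(x)(y-x)$, again just the subgradient inequality for $\psi$ at $x$.

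There is essentially no obstacle here; the only point that deserves a moment's attention is that $\psi'$ really does serve as a subgradient at every point, which is guaranteed because $\psi$ is convex and we are free to use the right-derivative (as the paper's notational convention allows). The two inequalities together establish that $(\phi,\psi,\{\theta_i\}_{i=1,2}) \in \sS(a,b)$, proving the lemma.
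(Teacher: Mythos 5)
Your proposal is correct and follows essentially the same argument as the paper: \eqref{eq:bdual} from $\phi\geq 0$ and $\psi\geq b$, and \eqref{eq:adual} from the subgradient inequality for the convex $\psi$. Your case split on the sign of $a(x)-\psi(x)$ is just an unpacked version of the paper's one-line bound $a(x)\leq (a(x)-\psi(x))^+ +\psi(x)$, so the two proofs are the same in substance.
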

\begin{proof}
We have, for all $x,y\in\R$,
\[ b(y) \leq \psi(y) \leq \phi(x)+\psi(y) = \phi(x) +\psi(y) + \theta_2(x)(y-x) \]
and \eqref{eq:bdual} follows. Also, by the convexity of $\psi$,
\[ \psi(x) \leq \psi(y) - \psi'(x)(y-x)=\psi(y) + \theta_1(x)(y-x) \]
and we have
\[ a(x) \leq (a(x) -\psi(x))^+ + \psi(x) \leq \phi(x) + \psi(y) + \theta_1(x)(y-x) \]
and \eqref{eq:adual} follows.
\end{proof}

\begin{defn}[Superhedge generated by $\psi$]
If $\psi \geq b$ with $\psi$ convex, we say $((a-\psi)^+,\psi, - \psi',0)$ is the superhedge generated by $\psi$.
\end{defn}

Let $ %\tilde{\sS} = 
\tilde{\sS}(b) = \{ \mbox{$\psi \geq b$ with $\psi$ convex.} \}$. 
Let $(\phi,\psi,\theta)$ be given by $(\phi,\psi,(\theta_{i})_{i=1,2}) = ((a-\psi)^+,\psi, - \psi',0)$.
If follows from Lemma~\ref{lem:phifrompsi} that each $\psi \in \tilde{\sS}(b)$ generates an element $(\phi,\psi,\theta) \in \sS(a,b)$.
Then, for $\psi \in {\tilde{\sS}}(b)$ we can define $\widetilde{HC}(\psi) = HC((a-\psi)^+,\psi)$, which is the hedging cost associated with the superhedge $(\phi,\psi, \{\theta_i \}_{i = 1,2}) = (\phi,\psi, -\psi',0)$.

\begin{prob}[Restricted Dual (superhedging) problem]\label{prob:restrictedDual}
Find
\[ \tilde{\sD} = \tilde{\sD}(\mu, \nu; a,b)  = \inf_{ \psi \in \tilde{S}(b) } \left\{ \int (a(x)-\psi(x))^+ \mu(dx) + \int \psi(y) \nu(dy) \right\} =\inf_{ \psi \in \tilde{S}(b) }\widetilde{HC}(\psi). \]
\end{prob}

Clearly $\sD \leq \tilde{\sD}$.
Now suppose $b$ is convex. The main result of this section is that the cheapest superreplicating strategy is of a form generated by $\psi \in \tilde{\sS}(b)$.
\begin{theorem}
\label{thm:simpledual}
Suppose $b$ is convex. Then $\sD = \tilde{\sD}$
\end{theorem}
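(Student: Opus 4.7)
The inequality $\sD \leq \tilde{\sD}$ is immediate from Lemma~\ref{lem:phifrompsi}, so the task is to show $\sD \geq \tilde{\sD}$ under the assumption that $b$ is convex. The plan is to take an arbitrary $(\phi,\psi,\theta_1,\theta_2) \in \sS(a,b)$ and produce a $\tilde\psi \in \tilde{\sS}(b)$ whose generated hedging cost is at most $HC(\phi,\psi)$.

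First I would make the normalisation $\phi \geq 0$: by adding a constant $c$ to $\phi$ and subtracting it from $\psi$, both \eqref{eq:adual} and \eqref{eq:bdual} are preserved and the total hedging cost is unchanged (since $\mu(\R)=\nu(\R)=1$), so WLOG $\phi \geq 0$. Then I would extract two consequences of the constraints. From \eqref{eq:adual}, taking $\sup_{\theta_1}\inf_y$ in the obvious way yields $\phi(x) \geq a(x) - \psi^{**}(x)$, where $\psi^{**}$ is the convex envelope of $\psi$; combined with $\phi \geq 0$ this gives $\phi \geq (a-\psi^{**})^+$. From \eqref{eq:bdual} one writes $\psi(y) \geq b(y) - \beta(y)$, where $\beta(y) := \inf_x[\phi(x) + \theta_2(x)(y-x)]$ is concave (as an infimum of affine functions) and satisfies $\beta \leq \phi$ pointwise (take $x=y$).

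The natural candidate is $\tilde\psi := (\psi \vee b)^{**}$, the convex envelope of $\psi \vee b$. It is convex by construction, and since $b$ is convex with $b \leq \psi \vee b$, we have $\tilde\psi \geq b$; thus $\tilde\psi \in \tilde{\sS}(b)$. Because $\psi^{**}$ is itself a convex minorant of $\psi \vee b$, we have $\psi^{**} \leq \tilde\psi$, which gives
\[ \int (a - \tilde\psi)^+\, d\mu \leq \int (a-\psi^{**})^+\, d\mu \leq \int \phi\, d\mu, \]
controlling the $\mu$-integral in $\widetilde{HC}(\tilde\psi)$. On the $\nu$-side, $\tilde\psi \leq \psi \vee b$ yields $\int \tilde\psi\, d\nu \leq \int \psi\, d\nu + \int (b-\psi)^+\, d\nu$, and the pointwise bound $(b-\psi)^+ \leq \beta^+$ together with the concavity of $\beta$ and the convex order $\mu \leq_{cx} \nu$ (applied to the convex function $\alpha := -\beta$, giving $\int \beta\, d\nu \leq \int \beta\, d\mu \leq \int \phi\, d\mu$) brings this surplus back under the $\phi$-budget.

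The main obstacle is that these two bounds cannot be summed naively without double counting the $\int \phi\, d\mu$ budget. I expect closing the estimate to require a more careful decomposition of the base space according to the sign of $b-\psi$, using $\phi \geq (a-\psi^{**})^+$ where $\psi \geq b$ (so that $\tilde\psi = \psi^{**}$ and the $\nu$-correction vanishes) and using the inequality $(b-\psi)^+ \leq \beta^+ \leq \phi$ only on $\{\psi < b\}$ to absorb the $\nu$-side surplus against the slack in $\int \phi\, d\mu - \int (a-\psi^{**})^+\, d\mu$. The key structural ingredient that makes the coupling work is precisely that $\mu \leq_{cx} \nu$ applied to the convex function $\alpha = -\beta$ arising from the $\theta_2$-piece of \eqref{eq:bdual} — this is where the convexity hypothesis on $b$ is used, since it is what guarantees $\psi + \beta \geq b$ with $\beta$ concave.
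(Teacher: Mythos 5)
There is a genuine gap, and it is not just the bookkeeping you flag at the end: the candidate $\tilde\psi=(\psi\vee b)^c$ is the wrong object, and no refinement of the $\{\psi<b\}$/$\{\psi\ge b\}$ decomposition will close the estimate with it. The point is that a cheap superhedge may take $\psi$ strictly \emph{larger} than necessary and pay for it with a \emph{negative} $\phi$; your candidate only raises $\psi$ where $\psi<b$ and never subtracts the convex surplus of $\psi$ over $b$, so its generated cost can strictly exceed the cost of the superhedge you started from. Concretely, take $a=b$ convex, $\psi(y)=b(y)+y^2$, $\phi(x)=-x^2$, $\theta_1(x)=-b'(x)-2x$, $\theta_2(x)=-2x$: then \eqref{eq:adual} and \eqref{eq:bdual} hold (both reduce to $b(y)-b'(x)(y-x)\ge b(x)$ and $(y-x)^2\ge 0$), and $HC(\phi,\psi)=\int b\,d\nu+\int y^2\,\nu(dy)-\int x^2\,\mu(dx)$, while your $\tilde\psi=b+y^2$ gives $\widetilde{HC}(\tilde\psi)=\int b\,d\nu+\int y^2\,\nu(dy)$, which is strictly larger whenever $\nu$ has finite second moment and $\mu\neq\delta_0$. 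The same example shows your opening normalisation ``WLOG $\phi\ge0$'' is invalid: $\phi$ need not be bounded below, so it cannot be shifted up by a constant (and the constraints only force $\phi\ge (a-\psi)\vee(-(\psi-b)^c)$, which has no uniform lower bound). Finally, the $\nu$-side absorption you sketch does not go through as stated: from $(b-\psi)^+\le\beta^+$ you would need to control $\int\beta^+\,d\nu$, but $\beta^+$ is not concave, so convex order applied to $-\beta$ gives only $\int\beta\,d\nu\le\int\beta\,d\mu$, which is not what is needed; the ``double counting'' of the $\int\phi\,d\mu$ budget that you yourself point out is a real obstruction, not a technicality.

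The paper closes the argument by modifying $\psi$ in the opposite direction as well: after replacing $\psi$ by $\psi^c$ (Proposition~\ref{prop:convexhull}, which uses convexity of $b$ and Lemma~\ref{lem:addline}), one shows $\phi\ge(a-\psi)\vee(-(\psi-b)^c)$ (Proposition~\ref{prop:betterphi}) and then takes $\hat\psi=\psi-(\psi-b)^c$, which is convex by Lemma~\ref{lem:BHN2} and satisfies $\hat\psi\ge b$. Subtracting the convex surplus $g=(\psi-b)^c$ moves cost from the $\mu$-side to the $\nu$-side, and the comparison $HC((a-\hat\psi)^+,\hat\psi)\le HC(\phi,\psi)$ reduces to the single inequality $\int g\,d\nu\ge\int g\,d\mu$, i.e.\ convex order applied to the convex function $g$ (Proposition~\ref{prop:besthedge}); this is exactly the mechanism that recovers the negative part of $\phi$ in the example above ($g(y)=y^2$, $\hat\psi=b$). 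If you want to salvage your approach, the fix is to change the candidate from $(\psi\vee b)^c$ to $\psi^c-(\psi^c-b)^c$ and to apply convex order to $(\psi^c-b)^c$ rather than to $-\beta$.
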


The idea behind the proof is to take a general superhedging strategy $(\phi,\psi,\theta) \in \sS(a,b)$ and to show that it can be modified to give another superhedging strategy which is generated by an element of $\hat{\psi} \in \tilde{\sS}(b)$, and which has a lower hedging cost. We do this in three stages. 
First we show that given $(\phi,\psi,\theta) \in \sS(a,b)$ we can replace $\psi$ with $\psi^c$, so that $(\phi,\psi^c,\theta) \in \sS(a,b)$ is still a superreplicating strategy. Clearly, this can only reduce the hedging cost. Hence, without loss of generality, we may restrict attention to superhedges for which $\psi$ is convex.
Second, we show that, given $(\phi,\psi,\theta) \in \sS(a,b)$ with $\psi$ convex, we can take a particular choice of $\phi$ (namely $\phi=\max \{ (a-\psi), (-(\psi-b)^c) \}$) and we still have a superhedge. Again, we will show that this can only lower the hedging cost. Hence we may restrict attention to superhedges for which $\psi$ is convex and $\phi$ takes this particular form.
Finally, we show that, given $\psi$ convex and $\phi$ of the particular form, we can introduce $\hat{\psi}$ with $\hat{\psi} = \psi - (\psi-b)^c \geq b$, and such that the hedging cost associated with the superhedge generated by $\hat{\psi}$ is no larger than the hedging cost associated with the superhedge $(\phi,\psi,\theta)$.

We begin with some preliminaries from Beiglb\"{o}ck et al~\cite{BeiglbockHobsonNorgilas:22}.

\begin{lem}[\protect{\cite[Lemma 2.3]{BeiglbockHobsonNorgilas:22}}]
\label{lem:BHN2}
Suppose $f$ and $g$ are convex. Set $G = g - (g-f)^c$. Then $G$ is convex.
\end{lem}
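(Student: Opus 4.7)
Let $h=g-f$ and $h^c=(g-f)^c$, so that $G=g-h^c$. The plan is to exploit the one-dimensional structure of the convex envelope and verify convexity of $G$ by showing that its one-sided derivatives are monotone non-decreasing at every point.

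First I will invoke the standard one-dimensional structure theorem for the convex envelope: the contact set $C:=\{x : h^c(x)=h(x)\}$ is closed, its complement decomposes into a (countable) disjoint union of open intervals, and on each such gap $I=(a,b)$ the function $h^c$ coincides with the affine function $\alpha+\beta x$ interpolating the endpoint values of $h$ at $a$ and $b$. Consequently $G=g-h=f$ on $C$, which is convex by hypothesis, while $G=g-\alpha-\beta x$ on each gap, which is also convex as the sum of a convex and an affine function. Thus $G$ is convex on each piece; what remains is to glue the pieces.

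For global convexity it suffices to check $G'_-(c)\leq G'_+(c)$ at each transition point $c$ — a gap endpoint, or a point of $C$ lying between two consecutive gaps. The decisive ingredient is that, for any gap of slope $\beta$ with a finite endpoint $c\in C$, the number $\beta$ is a subgradient of the convex function $h^c$ at $c$; combined with $h^c\leq h$ and $h^c(c)=h(c)$ this yields the generalised one-sided subgradient inequality $h(x)-h(c)\geq \beta(x-c)$ for all $x\in\R$, equivalently
\[ g(x)-g(c)-\beta(x-c)\;\geq\; f(x)-f(c). \]
Passing to the limits $x\to c^+$ and $x\to c^-$ in this inequality produces
\[ g'_+(c)-\beta\;\geq\; f'_+(c), \qquad g'_-(c)-\beta\;\leq\; f'_-(c), \]
which, together with $f'_-(c)\leq f'_+(c)$ and the analogous inequality for $g$, rearrange in each of the three transition configurations (left endpoint of a gap, right endpoint of a gap, and a point of $C$ between two consecutive gaps of slopes $\beta_1\leq \beta_2$) to the required $G'_-(c)\leq G'_+(c)$. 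Since $G$ has non-decreasing one-sided derivatives, it is convex.

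The main obstacle will be carrying through the subgradient bookkeeping cleanly without any differentiability assumptions on $f$ and $g$, and handling gaps with an endpoint at $\pm\infty$ (where the matching argument at that end is vacuous). Modulo these routine technicalities the whole argument is an elementary case check built on the one-dimensional structure theorem together with the two displayed one-sided subgradient inequalities.
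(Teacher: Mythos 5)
Your ingredients are essentially all correct, and they are the right ones: for $h=g-f$ (continuous, with $h^c$ assumed finite) the contact set $C=\{h^c=h\}$ is closed, $h^c$ is affine on each component of the complement and is the chord of $h$ there when the endpoints are finite; consequently $G\ge f$ everywhere with $G=f$ on $C$, $G=g-(\mathrm{affine})$ is convex on each gap, and your two one-sided subgradient inequalities at a finite gap endpoint $c$ (via $h\ge h(c)+\beta(\cdot-c)$) are valid and do give $G'_-(c)\le f'_-(c)\le f'_+(c)\le G'_+(c)$. (For the record, the present paper does not prove this lemma at all --- it quotes it from \cite{BeiglbockHobsonNorgilas:22} --- so the comparison here is with your argument on its own terms.)

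The genuine gap is the gluing step. You assert that global convexity follows once $G'_-(c)\le G'_+(c)$ is checked at ``transition points'', enumerated as gap endpoints and points of $C$ lying between two \emph{consecutive} gaps. This tacitly assumes that gaps and contact intervals alternate in a locally finite way. In general $C$ is just a closed set: $g-f$ can be an essentially arbitrary continuous function (e.g.\ any $C^2$ function is such a difference), so $C$ can be a Cantor-type set, and then there are contact points which are accumulation points of infinitely many gaps on one or both sides. Your three configurations never compare derivatives at or across such points, and the inequalities you do have at contact points ($G'_-\le f'_-$ and $G'_+\ge f'_+$) give no upper bound on $G'_+$ there; more fundamentally, ``convex on each gap plus $G'_-\le G'_+$ at the (countably many) junctions you list'' is simply not a sufficient criterion for convexity when the junction structure is not locally finite. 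The good news is that your ingredients already suffice for a correct gluing that avoids derivative bookkeeping altogether: if $G$ exceeded the chord $L$ of some interval $[x,z]$, the maximum of $G-L$ over $[x,z]$ is positive and attained; since $G-L$ is convex on the closure of each gap, a maximiser can be taken in $C\cap(x,z)$ (if $C=\emptyset$ then $h^c$ is globally affine and $G$ is convex outright); but at such a point $w$ one has $G(w)=f(w)$ and $G\ge f$, so the convex function $f-L$ would attain a positive maximum at the interior point $w$ while being $\le 0$ at $x$ and $z$, a contradiction. Alternatively you can stay with the derivative route, but then you must additionally treat accumulation points of $C$ (where in fact $G'_\pm=f'_\pm$, by computing the one-sided derivatives along sequences in $C$) and verify monotonicity of $G'_+$ between \emph{arbitrary} pairs of points, not only at your transition points; as written, that step fails rather than being a routine technicality.
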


\begin{lem}[\protect{\cite[Lemma 2.4]{BeiglbockHobsonNorgilas:22}}]
\label{lem:BHN1}
Suppose $g$ is convex and $h$ is measurable. Then $(h-g)^c = (h^c-g)^c$.
\end{lem}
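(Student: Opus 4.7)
The plan is to prove the equality by two inequalities, each obtained from the universal property of the convex hull (namely that $f^c$ is the \emph{largest} convex function dominated by $f$), together with the fact that the sum of two convex functions is convex.

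For the easy direction, I would start from the pointwise inequality $h^c \leq h$, which holds by the definition of the convex hull. Subtracting $g$ gives $h^c - g \leq h - g$, and then taking the convex hull of both sides (which preserves order, since the convex hull is monotone in its argument) yields $(h^c-g)^c \leq (h-g)^c$.

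For the reverse inequality, let $F := (h-g)^c$. By construction $F$ is convex with $F \leq h-g$, equivalently $F+g \leq h$. Since $g$ is assumed convex and $F$ is convex, the sum $F+g$ is convex, so the definition of $h^c$ as the largest convex minorant of $h$ gives $F+g \leq h^c$, i.e.\ $F \leq h^c - g$. Now $F$ is convex and dominated by $h^c - g$, so applying the same universal property to $h^c - g$ yields $F \leq (h^c-g)^c$. This is exactly $(h-g)^c \leq (h^c-g)^c$.

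Combining the two inequalities gives $(h-g)^c = (h^c-g)^c$, as required. There is no real obstacle here: the only subtlety worth flagging is the conventions on the convex hull when $h-g$ (or $h^c-g$) may not be bounded below, so that $f^c$ might take the value $-\infty$; under the paper's convention this causes no issue, since the two steps ``convex plus convex is convex'' and ``largest convex minorant'' remain valid in the extended real-valued setting. No measurability hypothesis on $h$ is needed beyond what is required to make $h^c$ well-defined, and $g$ being convex is used exactly once, in the step where we need $F+g$ to be convex.
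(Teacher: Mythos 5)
Your proof is correct: both directions follow cleanly from the universal property of the convex hull (monotonicity for $(h^c-g)^c \leq (h-g)^c$, and ``convex $+$ convex is convex'' plus maximality of the convex minorant for the reverse), and your remark about the extended-real-valued convention when $h-g$ has no finite convex minorant covers the only degenerate case. Note that the paper itself states this lemma as a citation to \cite[Lemma 2.4]{BeiglbockHobsonNorgilas:22} and gives no proof, so there is nothing to compare against beyond observing that your argument is the standard one for this fact.
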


Also, we have the following `obvious' result.
\begin{lem}
\label{lem:addline}
Suppose $L$ is a straight line. Then $(g + L)^c = g^c + L$.
\end{lem}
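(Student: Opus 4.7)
The plan is to use the defining property of the convex hull: $h^c$ is the largest convex function dominated by $h$. The proof will be a two-sided inequality, each direction a single line.

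First I would write $L(x)=\alpha x+\beta$, so $L$ is affine and hence both $L$ and $-L$ are convex. For the direction $g^c+L\le(g+L)^c$: observe that $g^c+L$ is convex (convex plus affine) and satisfies $g^c+L\le g+L$ since $g^c\le g$; being a convex minorant of $g+L$, it is dominated by the largest such minorant, namely $(g+L)^c$.

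For the reverse direction $(g+L)^c\le g^c+L$: consider the function $(g+L)^c-L$. It is convex (convex minus affine), and since $(g+L)^c\le g+L$, we have $(g+L)^c-L\le g$. Therefore $(g+L)^c-L$ is a convex minorant of $g$, so $(g+L)^c-L\le g^c$, i.e., $(g+L)^c\le g^c+L$.

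There is no real obstacle here; the only thing to be careful about is that the operation $\cdot\mapsto\cdot+L$ preserves convexity in both directions (because $L$ is affine, not merely convex), which is what makes the argument symmetric. The same proof would fail if $L$ were replaced by an arbitrary convex function.
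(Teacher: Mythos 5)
Your proof is correct: the two-sided argument via the characterisation of $h^c$ as the largest convex minorant of $h$, using that adding or subtracting the affine function $L$ preserves convexity, is exactly the standard argument; the paper itself states this lemma as ``obvious'' and gives no proof, so your write-up simply supplies the intended justification. No gaps.
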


\begin{prop}
Suppose $b$ is convex and $(\phi,\psi, \{ \theta_i \}_{i=1,2} )$ is a superhedge. Then so is $(\phi,\psi^c, \{ \theta_i \}_{i=1,2} )$. Moreover, $HC(\phi,\psi^c) \leq HC(\phi,\psi)$.
\label{prop:convexhull}
\end{prop}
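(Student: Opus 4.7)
The plan is to verify two things separately: that $(\phi,\psi^c,\{\theta_i\}_{i=1,2})$ still satisfies both superhedging inequalities \eqref{eq:adual} and \eqref{eq:bdual}, and that the hedging cost does not increase. The cost comparison is the easy half: since $\psi^c \leq \psi$ pointwise on $\R$ and the $\phi$-integral is unchanged, monotonicity of the integral gives $\int \psi^c\,d\nu \leq \int \psi\,d\nu$, hence $HC(\phi,\psi^c) \leq HC(\phi,\psi)$ (the sign convention from the Remark absorbs the possibility that the integrals are not individually well-defined). So the real content is checking the two superhedge inequalities.

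For \eqref{eq:adual}, I would fix $x \in \R$ and rewrite the original superhedge inequality as the statement that the affine-in-$y$ function $L_x(y) := a(x) - \phi(x) - \theta_1(x)(y-x)$ lies below $\psi(y)$ for every $y$. Since $L_x$ is itself convex and $\psi^c$ is by definition the largest convex function dominated by $\psi$, necessarily $L_x \leq \psi^c$, which rearranges to \eqref{eq:adual} with $\psi$ replaced by $\psi^c$.

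For \eqref{eq:bdual}, which is where the convexity of $b$ enters, I would invoke the variational description of the convex hull,
\[ \psi^c(y_0) = \inf\bigl\{ \lambda \psi(y_1) + (1-\lambda)\psi(y_2) : \lambda \in [0,1],\ \lambda y_1 + (1-\lambda)y_2 = y_0 \bigr\}. \]
For any admissible triple $(y_1,y_2,\lambda)$, applying the original \eqref{eq:bdual} at $y = y_1$ and $y = y_2$ and taking the $\lambda$-convex combination in $y$ gives
\[ \lambda b(y_1) + (1-\lambda) b(y_2) \leq \phi(x) + \bigl(\lambda \psi(y_1) + (1-\lambda)\psi(y_2)\bigr) + \theta_2(x)(y_0-x), \]
where I have used $\lambda y_1 + (1-\lambda)y_2 = y_0$ to collapse the $\theta_2$-term. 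Convexity of $b$ yields $b(y_0) \leq \lambda b(y_1) + (1-\lambda) b(y_2)$, and passing to the infimum over admissible triples gives $b(y_0) \leq \phi(x) + \psi^c(y_0) + \theta_2(x)(y_0-x)$, exactly \eqref{eq:bdual} for $\psi^c$.

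The only potentially delicate point is the treatment of points where $\psi^c$ might not be real-valued or where the infimum above is not attained. Both issues dissolve: the original inequality \eqref{eq:adual} (applied at any single $x$) already exhibits an affine minorant of $\psi$, forcing $\psi^c$ to be finite everywhere on $\R$; and working at the level of the infimum rather than at an attaining triple bypasses the attainment question. Measurability of $\psi^c$ is automatic since convex functions on $\R$ are Borel.
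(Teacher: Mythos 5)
Your proof is correct, and the overall strategy coincides with the paper's: fix $x$, absorb the affine term $\phi(x)+\theta_i(x)(y-x)$, observe that replacing $\psi$ by $\psi^c$ can only decrease the cost, and use convexity of $b$ for \eqref{eq:bdual}. The difference is purely in how the convex-hull manipulation is justified. The paper takes convex hulls in $y$ on both sides of \eqref{eq:adual} and \eqref{eq:bdual} and invokes Lemma~\ref{lem:addline} (that $(g+L)^c=g^c+L$ for a line $L$) together with $b=b^c$; you instead argue directly from the definition of $\psi^c$ as the largest convex minorant for \eqref{eq:adual}, and for \eqref{eq:bdual} you unfold the convex hull via its two-point variational representation $\psi^c(y_0)=\inf\{\lambda\psi(y_1)+(1-\lambda)\psi(y_2):\lambda y_1+(1-\lambda)y_2=y_0\}$ (valid in one dimension by Carath\'eodory) and average the original inequality over $y_1,y_2$. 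This makes the argument self-contained---you effectively reprove the content of Lemma~\ref{lem:addline} in the special case needed---at the cost of invoking the (standard, but unproved here) two-point formula, whereas the paper's version is shorter because that lemma is already on the shelf. Your added remarks that \eqref{eq:adual} furnishes an affine minorant of $\psi$, so that $\psi^c$ is finite everywhere, and that the cost comparison needs no integrability beyond the stated sign convention, are correct and slightly more careful than the paper's ``trivial''.
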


\begin{proof}
The inequality $HC(\phi,\psi^c) \leq HC(\phi,\psi)$ is trivial and thus we focus on showing that $(\phi,\psi^c, \{ \theta_i \}_{i=1,2} )\in\sS(a,b)$. From \eqref{eq:adual} we have $a(x) \leq  \phi(x) + \psi(y) + \theta_1(x)(y-x)$ for all $x,y\in\R$. Take $x$ as fixed and consider taking the convex hull on both sides with respect to $y$. Then, using Lemma~\ref{lem:addline}, we have that $a(x) \leq \phi(x) + \psi^c(y) + \theta_1(x)(y-x)$.

Similarly, from \eqref{eq:bdual} we have $b(y)  \leq  \phi(x) + \psi(y) + \theta_2(x)(y-x)$. Fix $x$ and let $L(y) = \phi(x) + \theta_2(x)(y-x)$.
Then $b \leq \psi+L$. Taking the convex hull on both sides (with respect to $y$) and using Lemma~\ref{lem:addline} together with the convexity of $b$, we have that
$b = b^c \leq (\psi + L)^c = \psi^c + L$, i.e., for all $x,y\in\R$,
$$
b(y)\leq  \phi(x) + \psi^c(y) + \theta_2(x)(y-x).
$$
\end{proof}

From now on we may and do assume that $\psi$ is convex.

\begin{prop}
Suppose $b$ is convex and $(\phi,\psi, \{ \theta_i \}_{i=1,2} )$ is a superhedge with $\psi$ convex. Then there exists $\{ \tilde{\theta}_i \}_{i=1,2}$ such that $((a-\psi) \vee (-(\psi - b)^c),\psi, \{ \tilde{\theta}_i \}_{i=1,2} )$ is a superhedge.  Moreover, $HC((a-\psi) \vee (-(\psi - b)^c),\psi) \leq HC(\phi,\psi)$.
\label{prop:betterphi}
\end{prop}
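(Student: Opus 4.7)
The plan is to construct the new superhedge by taking $\tilde{\phi} = (a-\psi)\vee(-(\psi-b)^c)$, $\tilde{\theta}_1 = -\psi'$, and $\tilde{\theta}_2 = -((\psi-b)^c)'$. Before anything else I would verify that the convex hull $(\psi-b)^c$ is a well-defined proper convex function: fixing $x$ in \eqref{eq:bdual} shows that $\psi(y)-b(y) \geq -\phi(x) - \theta_2(x)(y-x)$ for all $y$, so $\psi-b$ is bounded below by an affine function and $(\psi-b)^c$ is real-valued and convex (hence admits a right-derivative everywhere).

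Next I would verify the two superhedging inequalities for the new quadruple. For \eqref{eq:adual}, use convexity of $\psi$ to get $\psi(y) - \psi'(x)(y-x) \geq \psi(x)$, whence
\[ \tilde{\phi}(x) + \psi(y) + \tilde{\theta}_1(x)(y-x) \geq \tilde{\phi}(x) + \psi(x) \geq (a(x)-\psi(x)) + \psi(x) = a(x), \]
using $\tilde{\phi} \geq a-\psi$. For \eqref{eq:bdual}, the convexity of $(\psi-b)^c$ gives the supporting inequality
\[ (\psi-b)^c(y) \geq (\psi-b)^c(x) + ((\psi-b)^c)'(x)(y-x), \]
and since $(\psi-b)^c \leq \psi - b$,
\[ \psi(y) - b(y) \geq (\psi-b)^c(y) \geq (\psi-b)^c(x) + ((\psi-b)^c)'(x)(y-x) = -\tilde{\theta}_2(x)(y-x) + (\psi-b)^c(x), \]
which rearranges (using $\tilde{\phi} \geq -(\psi-b)^c$) to $b(y) \leq \tilde{\phi}(x) + \psi(y) + \tilde{\theta}_2(x)(y-x)$.

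Finally, for the cost bound it suffices to show pointwise $\tilde{\phi} \leq \phi$. Setting $y=x$ in \eqref{eq:adual} gives $\phi(x) \geq a(x)-\psi(x)$, so $\phi \geq a-\psi$. For the other leg, fix $x$; the inequality \eqref{eq:bdual} reads $(\psi-b)(y) \geq -\phi(x) - \theta_2(x)(y-x)$, i.e., $\psi-b$ dominates the affine function $y \mapsto -\phi(x) - \theta_2(x)(y-x)$. Since $(\psi-b)^c$ is the largest convex (hence largest affine-dominated) minorant of $\psi-b$, this gives $(\psi-b)^c(y) \geq -\phi(x)-\theta_2(x)(y-x)$, and evaluating at $y=x$ yields $(\psi-b)^c(x) \geq -\phi(x)$, i.e., $\phi \geq -(\psi-b)^c$. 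Taking the maximum of the two lower bounds gives $\phi \geq \tilde{\phi}$, so integrating against $\mu$ gives $HC(\tilde{\phi},\psi) \leq HC(\phi,\psi)$.

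I don't anticipate a substantive obstacle; the one place that needs care is justifying that $(\psi-b)^c$ is well-defined and that its right-derivative exists, which is why I would do the a priori affine lower bound from \eqref{eq:bdual} before invoking any properties of $(\psi-b)^c$. Note that the convexity of $b$ is not actually used in this particular proposition---it was used in Proposition~\ref{prop:convexhull} to reduce to the convex-$\psi$ case, and will presumably enter again in the subsequent step where $\hat{\psi}=\psi-(\psi-b)^c$ must be shown to lie in $\tilde{\sS}(b)$.
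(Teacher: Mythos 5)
Your proof is correct and follows essentially the same route as the paper: the same choices $\tilde{\theta}_1=-\psi'$ and $\tilde{\theta}_2=-((\psi-b)^c)'$, the same $y=x$ substitutions in \eqref{eq:adual} and \eqref{eq:bdual} to obtain the pointwise bound $\phi \geq (a-\psi)\vee(-(\psi-b)^c)$, and the same supporting-line arguments to verify the new quadruple is a superhedge. The only cosmetic differences are that where you argue directly that the convex hull dominates any affine minorant of $\psi-b$, the paper invokes Lemma~\ref{lem:addline} ($(g+L)^c=g^c+L$), and your added remarks (properness of $(\psi-b)^c$, and the fact that convexity of $b$ is not needed at this step) are accurate but inessential.
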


\begin{proof}
If $(\phi,\psi, \{ \theta_i \}_{i=1,2} )$ is a superhedge, then taking $y=x$ in \eqref{eq:adual} gives
\[ a(x) \leq \phi(x) + \psi(x) \]
so that $\phi \geq (a - \psi)$.

Also, from \eqref{eq:bdual} we have that
\[ 0 \leq \phi(x) + \psi(y) - b(y) + \theta_2(x)(y-x), \]
and thus, fixing $x$ and with $L(y) = \phi(x) + \theta_2(x)(y-x)$, we have $0 \leq \psi - b + L$. Using Lemma~\ref{lem:addline},
\[ 0 \leq (\psi - b + L)^c = (\psi-b)^c + L. \]
In particular, $0 \leq (\psi - b)^c(y) + \phi(x) +  \theta_2(x)(y-x)$, and at $y=x$, $0 \leq \phi(x)+ (\psi - b)^c(x)$ so that $\phi \geq (-(\psi - b)^c)$.

We find that necessarily $\phi \geq (a - \psi) \vee (-(\psi - b)^c)$ so that $H(\phi,\psi) \geq HC((a-\psi) \vee (-(\psi - b)^c),\psi)$, provided that $((a-\psi) \vee (-(\psi - b)^c),\psi)$ generates a superhedge. Hence, it remains to show that we can find $\{ \tilde{\theta}_i \}_{i=1,2}$ such that $((a-\psi) \vee (-(\psi - b)^c),\psi, \{ \tilde{\theta}_i \}_{i=1,2} )\in\sS(a,b)$. 

Set $\tilde{\phi}=(a-\psi) \vee (-(\psi - b)^c)$ and $h = (\psi-b)^c$. Let $\tilde{\theta}_1 = - \psi'$ and $\tilde{\theta}_2 = - h'$.

By the convexity of $\psi$ we have $\psi(y) \geq \psi(x) + \psi'_+(x)(y-x)$ so that $\psi(x) \leq \psi(y) + \tilde{\theta}_1(x)(y-x)$. Then
\begin{equation}
\label{eq:worksfora}
a(x) = (a(x) - \psi(x)) + \psi(x) \leq \tilde{\phi}(x) + \psi(x) \leq \tilde{\phi}(x) +  \psi(y) + \tilde{\theta}_1(x)(y-x).
\end{equation}
Also, $\tilde{\phi} \geq -h$ and by the convexity of $h$, $h(x) \leq h(y) - h'(x)(y-x) = h(y) + \tilde{\theta}_2(x)(y-x)$. Then
\begin{eqnarray}
b(y) & \leq & b(y)+ \tilde{\phi}(x) + h(x) \nonumber \\
& \leq & b(y) + \tilde{\phi}(x) + (\psi - b)^c(y) + \tilde{\theta}_2(x)(y-x) \nonumber \\
& \leq & b(y) + \tilde{\phi}(x) + (\psi - b)(y) + \tilde{\theta}_2(x)(y-x) \nonumber \\
& = & \tilde{\phi}(x) + \psi(y) + \tilde{\theta}_2(x)(y-x).\label{eq:worksforb}
\end{eqnarray}
\eqref{eq:worksfora} and \eqref{eq:worksforb} combine to show that $(\tilde{\phi},\psi, \{ \tilde{\theta}_i \}_{i=1,2} )$ is a superhedge.
\end{proof}

From now on we may assume that $\psi$ is convex and $\phi = (a - \psi) \vee (-(\psi-b)^c)$.

\begin{prop}
Suppose $(\phi = (a - \psi) \vee (-(\psi-b)^c),\psi,\{ {\theta}_i \}_{i=1,2} )$ is a superhedge.
Define $\hat{\psi} = \psi - (\psi-b)^c$. Then $\hat{\psi} \geq b$, $\hat{\psi}$ is convex, $(\hat{\psi}-b)^c\equiv0$ and $((a-\hat{\psi})^+, \hat{\psi}, \{ - \hat{\psi}', 0 \})$ is a superhedge. Moreover, $HC((a-\hat{\psi})^+, \hat{\psi}) \leq HC(\phi, \psi)$.
\label{prop:besthedge}
\end{prop}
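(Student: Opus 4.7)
The plan is to verify each of the four claims about $\hat{\psi}$ in order, then reduce the cost inequality to a pointwise identity plus an application of convex order.

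First, since $(\psi-b)^c\leq \psi-b$ pointwise, we get $\hat{\psi}=\psi-(\psi-b)^c\geq \psi-(\psi-b)=b$. Next, the convexity of $\hat{\psi}$ is immediate from Lemma~\ref{lem:BHN2} applied with the convex functions $g=\psi$ and $f=b$ (recall that $b$ is convex by the standing hypothesis of this section). For the third claim, I write $\hat{\psi}-b=(\psi-b)-(\psi-b)^c$ and apply Lemma~\ref{lem:BHN1} with $h=\psi-b$ and the convex function $g=(\psi-b)^c$; this yields
\[ (\hat{\psi}-b)^c=\bigl((\psi-b)-(\psi-b)^c\bigr)^c=\bigl((\psi-b)^c-(\psi-b)^c\bigr)^c=0. \]
Having established that $\hat{\psi}\geq b$ with $\hat{\psi}$ convex, Lemma~\ref{lem:phifrompsi} immediately produces the superhedge $((a-\hat{\psi})^+,\hat{\psi},\{-\hat{\psi}',0\})$.

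For the cost comparison, set $p=a-\psi$ and $q=(\psi-b)^c$, so that $\phi=p\vee(-q)$ and $(a-\hat{\psi})^+=(p+q)^+$. I claim the pointwise identity $\phi-(a-\hat{\psi})^+=-q$ holds. Indeed, if $p+q\geq 0$ then $p\geq -q$, so $p\vee(-q)=p$ and $(p+q)^+=p+q$, giving the difference $-q$; if $p+q<0$ then $p<-q$, so $p\vee(-q)=-q$ and $(p+q)^+=0$, again giving $-q$. In either case the difference is exactly $-(\psi-b)^c$.

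Combining this identity with $\psi-\hat{\psi}=(\psi-b)^c$, the cost difference becomes
\[ HC(\phi,\psi)-HC((a-\hat{\psi})^+,\hat{\psi})=\int\bigl[\phi-(a-\hat{\psi})^+\bigr]d\mu+\int[\psi-\hat{\psi}]d\nu=\int(\psi-b)^c\,d\nu-\int(\psi-b)^c\,d\mu. \]
Since $(\psi-b)^c$ is convex and $\mu\leq_{cx}\nu$, the right-hand side is non-negative, which yields the required inequality. The only mildly delicate point is the pointwise case analysis above; once that identity is in hand, the rest is a direct appeal to the convex-order hypothesis on $(\mu,\nu)$.
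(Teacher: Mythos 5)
Your proof is correct and takes essentially the same route as the paper's: the same preliminary lemmas for convexity and for $(\hat{\psi}-b)^c\equiv 0$, the same pointwise identity (the paper writes it as $(a-\hat{\psi}-g)\vee(-g)=(a-\hat{\psi})^+-g$ with $g=(\psi-b)^c$, which is exactly your case analysis), and the same final appeal to $\mu\leq_{cx}\nu$ applied to the convex function $(\psi-b)^c$. The only cosmetic difference is that you phrase the comparison as a difference $HC(\phi,\psi)-HC((a-\hat{\psi})^+,\hat{\psi})$, which strictly requires the integrals of $(\psi-b)^c$ to be finite to avoid an $\infty-\infty$ issue; the paper handles this in a separate remark (if $\int\psi\,d\nu=\infty$ there is nothing to prove, otherwise $\int|g|\,d\mu,\int|g|\,d\nu<\infty$), so this is a minor bookkeeping point rather than a gap.
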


\begin{proof}
Clearly, $\hat{\psi} - b = \psi - b - (\psi-b)^c \geq 0$. Moreover, since $\psi$ and $b$ are convex, $\hat{\psi}$ is convex by Lemma~\ref{lem:BHN2}.
Since $\hat{\psi}$ is convex and $\hat{\psi} \geq b$, by Lemma~\ref{lem:HN} and Proposition \ref{prop:betterphi} we have that it generates a superhedge with hedging cost $HC((a-\hat{\psi})^+, \hat{\psi})=\int (a - \hat{\psi})^+ d \mu + \int \hat{\psi} d\nu$.

Taking $h=(\psi-b)$ and $g=(\psi-b)^c$, Lemma \ref{lem:BHN1} implies that $(\hat{\psi}-b)^c = (h-g)^c=(h^c-g)\equiv0$.
%and therefore $\hat\phi:=(a - \hat\psi) \vee (-(\hat\psi-b)^c)=(a-\hat\psi)^+$).}

It only remains to check that $HC((a-\hat{\psi})^+, \hat{\psi}) \leq HC(\phi, \psi)$. But, with $g=(\psi-b)^c$
\begin{eqnarray}
HC(\phi,\psi) &=& \int \{ (a - {\psi}) \vee (-g) \} d \mu + \int \psi d\nu  \nonumber\\
& = & \int \{ (a - \hat{\psi} - g) \vee (-g) \} d \mu + \int \{ \hat{\psi} + g \} d\nu \nonumber \\
& = & \int \{ (a - \hat{\psi})^+ - g \} d \mu  + \int \{ \hat{\psi} + g \} d\nu \nonumber \\ 
& = & \int  (a - \hat{\psi})^+  d \mu  + \int \hat{\psi}  d\nu + \int g d\nu - \int g d\mu \label{eq:4integrals} \\
& \geq & \int  (a - \hat{\psi})^+  d \mu  + \int \hat{\psi}  d\nu \nonumber
\end{eqnarray}
with the last inequality following since $g$ is convex and $\mu \leq_{cx} \nu$.
\end{proof}

\begin{remark}
    Note that at no stage did we assume that the hedging cost is finite. The comparisons in Propositions~\ref{prop:convexhull} and \ref{prop:betterphi} rely on the monotonicity of integration and do not need finiteness.

    %In Proposition~\ref{prop:besthedge} note that since $\nu \in L^1$ for any convex function $c$ we must have $\int c(y) I_{ \{ c(y)<0 \} } \nu(dy) < \infty$. Then    $\int h(y) \nu(dy)$ is well defined in $(-\infty,\infty]$  even without our convention that $(-\infty) + (+\infty) = +\infty$.

    %Since $h$ is convex, if $\int h d \mu = \infty$ then we also have $\int h d \nu = \infty$. Then

    In Proposition~\ref{prop:besthedge}, if $\int \psi d \nu = \infty$ then $HC(\phi,\psi)=\infty$ and there is nothing to prove. So suppose $\int \psi d \nu<\infty$.
   {Note that if $\eta \in L^1$ and $f$ is convex then necessarily $\int f(y) I_{ \{ f(y)<0 \} } \eta(dy) > - \infty$. Then,
    since $\int \psi d \nu = \int \{ \hat{\psi} + g \} d \nu \geq \int b d\nu + \int g d \nu$, we conclude that $\int g d \nu < \infty$. Then also $\int |g| d \nu < \infty$ (and because of the convex order $\int |g| d \mu < \infty$). It follows that all the integrals in \eqref{eq:4integrals}  are well defined (the first two in $[0,\infty]$ and the last two in $(-\infty,\infty)$).}

    Putting this all together, we do not claim that $HC((a-\hat{\psi})^+, \hat{\psi}) < \infty$ in Proposition~\ref{prop:besthedge}, but nonetheless we always have $HC((a-\hat{\psi})^+, \hat{\psi}) \leq HC(\phi, \psi)$.
\end{remark} 

\begin{rem}\label{rem:D0}
Actually we have shown that $\sD = \tilde{\sD} = \tilde{\sD^0}$ where
\[ \tilde{\sD}^0 =  \tilde{\sD}^0(\mu, \nu;a,b) = \inf_{ \psi \in \tilde{\sS}^0(b) } \left\{ \int (a(x)-\psi(x))^+ \mu(dx) + \int \psi(y) \nu(dy) \right\}  \]
where $\tilde{\sS}^0(b) = \{ \psi: \mbox{$\psi$ convex,  $\psi \geq b$, $(\psi-b)^c\equiv0$} \} = \{ \psi \in \tilde{\sS}(b): (\psi-b)^c\equiv0 \}$.
\end{rem}

{The main result of this section is Theorem~\ref{thm:simpledual}, which says that when looking for an optimiser in the dual problem it is sufficient to search over convex functions which dominate $b$, and then to use these convex functions to generate hedging strategies. This greatly simplifies the analysis of the dual problem. In the next few sections we exploit this simplification to find the solution of the primal problem and to show that there is no duality gap. We make a related point in Remark~\ref{rem:hedgingStrategies}.}

\section{Standing assumptions}\label{sec:primal}
In the second part of the paper we aim to find explicit solutions to the pricing and hedging problems under some simplifying structural assumptions on $\mu$ and $\nu$. The goal is to find the model $\sM^*$ and associated stopping time $\tau^*$ such that the highest model-based price is attained, along with the cheapest superhedege $(\phi^*,\psi^*,\theta^*)$. We find candidates for each and proceed to show that $\sP^* := \E^{\sM^*}[c(Z_{\tau^*},\tau^*)] = \int \phi^* d \mu + \int \psi^* d \nu =: \sD^*$ where $\E^{\sM}$ denotes expectations in the model $\sM=(\Omega, \sF, \bF, \PP)$. Then $\sP^* \leq \sP \leq \sD \leq \sD^*$ (the two outer inequalities are by definition, and the middle one follows by weak duality). It follows that we must have equality throughout and that we have found a model under which the Bermudan option has the highest price and a superhedge with the lowest hedging cost. Moreover, there is no duality gap.
%\subsection{Standing assumptions}\label{subs:sass}
We begin by stating standing assumptions on the laws $\mu$ and $\nu$ and the payoff functions.

\begin{sass}[Standing assumptions on the distributions]\label{sass:densities}
$\mu \leq_{cx} \nu$ with $\mu \neq \nu$; $\mu$ and $\nu$ are absolutely continuous (with respect to Lebesgue measure) with densities $\rho$ and $\eta$ respectively; the smallest open interval $(\alpha,\beta)\subseteq \R$ such that $\mu(\alpha,\beta)=1$ is the interval $(\alpha_\mu, \beta_\mu)$ and the smallest open interval $(\alpha,\beta)\subseteq \R$ such that $\nu(\alpha,\beta)=1$ is the interval $(\alpha_\nu, \beta_\nu)$;
%and the support of $\nu$ is $[\alpha_\nu,\beta_\nu] \cap \R$, 
then necessarily $-\infty \leq \alpha_\nu \leq \alpha_\mu < \beta_\mu \leq \beta_\nu \leq \infty$.
\end{sass}

%{\cblue If both densities are strictly positive on respective intervals, then $g^L,f^R$ will be continuous; see Theorem 3.3. Is this what we want at this stage?}

Consider a Bermudan option with convex payoff functions $a,b:\R\to[0,\infty]$ which satisfy $a\geq b$ on $\R$ and $a(x_0) <\infty$ for some $x_0\in\R$. We begin by ruling out a couple of cases where the result is trivial.

First, if $a=b$ on $(\alpha_\mu,\beta_\mu)$, then by (the conditional) Jensen's inequality and the martingale property of $(Z_0,Z_1,Z_2)$ (under any model $\sM\in M(\mu,\nu)$), we have that $(C_1=a(Z_1),C_2=b(Z_2))$ forms a $\sM$-submartingale, and thus it is optimal to stop at time-2. It follows that every model $\sM$ is optimal and $\sP=\int_\R b(y)\nu(dy)$. 

Second, dropping the assumption that $a=b$, due to the convexity (of $a$ and $b$) we have that there exist intervals $I_a,I_b\subseteq \R$ such that $a<\infty$ on $I_a$ (resp., $b<\infty$ on $I_b$) and $a=\infty$ on $I_a^\infty:=\R\setminus I_a$ (resp., $b=\infty$ on $I_b^\infty:=\R\setminus I_b$). If $\mu(I^\infty_a)=\mu((\alpha_\mu,\beta_\mu)\cap I^\infty_a)>0$ (resp.,   $\nu(I^\infty_b)=\nu((\alpha_\nu,\beta_\nu)\cap I^\infty_b)>0$), then for any model $\sM$, by taking $\tau^*=1$ (resp., $\tau^*=2$) we have that $\E^{\sM}[a(Z_1)I_{ \{ \tau^*=1 \} } + b(Z_2)I_{ \{ \tau^*=2 \} } ]=\E^\sM[a(Z_1)]=\int_\R a(x)\mu(dx)=\infty$ (resp., $\E^{\sM}[a(Z_1)I_{ \{ \tau^*=1 \} } + b(Z_2)I_{ \{ \tau^*=2 \} } ]=\E^\sM[b(Z_2)]=\int_\R b(y)\nu(dy)=\infty$). Again, it follows that all models $\sM$ are optimal and $\sP=\infty$. 

These two observations motivate the following standing assumption (the second part is a mild simplifying assumption on the payoff functions which helps in the proof of Proposition~\ref{prop:Lambday}). 
\begin{sass}[Standing assumptions on the payoff functions]
\label{sass:payoffs}
The payoff functions $a:(\alpha_\mu,\beta_\mu)\to\R_+$, $b:(\alpha_\nu,\beta_\nu) \to \R_+$  are such that $a\neq b$ and $a \geq b$ on $(\alpha_\mu,\beta_\mu)$, and that both $a$ and $b$ are convex. %In addition we assume that for each $y \in (\alpha_\mu,\beta_\mu)$ there exists $w,z$ with $\alpha_\nu < w < y < z < \beta_\nu$ such that $a(y) < \frac{z-y}{z-w} b(w) + \frac{y-w}{z-w} b(z)$.
 
Moreover, there are no intervals $I \subseteq (\alpha_\mu,\beta_\mu)$ on which $a=b$ and $a$ is linear.

\end{sass}

We extend the definitions of $a$ and $b$ to $\R$ by setting $a = \infty$ on $(-\infty,\alpha_\mu) \cup (\beta_\mu,\infty)$, $b = \infty$ on $(-\infty,\alpha_\nu) \cup (\beta_\nu,\infty)$, $a(\alpha_\mu)=\lim_{x\downarrow\alpha_\mu}a(x)$, $a(\beta_\mu)=\lim_{x\uparrow\beta_\mu}a(x)$, $b(\alpha_\nu)=\lim_{x\downarrow\alpha_\nu}b(x)$ and $b(\beta_\nu)=\lim_{x\uparrow\beta_\nu}b(x)$. Note that the choice to extend $a$ and $b$ in this way does not affect the expected payoff of the Bermudan option since $\mu((\alpha_\mu,\beta_\mu)) = 1 = \nu((\alpha_\nu,\beta_\nu))$.

%The case considered in the following lemma only occurs when $b$ is constant or when at least one of $\alpha_\nu,\beta_\nu$ is finite. In that case we 

Let $L^b$ be the smallest straight line such that $L^b \geq b$ on $(\alpha_\nu,\beta_\nu)$. Note that if both $\alpha_\nu$ and $\beta_\nu$ are finite and $b(\alpha_\nu)\vee b(\beta_\nu)<\infty$ then $L^b=T_{b,b}^{\alpha_\nu,\beta_\nu}$ is the straight line passing through $(\alpha_\nu,b(\alpha_\nu))$ and $(\beta_\nu,b(\beta_\nu))$; if $b(\alpha_\nu)\vee b(\beta_\nu)=\infty$ then there does not exist a finite straight line with $L^b \geq b$ in which case we set $L^b \equiv \infty$. If $\alpha_\nu$ and $\beta_\nu$ are both infinite then either $b$ is constant or again there is no finite  straight line with $L^b \geq b$ and then we set $L^b \equiv \infty$.
If exactly one of the endpoints $\alpha_\nu,\beta_\nu$ is finite then %depending on the growth rate of $b$ at the infinite endpoint 
there may or may not be a finite straight line with $L^b \geq b$, if not then $L^b \equiv \infty$. 

The case considered in the following lemma only occurs when %$b$ is constant or when 
$L^b$ is finite.

\begin{lem}
    {Suppose that %either $b$ is constant or 
    $a \geq L^b$.} 

    Let $\tau^*=1$, and let $\psi^{*}=L^b$.
    Then, for every model $\sM$ we have 
    \[ \E^{\sM}[a(Z_1)] = \E^{\sM}[a(Z_1)I_{ \{ \tau^*=1 \} } + b(Z_2)I_{ \{ \tau^*=2 \} } ] =\int a d \mu = \int (a-L^b)^+ d \mu + \int L^b d \nu. \] 
    It follows that for every model $\sM$, $(\sM,\tau^*=1)$ gives the highest model-based price for the Bermudan option, and $L^b$ generates the cheapest superhedge. There is no duality gap.
    \label{lem:tau=1}
\end{lem}

\begin{proof}
   Only the last equality is not immediate, but this follows since $a \geq L^b$ and $\int L^b d \mu = \int L^b d \nu$, the latter equality following from the fact that $\mu$ and $\nu$ are in convex order.    
\end{proof}

{From now on we will exclude this case. Note that if $b$ is constant and $a \geq b$ then $a \geq L^b =b$, so that in particular we are excluding the case where $b$ is constant. This case is covered in Lemma~\ref{lem:tau=1}.}

\begin{sass}
\label{sass:ageqLb}
%$b$ is not constant, and 
It is not the case that $a \geq L^b$. 
\end{sass}

%Under Standing Assumption~\ref{sass:ageqLb} there exists $\underline{u}, \overline{u}$ with $\alpha_\mu \leq \underline{u} < \overline{u} \leq \beta_\mu$ such that $a < L^b$ on $(\underline{u}, \overline{u})$. Here $\underline{u}$ (respectively $\overline{u}$) is taken as small (respectively large) as possible so that if $L^b$ is finite than $a(\underline{u}) = L^b(\underline{u})$ and $a(\overline{u}) = L^b(\overline{u})$. If $L^b \equiv \infty$ then we set $\underline{u}=\alpha_\mu$ and $\overline{u} = \beta_\mu$.

\begin{lem}
\label{lem:SA3}
Under Standing Assumption~\ref{sass:ageqLb} there exists $\hat{w},\hat{u},\hat{z}$ with $\alpha_\nu < \hat{w} < \hat{u} < \hat{z}< \hat{\beta}_\nu$ such that
$T_{b,b}^{\hat{w},\hat{z}} (\hat{u}) > a(\hat{u})$.
\end{lem}

\begin{proof}
    This follows from the properties of convex functions.
\end{proof}

%$y \in (\alpha_\mu,\beta_\mu)$ and $w,z$ with $\alpha_\nu < w < y < z < \beta_\nu$ such that $a(y) < \frac{z-y}{z-w} b(w) + \frac{y-w}{z-w} b(z)$.

\begin{comment}
\begin{rem}
    The second part of Standing Assumption~\ref{sass:payoffs} seems quite mild. If for every $y \in (\alpha_\mu,\beta_\mu)$ and every $w,z$ with $\alpha_\nu < w < y < z < \beta_\nu$ we have $a(y) \geq \frac{z-y}{z-w} b(w) + \frac{y-w}{z-w} b(z)$
    then it is optimal to always take $\tau^*=1$,
 and the pricing and hedging of the Bermudan option is trivial. {\cblue This needs checking! Should we just state this as a separate result anyway. At the moment I don't see how to get an if and only if statement, so maybe not.}
\end{rem}
\end{comment}

\section{Special subsets of ${\mathbb R}^n$, special convex functions and the main result}
In this section we introduce three classes of models defined via four, five or six points and three classes of hedges, again defined by four, five or six points. The plan is to show that if we can find a model and a hedge defined by the same $n$ points (where $n \in \{4,5,6\}$) then the model and hedge are both optimal and there is no duality gap.
Later we show that under some reasonable assumptions it is always possible to find such a set of points.

For $n=4,5,6$ and $w^{(n)} = (w_1, \ldots, w_n) \in \R^n$ define { $\Sigma_n= \Sigma_n^{\mu,\nu}$ by
\begin{eqnarray*}
\Sigma^{\mu,\nu}_4 & = & \{ w^{(4)} \in \R^4 : \alpha_\nu < w_1 \leq w_2 < w_3 \leq w_4 <  \beta_\nu, \alpha_\mu \leq w_2 < w_3  \leq \beta_\mu, \mbox{and} \\ 
&& \hspace{80mm} (\beta_\mu - w_3)+(w_2 - \alpha_\mu)>0  \},    \\
\Sigma^{\mu,\nu}_5 & = & {\{ w^{(5)} \in \R^5 : \alpha_\nu < w_1 \leq  w_2 \leq w_3 \leq w_4  \leq w_5 < \beta_\nu, \alpha_\mu < w_2 < w_4 <\beta_\mu, \mbox{and} } \\
&& \hspace{10mm} {\mbox{(either $w_2>w_1$ or $w_1=w_2=w_3$) and (either $w_4<w_5$ or $w_3=w_4=w_5$)}  \},  }  \\
\Sigma^{\mu,\nu}_6 & = & \{ w^{(6)} \in \R^6 : \alpha_\nu < w_1 \leq w_2 \leq w_3 < w_4 \leq w_5 \leq w_6 < \beta_\nu, \alpha_\mu < w_2 < w_5 <\beta_\mu, \mbox{ and}   \\
&& \hspace{10mm} { \mbox{(either $w_2>w_1$ or $w_1=w_2=w_3$) and (either $w_5<w_6$ or $w_4=w_5=w_6$)}  \}.  }  
\end{eqnarray*}
%Then define $\Sigma_4 =\Sigma^{\mu,\nu}_4  = \{ w^{(4)} \in {}^0 \Sigma^{\mu,\nu}_4 : \mbox{either $(\alpha_\nu<w_1,\alpha_\mu<w_2)$ or $(w_3 < \beta_\mu, w_4<\beta_\nu)$} \}$ and set $\Sigma_5 = \Sigma_5^{\mu,\nu} = {}^0\Sigma_5^{\mu,\nu}$ and $\Sigma_6 = \Sigma_6^{\mu,\nu} = {}^0\Sigma_6^{\mu,\nu}$.

Note the different conventions over weak and strict inequalities in each case. Essentially, for $n=4$ we need to allow for one of the points $w_2,w_3$ to lie at the ends of the support of $\mu$, but this case does not occur when $n=5$ or $n=6$.
}

%\begin{figure}
%    \centering
%    \includegraphics[width=0.5\linewidth]{}
%    \caption{Caption}
%    \label{fig:placeholder}
%\end{figure}$. } 
%{\cblue We need to check these!}
%\[ \Sigma^{\mu,\nu}_n = \{ (w_1,w_2, \ldots ,w_n) \subseteq \R^n : \alpha_\nu < w_1 < w_2 < \ldots w_n < \beta_\nu, \alpha_\mu < w_2 < w_{n-1} <\beta_\mu  \}. \]
%Furthermore, define $\tilde\Sigma_6 = \tilde{\Sigma}^{\mu,\nu}_6$ with $\tilde{\Sigma}_n \supseteq \Sigma_n$ by
%\[ \tilde\Sigma^{\mu,\nu}_6 = \{ (w_1,w_2, \ldots ,w_6) \subseteq \R^6 : \alpha_\nu < w_1 < w_2 < w_3\leq w_4<w_5< w_6 < \beta_\nu, \alpha_\mu < w_2 < w_{6-1} <\beta_\mu \}. \]

For $n=4$ define $\Gamma^{\mu,\nu}_4$ by
\begin{eqnarray}
\label{eq:Gamma4def}
\Gamma^{\mu,\nu}_4 & = & %\{ (y_-,x_-,x_+,y_+) \in \Sigma_4 :  S^\nu(\mu|_{(\alpha_\mu, y_-) \cup (y_+,\beta_\mu)}) = \mu|_{(\alpha_\mu, y_-) \cup (y_+,\beta_\mu)}, \\
%&& \hspace{55mm} S^\nu(\mu|_{(y_-,x_-) \cup (x_+,y_+)}) = \nu|_{(y_-,y_+)} \} . \\
%&&
\{ (y_-,x_-,x_+,y_+) \in \Sigma_4 :  \mu|_{(\alpha_\mu, y_-) \cup (y_+,\beta_\mu)} \leq \nu|_{(\alpha_\mu, y_-) \cup (y_+,\beta_\mu)}, \\
&& \nonumber \hspace{55mm} \mu|_{(y_-,x_-) \cup (x_+,y_+)} \leq_{cx} \nu|_{(y_-,y_+)} \} .
\end{eqnarray}

\begin{lem}
\label{lem:musthave4}
If $(y_-,x_-,x_+,y_+) \in \Gamma^{\mu,\nu}_4$ then
\begin{equation}
    \label{eq:musthave4}
\mu|_{(x_-,x_+)} \leq_{cx} (\nu - \mu)_{(\alpha_\nu,y_-) \cup (y_+,\beta_\nu)} . 
\end{equation}
\end{lem}
\begin{proof}
Since $\mu|_{(y_-,x_-) \cup (x_+,y_+)} \leq_{cx} \nu|_{(y_-,y_+)}$ we have that $\mu|_{(x_-,x_+)} = \mu - \mu|_{(\alpha_\nu,y_-) \cup (y_+,\beta_\nu)} -
\mu|_{(y_-,x_-) \cup (x_+,y_+)}$ and $(\nu - \mu)|_{(\alpha_\nu,y_-) \cup (y_+,\beta_\nu)} = \nu  - \mu|_{(\alpha_\nu,y_-) \cup (y_+,\beta_\nu)} - \nu|_{(y_-,y_+)}$ have the same mean and mass.

Then the fact that $\mu|_{(x_-,x_+)} \leq_{cx} (\nu - \mu)_{(\alpha_\nu,y_-) \cup (y_+,\beta_\nu)}$ follows from the fact that 
$\mu|_{(x_-,x_+)}$ has support in $(x_-,x_+)$
and $(\nu - \mu)_{(\alpha_\nu,y_-) \cup (y_+,\beta_\nu)}$ has support outside this set. Alternatively, this result follows from Proposition~\ref{prop:shadow_assoc} below. %{\sc This follows from the analogue of \eqref{eq:musthave5} below.}}
\end{proof}

For $n=5$ define $\Gamma^{\mu,\nu}_5$ by
\begin{eqnarray}
\label{eq:Gamma5def}
\Gamma^{\mu,\nu}_5 & = & 
%\{ (y_-,x_-,z,x_+,y_+) \in \Sigma_5 :  S^\nu(\mu|_{(\alpha_\mu, y_-) \cup (y_+,\beta_\mu)}) = \mu|_{(\alpha_\mu, y_-) \cup (y_+,\beta_\mu)}, \\
%&& \hspace{55mm} S^\nu(\mu|_{(y_-,x_-)} = \nu|_{(y-,z)}, S^\nu(\mu|_{(x_+,y_+)})  = \nu|_{(z,y_+)} \} . \\
\{ (y_-,x_-,z,x_+,y_+) \in \Sigma_5 :  \mu|_{(\alpha_\mu, y_-) \cup (y_+,\beta_\mu)} \leq \nu|_{(\alpha_\mu, y_-) \cup (y_+,\beta_\mu)}, \\
&& \nonumber \hspace{55mm} \mu|_{(y_-,x_-)} \leq_{cx} \nu|_{(y_-,z)}, \mu|_{(x_+,y_+)}  \leq_{cx} \nu|_{(z,y_+)} \} . 
\end{eqnarray}
%Note that by the associativity property of the Shadow measure we must have that 
%If $(y_-,x_-,z,x_+,y_+) \in \Gamma^{\mu,\nu}_5$ then as in the case for $n=4$  we conclude
%$\mu|_{(x_-,x_+)} \leq_{cx} (\nu - \mu)|_{(\alpha_\nu,y_-) \cup (y_+,\beta_\nu)}$. 

%in the proof of 
%\[ S^\nu(\mu|_{(x_-,x_+)}) = (\nu - \mu)|_{(\alpha_\nu,y_-) \cup (y_+,\beta_\nu)} \}. \]
As in the case of $n=4$ we have an identical result with a similar proof.

\begin{lem}
\label{lem:musthave5}
If $(y_-,x_-,z,x_+,y_+) \in \Gamma^{\mu,\nu}_5$ then
$ %\begin{equation}
    %\label{eq:musthave4}
\mu|_{(x_-,x_+)} \leq_{cx} (\nu - \mu)_{(\alpha_\nu,y_-) \cup (y_+,\beta_\nu)} . 
$ %\end{equation}
\end{lem}

For $n=6$ define $\Gamma^{\mu,\nu}_6$ by
\begin{eqnarray}
\label{eq:Gamma6def}
\Gamma^{\mu,\nu}_6 & = & 
%\{ (y_-,x_-,z_-,z_+x_+,y_+) \in \Sigma_5 :  S^\nu(\mu|_{(\alpha_\mu, y_-) \cup (y_+,\beta_\mu)}) = \mu|_{(\alpha_\mu, y_-) \cup (y_+,\beta_\mu)}, \\
%&& \hspace{55mm} S^\nu(\mu|_{(y_-,x_-)} = \nu|_{(y-,z)}, S^\nu(\mu|_{(x_+,y_+)})  = \nu|_{(z,y_+)} \} . \\
%&& \hspace{85mm} \nu|_{(z_-,z_+) \leq \mu|_{(z-,z_+)} \}
\{ (y_-,x_-,z_-,z_+,x_+,y_+) \in \Sigma_6 :  \mu|_{(\alpha_\mu, y_-) \cup (y_+,\beta_\mu)} \leq \nu|_{(\alpha_\mu, y_-) \cup (y_+,\beta_\mu)}, \\
&& \nonumber \hspace{15mm} \mu|_{(y_-,x_-)} \leq_{cx} \nu|_{(y_-,z_-)}, \mu|_{(x_+,y_+)}  \leq_{cx} \nu|_{(z_+,y_+)},
%&& \hspace{85mm} 
\nu|_{(z_-,z_+)} \leq \mu|_{(z_-,z_+)} 
\} . 
\end{eqnarray}
This time we find that
\begin{lem}
\label{lem:musthave6} 
Suppose $(y_-,x_-,z_-,z_+,x_+,y_+) \in \Gamma^{\mu,\nu}_6$. Then
%\begin{equation}
 %   \label{eq:musthave6}
$\mu\lvert_{(x_-,z_-)\cup(z_+,x_+)} +(\mu-\nu)|_{(z_-,z_+)} \leq_{cx} (\nu - \mu)_{(\alpha_\nu,y_-) \cup (y_+,\beta_\nu)}$. 
%\end{equation}
\end{lem}

%Note that by the associativity property of the Shadow measure we must have that if $(y_-,x_-,z,x_+,y_+) \in \Gamma^{\mu,\nu}_6$ then
%\[ S^\nu(\mu|_{(x_-,x_+)}) = (\nu - \mu)|_{(\alpha_\nu,y_-) \cup (y_+,\beta_\nu)} \}. \]

We let $C(\alpha_\nu,\beta_\nu)$ be the set of continuous functions on $(\alpha_\nu,\beta_\nu)$ and we let $C_{cx}(\alpha_\nu,\beta_\nu) \subset C(\alpha_\nu,\beta_\nu)$ be the set of convex functions on $(\alpha_\nu,\beta_\nu)$.
%{\cblue Move the definition of $T^{x,y}_{a,b}$ to here.}

{For $(y_-,x_-,x_+,y_+) \in \Sigma_4$ define $\Psi_4 = \Psi^{y_-,x_-,x_+,y_+}_{4,a,b}:(\alpha_\nu,\beta_\nu): \to \R$ by
\begin{equation}
    \label{eq:psi4points}
 \Psi_4 %^{y_-,x_-,x_+,y_+}
 (u)  =  
\begin{cases} 
     b(u), & u < y_-; \\
    T^{y_-,y_+}_{b,b}(u), \quad \quad
         &  y_- \leq u \leq y_+; \\
     b(u), & y_+ <  u.    
           \end{cases} 
\end{equation}
Our convention that $\alpha_\nu < y_-<y_+ < \beta_\nu$ for $(y_-,x_-,x_+,y_+) \in \Sigma_4$ ensures that $\Psi_4$ is well defined.
By construction, $\Psi_4$ is convex. 
Now we define a subset $\sL^{a,b}_4$ of $\Sigma_4$ with the property that $T_{b,b}^{y_-,y_+}$ takes the same value as $a$ at $x_{\pm}$:
\begin{equation}
\label{eq:L4def}
\sL^{a,b}_4 = \{ (y_-,x_-,x_+,y_+) \in \Sigma_4 : a(x_--) \geq T_{b,b}^{y_-,y_+}(x_-) \geq a(x_-+), a(x_+-) \leq T_{b,b}^{y_-,y_+}(x_+) \leq a(x_++)  \} .
\end{equation}
Observe that if $x_- > \alpha_\mu$ then $a(x_--)= a(x_-+)$, and similarly if $x_+<\beta_\mu$ then $a(x_+ -)= a(x_+ +)$, so that in the case when $\alpha_\mu<x_-<x_+<\beta_\mu$ we can more simply write the condition $(y_-,x_-,x_+,y_+) \in \sL^{a,b}_4$ as the condition that the points
$(y_-,b(y_-))$, $(x_-,a(x_-))$, $(x_+,a(x_+))$ and $(y_+,b(y_+))$ all lie on the same straight line.}

\begin{comment} %older version
\begin{equation}
    \label{eq:psi4points}
 \Psi_4 %^{y_-,x_-,x_+,y_+}
 (w)  =  
\begin{cases} 
     b(w), & w \leq y_-; \\
    T^{x_-,x_+}_{a,a}(w), \quad \quad
         &  y_- < w \leq y_+; \\
     b(w), & y_+<w.    
           \end{cases} 
\end{equation}
Our convention that $x_-<x_+$ for $(y_-,x_-,x_+,y_+) \in \Sigma_4$ ensures that $\Psi_4$ is well defined.
Note that $\Psi_4$ need not be continuous at $\{y_-,y_+\}$. Let $\sL^{a,b}_4$ be given by
\begin{equation}
\label{eq:L4def}
\sL^{a,b}_4 = \{ (y_-,x_-,x_+,y_+) \in \Sigma_4 : \Psi_{4,a,b}^{y_-,x_-,x_+,y_+} \in C_{cx}(\alpha_\nu,\beta_\nu) \} .
\end{equation}
Observe that if $\Psi_4%^{y_-,x_-,x_+,y_+}
$ is continuous, then since $b$ is convex, $\Psi_4$ is automatically convex.

{ (Another way to express $(y_-,x_-,x_+,y_+) \in \sL^{a,b}_4$ is that the points $(y_-,b(y_-))$, $(x_-,a(x_-))$, $(x_+,a(x_+))$ and $(y_+,b(y_+))$ all lie on the same straight line.}
\end{comment}

%{ For $(y_-,x_-,x_+,y_+) \in \Sigma_4 \setminus {}^0\Sigma_4$ note that {\em either} $\alpha_\mu <y_-$ and $\alpha_\nu<x_-$ \em{or} $x_+ < \beta_\mu$ and $y_+ < \beta_\mu$. 
%In the former case $T_{b,a}^{y_-,x_-}$ is well defined and we say that $(y_-,x_-,x_+,y_+) \in \sL^{a,b}_4$ if $a(x_+-) \leq T_{b,a}^{y_-,x_-}(x_+) \leq a(x_- +)$ and $b(y_+ -) \leq T_{b,a}^{y_-,x_-}(y_+) \leq b(y_+ +)$. In the latter case $T_{a,b}^{x_+,y_+}$ is well defined and we say that $(y_-,x_-,x_+,y_+) \in \sL^{a,b}_4$ if $b(y_--) \geq T_{a,b}^{x_+,y_+}(y_-) \geq b(y_- +)$ and $a(x_- -) \geq T_{a,b}^{x_+,y_+}(x_-) \geq a(x_- +)$.}

For $(y_-,x_-,z,x_+,y_+) \in \Sigma_5$ define $\Psi_5=\Psi^{y_-,x_-,z,x_+,y_+}_{5,a,b}:(\alpha_\nu,\beta_\nu): \to \R$ by
\begin{equation}
    \label{eq:psi5points}
 \Psi_5 %^{y_-,x_-,z,x_+,y_+}
 (w)  =  
\begin{cases} 
     b(w), & w \leq y_-; \\
    T^{y_-,x_-}_{b,a}(w), \quad \quad
         &  y_- < w \leq z; \\
     T^{x_+,y_+}_{a,b}(w), 
        &  z < w \leq y_+;  \\
     b(w), & y_+<w.    
           \end{cases} 
\end{equation}
Note that $\Psi_5 %=\Psi^{y_-,x_-,z,x_+,y_+}
$ need not be continuous at $z$. Furthermore, if $y_-=z$ (resp., $z=y_+$) then the second (resp., third) line in the definition of $\Psi_5$ is redundant; {otherwise, by the definition of $\Sigma_5$ we must have that $y_-<x_-$ (resp., $x_+=y_+$) so that $T^{y_-,x_-}_{b,a}$ (resp.,  $T^{x_+,y_+}_{a,b}$) is well defined.} Let $\sL^{a,b}_5$ be given by
\[ \sL^{a,b}_5 = \{ (y_-,x_-,z,x_+,y_+) \in \Sigma_5 : \Psi_{5,a,b}^{y_-,x_-,z,x_+,y_+} \in C_{cx}(\alpha_\nu,\beta_\nu), \Psi_{5,a,b}^{y_-,x_-,z,x_+,y_+}(z) \geq a(z) \} .\]
In particular, $\Psi_{5} \in C_{cx}(\alpha_\nu,\beta_\nu)$ guarantees that $\Psi_{5}$ is continuous. 

%{\cblue Do we want $>a(w)$ or $\geq a(w)$? I think the latter since the case of equality should be $\Psi_5$ not $\Psi_6$. %It seems yes, because we allow $z_-=z_+$ in the definition of $\tilde{\Sigma}_6$. But, that seems the wrong war round to me. Then we wouldn't need $\tilde{\Sigma}_6$
%}

For $(y_-,x_-,z_-,z_+,x_+,y_+) \in {\Sigma}_6$ define $\Psi_6 = \Psi^{y_-,x_-,z_-,z_+,x_+,y_+}_{6,a,b}:(\alpha_\nu,\beta_\nu): \to \R$ by
\begin{equation}
    \label{eq:psi6points}
\Psi_6 %^{y_-,x_-,z_-,z_+,x_+,y_+}
(w)  =  
\begin{cases} 
     b(w), & w \leq y_-; \\
    T^{y_-,x_-}_{b,a}(w), \quad \quad
         &  y_- < w \leq z_-; \\
         a(w), & z_- < w\leq z_+;\\
     T^{x_+,y_+}_{a,b}(w), 
        &  z_+ < w \leq y_+;  \\
     b(w), &  y_+<w.    
           \end{cases} 
\end{equation}
Note that $\Psi^{y_-,x_-,z_-,z_+,x_+,y_+}_{6,a,b}$ need not be continuous at $w$ for $w \in \{ z_\pm \}$.  Furthermore, if $y_-=z_-$ (resp.,  $z_+=y_+$) then the second (resp., fourth) line in the definition of $\Psi_6$ is redundant; {otherwise the definition of $\Sigma_6$ ensures that $T^{y_-,x_-}_{b,a}$ (resp., $T^{x_+,y_+}_{a,b}$) is well defined.} Let $\sL^{a,b}_6$ be given by
\[ \sL^{a,b}_6 = \{ (y_-,x_-,z_-,z_+,x_+,y_+) \in \Sigma_6 : \Psi_{6,a,b}^{y_-,x_-,z_-,z_+,x_+,y_+} \in C_{cx}(\alpha_\nu,\beta_\nu) \} .\]

\begin{remark}\label{rem:hedgingStrategies} For $n\in\{4,5,6\}$, the functions $\Psi_n$ (recall \eqref{eq:psi4points}, \eqref{eq:psi5points} and \eqref{eq:psi6points}) can be used to construct candidate superhedging strategies. Here we are motivated by Theorem \ref{thm:simpledual} (see also Remark \ref{rem:D0}): when searching for the cheapest superhedging strategy, one can restrict the attention to strategies that are generated by convex $\psi$ with $\psi\geq b$ (i.e., $\psi\in\tilde{\mathcal{S}}(b)$). Note that, for $n=4$ and an arbitrary $w=(w_1,...,w_4)\in\R^4$, $\Psi_4^w\in \tilde{\mathcal{S}}(b)$ (see \eqref{eq:psi4points}). However, for $n\in\{5,6\}$ and an arbitrary $w=(w_1,...,w_n)\in\R^n$, we could have that $\Psi_n^w\notin \tilde{\mathcal{S}}(b)$ (in fact, $\Psi_n^w$ may not be even continuous). On the other hand, if $w\in\sL_n^{a,b}$, then  due to our definitions (see \eqref{eq:psi5points} and \eqref{eq:psi6points}), $\Psi_n^w$ is convex and $\Psi^w_n\geq b$, so that $\Psi_n^w\in \tilde{\mathcal{S}}(b)$ and then we can use it to form a hedge.
\end{remark}

The first main result says that if we can find $w = (w_1, \ldots w_n) \in \Gamma_n^{\mu,\nu} \cap \sL_{n}^{a,b}$ for some $n \in \{4,5,6\}$ then there is no duality gap and $w$ characterises both the optimal model and the optimal hedge.

\begin{theorem}
\label{thm:main1}
Suppose $\cup_{n=4,5,6} (\Gamma^{\mu,\nu}_n \cap \sL^{a,b}_n) \neq \emptyset$. 
%for some $n \in \{4,5,6 \}$. 
Then $\sP=\sD$. In particular, if $w=(w_1, \ldots w_n) \in \Gamma^{\mu,\nu}_n \cap \sL^{a,b}_n$ then $w$ defines a model $\sM^w \in M(\mu,\nu)$, a stopping time $\tau^w$ and a hedge $\Psi^w \in \tilde\sS(b)$ such that
\[ \E^{\sM^w}[a(Z_1)I_{ \{ \tau^w=1 \}} + b(Z_2) I_{ \{ \tau^w=2 \} }] = \sP = \sD = \int (a - \Psi^w)^+ d \mu + \int \Psi^w d \nu \]
(i.e., $(M^w,\tau^w)$ and $\Psi^w$ are optimal).
\end{theorem}

%We have two further main results, which we state now although some of the concepts and assumptions are only defined later.

%The next result shows that the class of canonical models is not rich enough to contain the maximising model. Instead, additional randomisation is required. 
%\begin{theorem}
%\label{thm:main2}
%Suppose $\sL^{a,b}_4 \cap \Lambda^{\mu,\nu}_4 = \emptyset= \sL^{a,b}_5 \cap \Lambda^{\mu,\nu}_5$ but $\sL^{a,b}_6 \cap \Lambda^{\mu,\nu}_6 \neq \emptyset$. Then $\sP^{can} < \sP = \sD$.
%\end{theorem}

The second main result says that under an assumption on the pair of laws $(\mu,\nu)$ which will be introduced below, but which is satisfied for a wide class of examples, the non-emptyness hypothesis of Theorem~\ref{thm:main1} is satisfied. 
It follows that under this assumption %~\ref{ass:simple} we have 
the problem is completely solved.
\begin{theorem}
\label{thm:main3}
Suppose Assumption~\ref{ass:simple} below holds. Then $\cup_{n=4,5,6} (\Gamma^{\mu,\nu}_n \cap \sL^{a,b}_n)$ is non-empty. Furthermore, $\sP=\sD$ and we can find an optimal model, stopping time and hedge.
\end{theorem}

\section{Martingale couplings and models}

\subsection{Canonical models and lifted models}

In order to explicitly solve the primal problem \eqref{eq:primal} (by finding an optimal model and an associated optimal stopping time), we introduce two useful (and natural) subsets of the set of $(\mu,\nu)$-consistent models $M(\mu,\nu)$. For $S = (\alpha,\beta) \subseteq \R$  let $\sB(S)$ denote the set of Borel subsets of $(\alpha,\beta)$ and more generally for $S^n=(\alpha_1,\beta_1) \times \ldots \times (\alpha_n,\beta_n) \subseteq \R^n$ let $\sB(S^n)$ denote the set of Borel subsets of $S^n$.

Let $\Pi_M(\mu,\nu)$ be the set of martingale couplings of $\mu\leq_{cx}\nu$. In particular, we write $\pi\in\Pi_M(\mu,\nu)$ if $\pi$ is a (Borel) probability measure on $(\alpha_\mu,\beta_\mu)\times(\alpha_\nu,\beta_\nu)$, if the first and second marginals of $\pi$ are $\mu$ and $\nu$, respectively (i.e., $\pi(A\times(\alpha_\nu,\beta_\nu))=\mu(A)$ and $\pi((\alpha_\mu,\beta_\mu)\times B)=\nu(B)$ for all $A\subseteq \sB((\alpha_\mu,\beta_\mu)),B\subseteq\sB((\alpha_\nu,\beta_\nu))$), and if $\pi(dx,dy)=\mu(dx)\pi_x(dy)$ is such that $\int |y|\pi_x(dy)<\infty$ and $\int y\pi_x(dy)=x$ for $\mu$-a.e. $x\in(\alpha_\mu,\beta_\mu)$. Here $(\pi_x)_{x \in (\alpha_\mu,\beta_\mu)}$ is a disintegration of $\pi$ with respect to $x$. The following theorem is due to Strassen~\cite{Strassen:65}

\begin{theorem}[Strassen's Theorem~\cite{Strassen:65}]
\label{thm:strassen}
%By Strassen's \cite{Strassen:65} theorem,  
$\Pi_M(\mu,\nu)\neq\emptyset$ if and only if $\mu\leq_{cx}\nu$. 
\end{theorem}

It follows that $\Pi_M(\mu,\nu)$ naturally induces the set of canonical models. In particular, let $\Omega^{can}= \{(x,y):\alpha_\mu < x < \beta_\mu,~ \alpha_\nu < y < \beta_\nu \}$, $\sF^{can} = \{ A \times B: A \in \sB((\alpha_\mu,\beta_\mu)),~ B \in \sB((\alpha_\nu,\beta_\nu))\}$, $\sF^{can}_0 = \{\Omega,\emptyset\}$, $\sF^{can}_1 = \{ A \times (\alpha_\nu,\beta_\nu) : A \in \sB((\alpha_\mu,\beta_\mu)) \}$,~ $\sF^{can}_2 = \sF^{can}$ and define
\begin{equation}\label{eq:canonicalModels}
M^{can}(\mu,\nu):=\{(\Omega^{can},\sF^{can},\mathbb F^{can}=(\sF^{can}_0,\sF^{can}_1,\sF^{can}_2),\PP=\pi):\pi\in\Pi_M(\mu,\nu)\}.
\end{equation}
Then $M^{can}(\mu,\nu)\subset M(\mu,\nu)$, and for each $\pi\in\Pi_M(\mu,\nu)$ there are random variables $(X,Y)=(Z_1,Z_2)$ such that $\PP(X \in dx, Y \in dy) = \pi(dx,dy)$. The corresponding model (associated to a coupling $\pi\in\Pi_M(\mu,\nu)$) is denoted by $\sM_\pi\in M^{can}(\mu,\nu)$. Denoting by $\sT^{can}_{1,2}$ the set of $\mathbb F^{can}$-stopping times (taking values in $\{1,2\}$) set
\begin{align}\label{eq:primalCanonical}
\sP^{can} = \sP^{can}(\mu,\nu;a,b) &= \sup_{\sM \in M^{can}(\mu,\nu)} \sup_{\tau \in \sT^{can}_{1,2}} \E^{\sM}[ c(Z_\tau,\tau)],
\end{align}
and note that
$$
\sP\geq\sP^{can}=\sup_{\pi\in\Pi_M(\mu,\nu)}\sup_{B\in \sB((\alpha_\mu,\beta_\mu))}\left\{\int_Ba(x)\mu(dx)+\int_{((\alpha_\mu,\beta_\mu)\setminus B)\times\R}b(y)\pi_x(dy)\mu(dx)\right\}.
$$

The second useful subset of $M(\mu,\nu)$ requires a uniform random variable on $[0,1]$ that is independent of $X\sim\mu$.

Let $\lambda=\lambda_{[0,1]}$ denote the Lebesgue measure on $[0,1]$. We write $\hat\pi\in\hat\Pi_M(\mu,\nu)$ if $\hat\pi$ is a (Borel) probability measure on $(\alpha_\mu,\beta_\mu)\times(0,1)\times(\alpha_\nu,\beta_\nu)$, if $\hat\pi(A\times V\times\R)=(\mu\otimes\lambda)(A\times V)$ for all  $A\in \sB((\alpha_\mu,\beta_\mu)), V \in \sB((0,1))$ (i.e., $\int_y\hat\pi(dx,du,dy)=(\mu\otimes\lambda)(dx,du)=\mu(dx)du$ is the product measure of $\mu$ and $\lambda$, so that trivially the first and second marginals of $\hat\pi$ are $\mu$ and $\lambda$, respectively), if the third marginal of $\hat\pi$ is $\nu$, and if $\hat\pi(dx,du,dy)=\mu(dx)du\hat\pi_{x,u}(dy)$ is such that $\int |y|\pi_{x,u}(dy)<\infty$ and $\int y\pi_{x,u}(dy)=x$ for $(\mu\otimes\lambda)$-a.e. $(x,u)\in(\alpha_\mu,\beta_\mu)\times(0,1)$. We again have that $\hat\Pi_M(\mu,\nu)$ is non-empty if and only if $\mu\leq_{cx}\nu$. Indeed, take $\pi\in\Pi_M(\mu,\nu)$ and set $\tilde\pi(dx,du,dy)=\mu(dx)du\pi_x(dy)$. Then $\tilde\pi\in\hat\Pi_M(\mu,\nu)$. Also note that, if $\hat\pi\in\hat\Pi_M(\mu,\nu)$, then  $\int_{u \in (0,1)} d\hat\pi\in\Pi_M(\mu,\nu)$.

Next, we introduce the set of models that are induced by $\hat\Pi_M(\mu,\nu)$. Let $\Omega^{rand}= \{(x,u,y): \alpha_\mu < x < \beta_\mu,~0<u<1,~ \alpha_\nu < y < \beta_\nu \}$, $\sF^{rand}= \sigma(\{ A\times V \times B: A \in \sB((\alpha_\mu,\beta_\mu)),~ V \in \sB((0,1)),~B \in \sB((\alpha_\nu,\beta_\nu))\})$, $\sF^{rand}_0 = \{\Omega,\emptyset\}$, $\sF^{rand}_1 = \sigma(\{ A\times V \times (\alpha_\nu,\beta_\nu) :  A  \in \sB((\alpha_\mu,\beta_\mu)),~V \in \sB((0,1))\}$), $\sF^{rand}_2 = \sF^{rand}$. Define
\begin{equation}\label{eq:UcanonicalModels}
{M}^{rand}(\mu,\nu):=\{(\Omega^{rand},\sF^{rand},{\mathbb F}^{rand}=(\sF^{rand}_0,\sF^{rand}_1,\sF^{rand}_2),\PP^{rand}=\hat\pi):\hat\pi\in\hat\Pi_M(\mu,\nu)\}.
\end{equation}
Then $M^{rand}(\mu,\nu)\subset M(\mu,\nu)$, and for each $\hat\pi\in\hat\Pi_M(\mu,\nu)$ there are random variables $(X,U,Y)=(Z_1,U,Z_2)$ such that $\PP( X \in dx,U\in du, Y \in dy) = \hat \pi(dx,du,dy)$ (the corresponding model is denoted by $\sM_{\hat\pi}\in M^{rand}(\mu,\nu)$). The set of ${\mathbb F}^{rand}$-stopping times (taking values in $\{1,2\}$) is denoted by $\sT^{rand}_{1,2}$. Set
\begin{align}\label{eq:UprimalCanonical}
\sP^{rand} = \sP^{rand}(\mu,\nu;a,b) = \sup_{\sM \in M^{rand}(\mu,\nu)} \sup_{\tau \in \sT^{rand}_{1,2}} \E^{\sM}[ c(Z_\tau,\tau)],
\end{align}
and note that
\begin{align*}
\sP\geq \sP^{rand}=\sup_{\hat\pi\in\hat\Pi_M(\mu,\nu)}\sup_{\hat B\in \sigma( \sB((\alpha_\mu,\beta_\mu))\times\sB((0,1)))}\Bigg\{&\int_{\hat B}a(x)\mu(dx)du\\&+\int_{\left((\alpha_\mu,\beta_\mu)\times(0,1)\right)\setminus\hat B}\left(\int_\R b(y)\hat\pi_{x,u}(dy)\right)\mu(dx)du\Bigg\}.
\end{align*}
\begin{lem}\label{lem:primalBounds} Let $\sP,\sP^{can}$ and $\sP^{rand}$ be as in \eqref{eq:primal}, \eqref{eq:primalCanonical} and \eqref{eq:UprimalCanonical}, respectively. Then $\sP^{can}\leq\sP^{rand}\leq\sP$.
\end{lem}
\begin{proof}
    %{\cblue This follows immediately by observing that each $\pi\in\Pi_M(\mu,\nu)$ induces $\tilde\pi\in\hat\Pi_M(\mu,\nu)$ via $\tilde\pi(dx,du,dy)=\mu(dx)du\pi_x(dy)$, and the fact that ${\mathbb F}^{rand}$ is finer than $\mathbb {F}^{can}$ (i.e., $\sF^{can}_t\subseteq \sF^{rand}_t$ for $t=0,1,2$), so that $\sT^{can}_{1,2}\subset\sT^{rand}_{1,2}$.
    %This is not strictly true. Instead?:} \\
This follows immediately by observing that each $\pi\in\Pi_M(\mu,\nu)$ induces $\tilde\pi\in\hat\Pi_M(\mu,\nu)$ via $\tilde\pi(dx,du,dy)=\mu(dx)du\pi_x(dy)$ and that any stopping time $\tau$ in the model $\sM_\pi$ induces a stopping time $\tilde{\tau} $ for the model $\sM_{\tilde{\pi}}$ using that if $\{(x,y) : \tau = 1 \}=A \times (\alpha_\nu,\beta_\nu) \in \sF^{can}_1$ then $\{(x,u,y) : \tilde{\tau} = 1 \}=A \times (0,1) \times (\alpha_\nu,\beta_\nu) \in \sF^{rand}_1$. 

%Then if we consider the filtration $\tilde{\mathbb F} = (\tilde{\sF}_t)_{t=0,1,2}$ (with $\sF_t \subset \sF^{rand}$) defined via $(\tilde{\sF}_0 = \{\Omega , \emptyset) \}, \tilde{\sF}_1 = \{ A \times (0,1) \times (\alpha_\nu,\beta_\nu) : A \in \sB(\alpha_\mu,\beta_\mu)\}, \tilde{\sF}_2 = \sigma(\{ A \times (0,1) \times B  : A \in \sB(\alpha_\mu,\beta_\mu), B \in \sB(\alpha_\nu,\beta_\nu)\})$, so that elements of  $\tilde{\sF}_t$ are in direct correspondence with elements of $\sF^{can}_i$,
%we have that ${\mathbb F}^{rand}$ is finer than $\tilde{\mathbb F}$ %(i.e., $\sF^{can}_t \sim \tilde{\sF}_t \subseteq \sF^{rand}_t$ for $t=0,1,2$, where $\sim$ denotes 1-1 correspondence). 
%(i.e., $\tilde{\sF}_t \subseteq \sF^{rand}_t$ for $t=0,1,2$).
%Then (denoting the set of $\tilde{\mathbb F}$-stopping times (taking values in $\{1,2\}$) by $\tilde{\sT}_{1,2}$) each $\tau \in \sT^{can}_{1,2}$ can be identified with a $\tilde{\mathbb F}$-stopping time $\tilde{\tau}\in \tilde{\sT}_{1,2}$ where $\tilde{\sT}_{1,2}\subset\sT^{rand}_{1,2}$.}
\end{proof}

Under an assumption that there exists special $n$-tuples ($n=4,5,6$), in Section \ref{subs:main} we will show that either $\sP^{can}=\sP^{rand}=\sP$ (see 4-point and 5-point constructions) or $\sP^{rand}=\sP$ (see 6-point construction); this is summarized in Theorem \ref{thm1}. In the first case we have that the supremum in $\sP^{can}$ is attained. The supremum in $\sP^{rand}$ is always attained.
We will prove our results by explicitly constructing optimal models that belong to $M^{can}(\mu,\nu)$ or $M^{rand}(\mu,\nu)$, respectively. Later, in Section \ref{sec:optimalModel}, we will further show that, under an additional (but natural) assumption regarding the marginals $(\mu,\nu)$ (see Definition \ref{def:dispersion}), the assumptions of Theorem \ref{thm1} are satisfied (proving the strong duality for such pairs of marginals).
%The additional structure and our construction are based on the properties of two special martingale couplings, which we introduce in the next section.

\subsection{Useful classes of martingale models and martingale couplings}\label{subs:main}
In this section we identify elements of $\Gamma^{\mu,\nu}_n$ with martingale couplings $\pi \in \Pi_M(\mu,\nu)$. {\em Since the marginal distributions $\mu$ and $\nu$ are assumed to be continuous, in all the constructions below it is sufficient to describe the couplings in terms of behaviour on open inetrvals.}

\begin{comment}
characterise the optimal model (together with an optimal stopping time) and cheapest superhedge for the Bermudan option under a simplifying assumption (i.e., we show that the primal (recall \eqref{eq:primal}) and (restricted) dual (recall Problem \ref{prob:restrictedDual}) values agree). In particular, our construction splits into three cases: the optimal quantities will depend on four, five and six special points (on the real line), respectively.  
\end{comment}

\paragraph{4-point construction.} Fix $x_-,x_+\in(\alpha_\mu,\beta_\mu)$ and $y_-,y_+\in(\alpha_\nu,\beta_\nu)$ with $y_- \leq x_-<x_+ \leq y_+$. Define
\begin{equation}\label{eq:4points}
    \Pi^{y_-,x_-,x_+,y_+}_{M,4}(\mu,\nu):=\{\pi\in\Pi_M(\mu,\nu):\eqref{eq:4points1},~\eqref{eq:4points2},~\eqref{eq:4points3}\textrm{ hold}\},
\end{equation}
where
\begin{align}\label{eq:4points1}
\pi(A\times\R)=\pi(A\times A)\quad&  \forall A\in \sB((\alpha_\mu,y_-)\cup(y_+,\beta_\mu)),\\ 
\pi(A\times\R)=\pi(A\times(y_-,y_+))\quad&\forall A\in \sB((y_-,x_-)\cup(x_+,y_+))\label{eq:4points2},\\
\pi(A\times\R)=\pi(A\times(\alpha_\nu,y_-)\cup(y_+,\beta_\nu))\quad&\forall A\in \sB((x_-,x_+)).\label{eq:4points3}
\end{align}
Note that \eqref{eq:4points1} can be restated as mass in $(\alpha_\mu,y_-)\cup(y_+,\beta_\mu)$ stays where it is under $\pi$; \eqref{eq:4points2} can be restated as mass in $(y_-,x_-)\cup(x_+,y_+)$ is mapped to the set $(y_-,y_+)$ under $\pi$; \eqref{eq:4points3} can be restated as mass in $(x_-,x_+)$ is mapped to the set $(\alpha_\nu,y_-)\cup(y_+,\beta_\nu)$ under $\pi$. 

\begin{lem}
\label{lem:GammaPsi4}
  $ \Pi^{y_-,x_-,x_+,y_+}_{M,4}(\mu,\nu)\neq\emptyset$ if and only if 
$(y_-,x_-,x_+,y_+) \in \Gamma_4^{\mu,\nu}$. 
\end{lem}

\begin{proof}
Suppose $\pi \in  \Pi^{y_-,x_-,x_+,y_+}_{M,4}(\mu,\nu)$. Then from \eqref{eq:4points1} we can infer that $\mu \leq \nu$ on 
$(\alpha_\mu,y_-)\cup(y_+,\beta_\mu)$.
From \eqref{eq:4points2} we can infer that mass in $(y_-,x_-)\cup(x_+,y_+)$ is mapped to 
$(y_-,y_+)$ and from \eqref{eq:4points1} and \eqref{eq:4points3} we know that no other mass is mapped to $(y_-,y_+)$; hence $\mu|_{(y_-,x_-)\cup(x_+,y_+)} \leq_{cx} \nu|_{(y_-,y_+)}$. It follows that 
$(y_-,x_-,x_+,y_+) \in \Gamma_4^{\mu,\nu}$.

Conversely, suppose $(y_-,x_-,x_+,y_+) \in \Gamma_4^{\mu,\nu}$.
Let the martingale couplings $(\pi^k_4)_{k=1,2,3}$ be such that $\pi^1_4 \in \Pi_M(\mu|_{(\alpha_\nu,y_-) \cup (y_+,\beta_\nu)}, \mu|_{(\alpha_\nu,y_-) \cup (y_+,\beta_\nu)})$, $\pi^2_4 \in \Pi_M(\mu|_{(y_-,x_-)\cup(x_+,y_+)},\nu|_{(y_-,y_+)})$ and 
$\pi^3_4 \in \Pi_M(\mu|_{(x_-,x_+)}, (\nu - \mu)_{(\alpha_\nu,y_-) \cup (y_+,\beta_\nu)})$. In each case we know that such a martingale coupling must exist by Lemma~\ref{lem:musthave4} and Theorem~\ref{thm:strassen}. %{\bf ??? Be worth stating this somewhere.}
Then if we define $\pi_4 := \sum_{k=1}^3 \pi_4^k$ then it is clear that $\pi_4$ is a martingale coupling with marginals $\mu$ and $\nu$ and hence that $\pi_4 \in \Pi^{y_-,x_-,x_+,y_+}_{M,4}(\mu,\nu)$. 
\end{proof}

%Equation~\eqref{eq:4points1} implies that $\mu|_{(\alpha_\mu,y_-)\cup(y_+,\beta_\mu)} \leq \nu|_{(\alpha_\mu,y_-)\cup(y_+,\beta_\mu)}$.
%Putting the three statements together implies that under the coupling $\pi$ all mass in the target law in $(y_-,y_+)$ must have come from $(y_-,x_-)\cup(x_+,y_+)$ and hence that $\mu|_{(y_-,x_-)\cup(x_+,y_+)} \leq_{cx} \nu|_{(y_-,y_+)}$. Then we must have $(y_-,x_-,x_+,y_+) \in \Gamma_4^{\mu,\nu}$, and $ \Pi^{y_-,x_-,x_+,y_+}_{M}(\mu,\nu)\neq\emptyset$ implies $(y_-,x_-,x_+,y_+) \in \Gamma^{\mu,\nu}_4$.{\cblue Have we shown iff or just implies?}

%$\mu\leq\nu$ on $(\alpha_\mu,y_-)\cup(y_+,\beta_\mu)$, $\mu\lvert_{[y_-,x_-]\cup[x_+,y_+]}\leq_{cx}\nu\lvert_{[y_-,y_+]}$ and $\mu\lvert_{(x_-,x_+)}\leq_{cx}(\nu-\mu)\lvert_{(\alpha_\nu,y_-)\cup(y_+,\beta_\nu)}$.

Define a candidate stopping time $\tau^{x_-,x_+}_4$ by $\tau^{x_-,x_+}_4= \tau^{x_-,x_+}$ where
\begin{equation}\label{eq:tau4point}
\tau^{x_-,x_+}=\begin{cases}1,\quad \textrm{if } X=Z_1\in(\alpha_\nu,x_-]\cup[x_+,\beta_\mu),\\
2,\quad \textrm{otherwise.}
\end{cases}\end{equation}
Note that $\tau^{x_-,x_+}_4\in\sT_{1,2}^{can}$.

%For the same quadruple $y_-<x_-<x_+<y_+$, define $\Psi^{y_-,x_-,x_+,y_+}:(\alpha_\nu,\beta_\nu) \mapsto \R$ by
%\begin{equation}
%    \label{eq:psi5points}
% \Psi^{y_-,x_-,x_+,y_+}(w)  =  
%\begin{cases} 
%     b(w), & w \leq y_-; \\
%    T^{x_-,x_+}_{a,a}(w), \quad \quad
%         &  y_- < w \leq y_+; \\
%     b(w), & y_+<w.    
%           \end{cases} 
%\end{equation}

\paragraph{5-point construction.} Fix $x_-,z,x_+\in(\alpha_\mu,\beta_\mu)$ and $y_-,y_+\in(\alpha_\nu,\beta_\nu)$ with $y_-<x_- \leq z \leq x_+<y_+$. Define
\begin{equation}\label{eq:5points}
    \Pi^{y_-,x_-,z,x_+,y_+}_{M,5}(\mu,\nu):=\{\pi\in\Pi_M(\mu,\nu):\eqref{eq:5points1},~\eqref{eq:5points2},~\eqref{eq:5points3},~\eqref{eq:5points4}\textrm{ hold}\},
\end{equation}
where
\begin{align}\label{eq:5points1}
\pi(A\times\R)=\pi(A\times A)\quad& \forall A\in \sB((\alpha_\mu,y_-)\cup(y_+,\beta_\mu)),\\ 
\pi(A\times\R)=\pi(A\times(y_-,z))\quad&\forall A\in \sB((y_-,x_-))\label{eq:5points2},\\
\pi(A\times\R)=\pi(A\times(z,y_+))\quad&\forall A \in \sB((x_+,y_+))\label{eq:5points3},\\
\pi(A\times\R)=\pi(A\times(\alpha_\nu,y_-)\cup(y_+,\beta_\nu))\quad&\forall A \in \sB((x_-,x_+)).\label{eq:5points4}
\end{align}
{Note that \eqref{eq:5points1} can be restated as mass in $(\alpha_\mu,y_-)\cup(y_+,\beta_\mu)$ stays where it is under $\pi$; \eqref{eq:5points2} (resp., \eqref{eq:5points3}) can be restated as mass in $(y_-,x_-)$ (resp., $(x_+,y_+)$) is mapped to the set $(y_-,z)$ (resp.,  $(z,y_+)$) under $\pi$; \eqref{eq:5points4} can be restated as mass in $(x_-,x_+)$ is mapped to the set $(\alpha_\nu,y_-)\cup(y_+,\beta_\nu)$ under $\pi$. }

The proof of the following result is similar to the proof of Lemma~\ref{lem:GammaPsi4} {(except that in this case we use Lemma \ref{lem:musthave5} in place of Lemma~\ref{lem:musthave4}, together with Theorem \ref{thm:strassen})}.
\begin{lem}
\label{lem:GammaPsi5}
  $ \Pi^{y_-,x_-,z,x_+,y_+}_{M,5}(\mu,\nu)\neq\emptyset$ if and only if 
$(y_-,x_-,z,x_+,y_+) \in \Gamma_5^{\mu,\nu}$. 
\end{lem}

In this case we take the same candidate stopping time as in the 4-point construction. In particular, set $\tau^{x_-,x_+}_5 = \tau^{x_-,x_+}$ where $\tau^{x_-,x_+}$ is as defined in \eqref{eq:tau4point}

\begin{comment}
For the same quintuple $y_-<x_-<z<x_+<y_+$, define $\Psi^{y_-,x_-,z,x_+,y_+}:(\alpha_\nu,\beta_\nu) \mapsto \R$ by
\begin{equation}
    \label{eq:psi5points}
 \Psi^{y_-,x_-,z,x_+,y_+}(w)  =  
\begin{cases} 
     b(w), & w \leq y_-; \\
    T^{x_-,y_-}_{a,b}(w), \quad \quad
         &  y_- < w \leq z; \\
     T^{x_+,y_+}_{a,b}(x), 
        &  z < w \leq y_+;  \\
     b(w), & y_+<w.    
           \end{cases} 
\end{equation}
\end{comment}

\paragraph{6-point construction.} Fix $x_-,z_-,z_+,x_+\in(\alpha_\mu,\beta_\mu)$ and $y_-,y_+\in(\alpha_\nu,\beta_\nu)$ with $y_-<x_-\leq z_- < z_+ \leq x_+<y_+$. Define
\begin{equation}\label{eq:6points}
    \hat\Pi^{y_-,x_-,z_-,z_+,x_+,y_+}_{M,6}(\mu,\nu):=\{\hat\pi\in\hat\Pi_M(\mu,\nu):\eqref{eq:6points1},~\eqref{eq:6points2},~\eqref{eq:6points3},~\eqref{eq:6points4},~\eqref{eq:6points5},~\eqref{eq:6points6}\textrm{ hold}\},
\end{equation}
where, for each $\hat\pi\in\hat\Pi(\mu,\nu)$ we write $\hat\pi(dx,du,dy)=\mu(dx)\hat\pi_x(du,dy)$, and
\begin{eqnarray}
\hat\pi(A\times V\times\R) & = & \hat\pi(A\times V\times A),  \label{eq:6points1} \\ & &  \forall A\times V \in \sB (((\alpha_\mu,y_-)\cup(y_+,\beta_\mu)) \times(0,1)); \nonumber \\
    \hat\pi(A\times V\times\R) &= & \hat\pi(A\times V \times (y_-,z_-)), \label{eq:6points2} \\
&& \forall A\times V \in \sB {((y_-,x_-))} \times(0,1)); \nonumber \\
\label{eq:6points3}
    \hat\pi(A\times V\times\R) &=&\hat\pi(A\times V\times(z_+,y_+)), \\  
&& \forall A\times V \in \sB ((x_+,y_+) \times(0,1)); \nonumber \\
\label{eq:6points4}
    \hat\pi(A\times V\times\R)&=&\hat\pi(A\times V\times(\alpha_\nu,y_-)\cup(y_+,\beta_\nu)), \\
&& \forall A\times V \in \sB(((x_-,z_-)\cup(z_+,x_+)) \times (0,1)); \nonumber  \\    
\label{eq:6points5}
\hat\pi_x(V\times\R)& = &\hat\pi_x(V\times\{x\}), \\
&& \forall V \in \sB \left(\left(0,\frac{\nu(dx)}{\mu(dx)}=\frac{\eta(x)}{\rho(x)}\wedge1\right)\right) \mbox{for $\mu$-a.e. $x\in(z_-,z_+)$;} \nonumber \\
\label{eq:6points6}   \hat\pi_x(V\times\R) &=&\hat\pi_x(V\times(\alpha_\nu,y_-)\cup(y_+,\beta_\nu)), \\ 
&& \forall V \in \sB \left(\left(\frac{\nu(dx)}{\mu(dx)}=\frac{\eta(x)}{\rho(x)}\wedge1,1\right)\right) \mbox{for $\mu$-a.e. $x\in(z_-,z_+)$.} \nonumber
\end{eqnarray}
Here {\eqref{eq:6points1}--\eqref{eq:6points4} have a similar interpretation as in the 4- and 5-point construction. For  \eqref{eq:6points5} and \eqref{eq:6points6} mass at $x$ in $(z_-,z_+)$ either stays where it is with probability $\frac{\eta(x)}{\rho(x)}$ or moves to the tails of $\nu$ (i.e., outside $(y_-,y_+)$) with probability $1-\frac{\eta(x)}{\rho(x)}$.}
\begin{lem}
\label{lem:GammaPsi6}
  $\hat\Pi^{y_-,x_-,z_-,z_+,x_+,y_+}_{M,6}(\mu,\nu)\neq\emptyset$ and $\rho\geq\eta$ on $(z_-,z_+)$ if and only if   $(y_-,x_-,z_-,z_+,x_+,y_+) \in \Gamma_6^{\mu,\nu}$.
\end{lem}

\begin{proof}
 Suppose $\hat\pi\in\hat\Pi_{M,6}^{y_-,x_-,z_-,z_+,x_+,y_+}$ and $\rho\geq\eta$ on $(z_-,z_+)$. From \eqref{eq:6points1} we infer that $\mu\leq\nu$ on $(\alpha_\mu,y_-)\cup(y_+,\beta_\mu)$. From \eqref{eq:6points2} (resp., \eqref{eq:6points3}) we have that $(y_-,x_-)$ (resp., $(x_+,y_+)$) is mapped to $(y_-,z_-)$ (resp., $(z_+,y_+)$), while from \eqref{eq:6points4}, \eqref{eq:6points5} and \eqref{eq:6points6} we see that no other mass is mapped to $(y_-,z_-)$ (resp., $(z_+,y_+)$), and thus $\mu\lvert_{(y_-,x_-)}\leq_{cx}\nu\lvert_{(y_-,z_-)}$ (resp.,  $\mu\lvert_{(x_+,y_+)}\leq_{cx}\nu\lvert_{(z_+,y_+)}$). Also, from the assumption that $\rho\geq\eta$ on $(z_-,z_+)$, we immediately have that $\nu\lvert_{(z_-,z_+)}\leq \mu\lvert_{(z_-,z_+)}$ (note that in \eqref{eq:6points5} and \eqref{eq:6points6}  we have that $(\eta(x)/\rho(x))\wedge1=\eta(x)/\rho(x)$ for all $x\in(z_-,z_+)$). It follows that $(y_-,x_-,z_-,z_+,x_+,y_+)\in\Gamma^{\mu,\nu}_6$.

Now suppose $(y_-,x_-,z_-,z_+,x_+,y_+)\in\Gamma^{\mu,\nu}_6$. Then by the definition of $\Gamma^{\mu,\nu}_6$ we immediately have that $\nu\leq\mu$ (and thus $\eta\leq\rho$) on $(z_-,z_+)$. Take $\pi^1\in\Pi_M(\mu\lvert_{(\alpha_\mu,y_-)\cup(y_+,\beta_\mu)},\mu\lvert_{(\alpha_\mu,y_-)\cup(y_+,\beta_\mu)})$ (so that $\pi^1_x=\delta_x$), $\pi^2\in\Pi_M(\mu\lvert_{(y_-,x_-)},\nu\lvert_{(y_-,z_-)})$, $\pi^3\in\Pi_M(\mu\lvert_{(x_+,y_+)},\nu\lvert_{(z_+,y_+)})$, $\pi^4\in\Pi_M(\nu\lvert_{(z_-,z_+)},\nu\lvert_{(z_-,z_+)})$ (so that again $\pi^4_x=\delta_x$), $\pi^5\in\Pi_M(\mu\lvert_{(x_-,z_-)\cup(z_+,x_+)} +(\mu-\nu)|_{(z_-,z_+)} ,(\nu - \mu)_{(\alpha_\nu,y_-) \cup (y_+,\beta_\nu)})$. Note that all five couplings exist by the definition of $\Gamma^{\mu,\nu}_6$, Lemma \ref{lem:musthave6} and Theorem \ref{thm:strassen}. Now define $\hat\pi(dx,du,dy)=\mu(dx)du\hat\pi_{x,u}(dy)$ by setting 
$$
\hat\pi_{x,u}=\begin{cases}
    \pi^1_x,\quad (x,u)\in ((\alpha_\mu,y_-)\cup(y_+,\beta_\mu))\times(0,1)\\
    \pi^2_x,\quad (x,u)\in(y_-,x_-)\times(0,1)\\
    \pi^3_x,\quad (x,u)\in(x_+,y_+)\times(0,1)\\
    \pi^4_x,\quad (x,u)\in(z_-,z_+)\times \left(0,\frac{\eta(x)}{\rho(x)}\right)\\
    \pi^5_x,\quad(x,u)\in\left(((x_-,z_-)\cup(z_+,x_+))\times(0,1)\right)\cup\left((z_-,z_+)\times\left(\frac{\eta(x)}{\rho(x)},1\right)\right).
\end{cases}
$$
Then it is easy to see that $\hat\pi\in\hat\Pi_{M,6}^{y_-,x_-,z_-,z_+,x_+,y_+}$.
\end{proof}

\begin{comment}
Note that $ \hat\Pi^{y_-,x_-,z_-,z_+,x_+,y_+}_{M,6}(\mu,\nu)\neq\emptyset$ if and only if 
{\cblue implies???} $(y_-,x_-,z_-,z_+,x_+,y_+) \in \Gamma^{\mu,\nu}_6$.

$\mu\leq\nu$ on $(\alpha_\mu,y_-)\cup(y_+,\beta_\mu)$, $\mu\lvert_{[y_-,x_-]}\leq_{cx}\nu\lvert_{[y_-,z_-]}$, $\mu\lvert_{[x_+,y_+]}\leq_{cx}\nu\lvert_{[z_+,y_+]}$, $\nu\leq\mu$ on $[z_-,z_+]$ and $\left(\mu\lvert_{(x_-,z_-)\cup(z_+,x_+)}+(\mu-\nu)\lvert_{[z_-,z_+]}\right)\leq_{cx}(\nu-\mu)\lvert_{(\alpha_\nu,y_-)\cup(y_+,\beta_\nu)}$.
\end{comment}

Define a candidate stopping time $\hat\tau^{x_-,z_-,z_+,x_+}$ by
\begin{equation}\label{eq:tau6point}
\hat\tau^{x_-,z_-,z_+,x_+}=\begin{cases}1,\quad \textrm{if } X=Z_1\in(\alpha_\nu,x_-)\cup(x_+,\beta_\mu)\textrm{ or }X=Z_1\in(z_-,z_+)\textrm{ and } U\in\left(0,\frac{\eta(X)}{\rho(X)}{ \wedge1}\right),\\
2,\quad \textrm{otherwise.}
\end{cases}\end{equation}

\begin{comment}
For the same sextuple $y_-<x_-<z_-\leq z_+<x_+<y_+$, define $\hat\Psi^{y_-,x_-,z_-,z_+,x_+,y_+}:(\alpha_\nu,\beta_\nu) \mapsto \R$ by
\begin{equation}
    \label{eq:psi5points}
 \hat\Psi^{y_-,x_-,z_-,z_+,x_+,y_+}(w)  =  
\begin{cases} 
     b(w), & w \leq y_-; \\
    T^{x_-,y_-}_{a,b}(w), \quad \quad
         &  y_- < w \leq z_-; \\
         a(w), & z_- < w\leq z_+;\\
     T^{x_+,y_+}_{a,b}(w), 
        &  z_+ < w \leq y_+;  \\
     b(w), &  y_+<w.    
           \end{cases} 
\end{equation}
\end{comment}
We are now in a position to prove Theorem~\ref{thm:main1} which will follow as an immediate corollory of the following result and Lemmas~\ref{lem:GammaPsi4}, \ref{lem:GammaPsi5}, and \ref{lem:GammaPsi6}. 
%
%We now show that, under certain circumstances (which will later be shown to hold under some assumptions on $(\mu,\nu)$), the candidate quantities, defined by 4-point, 5-point, and 6-point constructions, are in fact optimal. 
%
Recall that $c(\cdot,1)=a$ and $c(\cdot,2)=b$. Also recall Problem \ref{prob:restrictedDual} and the definition of the restricted dual value $\tilde\sD=\inf_{ \psi \in \tilde{S}(b) }\widetilde{HC}(\psi)$.

{
\begin{theorem}\label{thm1}
%Suppose that Standing Assumptions \ref{sass:densities}, \ref{sass:payoffs} and \ref{sass:ageqLb} hold.
\begin{itemize}
\item Case 1: suppose that $(y_-,x_-,x_+,y_+)\in \Sigma^{\mu,\nu}_4$ is such that
$$
\Pi^{y_-,x_-,x_+,y_+}_{M,4}(\mu,\nu)\neq\emptyset\quad\textrm{and}
\quad (y_-,x_-,x_+,y_+)\in  \sL^4_{a,b}.$$
Then $\Psi^{y_-,x_-,x_+,y_+}_{4,a,b}\in\tilde\sS(b)$. Moreover, the 4-point martingale coupling construction with stopping rule $\tau^{x_-,x_+}$ given in \eqref{eq:tau4point} and the hedge $\Psi^{y_-,x_-,x_+,y_+}_{4,a,b}$ are  optimal:
$$
\E^{\sM_\pi}[c(Z_{\tau^{x_-,x_+}},\tau^{x_-,x_+})]=\sP^{can}=\sP=\sD=\tilde\sD=\widetilde{HC}(\Psi^{y_-,x_-,x_+,y_+}_{4,a,b})
$$
for all $\pi\in\Pi^{y_-,x_-,x_+,y_+}_{M,4}$.
\item Case 2: 
Suppose that $(y_-,x_-,z,x_+,y_+)\in \Sigma^{\mu,\nu}_5$ is such that
%suppose that $x_-,z,x_+\in(\alpha_\mu,\beta_\mu)$ and $y_-,y_+\in(\alpha_\nu,\beta_\nu)$ with $y_-<x_-<z<x_+<y_+$ are such that
$$
\Pi^{y_-,x_-,z,x_+,y_+}_{M,5}(\mu,\nu)\neq\emptyset\quad\textrm{and}\quad (y_-,x_-,z,x_+,y_+)\in  \sL^5_{a,b}
%\Psi_{5,a,b}^{y_-,x_-,z,x_+,y_+}~\textrm{is convex with }\Psi_{5,a,b}^{y_-,x_-,z,x_+,y_+}(z) \geq a(z).
$$
Then $\Psi_{5,a,b}^{y_-,x_-,z,x_+,y_+}\in\tilde\sS(b)$. Moreover, the 5-point martingale coupling construction with stopping rule $\tau^{x_-,x_+}$ given in \eqref{eq:tau4point} and the hedge $\Psi_{5,a,b}^{y_-,x_-,z,x_+,y_+}$ are optimal:
$$
\E^{\sM_\pi}[c(Z_{\tau^{x_-,x_+}},\tau^{x_-,x_+})]=\sP^{can}=\sP=\sD=\tilde\sD=\widetilde{HC}(\Psi_{5,a,b}^{y_-,x_-,z,x_+,y_+})
$$
for all $\pi\in\Pi^{y_-,x_-,z,x_+,y_+}_{M,5}$.
\item Case 3: 
suppose that $(y_-,x_-,z_-,z_+,x_+,y_+)\in \Sigma^{\mu,\nu}_6$ is such that
%suppose that $x_-,z_-,z_+,x_+\in(\alpha_\mu,\beta_\mu)$ and $y_-,y_+\in(\alpha_\nu,\beta_\nu)$ with $y_-<x_-<z_-,z_+<x_+<y_+$ are such that 
$\nu\leq\mu$ on $(z_-,z_+)$, and 
$$
\hat\Pi^{y_-,x_-,z_-,z_+,x_+,y_+}_{M,6}(\mu,\nu)\neq\emptyset\quad\textrm{and}\quad (y_-,x_-,z_-,z_+,x_+,y_+)\in  \sL^6_{a,b}
%\Psi_{6,a,b}^{y_-,x_-,z_-,z_+,x_+,y_+}~\textrm{is convex}.
$$
Then $\Psi_{6,a,b}^{y_-,x_-,z_-,z_+,x_+,y_+}\in\tilde\sS(b)$. Moreover, the 6-point martingale coupling construction with stopping rule $\tau^{x_-,x_+}$ given in \eqref{eq:tau6point} is optimal:
$$
\E^{\sM_{\hat\pi}}[c(Z_{\hat\tau^{x_-,z_-,z_+,x_+}},\hat\tau^{x_-,z_-,z_+,x_+})]=\sP^{rand}=\sP=\sD=\tilde\sD=\widetilde{HC}(\Psi_{6,a,b}^{y_-,x_-,z_-,z_+,x_+,y_+})
$$
for all $\hat\pi\in\hat\Pi^{y_-,x_-,z_-,z_+,x_+,y_+}_{M,6}$.
\end{itemize}
\end{theorem}
}

\begin{proof} %[Proof of Theorem~\ref{thm1}: Case 1]

We prove Case 3 here; the other cases are proved in Appendix \ref{appendix}. Cases 1 and 2 use similar arguments but are simpler since they do not require randomisation.

Since, by assumption, $\Psi_{6}:=\Psi_{6,a,b}^{y_-,x_-,z_-,z_+,x_+,y_+}$ is convex, it is continuous at each $w\in\{y_-,x_-,z_-,z_+,x_+,y_+\}$, and therefore $\Psi_{6}(y_-)=b(y_-)=T^{y_-,x_-}_{b,a}(y_-)$, $\Psi_{6}(z_-)=T^{y_-,x_-}_{b,a}(z_-)=a(z_-)$, $\Psi_{6}(z_+)=T^{x_+,y_+}_{a,b}(z_+)=a(z_+)$ and $\Psi_{6}(y_+)=b(y_+)=T^{x_+,y_+}_{a,b}(y_+)$. Then, since $a$ is convex, we have that $\Psi_{6}\geq a\geq b$ on $[x_-,x_+]$ and $\Psi_{6}\leq a$ on $\R\setminus(x_-,x_+)$. On the other hand, by the convexity of $b$, we also have that $\Psi_{6}\geq b$ on $[y_-,y_+]$ (recall that $\Psi_{6}=b$ on $\R\setminus[y_-,y_+]$). It follows that $\Psi_{6}\in\tilde\sS(b)$. Set $\psi=\Psi_{6}^{y_-,x_-,z_-,z_+,x_+,y_+}$, $\phi=(a-\psi)^+$, $T_-=T^{y_-,x_-}_{b,a}$ and $T_+=T^{x_+,y_+}_{a,b}$. Then
\begin{eqnarray*}
 \lefteqn{\int_\R\phi(x)\mu(dx)+\int_\R\psi(y)\nu(dy) }\\
&=&\int_{(\alpha_\mu,y_-) \cup (y_+,\beta_\mu)}(a(x)-b(x))\mu(dx) + \int_{y_-}^{x_-}(a(x)- T_-(x))\mu(dx) + \int^{y_+}_{x_+}(a(x)- T_+(x))\mu(dx)\\
&& \quad +\int_{(\alpha_\nu,y_-) \cup (y_+,\beta_\nu)}b(y)\nu(dy)+\int_{y_-}^{z_-} T_-(y)\nu(dy) 
+\int_{z_+}^{y_+} T_+(y)\nu(dy) 
+\int_{z_-}^{z_+}a(y)\nu(dy)\\
&=&\int_{(\alpha_\mu,x_-) \cup (x_+,\beta_\nu)}a(x)\mu(dx)+\int_{(\alpha_\nu,y_-) \cup (y_+,\beta_\nu)}b(y)(\nu-\mu)(dy)+\int_{z_-}^{z_+}a(y)\nu(dy) \\
&& \quad + \left\{\int^{y_+}_{z_+} T_+(y)\nu(dy)-\int^{y_+}_{x_+} T_+(x)\mu(dx)\right\} + \left\{\int_{y_-}^{z_-} T_-(y)\nu(dy)-\int_{y_-}^{x_-} T_-(x)\mu(dx)\right\}\\
&=&\int_{(\alpha_\mu,x_-) \cup (x_+,\beta_\nu)}a(x)\mu(dx)+\int_{(\alpha_\nu,y_-) \cup (y_+,\beta_\nu)}b(y)(\nu-\mu)(dy)+\int_{z_-}^{z_+}a(y)\nu(dy)
\end{eqnarray*}
where we use that $\hat\Pi^{y_-,x_-,z_-,z_+,x_+,y_+}_{M}(\mu,\nu)\neq\emptyset$ (and thus $\mu|_{(y_-,x_-)} \leq_{cx} \nu|_{(y_-,z_-)}$ and $\mu|_{(x_+,y_+)} \leq_{cx} \nu|_{(z_+,y_+)}$, see Lemma \ref{lem:GammaPsi6} and \eqref{eq:Gamma6def}), to show that both the bracketed terms are zero. Also (again see \eqref{eq:Gamma6def}), $\mu\leq\nu$ on $(\alpha_\nu,y_-)\cup(y_+,\beta_\nu)$, and thus $(\nu-\mu)$ is a well-defined (non-negative) measure on $(\alpha_\nu,y_-)\cup(y_+,\beta_\nu)$.

 Finally, fix $\hat\pi\in\hat\Pi^{y_-,x_-,z_-,z_+,x_+,y_+}_{M,6}(\mu,\nu)$ and let $\sM_{\hat\pi}\in M^{rand}(\mu,\nu)$ be the corresponding model. By hypothesis, $\nu\leq\mu$ (and thus $\eta\leq\rho$) on $(z_-,z_+)$. Also, recall the definition of the candidate stopping time $\hat\tau=\hat\tau^{x_-,z_-,z_+,x_+}\in\sT^{rand}_{1,2}$, see \eqref{eq:tau6point}. Then
\begin{eqnarray*}
\lefteqn{\left( \int_\R\phi(x)\mu(dx)+\int_\R\psi(y)\nu(dy) \right) }\\
%&=& \int_{(-\alpha,\hat{x}_L)\cup(x_3,\alpha)}a(x)\mu(dx)+\int_{g^L(\hat{x}_L)}^{f^R(x_3)}a(y)\nu(dy) +\int_{(-\beta, f^L(\hat{x}_L)) \cup (g^R(x_3),\beta)}b(y)(\nu-\mu)(dy)\\
& = & \int I_{ \{ x \in (\alpha_\mu,x_-)\cup(x_+,\beta_\mu) \} } a(x) \mu(dx)  +  \int I_{ \{ x \in (z_-,z_+) \} }a(x)\mu(dx)  \frac{\eta(x)}{\rho(x)} \\
&& \hspace{10mm} +  \int I_{ \{ y \in (\alpha_\nu, y_-) \cup (y_+,\beta_\nu) \} } b(y) (\nu(dy)-\mu(dy))  \\
& = & \E^{\sM_{\hat\pi}}[ a(Z_1) I_{ \{ \hat\tau=1 \} } +  b(Z_2) I_{ \{ \hat\tau=2 \} } ],
\end{eqnarray*}
where we use \eqref{eq:6points4}, \eqref{eq:6points5} and \eqref{eq:6points6} for the last equality. This completes the proof.
\end{proof}

\begin{remark} 
{(i) To get strong duality $\sP=\sD$ in Case 3 we construct a model $\sM \in M^{rand}(\mu,\nu)$ and show $\sP^{rand} = \sP = \sD$.
In Section~\ref{sec:can=rand} we argue that {(under the Dispersion Assumption; see Definition \ref{def:dispersion})} $\sP^{can}=\sP^{rand}$, however, the supremum for $\sP^{can}$ is not attained.}

Note that in the financial context there is no reason to expect the price process to be the only source of information in the financial market. There may be multiple scenarios which lead to the price process arriving at the same price point at time one, and these different scenarios may lead to different dynamics over future time intervals. As described in Hobson and Neuberger~\cite{HobsonNeuberger:17}, it is this extra information which gives the full value of Bermudan (and American) options over their European counterparts.

    In the case of Bermudan put options studied in Hobson and Norgilas~\cite{HobsonNorgilas:19} 
    this richer structure is not required. This paper shows that the put case is rather special.

(ii) Whether the 4-point, 5-point or 6-point constructions are optimal depends, in general, 
on the payoff functions $a,b$ and the marginal distributions $\mu,\nu$. The goal of the next two sections is to show that for a large class of marginals $(\mu,\nu)$ (and for arbitrary convex $a\geq b$) we can always find a solution of one of the three types (as in Theorem \ref{thm1}). 
\end{remark}

\section{Shadows and the left- and right-curtain martingale couplings}\label{sec:shadows}

\subsection{Shadow couplings}
We recall the notion of the shadow measure (see Beiglb\"ock and Juillet \cite{BeiglbockJuillet:16} and Beiglb\"ock et al. \cite{BeiglbockHobsonNorgilas:22}).

Two (Borel) measures $\xi,\chi$ on $\R$ are said to be in \textit{extended convex order}, denoted by $\xi\leq_E\chi$, if $\int f d\xi\leq\int fd\chi$ for all non-negative and convex $f:\R\to\R_+$. (It is easy to see that if $\mu\leq_{cx}\nu$, then any $\tilde\mu\leq\mu$ satisfies $\tilde\mu\leq_E\nu$.) If $\xi\leq_E\chi$, then $\{\theta:\xi\leq_{cx}\theta\leq\chi\}$ is non-empty and admits the minimal element with respect to $\leq_{cx}$.

For any (Borel) measure $\xi$ on $\R$, we define the (put) potential $P_\xi:\R\to\R$ by $P_\xi(k)=\int_\R(k-x)^+\xi(dx)$, $k\in\R$. It is well-known that the potenial $P_\xi$ uniquely identifies the underlying measure (via the second (distributional) derivative of $P_\xi''$). Furthermore, $\xi\leq_{cx}\chi$ if and only if $P_\xi\leq P_\chi$ on $\R$ and $\xi,\chi$ have the same total mass and mean. See, for example, Chacon \cite{Chacon}, Chacon and Walsh \cite{ChaconW} or Hirsch and Roynette \cite{HR} for the relevant properties of potential functions.

Recall that given $f:\R\to\R$ we denote the convex hull of $f$ by $f^c$. %i.e., the largest convex function that lies below $f$.
\begin{defn}\label{def:shadow}
    Fix $\xi\leq_E\chi$. The shadow measure of $\xi$ in $\chi$, denoted by $S^\chi(\xi)$, is the unique measure that satisfies
    \begin{enumerate}
        \item $\xi\leq_{cx}S^\chi(\xi)\leq\chi$,
        \item $S^\chi(\xi)\leq_{cx}\theta$ for all measures $\theta$ such that $\xi\leq_{cx}\theta\leq\chi$.
        %\item $(\mu-\tilde\mu)\leq_{cx}(\nu-S^\nu(\tilde\mu))$.
\end{enumerate}
\end{defn}

\begin{prop}[Beiglb\"{o}ck et al~\cite{BeiglbockHobsonNorgilas:22}]
\label{prop:potential}
The potential of $S^\chi(\xi)$ is given by
$$
P_{S^\chi(\xi)}=P_\chi-(P_\chi-P_\xi)^c.
$$
\end{prop}

One of the main structural properties of the shadow measure (see Beiglb\"ock and Juillet \cite{BeiglbockJuillet:16,BJ:21}, Beiglb\"ock et al. \cite{BeiglbockHobsonNorgilas:22}), %Beiglb\"ock and Juillet \cite{BJ:21}) 
is the associativity property.

\begin{prop}[Associativity property of the shadow measure]
\label{prop:shadow_assoc}
    If $\xi_1+\xi_2$ is such that $\xi_1+\xi_2\leq_{E}\chi$, then
    $$
    \xi_2\leq_E(\chi-S^\chi(\xi_1))\quad\textrm{and}\quad S^\chi(\xi_1+\xi_2)=S^\chi(\xi_1)+S^{\chi-S^\chi(\xi_1)}(\xi_2).
    $$
\end{prop}

The following lemmas will be useful in the proofs of our main results.

\begin{lem}\label{lem:specialMeasures1}
    Suppose $\xi\leq_E\chi$ and $S^\chi(\xi)\leq\eta$ for some $\eta\leq\chi$.  Then $\xi\leq_E\eta$ and $S^\chi(\xi)=S^\eta(\xi)$.
\end{lem}
\begin{proof}
    Since $\xi\leq_{cx}S^\chi(\xi)\leq\eta$, for any non-negative and convex $f:\R\to\R$ we have that $\int f d\xi\leq\int fd S^\chi(\xi)\leq\int fd\eta$, and thus $\chi\leq_E\eta$. It follows that $S^\eta(\xi)$ is well-defined, and we are left to prove that $S^\chi(\xi)=S^\eta(\xi)$. First, since $\xi\leq_{cx}S^\eta(\xi)\leq\eta \leq \chi$, by the definition of the shadow measure (Definition \ref{def:shadow}) we have that $S^\chi(\xi)\leq_{cx}S^\eta(\xi)$. On the other hand, since $\xi\leq_{cx}S^\chi(\xi)\leq\eta$, {we have that} $S^\eta(\xi)\leq_{cx}S^\chi(\xi)$.  Hence, $S^\chi(\xi) = S^\eta(\xi)$.
\end{proof}

\begin{lem}\label{lem:specialMeasures2}
    Suppose $\xi\leq_E\chi$, are both continuous and $\chi((\alpha_\xi,\beta_\xi))=0$. Then $S^\chi(\xi)=\chi\lvert_{(x_-,\alpha_\xi)\cup(\beta_\xi,x_+)}$ for some $x_-<\alpha_\xi<\beta_\xi<x_+$.
\end{lem}
\begin{proof}
By Proposition \ref{prop:potential} we have that $P_{\chi-S^\chi(\xi)}=P_\chi-P_{S^\chi(\xi)}=(P_\chi-P_\xi)^c$. Now observe that, due to our assumptions on $\xi$ and $\chi$, $(P_\chi-P_\xi)$ is convex (but not linear) and strictly increasing on $(\alpha_\chi,\alpha_\xi)$, concave (but not linear) on $(\alpha_\xi,\beta_\xi)$ and again convex (but not linear) on $(\beta_\xi,\beta_\chi)$; moreover $(P_\chi-P_\xi)$ is strictly positive on $(\alpha_\chi,\beta_\chi)$. It follows that $(P_\chi-P_\xi)>(P_\chi-P_\xi)^c$ on $(\alpha_\xi,\beta_\xi)$, and thus $(P_\chi-P_\xi)^c$ must be linear on some interval $I:=(x_-,x_+)\supset(\alpha_\xi,\beta_\xi)$ and $(P_\chi-P_\xi)=(P_\chi-P_\xi)^c$ on $\mathbb R\setminus I$. On one hand this gives that $(\chi-S^\chi(\xi))$ does not charge $I$. On the other hand, on $(\alpha_\chi,x_-)\cup(x_+,\beta_\chi)$,  the second (distributional) derivatives of $(P_\chi-P_\xi)^c$ and $P_\chi$ coincide (since $\xi$ does not charge this region and thus $P_\xi$ is linear there), and therefore $(\chi-S^\chi(\xi))=\chi\lvert_{(\alpha_\chi,x_-)\cup(x_+,\beta_\chi)}$. It follows that $S^\chi(\xi)=\chi\lvert_{(x_-,\alpha_\xi)\cup(\beta_\xi,x_+)}$ as claimed.
\end{proof}

\subsection{Curtain couplings}

The shadow measure allows us to construct martingale couplings $\pi \in \Pi_M(\mu,\nu)$ which later can be used to construct models $\sM_\pi \in M^{can}(\mu,\nu)$. 
Recall the definition of $\pi^{f,g}_v$
from the introduction.

\begin{theorem}\label{thm:curtain}
[Beiglb\"{o}ck and Juillet~\cite{BeiglbockJuillet:16}]
Fix $\mu\leq_{cx}\nu$.

%Suppose that Standing Assumption~\ref{sass:munu} holds.

(i) There exists the unique martingale coupling of $\mu$ and $\nu$, denoted by $\pi^{L}$ and called the left-curtain coupling of $\mu$ and $\nu$, such that, for each $x\in(\alpha_\mu,\beta_\mu)$,
$$
\pi^L((\alpha_\mu,x]\times B)=S^\nu(\mu\lvert_{(\alpha_\mu,x]})(B)\quad for~all~Borel~B\subseteq(\alpha_\nu,\beta_\nu).
$$

Furthermore, under Standing Assumption \ref{sass:densities}, $\pi^{L}(dx,dy) = \rho(x) dx \pi^{f^L(x),g^L(x)}_x(dy)$, where $f^L,g^L : (\alpha_\mu,\beta_\mu) \to (\alpha_\nu,\beta_\nu)$ are such that $f^L(x) \leq x \leq g^L(x)$, $g^L$ is increasing and {left-continuous}, $f^L(x)=g^L(x)$ on $\{ x:g^L(x)=x \}$ and if $\alpha_\mu < x<x'<\beta_\mu$ then either $f^L(x') < f^L(x)$ or $f^L(x') > g^L(x)$.

(ii) There exists the unique martingale coupling of $\mu$ and $\nu$, denoted by $\pi^{R}$ and called the right-curtain coupling of $\mu$ and $\nu$, such that, for each $x\in(\alpha_\mu,\beta_\mu)$,
$$
\pi^L([x,\beta_\mu)\times B)=S^\nu(\mu\lvert_{[x,\beta_\mu)})(B)\quad for~all~Borel~B\subseteq(\alpha_\nu,\beta_\nu).
$$

Furthermore, under Standing Assumption \ref{sass:densities}, $\pi^{R}(dx,dy) = \rho(x) dx \pi^{f^R(x),g^R(x)}_x(dy)$, where $f^R,g^R : (\alpha_\mu,\beta_\mu) \to (\alpha_\nu,\beta_\nu)$ are such that $f^R(x) \leq x \leq g^R(x)$, $f^R$ is increasing and {right-continuous}, $g^R(x)=f^R(x)$ on $\{x : f^R(x)=x \}$ and if $\alpha_\mu < x'<x <\beta_\mu$ then either $g^R(x') > g^R(x)$ or $g^R(x') < f^R(x)$.
\end{theorem}

\begin{figure}[H]
\centering
\begin{tikzpicture}[scale=1]
\begin{axis}[axis lines=middle,
            ytick=\empty,
            xtick=\empty,
             xmin=-1,xmax=20,
             ymin=-4,ymax=21,
             yticklabels={},
            xticklabels={},
            axis line style={draw=none}]

\addplot[name path=diag,black,domain={0:4}, thick] {x} node[pos=1, below]{};
\addplot[name path=diag2,black,domain={4:20}, dashed] {x} node[pos=1, below]{};

\draw[name path=tu1,thick] (4, 4) to[out=85,in=225] (15.5,17.5);
\draw[name path=tu2, thick] (15.5, 17.5) to[out=45,in=225] (20,20.5 );
\draw[name path=td, thick] (4, 4) to[out=335,in=180] (20, 0);
\draw[gray,dotted] (1,-1) -- (1,1) -- (10,1) -- (10,12.8) -- (12.8,12.8)--(12.8,-1);
\draw[gray,dotted] (10,1) -- (10,-1);
\draw[gray,dotted] (4,4) -- (4,-1);

\node (e)[scale=0.7] at (2.5,5) {$(e^L,e^L)$};
\node[circle,fill=black,inner sep=0pt,minimum size=3pt] at (4,4) {};
\node (f)[scale=1] at (16,1.5) {$f^L$};
\node (g)[scale=1] at (16,19.5) {$g^L$};
\node [below] at (4,-1) {$e^L$};
\node [below] at (1,-1) {$f^L(x)$};
\node [below] at (10,-1) {$x$};
\node [below] at (12.8,-1) {$g^L(x)$};
\end{axis}
\end{tikzpicture}

\caption{The left-curtain coupling in a special case: sketch of functions $f^L$ and $g^L$ in the case in which there exists a point $e^L$ such that $\{x:f^L(x)=x=g^L(x)\}=(\alpha_\mu,e^L]$, and $g^L>id$ and $f^L$ is decreasing on $(e^L,\beta_\mu)$.}
\label{fig:fgdispersion}
\end{figure}
{In many natural cases (for example a pair of centred Gaussian distributions, or a pair of lognormal distributions with the same mean) there exists $e^L \in [\alpha_\mu,\beta_\mu)$ such that $\rho < \eta$ on an interval $(\alpha_\nu, e^L)$ and $\rho>\eta$ on a small interval $(e^L,e^L+\epsilon)$ to the right of $e^L$. Then 
we can define $f^L=g^L=id$ on $(\alpha_\mu,e^L]$ and extend the functions rightward from $e^L$ using the relationship
%The left-curtain coupling $\pi^{L} = \pi^L(dx,dy) = \rho(x)dx \pi^L_x(dy)$ is given by functions  
%$f^L,g^L:(\alpha_\mu,\beta_\mu) \mapsto (-\beta,\beta)$ which are the solutions with $f^L<id<g^L$ to 
\begin{equation}
    \label{eq:deffg}
\int_{f^L(x)}^x z^i \rho(z) dz = \int_{f^L(x)}^{g^L(x)} z^i \eta(z) dz; \hspace{10mm} i=0,1, 
\end{equation}
which encapsulates the idea that mass in $(f^L(x),x)$ according to the law $\mu$ can be mapped to $(f^L(x),g^L(x))$ according to the law $\nu$ in a way which respects the martingale property (and then at the margins, mass at $x$ is mapped to either $f^L(x)$ or $g^L(x)$ as in the definition of the left-curtain coupling). Under Standing Assumption~\ref{sass:densities} the solutions to \eqref{eq:deffg} can be expressed via a coupled pair of differential equations
\begin{equation}
    \label{eq:fgode}
\frac{d}{dx} g^L (x) = \frac{x - f^L(x)}{g^L(x)-f^L(x)} \frac{\rho(x)}{\eta(g(x))};
\hspace{2mm} \frac{d}{dx} f^L (x) = - \frac{g^{L}(x) - x}{g^L(x)-f^L(x)} \frac{\rho(x)}{\eta(f^L(x))- \rho(f^L(x))}  
\end{equation}
subject to $f^L(e^L)=g^L(e^L)=e^L$.
Note that $f^L$ is decreasing (at least locally, so that by hypothesis the denominators in \eqref{eq:fgode} are positive for $x>e^L$. In many (regular) cases the solutions $f^L, g^L$ to \eqref{eq:fgode} satisfy $g^L > id$ on $(e^L,\beta_\mu)$. In this case the left-curtain coupling has a simple structure, see Figure~\ref{fig:fgdispersion}.

Similar considerations apply to the right-curtain coupling. In regular cases the densities $\rho,\eta$ are such that there exists $e_R>e_L$ such that $\rho<\eta$ on $(e^R, \beta_\mu)$ and $\rho> \eta$ on an interval $(e^R-\epsilon,e^R)$ and then 
$f^R,g^R$ can be defined to the left of $e^R$ via
\[ \int_x^{g^R(x)} z^i \rho(z) dz = \int_{f^R(x)}^{g^R(x)} z^i \eta(z) dz; \hspace{10mm} i=0,1, \]
which again have a representation via a pair of coupled differential equations. In many cases $f^R$ defined in this way satisfies $f^R < id$ on $(\alpha_\mu,e^R)$ and the right-curtain coupling has a simple structure.}

The following assumption is a small modification of one introduced by Hobson
and Klimmek \cite{HobsonKlimmek:15}, see also Henry-Labord\`ere and Touzi \cite{HenryLabordereTouzi:16}. See Figure \ref{fig:DensitiesDispersion}.
{It is often satisfied when $\mu$ and $\nu$ are from the same family of centred, continuous distributions with a unimodal density such as centred normals, lognormals or uniforms. }

\begin{defn}[Dispersion Assumption]\label{def:dispersion}
{ $\mu$ and $\nu$ satisfy the dispersion assumption if $(\mu,\nu)$ satisfy Standing Assumption~\ref{sass:densities}; if
    the densities $\rho$ and $\eta$ (of $\mu$ and $\nu$) are strictly positive on $(\alpha_\mu,\beta_\mu)$ and $(\alpha_\nu,\beta_\nu)$, respectively;
    and if there exists $e^L,e^R\in [\alpha_\mu,\beta_\mu]$ with $e^L<e^R$, such that $\rho>\eta$ on $(e^L,e^R)$ and $\rho<\eta$ on ${ (\alpha_\nu,\beta_\nu)}\setminus[e^L,e^R]$.}
\end{defn}

For convenience, we introduce the following notation: for a continuous measure $\xi$ (on $\R$) and for $u \leq v$ let $\xi_u^v = \xi|_{(u,v)}$ and $\bar{\xi}_u^v = \xi - \xi_u^v = \xi|_{(-\infty,u)\cup(v,\infty)}$.

Under the Dispersion Assumption, the left-curtain and right-curtain martingale couplings have a simple structure; the proof of the following lemma can be found in Appendix \ref{appendix}.

%The proof of Lemma~\ref{lem:LCdispersion} is provided in the Appendix.
\begin{figure}[H]
\centering
\begin{tikzpicture}[
declare function={	
    	phi(\x,\m,\s)=(1/sqrt(2*pi*\s))*exp(-pow(\m-\x,2)/(2*\s));
	s1=0.7;
	s2=2;
	m=0;
	f=-2;
	z=0.15;
	g=0.8;
	e1=sqrt(-(ln(sqrt(s2))-ln(sqrt(s1)))/(pow(2*s2,-1)-pow(2*s1,-1)));
	}]
\begin{axis}[axis lines=middle,
            ytick=\empty,
            xtick=\empty,
             xmin=-4, xmax=4,
             ymin=-0.15, ymax=0.5,
             yticklabels={},
            xticklabels={},
            axis line style={draw=none}]

  %functions
\addplot[name path=base,black,domain={-5:5}] {0} node[pos=1, below]{};
\addplot[name path=rho,black,domain={-5:5},samples=100] {phi(\x,m,s1)} node at (0.55,0.45) {$\rho$};
\addplot[name path=eta,black,domain={-5:5},samples=100] {phi(\x,m,s2)} node at (1.7,0.17) {$\eta$};

%ticks
\path [name path=lineA](f,0)--(f,0.5);
\draw [name intersections={of=lineA and eta}, black] (f,0) -- (intersection-1);
\draw[dashed] (f,0) -- (f,-0.06) node[below]  {$f^L(x)$};

\path [name path=lineB](-e1,0)--(-e1,0.5);
\draw [name intersections={of=lineB and eta}, black, dashed] (-e1,0) -- (intersection-1) node at (-e1+0.1,-0.03) {$e^L$};

\path [name path=lineC](z,0)--(z,0.5);
\draw [name intersections={of=lineC and rho}, black] (z,0) -- (intersection-1);
\draw[dashed] (z,0) -- (z,-0.06) node[below]  {$x$};

\path [name path=lineD](g,0)--(g,0.5);
\draw [name intersections={of=lineD and eta}, black] (g,0) -- (intersection-1);
\draw[dashed] (g,0) -- (g,-0.06) node[below]  {$g^L(x)$};

\path [name path=lineE](e1,0)--(e1,0.5);
\draw [name intersections={of=lineE and eta}, black,dashed] (e1,0) -- (intersection-1) node at (e1+0.1,-0.03) {$e^R$};

%filling
\addplot[pattern=crosshatch, pattern color=red!50] fill between[of=eta and rho, soft clip={domain=f:-e1}];
\addplot[pattern=crosshatch, pattern color=red!50] fill between[of=eta and rho, soft clip={domain=-e1:z}];
\addplot[pattern=crosshatch, pattern color=red!50] fill between[of=eta and base, soft clip={domain=z:g}];
\addplot[pattern=north west lines, pattern color=blue!50] fill between[of=rho and base, soft clip={domain=f:-e1}];
\addplot[pattern=north west lines, pattern color=blue!50] fill between[of=eta and base, soft clip={domain=-e1:z}];

%arrows
	\draw[-latex', thick] (-0.2, .35) to[out=0,in=90] (0.5, 0.15);
	\draw[-latex', thick] (-0.3, .35) to[out=180,in=90] (-1.7, 0.1);

\end{axis}
\end{tikzpicture}

\caption{Sketch of the densities $\rho$ and $\eta$ under the Dispersion Assumption, and the locations of $f^L(x),g^L(x)$ for given $x>e^L$. Mass in the interval $(f^L(x),x)$%with respect to $\mu$ %, stays in the same place if possible. Mass which cannot stay constant 
is mapped to $(f^L(x),e^L)$ or $(x,g^L(x))$ in a way which respects the martingale property.}
\label{fig:DensitiesDispersion}
\end{figure}

\begin{lem}\label{lem:couplingsDispersionAssumption}
Suppose that $\mu\leq_{cx}\nu$ satisfy the Dispersion Assumption (see Definition \ref{def:dispersion}). Let $(f^L,g^L)$ (resp., $(f^R,g^R)$) be the functions that support the left-curtain (resp., right-curtain) coupling (see Theorem \ref{thm:curtain}). Then
\begin{enumerate}
        \item $f^L,g^L,f^R,g^R:(\alpha_\mu,\beta_\mu)\to(\alpha_\nu,\beta_\nu)$ are continuous;
        \item $\{x:f^L(x)=x=g^L(x)\}=(\alpha_\mu,e^L]$ and $\{x:f^R(x)=x=g^R(x)\}=[e^R,\beta_\mu)$;
        \item $g^L$ (resp., $f^R$) is strictly increasing, while $f^L$ (resp., $g^R$) is strictly decreasing on $(e^L,\beta_\mu)$ (resp.,  $(\alpha_\mu,e^R)$).
        \item { $g^L>id$ on $(e^L,\beta_\mu)$ (resp., $f^R<id$ on $(\alpha_\mu,e^R)$.}
    \end{enumerate}
    Further $\lim_{x\uparrow\beta_\mu}f^L(x)=\alpha_\nu=\lim_{x\downarrow\alpha_\mu}f^R(x)$ and $\lim_{x\uparrow\beta_\mu}g^L(x)=\beta_\nu=\lim_{x\downarrow\alpha_\mu}g^R(x)$.
\end{lem}

\begin{cor}\label{cor:LCandRC}
Suppose that $\mu\leq_{cx}\nu$ satisfy the Dispersion Assumption (Definition \ref{def:dispersion}).

For each $x\in(\alpha_\mu,\beta_\mu)$ and $B\in \sB ((\alpha_\nu,\beta_\nu))$,
\begin{enumerate}
\item[(i)] $\pi^L((\alpha_\mu,x]\times B)=S^\nu(\mu_{\alpha_\mu}^x)(B)=\mu_{\alpha_\mu}^{f^L(x)}(B)+\nu_{f^L(x)}^{g^L(x)}(B)$
and ${\mu}_x^{\beta_\mu} \leq_{cx} \nu-S^\nu(\mu_{\alpha_\mu}^{x})$; 
\item[(ii)] $\pi^R([x,\beta_\mu)\times B)=S^\nu(\mu_{x}^{\beta_\mu})(B)=\mu^{\beta_\mu}_{g^R(x)}(B)+\nu_{f^R(x)}^{g^R(x)}(B)$
and ${\mu}^x_{\alpha_\mu} \leq_{cx} \nu-S^\nu(\mu^{\beta_\mu}_{x})$.
\end{enumerate}
\end{cor}
\begin{proof}
    This follows immediately from the definitions and properties of the shadow measure and the left- and right-curtain martingale couplings; see Definition \ref{def:shadow}, Proposition~\ref{prop:shadow_assoc} and Theorem \ref{thm:curtain}. 
    \end{proof}

We finish this section with a lemma that will be useful in proving our main results.

Recall the definitions and the properties (under the Dispersion Assumption, Definition \ref{def:dispersion}) of $g^L$ and $f^R$; see Theorem \ref{thm:curtain} and Lemma \ref{lem:couplingsDispersionAssumption}. Note that if $\beta_\mu<\beta_\nu$, then $(g^L)^{-1}(\beta_\mu)<\beta_\mu$, while in the case $\beta_\mu=\beta_\nu$ we set $(g^L)^{-1}(\beta_\mu)=\beta_\mu$. Similarly, if $\alpha_\nu<\alpha_\mu$, then $\alpha_\mu<(f^R)^{-1}(\alpha_\mu)$, and in the case $\alpha_\nu=\alpha_\mu$ we set $(f^R)^{-1}(\alpha_\mu)=\alpha_\mu$.

{The motivation behind the next definition is that we want to separate the cases where $S^\nu(\mu_{\alpha_\mu}^{x}) + S^\nu(\mu_y^{\beta_\mu})= S^\nu(\mu_{\alpha_\mu}^{x} + \mu_y^{\beta_\mu})$  and $S^\nu(\mu_{\alpha_\mu}^{x}) + S^\nu(\mu_y^{\beta_\mu}) \neq S^\nu(\mu_{\alpha_\mu}^{x} + \mu_y^{\beta_\mu})$. In the former case, we have $S^\nu(\mu_y^{\beta_\mu})=S^{\nu- S^\nu(\mu_{\alpha_\mu}^{x})}(\mu_y^{\beta_\mu})$.}

Set $\bar x=(g^L)^{-1}(e^R)$. Define $B:(\alpha_\mu,(g^L)^{-1}(\beta_\mu))\to(\alpha_\nu,\beta_\nu)$ by 
\begin{equation}\label{eq:B}
B(x):=(f^R)^{-1}(g^L(x)),\quad x\in(\alpha_\mu,(g^L)^{-1}(\beta_\mu)).
\end{equation}
In the case $\beta_\mu<\beta_\nu$, we extend the definition of $B(\cdot)$ to $(\alpha_\mu,\beta_\mu)$ by setting $B(x)=\beta_\mu$ for $x\in[(g^L)^{-1}(\beta_\mu),\beta_\mu)$. Then, using that $g^L(x)=x$ for $x\leq e^L$ and $f^R(y)=y$ for $y\geq e^R$, we equivalently have that
\begin{equation}
\label{eq:BoundaryB}    
B(x) = \begin{cases} (f^R)^{-1}(x)  & \alpha_\mu < x \leq e^L  \\
 (f^R)^{-1}(g^L(x))  &  e^L < x \leq \bar{x}  \\
g^L(x)  & \bar{x} < x \leq (g^L)^{-1}(\beta_\mu) \\
\beta_\mu & (g^L)^{-1}(\beta_\mu) < x < \beta_\mu. \end{cases}
\end{equation} 

\begin{lem}
\label{lem:new:FormOfShadow}
Suppose that $\mu\leq_{cx}\nu$ satisfy the Dispersion Assumption (Definition \ref{def:dispersion}).

Suppose that $x,y\in(\alpha_\mu,\beta_\mu)$ with $x<y$.
\begin{enumerate}
    \item[(i)]If $y\geq B(x)$, then 
    $$
    S^\nu(\bar{\mu}_x^y) = S^\nu(\mu_{\alpha_\mu}^x)+S^{\nu}(\mu_y^{\beta_\mu})=\mu_{\alpha_\mu}^{f^L(x)} + \nu_{f^L(x)}^{g^L(x)} + \nu_{f^R(y)}^{g^R(y)} + \mu_{g^R(y)}^{\beta_\mu}
    $$
    and $\mu^y_x\leq_{cx}(\nu-\mu)^{f^L(x)}_{\alpha_\nu}+\nu_{g^L(x)}^{f^R(y)}+(\nu-\mu)_{g^R(y)}^{\beta_\nu}$;
    
    \item[(ii)]If $y\leq B(x)$, then
    $$
    S^\nu(\bar{\mu}_x^y) = \mu_{\alpha_\mu}^w + \nu_w^z + \mu_{z}^{\beta_\mu}\quad\textrm{and}\quad \mu^y_x\leq_{cx}(\nu-\mu)^{w}_{\alpha_\nu}+(\nu-\mu)_{z}^{\beta_\nu},
    $$
    for some (unique) $w \leq e^L\wedge x < e^R\vee y \leq z$.
\end{enumerate}
\end{lem}

{The proof of Lemma~\ref{lem:new:FormOfShadow} is deferred to Appendix \ref{appendix}.}

\section{Existence of an optimal model}\label{sec:optimalModel}

The goal of this section is to show that, under the Dispersion Assumption on the pair $(\mu,\nu)$, the hypotheses of Theorem~\ref{thm1} are satisfied and we have found the optimal solution. {In particular, we prove that if $\mu$ and $\nu$ satisfy the dispersion assumption then $\cup_{n=4,5,6} (\Gamma^{\mu,\nu}_n \cap \sL^{a,b}_n) \neq \emptyset$
and hence, by Theorem~\ref{thm1}, we can construct optimal solutions to both the primal and dual problems and there is no duality gap.}

\begin{theorem}\label{thm2}
%Suppose that Standing Assumptions \ref{sass:densities}, \ref{sass:payoffs}, \ref{sass:ageqLb} hold.

Suppose that $\mu\leq_{cx}\nu$ satisfy %the Dispersion Assumption. %, in the sense of 
Definition \ref{def:dispersion}.
Then $\cup_{n=4,5,6} (\Gamma^{\mu,\nu}_n \cap \sL^{a,b}_n) \neq \emptyset$.

\begin{comment}
    at least one of the following three cases hold:
\begin{itemize}
\item Case 1: there exist $x^*_-,x^*_+\in(\alpha_\mu,\beta_\mu)$ and $y^*_-,y^*_+\in(\alpha_\nu,\beta_\nu)$ with $y^*_-<x^*_-<x^*_+<y^*_+$ such that
$$
\Pi^{y^*_-,x^*_-,x^*_+,y^*_+}_{M}(\mu,\nu)\neq\emptyset\quad\textrm{and}\quad\Psi^{y^*_-,x^*_-,x^*_+,y^*_+}~\textrm{is convex};
$$
\item Case 2: there exist $x^*_-,z^*,x^*_+\in(\alpha_\mu,\beta_\mu)$ and $y^*_-,y^*_+\in(\alpha_\nu,\beta_\nu)$ with $y^*_-<x^*_-<z^*<x^*_+<y^*_+$ such that
$$
\Pi^{y^*_-,x^*_-,z^*,x^*_+,y^*_+}_{M}(\mu,\nu)\neq\emptyset\quad\textrm{and}\quad\Psi^{y^*_-,x^*_-,z^*,^*x_+,y^*_+}~\textrm{is convex with }\Psi^{y^*_-,x^*_-,z^*,x^*_+,y^*_+}(z^*)>a(z^*);
$$
\item Case 3: there exist $x^*_-,z^*_-,z^*_+,x^*_+\in(\alpha_\mu,\beta_\mu)$ and $y^*_-,y^*_+\in(\alpha_\nu,\beta_\nu)$ with $y^*_-<x^*_-<z^*_-,z^*_+<x^*_+<y^*_+$ such that
$$
\hat\Pi^{y^*_-,x^*_-,z^*_-,z^*_+,x^*_+,y^*_+}_{M}(\mu,\nu)\neq\emptyset\quad\textrm{and}\quad\hat\Psi^{y^*_-,x^*_-,z^*_-,z^*_+,x^*_+,y^*_+}~\textrm{is convex}.
$$
\end{itemize}
\end{comment}
\end{theorem}

{The focus of this section is on proving Theorem~\ref{thm2}. Many of the proofs of intermediate results are deferred to Appendix \ref{appendix}.}

%\begin{remark}
Given the structure of the left- and right-curtain couplings under the Dispersion Assumption (recall Lemma \ref{lem:couplingsDispersionAssumption}), it is rather easy to see that the sets $ \Gamma^{\mu,\nu}_4$, $\Gamma^{\mu,\nu}_5$ and $ \Gamma^{\mu,\nu}_6$ (as in \eqref{eq:Gamma4def}, \eqref{eq:Gamma5def} and \eqref{eq:Gamma6def}, respectively) are non-empty. Indeed, for $e^L<x_-<x_+<e^R$ with $g^L(x_-)= f^R(x_+)$, we have that $f^L(x_-)<e^L<x_-<g^L(x_-)=f^R(x_+)<x_+<e^R<g^R(x_+)$, and then, by setting $y_-=f^L(x_-), z=g^L(x_-)=f^R(x_+),g^R(x_+)=y_+$, we have that $\Gamma^{\mu,\nu}_4\neq\emptyset$ and $\Gamma^{\mu,\nu}_5\neq\emptyset$. (This is an immediate consequence of the definitions of curtain couplings, Corollary \ref{cor:LCandRC} and Lemma \ref{lem:new:FormOfShadow}.) Similarly, if $e^L<x_-<x_+<e^R$ with $g^L(x_-)< f^R(x_+)$, by setting $y_-=f^L(x_-), z_-=g^L(x_-), z_+=f^R(x_+),g^R(x_+)=y_+$, we have that $\Gamma^{\mu,\nu}_6\neq\emptyset$; here we use that, under the Dispersion Assumption, $\mu\lvert_{(z_-,z_+)}\leq\nu\lvert_{(z_-,z_+)}$, since $(z_-=g^L(x_-),z_+=f^R(x_+))\subset (e^L,e^R)$. 

As above, we can express the candidate points $x_-,x_+,y_-,y_+,z_-,z,z_+\in\R$ in terms of $x_-$ and $x_+$ only. Then the key insight is that by restricting the search to such pairs $(x_-,x_+)$ (which together with the functions $f^L,g^L,f^R,g^R$, give the remaining four points), we can find $(x_-^*,x_+^*)$, for which (at least) one of $\Psi_4^{y^*_-,x^*_-,x^*_+,y^*_+}$, or $\Psi_5^{y^*_-,x^*_-,z^*,x^*_+,y^*_+}$, or $\Psi_6^{y^*_-,x^*_-,z^*_-,z^*_+,x^*_+,y^*_+}$ is convex (recall \eqref{eq:psi4points}, \eqref{eq:psi5points}, \eqref{eq:psi6points}). Then using Lemmas \ref{lem:GammaPsi4}, \ref{lem:GammaPsi5}, \ref{lem:GammaPsi6} (depending on which case we are in) together with Theorem \ref{thm1} it must be the case that we have found the solutions to the primal and dual problems and the strong duality holds. The main difficulty in finding $(x^*_-,x_+^*)$ is that {\em a priori} it is not clear
which of the three types as listed in Theorem \ref{thm1} we should search for. Furthermore, in the case of the 4-point construction, we often need to consider $x_-< e^L$ or $x_+ > e^R$, and in this case the original supporting functions $f^L,g^L,f^R,g^R$ from the left- and right- curtain couplings are no-longer useful and then we need alternative arguments.
The remainder of this section is dedicated to proving Theorem \ref{thm2}. We will do it in several steps. First, in Section \ref{sec:cases}, for given $\mu\leq_{cx}\nu$, we use the supporting functions $(f^L,g^L)$ and $(f^R,g^R)$ to distinguish three cases. Then we prove Theorem \ref{thm2} separately in each case (see Sections \ref{sec:C1}, \ref{sec:C2}, \ref{sec:C3}). 

Throughout this section we make the following standing assumption:
\begin{sass}
\label{ass:simple} $\mu$ and $\nu$ are such that $\mu\leq_{cx} \nu$ and satisfy the Dispersion Assumption, in the sense of Definition \ref{def:dispersion}.
\end{sass}

\subsection{Dividing the problem into cases}\label{sec:cases}
For $x\in(e^L,\beta_\mu)$ define $T^L_x:= T_{b,a}^{f^L(x),x}$ to be the line passing through the points $(f^L(x),b(f^L(x)))$ and $(x,a(x))$. Similarly, let $T^R_x = T^{x,g^R(x)}_{a,b}$. 
Define the functions $C^L: (e^L,\beta_\nu) \to \R$ and $C^R: (\alpha_\nu,e^R) \to \R$ by  $C^L(z)= T^L_{(g^L)^{-1}(z)}(z)$ and $C^R(z)= T^R_{(f^R)^{-1}(z)}(z)$. 
%Consider the curve {$\{ (g^L(x), T^L_x(g^L(x))) : x\in(e^L,\beta_\mu) \} = \{ (z, C^L(z)): z \in(e^L,\beta_\nu) \}$}. Similarly, for {$x\in(\alpha_\mu,e^R)$} define $T^R_x$ to be the line passing through the points $(x,a(x))$ and $(g^R(x),b(g^R(x))$ %. Set $t^R(x) = T^R(f^R(x))$ and 
%consider the curve {$\{ (f^R(x), T^R_x(f^R(x))) : x\in(\alpha_\mu,e^R) \} = \{ (z, C^R(z)): z \in(\alpha_\nu, e^R) \}$}.
We are interested in the crossing points (if any) of the curves $C^L$ and $C^R$ on $(e^L,e^R)$. There will be three cases, and the optimal hedges and corresponding extremal models
will be different in each case.

\begin{lem}\label{lem:Ccont}
$C^L : (e^L,\beta_\nu) \to \R$ and $C^R : (\alpha_\nu,e^R) \to \R$ are continuous with {$\liminf_{z \downarrow e^L} C^L(z) \geq a(e^L)$ and $\liminf_{z \uparrow e^R} C^R(z) \geq a(e^R)$.}
\end{lem}

%{\cblue Do we need to define $C^L(e^L)$ and $C^R(e^R)$? To me it seems more natural to set $\hat{z}^L =  \inf \{ z\in (e^L,\beta_\nu): C^L(z) \leq a(z) \}$ - note the different upper limit.}

Define $\hat{z}^L$ by $\hat{z}^L =  \inf \{ z\in (e^L,{\beta_\mu}): C^L(z) \leq a(z) \}$ where if this set is empty we take $\hat{z}^L = {\beta_\mu} $. Similarly,   define $\hat{z}^R =  \sup \{ z \in ({ \alpha_\mu}, e^R) : C^R(z) \leq a(z) \}$ where if this set is empty we take $\hat{z}^R = { \alpha_\mu} $. Note that $\hat z^L\geq e^L$ and $\hat z^R\leq e^R$. Define $\hat{x}^L = (g^L)^{-1}(\hat{z}^L)$ (with $\hat{x}^L=\beta_\mu$ if { $\hat{z}^L = \beta_\mu=\beta_\nu$}) and $\hat{x}^R = (f^R)^{-1}(\hat{z}^R)$ (with $\hat{x}^R=\alpha_\mu$ if { $\hat{z}^R = \alpha_\mu=\alpha_\nu$}).

\begin{lem}
\label{lem:a=betc}
%{\cpur Have we defined $\hat{x}^L$ yet? If not we need to define $\hat{x}^L = (g^L)^{-1}(\hat{z}^L)$.}

$a(e^L)=b(e^L)$ if and only if $\hat{z}^L = e^L$ and then $f^L(\hat{x}^L) = \hat{x}^L = \hat{z}^L = g^L(\hat{x}^L)=e^L$.

Similarly, $a(e^R)=b(e^R)$ if and only if $\hat{z}^R = e^R$ and then $f^R(\hat{x}^R) = \hat{z}^R = \hat{x}^R = g^R(\hat{x}^R)=e^R$.
\end{lem}

\begin{comment}
{\cpur Old Version: Define $\hat{z}^L$ by $\hat{z}^L= e^L$ if $a(e^L)=b(e^L)$ and otherwise, if $a(e^L)>b(e^L)$, by $\hat{z}^L =  \inf \{ {z\in (e^L,e^R]}: C^L(z) \leq a(z) \}$ where if this set is empty we take $\hat{z}^L = \beta_\nu > e^R$. Similarly, if $a(e^R)=b(e^R)$ define $\hat{z}^R$ by $\hat{z}^R= e^R$, and if $a(e^L)>b(e^L)$ define $\hat{z}^R =  \sup \{ z \in [e^L, e^R) : C^R(z) \leq a(z) \}$ where if this set is empty we take $\hat{z}^R = \alpha_\nu < e^L$. Note that $\hat z^L\geq e^L$ and $\hat z^R\leq e^R$. }
\end{comment}

%{\cblue I've changed the regions over which $\Psi$ in \ref{eq:Bdef} are defined because I think they were wrong. Please check carefully.}
Note that if %(but not only if) %{\cblue$b(e^L) < a(e^L)$, $b(e^R) < a(e^R)$ and} 
$e^L < \hat{z}^R < \hat{z}^L < e^R$, then by %Lemma \ref{lem:Ccont} and 
the intermediate value theorem applied to $C^R-C^L$ on $[\hat{z}^R, \hat{z}^L]$, there exists $z^* \in (\hat{z}^R, \hat{z}^L)\subseteq(e^L,e^R)$ such that $C^L(z^*)=C^R(z^*)>a(z^*)$. Further, if $\hat{z}^R=\hat{z}^L \in (e^L,e^R)$ then taking $z^* = \hat{z}^R$ we find $C^L(z^*)=C^R(z^*) = a(z^*)$.  In either case, if there exists $z^* \in (e^L,e^R)$ such that $C^L(z^*)=C^R(z^*)$ then we can define 
$x^*_L := (g^L)^{-1}(z^*)$, $x^*_R := (f^R)^{-1}(z^*)$
and
$\Psi:(\alpha_\nu,\beta_\nu) \mapsto \R$ by
\begin{equation}
  \label{eq:Bdef}
 \Psi(w)  =  
\begin{cases} 
     b(w), & w \leq f^L(x^*_L)  \\
    T^L_{x^*_L}(w), \quad \quad
         &  f^L(x^*_L) < w \leq z^*; \\
     T^{R}_{x^*_R}(w), 
        &  z^* < w < g^R(x^*_R);  \\
     b(w), & w \geq g^R(x^*_R).    
           \end{cases} 
\end{equation}
We will have that $\Psi$ is continuous provided there is continuity at $f^L(x^*_L)$, $z^*$ and $g^R(x^*_R)$. 
By the definition of $T^L_x = T^{f^L(x),x}_{b,a}$ we have that $T^L_{x^*_L}(f^L(x^*_L)) =
T^{f^L(x^*_L),x^*_L}_{b,a}(f^L(x^*_L))=b({f^L(x^*_L)})$ and continuity of $\Psi$ at $x^*_L$; a similar argument applies at $g^R(x^*_R)$. At $z^*$ continuity follows from the fact that $T^L_{x^*_L}(z^*) = C^L(z^*) = C^R(z^*) = T^R_{x^*_R}(z^*)$.
However, although $\Psi$ is continuous, it may or may not be convex.

\begin{comment}
\iffalse
{OLD VERSION
The three main cases will be as follows:
\begin{enumerate}
\item[(C1)] $b(e^R) \vee C^L(e^R) < a(e^R)$, $b(e^L) \vee C^R(e^L) < a(e^L)$ and $\hat{z}^L \geq \hat{z}^R$ %we can define $\hat{Z}^L = \inf \{ z> e_L : C^L(z) \leq a(z) \}$, 
so that there exists $z^* \in (e^L,e^R)$ such that $C^L(z^*)=C^R(z^*)$; moreover %$\Psi(z^*) \geq a(z^*)$ and 
$\Psi$ defined in \eqref{eq:Bdef} is convex;
\item[(C2)] either $a(e^L)=b(e^L)$ and $a(e^R)=b(e^R)$; \\ 
or $a(e^L)=b(e^L)\geq C^R(e^L)$ and $a(e^R)>b(e^R)$ \\
or $a(e^L)>b(e^L)$ and $a(e^R)=b(e^R)\geq C^L(e^R)$ \\
or $b(e^R) \vee C^L(e^R) < a(e^R)$, $b(e^L) \vee C^R(e^L) < a(e^L)$ and $\hat{z}^L < \hat{z}^R$;
%\eqref{eq:Bdef} is convex and $B(z^*) < a(z^*)$ 
\item[(C3)] either 
$C^L(e^R) > a(e^R) = b(e^R)$ and $a(e^L)>b(e^L)$; \\
or $C^R(e^L) > a(e^L) = b(e^L)$ and $a(e^R)>b(e^R)$; \\
or $b(e^L)<a(e^L)$, $b(e^R)<a(e^R)$, $\hat{z}^L \geq \hat{z}^R$ and there does not exist $z^* \in (e^L,e^R)$ such that $C^L(z^*)=C^R(z^*)$; \\
or $b(e^L)<a(e^L)$, $b(e^R)<a(e^R)$, $\hat{z}^L \geq \hat{z}^R$ and there exists $z^* \in (e^L,e^R)$ such that $C^L(z^*)=C^R(z^*)$ but then $\Psi$ defined by \eqref{eq:Bdef} is not convex. 
%$C^R(e^L) \geq a(e^L)$;  \\
%or $C^L(e^R) < a(e^R)$ and $C^R(e^L) < a(e^L)$ and $\hat{z}^L \geq \hat{z}^R$, 
%$$ so there exists $z^* \in (e^L,e^R)$ such that $C^L(z^*)=C^R(z^*)$, 
%however $\Psi$ defined in \eqref{eq:Bdef} is not convex. 
%or there does not exist  $z^* \in (e^L,e^R)$ such that $C^L(z^*)=C^R(z^*)$.
%\item either there exists $z^* \in (e^L,e^R)$ such that $C^L(z^*)=C^R(z^*)$, however $B$ defined in \eqref{eq:Bdef} is not convex or there does not exist  $z^* \in (e^L,e^R)$ such that $C^L(z^*)=C^R(z^*)$.
\end{enumerate}
}
\fi
\end{comment}

The three cases will be as follows:
%{\cblue I've changed many of the inequalities here}
\begin{enumerate}
\item[(C1)] 
\begin{enumerate}
\item[(a)] %{\cblue $b(e^R) < a(e^R)$, $b(e^L) < a(e^L)$ and} 
$e^L <\hat{z}^R \leq \hat{z}^L < e^R$ (so that either there exists $z^* \in (\hat{z}^R, \hat{z}^L)$ such that $C^L(z^*)=C^R(z^*) > a(z^*)$, or $\hat{z}^R =\hat{z}^L$ and then taking $z^*=\hat{z}^R = \hat{z}^L$ we have $C^L(z^*)=C^R(z^*) =a(z^*)$); %we can define $\hat{Z}^L = \inf \{ z> e_L : C^L(z) \leq a(z) \}$, 
%so that there exists $z^* \in (e^L,e^R)$ such that $C^L(z^*)=C^R(z^*)$; 
moreover %$\Psi(z^*) \geq a(z^*)$ and 
$\Psi$ defined in \eqref{eq:Bdef} is convex;
\item[(b)] {either $\hat{z}^R = e^R = \hat{z}^L$ or $\hat{z}^R = e^L = \hat{z}^L$.}
\end{enumerate}
\item[(C2)] {$\hat{z}^L < \hat{z}^R$};
%either $a(e^L)=b(e^L)$ and $a(e^R)=b(e^R)$; \
\item[(C3)] \begin{enumerate}
    %\item $e^L\leq \hat{z}^R\leq e^R \leq \hat{z}^L$ or 
    %\item $\hat{z}^R<e^L<e^R<\hat{z}^L$ or 
    %\item $\hat{z}^R { \leq}e^L\leq \hat{z}^L \leq e^R$ or  
    %{\cblue I think we can combine (a), (b) and (c) to ``
    \item $\hat{z}^L \geq e^R$ and/or $\hat{z}^R \leq e^L$; moreover $\hat z^R<\hat z^L$.
    \item %{\cblue $b(e^R) < a(e^R)$, $b(e^L) < a(e^L)$ and} 
$e^L < \hat{z}^R \leq \hat{z}^L < e^R$ (so that again $C^L(z^*)=C^R(z^*) \geq a(z^*)$ for some $z^* \in [\hat{z}^R, \hat{z}^L]$) but
$\Psi$ defined in \eqref{eq:Bdef} is not convex.
\end{enumerate}
\end{enumerate}

\begin{remark}
\label{rem:noteq}
{(i) Note that in case (C1)(b), if $\hat{z}^R=\hat{z}^L=e^L$ then by Lemma~\ref{lem:a=betc} we must have ${a(e^L)=b(e^L)}$. Then taking  
$z^*=\hat{z}^R=\hat{z}^L$ so that $x^*_L=f^L(x^*_L)=z^*$ in $\Psi$ defined in \eqref{eq:Bdef}, we find that the second line in \eqref{eq:Bdef} is redundant and $\Psi$ is automatically 
convex. A similar argument applies if 
$\hat{z}^R=\hat{z}^L=e^R$; again $\Psi$ is automatically convex in this case.}

(ii) If $e^L < \hat{z}^R = \hat{z}^L < e^R$ then  $z^* := \hat{z}^R$ is such that $C^L(z^*)=C^R(z^*)=a(z^*)$ and then $\Psi$ defined by \eqref{eq:Bdef} is convex. In particular, we are in Case (C1)(a) and not in Case (C3)(b). Therefore, in Case (C3)(b) we may (and will) take $e^L < \hat{z}^R < \hat{z}^L < e^R$.  
\end{remark}
%Note that in (C3) (a)---(d) are mutually exclusive except that the case $(\hat{z}^L=e^R,\hat{z}^R= e^L)$ is in both (a) and (c).

%{ We begin by arguing that these three cases are mutually exclusive.

%First note that if $e^L < \hat{z}^R = \hat{z}^L < e^R$ then  $z^* := \hat{z}^R$ is such that $C^L(z^*)=C^R(z^*)=a(z^*)$ and then $\Psi$ defined by \eqref{eq:Bdef} is convex, and therefore we are not in Case (C3). In particular, the proof of Theorem \ref{thm2} in this case is covered by Case (C1).} So, in Case $(ii)$ we may (and will) take $e^L < \hat{z}^R < \hat{z}^L < e^R$.  

It is clear that the cases (C1), (C2) and (C3) are mutually exclusive, except that in principle it is possible that there exists $z^*_1,z^*_2 \in (e^L,e^R)$ such that $C^L(z^*_i)= C^R(z^*_i) \geq a(z^*_i)$ for $i=1,2$  and such that for $z^*_1$, $\Psi$ defined in \eqref{eq:Bdef} is convex and  for $z^*_2$, $\Psi$ defined in \eqref{eq:Bdef} is not convex. 
{It can be shown that this cannot happen. However, since the three cases are merely starting points of a construction, it is not crucial to the argument that the construction must start in a unique way, and therefore we omit this argument.}

The goal, in each of the cases (C1), (C2) and (C3), is to choose $\psi$ such that the total cost of the superhedging portfolio generated by $\psi$ coincides with the model-based price of the Bermudan claim (for some model $\sM^* \in M(\mu,\nu)$ and the associated optimal stopping time $\tau^*$).

\begin{figure}
	\centering
\begin{tikzpicture}
\def\el{-18};
\def\er{-4};
\def\xl{-14};
\def\xr{-8};
\begin{axis}[
width=3.521in,
height=5.566in,
at={(0.758in,0.481in)},
scale only axis,
xmin=-4,
xmax=24,
ymin=-3,
ymax=54,
axis line style={draw=none},
ticks=none
]
\draw[thin,gray] (-2,0)--(22,0);
\draw[thin,gray] (-2,30)--(22,30);

\draw[blue,dotted, very thick] (2,20) to[out=280, in=180] (10,4) to[out=360, in=250] (19,17);
\draw[blue,dashed, thick] (0,13) to[out=300, in=180] (11,1) to[out=360, in=260] (20,16);
\draw[red] (1.3,10)--(13,6)--(19,10.5);
\node[circle,fill=red,inner sep=0pt,minimum size=5pt] at (1.3,10) {};
\node[circle,fill=red,inner sep=0pt,minimum size=5pt] at (13,6) {};
\node[circle,fill=red,inner sep=0pt,minimum size=5pt] at (19,10.5) {};
\node[circle,fill=red,inner sep=0pt,minimum size=5pt] at (15.8,8.1) {};
\node[circle,fill=red,inner sep=0pt,minimum size=5pt] at (4.3,8.9) {};

\draw[thin,gray,dotted] (-2,30)--(22,54);
\draw[thin,gray,loosely dashed] (1.3,21)--(1.3,-1);
\draw[thin,gray,loosely dashed] (19,21)--(19,-1);
\draw[thin,gray,loosely dashed] (15.8,21)--(15.8,-1);
\draw[thin,gray,loosely dashed] (4.3,21)--(4.3,-1);
\draw[thin,gray,loosely dashed] (13,21)--(13,-1);

\draw[thin,gray,loosely dashed] (-2,45)--(22,45);
\draw[thin,gray,loosely dashed] (4.3,26)--(4.3,54);
\draw[thin,gray,loosely dashed] (13,26)--(13,54);
\draw[thin,gray,loosely dashed] (15.8,26)--(15.8,54);
\draw[thin,gray,loosely dashed] (1.3,26)--(1.3,45);
\draw[thin,gray,loosely dashed] (19,26)--(19,51)--(4.3,51);
\draw[thin,gray,loosely dashed] (1.3,33.3)--(15.8,33.3);
\draw[thin,gray,loosely dashed] (2,34)--(2,29);
\node[below,scale=0.8] at (2,29) {$e^L$};
\draw[red, thick] (2,34) to[out=60, in=230] (4.3,45);
\draw[red, thick] (2,34) to[out=340, in=150] (4.3,33.3);
\draw[thin,gray,loosely dashed] (18,50)--(18,29);
\node[below,scale=0.8] at (18,29) {$e^R$};
\draw[thin,gray,loosely dashed] (-2,30)--(-2,28);
\node[below,scale=0.8] at (-2,28) {$\alpha_\mu=\alpha_\nu$};
\draw[thin,gray,loosely dashed] (22,30)--(22,28);
\node[below,scale=0.8] at (22,28) {$\beta_\mu=\beta_\nu$};
\draw[red, thick] (18,50) to[out=140, in=300] (15.8,51);
\draw[red, thick] (18,50) to[out=250, in=50] (15.8,45);
\draw[red, thick] (-2,30) -- (2,34);%(1.3,33.3);
\draw[red, thick] (18,50) -- (22,54);

\node[below,scale=0.8] at (1.3,26) {$f^L(x^*_L)$};
\node[below,scale=0.8] at (4.3,26) {$x^*_L$};
\node[below,scale=0.8] at (13,26) {$z^*$};
\node[right,scale=0.8] at (22,45) {$z^*$};
\node[below,scale=0.8] at (15.8,26) {$x^*_R$};
\node[below,scale=0.8] at (19,26) {$f^R(x^*_R)$};

\node[below,scale=0.8] at (1.3,-1) {$f^L(x^*_L)$};
\node[below,scale=0.8] at (4.3,-1) {$x^*_L$};
\node[below,scale=0.8] at (13,-1) {$z^*$};
\node[below,scale=0.8] at (15.8,-1) {$x^*_R$};
\node[below,scale=0.8] at (19,-1) {$f^R(x^*_R)$};

\filldraw[ opacity=0.1, red] (1.3,33.3)--(4.3,33.3) -- (4.3,45)--(1.3,45);
\filldraw[ opacity=0.1, red] (19,51)--(15.8,51) -- (15.8,45)--(19,45);
\filldraw[ opacity=0.1, blue] (4.3,54)--(15.8,54) -- (15.8,51)--(4.3,51);
\filldraw[ opacity=0.1, blue] (4.3,33.3)--(15.8,33.3) -- (15.8,30)--(4.3,30);
\node[blue,scale=0.8] at (17.5,14) {$a$};
\node[blue,scale=0.8] at (17.5,5) {$b$};
\node[red,scale=0.8] at (8,9) {$T^L_{x^*_L}$};
\node[red,scale=0.8] at (14.5,9) {$T^R_{x^*_R}$};
\end{axis}
\end{tikzpicture}
\caption{The Case (C1)(a) with $z^*\in(\hat z^R,\hat z^L)$. The top part of the figure represents the stylized plots of functions $f^L$ and $g^L$ (on $[e^L,x^*_L=(g^L)^{-1}(z^*)]$), and  $f^R$ and $g^R$ (on $[e^R,x^*_R=(f^R)^{-1}(z^*)]$), that support the left and right-curtain martingale couplings, respectively. Note that $g^L$ (resp., $f^R$) is {in}creasing, while $f^L$ (resp., $g^R$) is {de}creasing on $[e^L,x^*_L]$ (resp.,  $[e^R,x^*_R]$). Furthermore, the shaded areas correspond to the sets (and associated exercise rules) on which the optimal models $\sM^*$ concentrate: the Bermudan option is exercised at time-1 if $Z_1\notin(x^*_L,x^*_R)$, and the mass in $(\alpha_\mu,f^L(x^*_L))\cup(g^R(x^*_R),\beta_\mu)$ stays put (i.e., remains on the diagonal), while the mass in $(f^L(x^*_L),x^*_L)$ (resp., $(x^*_R,g^R(x^*_R))$) is mapped to $(f^L(x^*_L),g^L(x^*_L))$ (resp., $(f^R(x^*_R),g^R(x^*_R))$). On the other hand, if $Z_1\in(x^*_L,x^*_R)$, then the option will be exercised at time-2 and the mass in $(x^*_L,x^*_R)$ is mapped to the tails $(\alpha_\nu,f^L(x^*_L))\cup(g^R(x^*_R),\beta_\nu)$. The bottom part of the figure depicts the payoff functions $a$ and $b$ (with $a>b$), and the candidate convex function $\Psi=\Psi^{y^*_-,x^*_-,z^*, x^*_+,y^*_+}_{5,a,b}$. In particular, we have that $\Psi=T_{b,a}^{f^L(x^*_L),x^*_L}=T^L_{x^*_L}$  on $[f^L(x^*_L),z^*]$ and $\Psi=T_{a,b}^{x^*_L,g^R(x^*_R)}=T^R_{x^*_R}$ on $[z^*,g^R(x^*_R)]$, while $\Psi=b$ on $(\alpha_\nu,f^L(x^*_L))\cup(g^R(x^*_R),\beta_\nu)$.}
\label{fig:case1}
\end{figure}

\subsection{Proof of Theorem \ref{thm2} in Case (C1)}\label{sec:C1}
In this case we take $(y^*_-,x^*_-,z^*, x^*_+,y^*_+) = (f^L(x_L^*),x^*_L,z^*,x^*_R,g^R(x^*_R))$. Note that $g^L(x^*_L)=z^*=f^R(x^*_R)$. Then $\Psi^{y^*_-,x^*_-,z^*, x^*_+,y^*_+}_{5,a,b}$ (defined in \eqref{eq:psi5points}) and $\Psi$ (defined in \eqref{eq:Bdef}) are such that $\Psi^{y^*_-,x^*_-,z^*, x^*_+,y^*_+}_{5,a,b} = \Psi$ (in Case (C1)(b), if $\hat z^R=e^R=\hat z^L$ (resp., $\hat z^R=e^L=\hat z^L$), then $y^*_+=x^*_+=z^*=e^R$ (resp., $y^*_-=x^*_-=z^*=e^L$) and thus the third (resp., second) lines in the definitions of both  $\Psi^{y^*_-,x^*_-,z^*, x^*_+,y^*_+}_{5,a,b}$ and $\Psi$ are redundant). By construction, $\Psi$ is convex if we are in Case (C1)(a). On the other hand, in Case (C1)(b), by Lemma \ref{lem:a=betc} we have that $a(\hat z^R)=b(\hat z^R)$, and thus $\Psi$ is again convex. It follows that $(y^*_-,x^*_-,z^*, x^*_+,y^*_+) \in \sL_5^{a,b}$ in Case (C1). See Figure \ref{fig:case1}.

On the other hand, since $e^L\leq \hat z^R\leq z^*\leq\hat z^L\leq e^R$, we have that $e^L\leq x^*_-<x^*_+\leq e^R$ and $x^*_+=B(x^*_-)$ (where $B(\cdot)$ is given by \eqref{eq:BoundaryB}). Then by Lemma \ref{lem:new:FormOfShadow}(i) we have that $S^\nu(\bar{\mu}_{x^*_-}^{x^*_+}) = S^\nu(\mu_{\alpha_\mu}^{x^*_-})+S^{\nu}(\mu_{x^*_+}^{\beta_\mu})$. By Corollary \ref{cor:LCandRC}(i), $S^\nu(\mu_{\alpha_\mu}^{x^*_-})=\mu_{\alpha_\mu}^{y^*_-} + \nu_{y^*_-}^{z^*}\leq\nu$, and therefore $\mu_{\alpha_\mu}^{y^*_-}\leq\nu_{\alpha_\mu}^{y^*_-}$ (so that $S^\nu(\mu_{\alpha_\mu}^{y^*_-})=\mu_{\alpha_\mu}^{y^*_-}$). By the definition and the associativity property of the shadow measure (see Proposition \ref{prop:shadow_assoc}) it then follows that $S^\nu(\mu_{\alpha_\mu}^{y^*_-})+S^{\nu-S^\nu(\mu_{\alpha_\mu}^{y^*_-})}(\mu_{y^*_-}^{x^*_-})=S^\nu(\mu_{\alpha_\mu}^{x^*_-})=\mu_{\alpha_\mu}^{y^*_-} + \nu_{y^*_-}^{z^*}$ and thus $S^{\nu-S^\nu(\mu_{\alpha_\mu}^{y^*_-})}(\mu_{y^*_-}^{x^*_-})=\nu_{y^*_-}^{z^*}$, from which we conclude that $\mu_{y^*_-}^{x^*_-}\leq_{cx}\nu_{y^*_-}^{z^*}$. By applying similar arguments to $(z^*,x^*_+,y^*_+)$ (together with Corollary \ref{cor:LCandRC}(ii)) we have that $\mu_{y^*_+}^{\beta_\mu}\leq\nu_{y^*_+}^{\beta_\mu}$ and $\mu_{x^*_+}^{y^*_+}\leq_{cx}\nu_{z^*}^{y^*_+}$. Combining both cases shows that $(y^*_-,x^*_-,z^*, x^*_+,y^*_+)\in \Gamma_5^{\mu,\nu}$.

We conclude that $(f^L(x_L^*),x^*_L,z^*,x^*_R,g^R(x^*_R)) \in \Gamma_5^{\mu,\nu} \cap \sL_5^{a,b}$, which finishes the proof in Case (C1).

\subsection{Proof of Theorem \ref{thm2} in Case (C2)}
\label{sec:C2}
In this case we take $(\hat y_-,\hat x_-,\hat z_-,\hat z_+,\hat x_+,\hat y_+)=(f^L(\hat x^L),\hat x^L,\hat z^L,\hat z^R,\hat x^R,g^R(\hat x^R))$. Note that $e^L\leq \hat z^L<\hat z^R\leq e^R$.

Suppose first that $e^L< \hat z^L<\hat z^R<e^R$, so that (by Lemma \ref{lem:a=betc}) $b(e^L)<a(e^L)$ and $b(e^R)<a(e^R)$.
In that case we define a candidate convex function $\psi^{*,2}$ such that $\psi^{*,2}\geq b$ everywhere (so that, as before, it generates a superhedge with total cost $\int(a- \psi^{*,2})^+d\mu+\int\psi^{*,2}d\nu$)):
\begin{equation}\label{eq:psi1}
\psi^{*,2}(x)=\begin{cases}
b(x),\quad x\in(-\infty,\hat y_-]\cup[\hat y_+,\infty)\\
a(x),\quad x\in (\hat{z}_-,\hat{z}_+)\\
T^L_{\hat{x}_L}(x),\quad x\in(\hat y_-,\hat{z}_-]\\
T^R_{\hat{x}_R}(x),\quad x\in[\hat{z}_+,\hat y_+).
\end{cases}
\end{equation}
Then $\Psi^{\hat y_-,\hat x_-,\hat z_-,\hat z_+,\hat x_+,\hat y_+}_{6,a,b}$ (defined in \eqref{eq:psi6points}) is such that $\Psi^{\hat y_-,\hat x_-,\hat z_-,\hat z_+,\hat x_+,\hat y_+}_{6,a,b} = \psi^{*,2}$ (note that, due to the definitions of $\hat z_-=\hat z^L$ and $\hat z_+=\hat z^R$, and the fact that we are in Case (C2), $\psi^{*,2}$ is convex). It follows that $(\hat y_-,\hat x_-,\hat z_-,\hat z_+,\hat x_+,\hat y_+) \in \sL_6^{a,b}$.

Since $e^L<\hat x_-=\hat x^L<g^L(\hat x^L)=\hat z^L=\hat z_- < \hat z_+=\hat z^R=f^R(\hat x^R)<\hat x^R=\hat x_+<e^R$, and similarly as in Case (1), by Lemma \ref{lem:new:FormOfShadow}(i) we have that $\mu_{\alpha_\mu}^{\hat y_-}+\mu_{\hat y_+}^{\beta_\mu}\leq\nu_{\alpha_\mu}^{\hat y_-}+\nu_{\hat y_+}^{\beta_\mu}$, $\mu_{\hat y_-}^{\hat x_-}\leq_{cx}\nu_{\hat y_-}^{\hat z_-}$ and $\mu_{\hat x_+}^{\hat y_+}\leq_{cx}\nu_{\hat z_+}^{\hat y_+}$. On the other hand, by the Dispersion Assumption (recall Definition \ref{def:dispersion}), $\nu\leq\mu$ on $(\hat z_-,\hat z_+)$. It follows that
$(\hat y_-,\hat x_-,\hat z_-,\hat z_+,\hat x_+,\hat y_+)\in \Gamma_6^{\mu,\nu}$, which completes the proof in this case. See Figure \ref{fig:case2}.

If $b(e^L)=a(e^L)$, or $b(e^R)=a(e^R)$, or both, then the above construction simplifies. Indeed, by Lemma~\ref{lem:a=betc}, if $a(e^L)=b(e^L)$, then $\hat y_- =\hat x_- =\hat z_-$, and therefore $\nu_{\hat y_-}^{\hat z_-}=\nu_{\hat z_-}^{\hat z_-}$ is the zero measure, and the third line in the definition of $\psi^{*,2}$ (see \eqref{eq:psi1}) is redundant, so that $\psi^{*,2}=b$ on $(-\infty,e^L=\hat z_-]$. Similarly, if $a(e^R)=b(e^R)$, then $\hat z_+=\hat x_+=\hat y_+$, and therefore $\nu_{\hat z_+}^{\hat y_+}=\nu_{\hat z_+}^{\hat z_+}$ is the zero measure, and the fourth line in \eqref{eq:psi1} is redundant, so that $\psi^{*,2}=b$ on $[e^R=\hat z_+,\infty)$. Finally, if both $a(e^L)=b(e^L)$ and $a(e^L)=b(e^L)$, then $(\nu_{\hat y_-}^{\hat z_-}+\nu_{\hat z_+}^{\hat y_+})$ is the zero measure, and both the third and fourth lines in \eqref{eq:psi1} are redundant, so that $\psi^{*,2}$ simplifies to $\psi^{*,2}=a$ on $(e^L,e^R)$ and $\psi^{*,2}=b$ otherwise. Note that in any of these cases we still have that $\Psi^{\hat y_-,\hat x_-,\hat z_-,\hat z_+,\hat x_+,\hat y_+}_{6,a,b} = \psi^{*,2}$ is convex, and $(\hat y_-,\hat x_-,\hat z_-,\hat z_+,\hat x_+,\hat y_+) \in \sL_6^{a,b}\cap \Gamma_6^{\mu,\nu}$, which completes the proof.

\subsection{Proof of Theorem \ref{thm2} in Case (C3)}
\label{sec:C3}
Recall the definition of $\sL^{a,b}_4$ in \eqref{eq:L4def} and note that, at least if $x>\alpha_\mu$ and $y<\beta_\mu$, $(w,x,y,z)\in \sL^{a,b}_4$ if $(w,b(w))$, $(x,a(x))$, $(y,a(y))$ and $(z,b(z))$ lie on a straight line. Recall also the definition of $\Gamma^{\mu,\nu}_4$ in \eqref{eq:Gamma4def}. In this section we write $\sL_4$ as shorthand for $\sL^{a,b}_4$ and $\Gamma_4$ as shorthand for $\Gamma_4^{\mu,\nu}$. %{\cblue Need some result to say that under }

\begin{comment}
{\cblue
\begin{defn}
    We say that $(w,x,y,z)\in\R^4$ satisfy the straight-line property if $\alpha_\mu<x<y<\beta_\mu$, if $\alpha_\nu<w<x<y<z<\beta_\nu$ and if $(w,b(w))$, $(x,a(x))$, $(y,a(y))$ and $(z,b(z))$ lie on a straight line. We let $\sL = \sL(a,b)$ be the set of such quadruples.
\end{defn}
}

{\cblue 
\begin{defn}
    We say that $(w,x,y,z)\in\R^4$ satisfy the interval-shadow property if $\alpha_\mu<x<y<\beta_\mu$, if $\alpha_\nu<w\leq x<y\leq z<\beta_\nu$, if $w<e^L<e^R<z$ and
    $S^\nu(\mu|_{(\alpha_\mu,x) \cup (y,\beta_\mu)}) = \mu|_{(\alpha_\nu,w) \cup (z,\beta_\nu)} + \nu|_{(w,z)}$. We let $\sI^S= \sI^S(\mu,\nu)$ be the set of such quadruples.
\end{defn}
}

{\cblue We defined $\sL^{a,b}_4 = \sL=\sL(a,b)$ before - we should recall it here. We also defined $\Gamma_4^{\mu,\nu}$. We should either prove $\Gamma_4^{\mu,\nu}= \sI^S$ or perhaps $\sI^S \subseteq \Gamma_4^{\mu,\nu}$ or just switch notation and use $\Gamma_4$ not $\sI$ throughout.}
\end{comment}

Our goal is to find $(w,x,y,z) \in \sL_4 \cap \Gamma_4$. 
%with $\beta_\nu < w<e^L<e^R<z < \beta_\nu$ such that 
%\begin{enumerate}
%    \item $(w,b(w))$, $(x,a(x))$, $(y,a(y))$ and $(z,b(z))$ lie on a straight line,
%    \item $S^\nu(\mu|_{(\alpha_\mu,x) \cup (y,\beta_\mu)}) = \mu|_{(\alpha_\nu,w) \cup (z,\beta_\nu)} + \nu|_{(w,z)}$.
%\end{enumerate}
Then we can build a model including a coupling $\pi$ (of $\mu$ and $\nu$) such that the mass outside $(x,y)$ is mapped to its shadow, and the associated optimal strategy is to take $\tau^*=1$ on $X_1 \notin (x,y)$. The corresponding optimal hedge is generated by $\psi = b \vee L$ where $L$ is the straight line that passes through the four collinear points  $(w,b(w))$, $(x,a(x))$, $(y,a(y))$ and $(z,b(z))$. See {Case 1} of Theorem \ref{thm1}.

We will find $(w,x,y,z) \in \sL_4 \cap \Gamma_4$ in four steps. Step 1: first we will construct maps $w,z:(\alpha_\mu,\beta_\mu)\to(\alpha_\nu,\beta_\nu)$ such that $(w(x,y),x,y,z(x,y)) \in \Gamma_4$; see Lemma \ref{lem:PropertiesofsA}. Step 2: we will use the functions $C^L$ and $C^R$ above to find a starting point $(x_0,y_0)$ and then a modified starting point $(x_1,y_0)$. Step 3: we show that there exists a continuous function $y^*:[x_1,\beta_\mu)\to(\alpha_\mu,\beta_\mu)$ such that $(w(x,y^*(x)),x,y^*(x),z(x,y^*(x))) \in \Gamma_4$ and the points $(x,a(x))$, $(y^*(x),a(y^*(x)))$ and $(z(x,y^*(x)),b(z(x,y^*(x)))$ lie on the same line; see Corollary \ref{cor:existenceContinuity}. Step 4: finally, by adjusting $x$, we will show that there exists $x^*\in[x_1,\alpha_\mu)$, for which the remaining point $(w(x^*,y^*(x^*)),b(w(x^*,y^*(x^*))))$ also lies on the same line and thus $(w(x^*,y^*(x^*)),x^*,y^*(x^*),z(x^*,y^*(x^*))) \in \sL_4\cap \Gamma_4$; see Proposition \ref{prop:solution}.
\begin{figure}[H]
	\centering
\begin{tikzpicture}
\def\el{-18};
\def\er{-4};
\def\xl{-14};
\def\xr{-8};
\begin{axis}[
width=3.521in,
height=4.766in,
at={(0.758in,0.481in)},
scale only axis,
xmin=-4,
xmax=24,
ymin=-3,
ymax=54,
axis line style={draw=none},
ticks=none
]
\draw[thin,gray] (-2,0)--(22,0);
\draw[thin,gray] (-2,30)--(22,30);

\draw[blue,dotted, very thick] (2,20) to[out=280, in=180] (10,4) to[out=360, in=250] (19,17);
\draw[blue,dashed, thick] (0,13) to[out=300, in=180] (11,1) to[out=360, in=260] (20,16);
\draw[red] (1.3,10)--(10,4);
\node[circle,fill=red,inner sep=0pt,minimum size=5pt] at (1.3,10) {};
\node[circle,fill=red,inner sep=0pt,minimum size=5pt] at (10,4) {};
\node[circle,fill=red,inner sep=0pt,minimum size=5pt] at (5.2,7.3) {};
\draw[red] (12,4.4)--(19.5,12.5);
\node[circle,fill=red,inner sep=0pt,minimum size=5pt] at (12,4.4) {};
\node[circle,fill=red,inner sep=0pt,minimum size=5pt] at (19.5,12.5) {};
\node[circle,fill=red,inner sep=0pt,minimum size=5pt] at (16.3,9) {};

\draw[thin,gray,dotted] (-2,30)--(22,54);
\draw[thin,gray,loosely dashed] (1.3,21)--(1.3,-1);
\draw[thin,gray,loosely dashed] (10,21)--(10,-1);
\draw[thin,gray,loosely dashed] (5.2,21)--(5.2,-1);
\draw[thin,gray,loosely dashed] (12,21)--(12,-1);
\draw[thin,gray,loosely dashed] (16.3,21)--(16.3,-1);
\draw[thin,gray,loosely dashed] (19.5,21)--(19.5,-1);

\draw[thin,gray,loosely dashed] (1.3,33.3)--(5.2,33.3)--(5.2,42)--(1.3,42)--(1.3,26);
\draw[thin,gray,loosely dashed] (5.2,33.3)--(5.2,26);
\draw[thin,gray,loosely dashed] (5.2,42)--(10,42)--(10,26);
\draw[thin,gray,loosely dashed] (5.2,42)--(5.2,54);
\draw[thin,gray,loosely dashed] (10,42)--(10,54);
\draw[thin,gray,loosely dashed] (12,54)--(12,26);
\draw[thin,gray,loosely dashed] (19.5,51.5)--(16.3,51.5)--(16.3,44.4)--(19.5,44.4)--(19.5,51,5);
\draw[thin,gray,loosely dashed] (16.3,44.4)--(12,44.4);
\draw[thin,gray,loosely dashed] (16.3,51.5)--(16.3,54);
\draw[thin,gray,loosely dashed] (19.5,51,5)--(19.5,54);
\draw[thin,gray,loosely dashed] (19.5,44.4)--(19.5,26);
\draw[thin,gray,loosely dashed] (16.3,44.4)--(16.3,26);

\draw[thin,gray,loosely dashed] (2,34)--(2,29);
\node[below,scale=0.8] at (2,29) {$e^L$};
\draw[red, thick] (2,34) to[out=60, in=230] (5.2,42);
\draw[red, thick] (2,34) to[out=340, in=150] (5.2,33.3);
\draw[thin,gray,loosely dashed] (18.5,50.5)--(18.5,29);
\node[below,scale=0.8] at (18.5,29) {$e^R$};
\draw[red, thick] (18.5,50.5) to[out=140, in=300] (16.3,51.5);
\draw[red, thick] (18.5,50.5) to[out=250, in=50] (16.3,44.4);
\draw[red, thick] (10,42) -- (12,44.4);
\draw[red, thick] (18.5,50.5) -- (22,54);
\draw[red, thick] (-2,30) -- (2,34);

\node[below,scale=0.8] at (1.3,26) {$f^L(\hat x_L)$};
\node[below,scale=0.8] at (5.2,26) {$\hat x_L$};
\node[below,scale=0.8] at (10,26) {$\hat z_L$};
\node[below,scale=0.8] at (12,26) {$\hat z_R$};
\node[below,scale=0.8] at (16.3,26) {$\hat x_R$};
\node[below,scale=0.8] at (19.5,26) {$g^R(\hat x_R)$};

\node[below,scale=0.8] at (1.3,-1) {$f^L(\hat x_L)$};
\node[below,scale=0.8] at (5.2,-1) {$\hat x_L$};
\node[below,scale=0.8] at (10,-1) {$\hat z_L$};
\node[below,scale=0.8] at (12,-1) {$\hat z_R$};
\node[below,scale=0.8] at (16.3,-1) {$\hat x_R$};
\node[below,scale=0.8] at (19.5,-1) {$g^R(\hat x_R)$};
\draw[thin,gray,loosely dashed] (1.3,42) -- (-2,42);
\node[left,scale=0.8] at (-2,42) {$\hat z_L$};
\filldraw[ opacity=0.1, red] (1.3,33.3)--(5.2,33.3) -- (5.2,42)--(1.3,42);
\draw[thin,gray,loosely dashed] (19.5,44) -- (22,44);
\node[right,scale=0.8] at (22,44) {$\hat z_R$};
\filldraw[ opacity=0.1, red] (19.5,51.5)--(16.3,51.5)--(16.3,44.4)--(19.5,44.4);
\filldraw[ opacity=0.3, blue] (5.2,33.3)--(10,33.3) -- (10,30)--(5.2,30);
\filldraw[ opacity=0.1, blue] (10,33.3)--(12,33.3) -- (12,30)--(10,30);
\filldraw[ opacity=0.3, blue] (12,33.3)--(16.3,33.3) -- (16.3,30)--(12,30);
\filldraw[ opacity=0.3, blue] (5.2,51.5)--(5.2,54) -- (10,54)--(10,51.5);
\filldraw[ opacity=0.1, blue] (10,51.5)--(12,51.5) -- (12,54)--(10,54);
\filldraw[ opacity=0.3, blue] (12,51.5)--(16.3,51.5) -- (16.3,54)--(12,54);
%\node[blue,scale=0.8] at (17.5,14) {$a$};
%\node[blue,scale=0.8] at (17.5,5) {$b$};
%\node[red,scale=0.8] at (9,8.5) {$\Psi$};
\draw[thin,gray,loosely dashed] (-2,30)--(-2,28);
\node[below,scale=0.8] at (-2,28) {$\alpha_\mu=\alpha_\nu$};
\draw[thin,gray,loosely dashed] (22,30)--(22,28);
\node[below,scale=0.8] at (22,28) {$\beta_\mu=\beta_\nu$};
\node[red,scale=0.8] at (8,7) {$T^L_{\hat x_L}$};
\node[red,scale=0.8] at (14,8.5) {$T^R_{\hat x_R}$};
\node[blue,scale=0.8] at (17.5,14.5) {$a$};
\node[blue,scale=0.8] at (17.7,5) {$b$};
\end{axis}
\end{tikzpicture}
\caption{The Case (C2) with $e^L<\hat z^L<\hat z^R<e^R$. The top part of the figure represents the stylized plots of functions $f^L$ and $g^L$ (on $(\alpha_\mu, \hat x^L=(g^L)^{-1}(\hat z^L))$), and  $f^R$ and $g^R$ (on $(\hat x^R=(f^R)^{-1}(\hat z^R),\beta_\mu)$), that support the left and right-curtain martingale couplings, respectively. Furthermore, the shaded areas in the top part of the figure represent the sets (and associated exercise rules) on which the optimal models $\sM^*$ concentrate: the Bermudan option is exercised at time-1 if $Z_1\notin(-\hat x^L,\hat x^R)$ (and then the mass in $(\alpha_\mu,f^L(
\hat x^L))\cup(g^R(\hat x^R),\beta_\mu)$ stays put, {while the mass in 
$(f^L(\hat x^L),\hat x^L)$ is mapped to $(f^L(\hat x^L),\hat z^L)$ and the mass in
$(\hat x^R,g^R(\hat x^R))$ is mapped to $(\hat z^R,g^R(\hat x^R))$}),
%$[f^L(-x_3),-x_3]\cup[x_3,g^R(x_3)]$ is mapped to $[f^L(-x_3),g^L(-x_3)]\cup[f^R(x_r),g^R(x_3)]$) 
and if $Z_1\in(\hat z^L,\hat z^R)$ and $U\leq(\eta(Z_1)/\rho(Z_1))$ (note that, due to the Dispersion Assumption \ref{ass:simple}, $\eta>\rho$ on $(\hat z^L,\hat z^R)\subset (e^L,e^R)$, and thus only a portion of the mass in $(\hat z^L,\hat z^R)$ stays put). On the other hand, the option will be exercised at time-2 if either $Z_1\in(\hat x^L,\hat z^L)\cup(\hat z^R,\hat x^R)$ (and then the mass in $(\hat x^L,\hat z^L)\cup(\hat z^R,\hat x^R)$ is mapped to the tails $(\alpha_\nu,f^L(\hat x^L))\cup(g^R(\hat x^R),\beta_\nu)$), or $Z_1\in(\hat z^L,\hat z^R)$ and $U>(\eta(Z_1)/\rho(Z_1))$ (and then this portion of mass in $(\hat z^L,\hat z^R)$ is (again) mapped to the tails $(\alpha_\nu,f^L(\hat x^L))\cup(g^R(\hat x^R),\beta_\nu)$).
In the bottom part of the figure we identify the candidate convex function $\psi^{*,2}$. In particular, $\psi^{*,2}=b$ on $(\alpha_\nu,f^L(\hat x^L))\cup(g^R(\hat x^R),\beta_\nu)$, $\psi^{*,2}=T_{a,b}^{f^L(\hat x^L),\hat x^L}=T^L_{\hat x^L}$ on $[f^L(\hat x^L),\hat z^L=g^R(\hat z^L)]$, $\psi^{*,2}=T_{a,b}^{\hat x^R,g^R(\hat x^R)}=T^R_{\hat x^R}$ on $[\hat z^R=f^R(\hat x^R),g^R(\hat z^R)]$ and $\psi^{*,2}=a$ on $(\hat z^L,\hat z^R)$. }
\label{fig:case2}
\end{figure}

{\textit{{Step 1: constructing maps $w,z$.}} Recall the definition of $B:(\alpha_\mu,\beta_\mu)\to(\alpha_\mu,\beta_\mu)$ given in \eqref{eq:BoundaryB}. The following lemma is an immediate consequence of Lemma \ref{lem:couplingsDispersionAssumption}.}
\begin{lem}
    %Suppose $\mu\leq_{cx}\nu$ and Assumption~\ref{ass:simple} holds. Then 
    $B$, as defined in \eqref{eq:BoundaryB}, is strictly increasing on $(\alpha_\mu,(g^L)^{-1}(\beta_\mu))$, continuous on $(\alpha_\mu,\beta_\mu)$, and $B(x)>x$ for all $x\in(\alpha_\mu,\beta_\mu)$. Furthermore, $\lim_{x\downarrow\alpha_\mu}B(x)=\lim_{x\downarrow\alpha_\mu}(f^R)^{-1}(x)=(f^R)^{-1}(\alpha_\mu)$ and $\lim_{x\uparrow\beta_\mu}B(x)=\lim_{x\uparrow \beta_\mu}(f^R)^{-1}(x)=\beta_\mu$.
\end{lem}

Note that the restriction of $B(\cdot)$ to $(\alpha_\mu,(g^L)^{-1}(\beta_\mu))$ admits an inverse $B^{-1}:((f^R)^{-1}(\alpha_\mu),\beta_\mu)\to(\alpha_\mu,(g^L)^{-1}(\beta_\mu))$. In the case $\alpha_\nu<\alpha_\mu$, we extend the domain of $B^{-1}(\cdot)$ to $(\alpha_\mu,\beta_\mu)$ by setting $B^{-1}(y)=\alpha_\mu$ for all $y\in(\alpha_\mu,(f^R)^{-1}(\alpha_\mu)]$. Then $B^{-1}(\cdot)$ is continuous (and non-decreasing) on $(\alpha_\mu,\beta_\mu)$ and strictly increasing on $((f^R)^{-1}(\alpha_\mu),\beta_\mu)$.

Define $\sA:=\{(x,y): x\in[\alpha_\mu, \beta_\mu)\cap(\alpha_\nu,\beta_\mu),~y\in(\alpha_\mu,\beta_\mu]\cap(\alpha_\mu,\beta_\nu),~x\leq y\leq B(x)\}$. Also, for $x\in[\alpha_\mu, \beta_\mu)\cap(\alpha_\nu,\beta_\mu)$ and $y\in(\alpha_\mu,\beta_\mu]\cap(\alpha_\mu,\beta_\nu)$, set $\sA_x=\{z:(x,z)\in\sA\}$ and $\sA_y:=\{z:(z,y)\in\sA\}$, respectively. Set $\sA^<=\{(x,y)\in\sA:x<y\}$, $\sA^<_x=\{z\in\sA_x:x<z\}$ and $\sA^<_y=\{z\in\sA_y:z<y\}$.

%{ Do we need to write $(\alpha_\mu, \beta_\mu]\cap(\alpha_\nu,\beta_\nu)$ rather than just $(\alpha_\mu, \beta_\mu]$? After all, if $y = \beta_\mu = \beta_\nu$ }

\begin{figure}
	\centering
\begin{tikzpicture}
\def\el{-18};
\def\er{-4};
\def\xl{-14};
\def\xr{-8};
\begin{axis}[
width=3.521in,
height=5.566in,
at={(0.758in,0.481in)},
scale only axis,
xmin=-4,
xmax=24,
ymin=-11,
ymax=54,
axis line style={draw=none},
ticks=none
]
\draw[thin,gray] (-2,0)--(22,0);
\draw[thin,gray] (-2,30)--(22,30);

\draw[blue,dotted, very thick] (0,20) to[out=280, in=180] (10,7) to[out=360, in=250] (20,22);
\draw[blue,dashed, thick] (0,8) to[out=290, in=180] (11,1) to[out=360, in=260] (20,16);

\draw[red,thick] (2.5,3.2)--(7,7.5)--(19.5,12.5);
\node[circle,fill=red,inner sep=0pt,minimum size=5pt] at (2.5,3.2) {};
\node[circle,fill=red,inner sep=0pt,minimum size=5pt] at (7,7.5) {};
\node[circle,fill=red,inner sep=0pt,minimum size=5pt] at (16.2,11.2) {};
\node[circle,fill=red,inner sep=0pt,minimum size=5pt] at (19.5,12.5) {};

\draw[red, dash dot] (4.1,2.2)--(5.8,8.3)--(16.2,11.2)--(17.5,6);
\node[circle,fill=red,inner sep=0pt,minimum size=5pt] at (4.1,2.2) {};
\node[circle,fill=red,inner sep=0pt,minimum size=5pt] at (5.8,8.3) {};
\node[circle,fill=red,inner sep=0pt,minimum size=5pt] at (17.5,6) {};

\draw[thick,dotted] (-2,30)--(22,54);
\node[circle,fill=blue,inner sep=0pt,minimum size=5pt] at (5.8,48.2) {};
\node[circle,fill=red,inner sep=0pt,minimum size=5pt] at (7,48.2) {};
\draw[blue] (-2,30) to[out=60, in=220] (5.8,48.2) to[out=40, in=190] (13,54);
\draw[thin,gray,loosely dashed] (13,54)--(13,28);
\node[scale=0.8,below] at (13,28) {$(g^L)^{-1}(\beta_\mu)$};
\draw[blue,dashed] (13,54)--(22,54);
\draw[red] (7,48.2) to[out=10, in=220] (10,44) to[out=40, in=190] (16,54)--(18,54)to[out=360, in=160] (20,52);
\node[red,scale=0.8] at (15.4,50.5) {$x\mapsto y^*(x)$};
\node[blue,scale=0.8] at (5.4,51.5) {$x\mapsto B(x)$};
\node[scale=0.8] at (10.4,40) {$x\mapsto x$};

\draw[thin,gray,loosely dashed] (5.8,48.2)--(5.8,28);
\draw[thin,gray,loosely dashed] (7,48.2)--(7,28);
\draw[thin,gray,loosely dashed] (20,52)--(20,28);
\draw[thin,gray,loosely dashed] (22,52)--(22,28);
\draw[thin,gray,loosely dashed] (-2,30)--(-2,28);
\draw[thin,gray,loosely dashed] (5.8,48)--(16,48)--(16,25);
\node[below,scale=0.8] at (5.8,28) {$x_0$};
\draw[thin, gray, loosely dashed] (5.8,26) -- (5.8,-1);
\node[below,scale=0.8] at (5.8,-1) {$x_0$};
\node[below,scale=0.8] at (7,28) {$x_1$};
\node[below,scale=0.8] at (15,25) {$y_0=B(x_0)=y^*(x_1)$};
\draw[thin,gray,loosely dashed] (16.2,23)--(16.2,-1);
\node[below,scale=0.8] at (16.2,-1) {$y_0$};
\node[below,scale=0.8] at (20,28) {$\tilde x$};
\node[below,scale=0.8] at (-2,28) {$\alpha_\mu=\alpha_\nu$};
\node[below,scale=0.8] at (22,28) {$\beta_\mu$};
\draw[thin,gray,loosely dashed] (16,48)--(22,48);
\node[right,scale=0.8] at (22,48) {$y_0$};
\draw[thin,gray,loosely dashed] (4.1,2.2) -- (4.1,-3.5);
\node[below,scale=0.8] at (5.1,-3.5) {$w(x_0,y_0)$};
\draw[thin,gray,loosely dashed] (17.5,6) -- (17.5,-3.5);
\node[below,scale=0.8] at (17,-3.5) {$z(x_0,y_0)$};
\draw[thin,gray,loosely dashed] (7,26) -- (7,-1);
\node[below,scale=0.8] at (7,-1) {$x_1$};
\draw[thin,gray,loosely dashed] (2.5,3.2)--(2.5,-7.5);
\node[below,scale=0.8] at (2.5,-7.5) {$w(x_1,y_0)$};
\draw[thin,gray,loosely dashed] (19.5,12.5)--(19.5,-7.5);
\node[below,scale=0.8] at (19.5,-7.5) {$z(x_1,y_0)$};

\node[red,scale=0.8] at (13,12) {$\Psi_{x_0,y_0}$};
\node[red,scale=0.8] at (13,5) {$\Psi_{x_1,y_0}$};
\draw [thin, gray, ->] (12,4) to [out=190,in=300] (9,8);
\draw [thin, gray, ->] (12,11) to [out=190,in=60] (9,9.5);

\node[blue,scale=0.8] at (2.5,14) {$a$};
\node[blue,scale=0.8] at (1.3,6.5) {$b$};

\end{axis}
\end{tikzpicture}
\caption{The Case (C3)(b). The top part of the figure represents the stylized plots of functions $x\mapsto B(x)$, $x\mapsto y^*(x)$, and the locations of points $(x_0,y_0)$ and $(x_1.y_0)$, in the case $\alpha_\mu=\alpha_\mu<(g^L)^{-1}(\beta_\mu)<\tilde x<\beta_\mu<\beta_\nu$. Note that $B$ is non-decreasing, $(x_0,y_0)=(x_0,B(x_0))$ (so that $(x_0,y_0)$ lies on the boundary of $\sA^<$) and $x_0<x_1<y_0$ (so that $(x_1,y_0)$ lies in the interior of $\sA^<$). On the other hand, $y^*$ may not be monotone, $\lim_{x\to\tilde x}y^*(x)=\tilde x$ and we could have that $y^*\equiv\beta_\mu$ on (some parts of) $((g^L)^{-1}(\beta_\mu),\tilde x)$. In the bottom part of the figure we identify the function $\Psi_{x,y}$ (see \eqref{eq:PsiXY}) for $(x,y)=(x_0,y_0)$ and $(x,y)=(x_1,y_0)$. In particular, $\Psi_{x_0,y_0}$ (see the dash-dotted curve) and $\Psi_{x_1,y_0}$ (see the solid curve) are both piece-wise linear and concave on $(w(x_0,y_0),z(x_0,y_0))$ and $(w(x_1,y_0),z(x_1,y_0))$, respectively. Observe that $\Psi_{x_0,y_0}$ is formed by three distinct lines on $(w(x_0,y_0),z(x_0,y_0))$, while $\Psi_{x_1,y_0}$ corresponds to a single line on $(x_1,z(x_1,y_0))$ (since the points $(x_1,a(x_1))$, $(y_0,a(y_0))$ and $(z(x_1,y_0),b(z(x_1,y_0)))$ are collinear).}
\label{fig:case3}
\end{figure}

\begin{lem}
    \label{lem:PropertiesofsA}
%Suppose Assumption~\ref{ass:simple} holds.
Suppose that $\mu\leq_{cx}\nu$ satisfy the Dispersion Assumption (Definition \ref{def:dispersion}).

\begin{enumerate}
    \item[(i)] There exists $w,z:\sA^<\to(\alpha_\nu,\beta_\nu)$ such that, for all $(x,y)\in\sA^<$, $w(x,y)\leq x\wedge e^L<y\vee e^R\leq z(x,y)$ and $(w(x,y), x,y, z(x,y))\in \Gamma_4$.
    
    \item[(ii)] Fix $(x,y)\in\sA$. Then $w(\cdot,y)$ (resp., $w(x,\cdot)$) is continuous and strictly decreasing (resp., increasing) on $\sA^<_y$ (resp., $\sA^<_x$). Similarly,  $z(\cdot,y)$ (resp., $z(x,\cdot)$) is continuous and strictly increasing (resp., decreasing) on $\sA^<_y$ (resp., $\sA^<_x$).  Furthermore, $\lim_{l\uparrow y}w(l,y)=\lim_{l\downarrow x}w(x,l)=\alpha_\nu$ and $\lim_{l\uparrow y}z(l,y)=\lim_{l\downarrow x}z(x,l)=\beta_\nu$.
\end{enumerate}
\end{lem}
\begin{cor}\label{cor:new:wzContinuous}
    The functions $w,z:\sA^<\to(\alpha_\nu,\beta_\nu)$ (as in Lemma \ref{lem:PropertiesofsA}) are jointly continuous. %{ it would be good to flag up exactly where we use this.}
\end{cor}
\begin{proof}
It is a well known fact that the separate continuity and monotonicity imply joint continuity; see Young \cite{Young}.
\end{proof}

\begin{remark}\label{rem:zwAtB}
    The proof of Lemma \ref{lem:PropertiesofsA} reveals that, in the case $\beta_\mu<\beta_\nu$, for $(g^L)^{-1}(\beta_\mu)\leq x<\beta_\mu=B(x)=y$ we have that $z(x,y)=z(x,B(x))=g^L(x)$ and $w(x,y)=w(x,B(x))=f^L(x)$.
\end{remark}

\textit{Step 2: finding the starting point $(x_1,y_0)\in\sA^<$.} For each $(x,y) \in { \sA^<}$ define $\Psi_{x,y}$ by setting 
\begin{equation}\label{eq:PsiXY} \Psi_{x,y}(u) = \begin{cases}
    b(u) \hspace{20mm} & x \leq w(x,y) \\
    T_{b,a}^{w(x,y),x}(u) & w(x,y) < u < x \\
    T_{a,a}^{x,y}(u) & x \leq  u \leq y \\
    T_{a,b}^{y,z(x,y)}(u) & y < u < z(x,y) \\
    b(u) & z(x,y) \leq  u.
\end{cases}
\end{equation}
Note that, if $w(x,y)=x$ (resp., $z(x,y)=y$) then the second (resp.,  fourth) line in the definition of $\Psi_{x,y}$ is redundant. This happens when $x\leq e^L$ (resp., $y\geq e^R$) and $(x,y)\in\sA$ lies on the boundary, i.e., $y=B(x)$ (resp., $x=B^{-1}(y)$).

For $(x,y)\in\sA^<$, let $\tilde S_{b,a}^{w(x,y),x}, \tilde S_{a,a}^{x,y}$ and $\tilde S_{a,b}^{y,z(x,y)}$ be the slope of $\Psi_{x,y}$ on $(w(x,y),x), (x,y)$ and $(y,z(x,y))$, respectively. If $w(x,y)=x$ (resp.,  $z(x,y)=y$) then we set $\tilde S_{b,a}^{w(x,y),x}=\infty$ (resp., $\tilde S_{a,b}^{y,z(x,y)}=-\infty$).

Recall that in Case (C3) we are in one of the following situations: 
\begin{enumerate}
\item[$(a)$] $\hat{z}^L \geq e^R$ and/or $\hat{z}^R \leq e^L$; moreover  $\hat z^R<\hat z^L$;  
%\item[$(a)'$]  $\hat{z}^R \leq e^L$ and $\hat z^R<\hat z^L$; 
\item[$(b)$] 
$e^L < \hat{z}^R < \hat{z}^L < e^R$, so that there exists $z^* \in [\hat z^R,\hat z^L]\subset(e^L,e^R)$ such that $C^L(z^*)=C^R(z^*) \geq a(z^*)$, but $\Psi$ defined by \eqref{eq:Bdef} is not convex. 
\end{enumerate}
where the strict inequality in Case $(b)$ follows from Remark~\ref{rem:noteq}.

We subdivide Case $(a)$ into the disjoint union $D_1 \cup D_2 \cup D_3 \cup D_4 \cup D_5$ where $D_1=(\hat{z}^L \geq e^R > e^L  \geq \hat{z}^R)$, $D_2= (\hat{z}^L \geq e^R > \hat{z}^R > e^L)$, $D_3 = (\hat{z}^L > e^R = \hat{z}^R)$, $D_4 = (e^R > \hat{z}^L > e^L  \geq \hat{z}^R)$ and $D_5 = (\hat{z}^L = e^L > \hat{z}^R)$.

Note that on $D_2$ we must have that either $C^L > C^R$ on $(\hat{z}^R,e^R)$ or there exists a point $z^* \in (\hat z^R,e^R)\subset(e^L,e^R)$ such that $C^L(z^*)=C^R(z^*)\geq a(z^*)$ (indeed, if $C^L(z)\leq C^R(z)$ for some $z\in(\hat{z}^R,e^R)$, then the existence of $z^*$ follows from the continuity of $C^L,C^R$ and the fact that $C^L(\hat z^R)>a(\hat z^R)=C^R(\hat z^R)$). In the latter case, without loss of generality, we further assume that $\Psi$ defined in \eqref{eq:Bdef} is not convex\footnote{As in the discussion after the initial introduction of Cases (C1), (C2) and (C3), it must be the case that $\Psi$ is not convex, but we do not show this.} (if $\Psi$ was convex then it would be like being in Case (C1) and we already know how to find an optimal model and optimal hedge in that setting). Similarly, on $D_4$ we must have that either $C^L < C^R$ on $(e^L,\hat{z}^L)$ or there exists a point $z^* \in (e^L, \hat z^L)\subset(e^L,e^R)$ such that $C^L(z^*)=C^R(z^*)\geq a(z^*)$ and $\Psi$ (as in \eqref{eq:Bdef}) is not convex. Moreover, we further divide $D_1$ according as $C^L>C^R$ on $(e^L,e^R)$, $C^L<C^R$ on $(e^L,e^R)$
or there exists a point $z^* \in (e^L,e^R)$ such that $C^L(z^*)=C^R(z^*)\geq a(z^*)$. In the last case we again assume that $\Psi$ defined in \eqref{eq:Bdef} is not convex.

Recombining various cases we have that Case (C3) can be divided into
\begin{enumerate}
\item[$(i)$] $\hat{z}^L \geq e^R > \hat{z}^R$ %and $\hat z^R<\hat z^L$; 
moreover, $C^L(z) > C^R(z)$ on $(\hat{z}^R \vee e^L,e^R)$;
\item[$(ii)$] $\hat{z}^L > e^R = \hat{z}^R$; %and $\hat z^R<\hat z^L$; %moreover, $C^L(z) > C^R(z)$ on $(\hat{z}^R \vee e^L,e^R)$;
\item[$(iii)$] $\hat{z}^R \leq e^L <\hat z^L$; moreover, $C^L(z) < C^R(z)$ on $(e^L,\hat{z}^L \wedge e^R)$; 
\item[$(iv)$] $\hat{z}^R < e^L = \hat z^L$; %moreover, $C^L(z) < C^R(z)$ on $(e^L,\hat{z}^L \wedge e^R)$;
%\item[$(a)'$]  $\hat{z}^R \leq e^L$ and $\hat z^R<\hat z^L$; 
\item[$(v)$] 
$\hat{z}^R < \hat{z}^L$ and there exists $z^* \in (e^L,e^R)$ such that $C^L(z^*)=C^R(z^*) \geq a(z^*)$, but $\Psi$ defined by \eqref{eq:Bdef} is not convex. 
\end{enumerate}

Note that if $e^R = \hat{z}^R$ then the set $(\hat{z}^R \vee e^L,e^R)$ is empty, so we have omitted this condition from $(ii)$; a similar consideration applies in $(iv)$.

%We divide Case $(a)$ into two subcases: Case $(a)(i)$: $\hat{z}^L > e^R = \hat{z}^R$ and Case $(a)(ii)$: $\hat{z}^L \geq e^R > \hat{z}^R$. Similarly we divide Case $(a)'$ into two subcases: Case $(a)'(i)$: $\hat{z}^R < e^L = \hat{z}^R$ and Case $(a)'(ii)$ $\hat{z}^R \leq e^L < \hat{z}^L$. 

Suppose we are in Case $(ii)$.  Then $a(e^R) = b(e^R)$. We take $z_0=e^R$ and define $(x_0 = (g^L)^{-1}(z_0),y_0=e^R)$.
Then $g^R(y_0) = y_0 = e^R=z_0$, and the fourth line in the definition of $\Psi_{x_0,y_0}$ is redundant. %and $-\infty = \tilde{S}^{y_0,g^R(y_0)}_{a,b} < \tilde{S}^{x_0,y_0}_{a,a} < \tilde{S}^{f^L(x_0),x_0}_{a,a} < \infty$. 
Similarly, in Case $(iv)$ we take $z_0=e^L$ and $(x_0 = e^L, y_0 =(f^R)^{-1}(z_0))$.
Then $f^L(x_0) = x_0 = e^L=z_0$ and the second line in the definition of $\Psi_{x_0,y_0}$ is redundant.

Suppose now that we are in Case $(v)$. In this case take $x_0 = (g^L)^{-1}(z^*)$ and $y_0 = (f^R)^{-1}(z^*)$.

It remains to define $(x_0,y_0)$ in  Cases $(i)$ and $(iii)$. We consider Case $(i)$; the construction in Case $(iii)$ follows a symmetric argument.
Since $\hat{z}^R<e^R$ we must have $a(e^R) > b(e^R)$. For $z \in (\hat{z}^R,e^R)$ we can define $(\check{x}(z) = (g^L)^{-1}(z),\check{y}(z) = (f^R)^{-1}(z))$. By taking $z$ close enough to $e^R$ (in Case $(iii)$ we take $z\in(e^L,\hat z^L)$ close enough to $e^L$) we can ensure that $-\infty < \tilde{S}^{\check{y}(z),z}_{a,b} < \tilde{S}^{\check{x}(z),\check{y}(z)}_{a,a}< \tilde{S}_{b,a}^{f^L(\check{x}(z)),\check{x}(z)}<\infty$. Let $z_0\in(e^L,e^R)$ be a value with this property. Finally, let $(x_0 = \check{x}(z_0),y_0 = \check{y}(z_0))$.

Note that (by construction) in each case we have that $(x_0,y_0)\in\sA^<$ is such that both $x_0,y_0$ are finite, $y_0=B(x_0)\leq\beta_\mu$ and $x_0\leq (g^L)^{-1}(\beta_\mu)$.

Our aim is to find $x_1$ with  $x_0 \leq x_1 < y_0$  (so that  $(x_1,y_0) \in \sA^<$) such that 
either $\Psi_{x_1,y_0}=T_{a,a}^{x_1,y_0}=T_{a,b}^{x_1,z(x_1,y_0)}$ on $(x_1,z(x_1,y_0))$ and hence $\tilde S_{b,a}^{w(x_1,y_0),x_1} \geq \tilde S_{a,a}^{x_1,y_0}=\tilde S_{a,b}^{y_0,z(x_1,y_0)}$ (at least if $w(x_1,y_0) < x_1 < y_0 < z(x_1,y_0)$), or $\Psi_{x_1,y_0}=T_{b,a}^{w(x_1,y_0),x_1}=T_{b,a}^{w(x_1,y_0),y_0}$ on $(w(x_1,y_0),y_0)$ and $\tilde S_{b,a}^{w(x_1,y_0),x_1}=\tilde S_{a,a}^{x_1,y_0} \geq \tilde S_{a,b}^{y_0,z(x_1,y_0)}$ (again, at least if $w(x_1,y_0) < x_1 < y_0 < z(x_1,y_0)$).

Define 
\begin{eqnarray} 
\label{eq:sLaab}
{\sL}^{a,a,b}_{3} & = & \{(x,y,z) : \alpha_\mu < x < y < \beta_\mu,  y \leq z < \beta_\nu, \\
&& \hspace{30mm} \mbox{$(x,a(x))$, $(y,a(y))$ and $(z,b(z))$ are collinear} \}  \nonumber \\
&& \hspace{4mm} \cup \{(x,\beta_\mu,z) : \alpha_\mu < x < \beta_\mu < z < \beta_\nu, \nonumber \\
&& \hspace{30mm} a(\beta_\mu-)  \leq  \frac{z-\beta_\mu}{z-x} a(x) + \frac{\beta_\mu-x}{z-x} b(z) \nonumber \}.
\end{eqnarray}
Then, for $(x,y,z) \in {\sL}^{a,a,b}_{3}$ with $y<\beta_\mu$, we have that $T_{a,b}^{x,z}=T_{a,a}^{x,y}=T_{a,b}^{y,z}$ is the straight line passing through $(x,a(x))$, $(y,a(y))$ and $(z,b(z))$. When $y=\beta_\mu<\infty$ we have that $T_{a,b}^{x,z}(\beta_\mu)$ lies in the interval $(a(\beta_\mu-),\infty)$. 
Similarly, define 
\begin{eqnarray} 
\label{eq:sLbaa}
{\sL}^{b,a,a}_{3} & = & \{(w,x,y) : \alpha_\mu < x < y < \beta_\mu, \alpha_\mu < w \leq x\\
&& \hspace{30mm} \mbox{$(w,b(w))$, $(x,a(x))$ and $(y,a(y))$  are collinear} \} \nonumber \\
&& \hspace{4mm} \cup \{(w,\alpha_\mu,y) : \alpha_\nu < w < \alpha_\mu < y < \beta_\mu, \nonumber \\
&& \hspace{30mm} a(\alpha_\mu+)  \leq  \frac{\alpha_\mu-w}{y-w} a(y) + \frac{y-\alpha_\mu}{y-w} b(w) \nonumber \} 
\end{eqnarray}

Consider Case $(ii)$ whence $(x_0,y_0) = ((g^L)^{-1}(e^R),e^R)$ and $a(e^R)=b(e^R)$. In this case we set $x_1=x_0$. Then $w(x_1,y_0) = f^L(x_1) < x_1 < y_0 = z(x_1,y_0)$. It follows 
from the twin facts that $a(y_0)=b(y_0)$ and $y_0=z(x_1,y_0)$ that (tautologically)
$T_{a,a}^{x_1,y_0} = T_{a,b}^{x_1,z(x_1,y_0)}$.
Moreover, from the fact that $\hat{z}^L > e^R$ we have that $C^L(e^R)> a(e^R)$ and hence $\tilde S_{b,a}^{w(x_1,y_0),x_1} > \tilde S_{a,a}^{x_1,y_0=e^R}$. We find $(x_1,y_0,z(x_1,y_0))= (x_1,e^R,e^R)$ is such that both $(x_1,a(x_1))$, $(y_0,a(y_0))$ and $(z(x_1,y_0),b(z(x_1,y_0))$ lie on the same straight line (because the last two points are the same point).

Case $(iv)$ is similar. This time we adjust $y$ and leave $x_0=e^L$ unchanged. We find $y_1\in (x_0,y_0)$ such that both $(w(x_0,y_1), b(w(x_0,y_1))$ and $(x_0,a(x_0))$, $(y_1,a(y_1))$ lie on the same straight line (because the last two points are the same point).

%In Case $(iv)$ we take $x_1=x_0$ (and recall that in this case $y_0=\hat x^R=(f^R)^{-1}(\hat z^R)$). By construction we have that $x_1<y_0=B(x_1)$, $(x_1,y_0,z(x_1,y_0))\in\sL^{a,a,b}_3$ and $(w(x_1,y_0),x_1,y_0,z(x_1,y_0))\in\Gamma_4$. (Note that, since $x_1\leq e^L$, we have that $w(x_1,y_0)=f^L(x_1)=x_1$ and the second line in the definition of $\Psi_{x_1,y_0}$ is redundant.) Furthermore, if $a(x_1)=b(x_1)$, then $\Psi_{x_1,y_0}$ is convex, and it follows that $(w(x_1,y_0),x_1,y_0,z(x_1,y_0))\in\sL_4\cap\Gamma_4$ (which completes the proof in Case (C3)). Hence in Case (C3)$(i)$ we assume that $a(x_1)>b(x_1)$ (so that $\Psi_{x_1,y_0}$ is not convex; and in fact not even continuous).

Now we consider Cases $(v)$, $(i)$ and $(iii)$.
In Case $(v)$, recall that  $x_0 = (g^L)^{-1}(z^*)$ and $y_0 = (f^R)^{-1}(z^*)$. In Cases $(i)$ and $(iii)$ we have that $x_0 = (g^L)^{-1}(z_0)$ and $y_0= (f^R)^{-1}(z_0)$. In all cases
$e^L < x_0 < y_0 < e^R$. For $x\in[x_0,y_0)$, let $\tilde S^L_x:=\tilde S_{b,a}^{w(x,y_0),x},\tilde S^a_x:=\tilde S_{a,a}^{x,y_0},\tilde S^R_x:=\tilde S_{a,b}^{y_0,z(x,y_0)}$ and note that $\Psi_{x_0,y_0}$ (as in \eqref{eq:PsiXY}) is continuous and $\infty>\tilde S^{L}_{x_0}> \tilde S^{a}_{x_0} > \tilde S^{R}_{x_0}>-\infty$, since we are in Case $(i)$, $(iii)$ or $(v)$. Let $w(x,y_0)$ and $z(x,y_0)$ be as defined in Lemma~\ref{lem:PropertiesofsA} Now, keeping $y_0$ fixed, consider increasing $x$ from $x_0$. By Lemma~\ref{lem:PropertiesofsA}$(ii)$, as $x$ increases we have that
$w(x,y_0)$ decreases and $z(x,y_0)$ increases. Then, by the convexity of $a$ and $b$, the slopes $\tilde{S}^a_x$ and $\tilde{S}^R_x$ increase, while the slope $\tilde{S}^L_x$ may (or may not) be monotone. On the other hand, all three slopes $x\mapsto \tilde{S}^a_x,\tilde{S}^R_x,\tilde{S}^L_x$ are continuous on $(x_0,y_0)$. We increase $x$ until either $\tilde{S}^L_{x} = \tilde{S}^a_{x}$ or $\tilde{S}^a_{x}=\tilde{S}^R_{x}$. Let $x_1 = \inf \{ x_0<x<y_0 : \tilde{S}^L_{x} \leq \tilde{S}^a_{x} \mbox{ or } \tilde{S}^a_{x} \leq \tilde{S}^R_{x} \}$ and set $\inf\emptyset=y_0$. We argue that we must have $x_0<x_1 < y_0$. Note that, as $x \uparrow y_0$ then $\mu ((\alpha_\mu,x) \cup(y_0,\beta_\mu)) \uparrow 1$ and so we must have $w(x,y_0) \downarrow \alpha_\nu$ and $z(x,y_0) \uparrow \beta_\nu$.

Since $\infty>\tilde S^{L}_{x_0}> \tilde S^{a}_{x_0} > \tilde S^{R}_{x_0}>-\infty$, by the continuity we have that there exists $\epsilon>0$ such that for all $x\in[x_0,x_0+\epsilon)$, $\infty>\tilde S^{L}_{x}> \tilde S^{a}_{x} > \tilde S^{R}_{x}>-\infty$. It follows that $x_0<x_0+\epsilon\leq {x_1}$.

Now suppose $x_1=y_0$. Then $\tilde S^{L}_{x}> \tilde S^{a}_{x} > \tilde S^{R}_{x}$ for all $x\in[x_0,y_0)$. By construction, we then have that
$$
a(u)\geq  T_{a,a}^{x,y_0}(u)>T_{b,a}^{w(x,y_0),x}(u)\geq b(u),\quad u\in(w(x,y_0),x)
$$
and
$$
a(u)\geq  T_{a,a}^{x,y_0}(u)>T_{a,b}^{y_0,z(x,y_0)}(u)\geq b(u),\quad u\in(y_0,z(x,y_0)).
$$
It follows (using the continuity w.r.t. $x\in(x_0,y_0)$) that
$$
a(u)\geq \lim_{x\uparrow y_0}T_{a,a}^{x,y_0}(u)=a(y_0)+a'_-(y_0)(u-y_0)\geq b(u),\quad u\in(\alpha_\nu,\beta_\nu),
$$
where $a'_-$ denotes the left derivative of a convex function $a$. But this contradicts the Standing Assumption \ref{sass:ageqLb}, and thus we conclude that $x_1<y_0$.

It is possible that $\tilde{S}^L_{x_1} = \tilde{S}^a_{x_1} =  \tilde{S}^R_{x_1}$. In that (rather special) case we have found \[ (w(x_1,y_0),x_1,y_0,z(x_1,y_0)) \in \sL_4 \cap \Gamma_4\] and the proof of Theorem~\ref{thm2} is complete. More generally, we are in either Case $(I)$:  $\tilde{S}^L_{x_1} > \tilde{S}^a_{x_1} = \tilde{S}^R_{x_1}$ so that $(x_1,y_0,z(x_1,y_0))\in\sL^{a,a,b}_3$; or in Case $(II)$: $\tilde{S}^L_{x_1} = \tilde{S}^a_{x_1} > \tilde{S}^R_{x_1}$ so that $(w(x_1,y_0),x_1,y_0)\in\sL_3^{b,a,a}$. In each case $(w(x_1,y_0),x_1,y_0,z(x_1,y_0)) \in \Gamma_4$ but only one of the points 
$(w=w(x_1,y_0),b(w))$ or $(z=z(x_1,y_0),b(z))$ lies on the straight line $T_{a,a}^{x_1,y_0}$ passing through $(x_1,a(x_1))$ and $(y_0,a(y_0))$.

In all the cases we found $(x_1,y_1) \in \sA^<$ such that either $(w(x_1,y_1),b(w(x_1,y_1)))$ lies on 
$T_{a,a}^{x_1,y_1}$ (this corresponds to $(x_1,y_1)=(x_0,y_1)$ in Case $(iv)$ or $(x_1,y_1)=(x_1,y_0)$ in Case $(II)$) or $(z(x_1,y_1),b(z(x_1,y_1)))$ lies on 
$T_{a,a}^{x_1,y_1}$ (this corresponds to $(x_1,y_1)=(x_1,y_0)$ in Case $(ii)$ or $(x_1,y_1)=(x_1,y_0)$ in Case $(I)$); if both, then we have found a point $(w(x_1,y_1),x_1,y_1,z(x_1,y_1)) \in \sL_4 \cap \Gamma_4$ and we are done.

Without loss of generality we assume that $(z(x_1,y_1),b(z(x_1,y_1)))$ lies on 
$T_{a,a}^{x_1,y_1}$ and $T_{a,a}^{x_1,y_1}(w(x_1,y_1)) > b(w(x_1,y_1))$. In particular, we are either in Case $(ii)$ or Case $(I)$, so that (in both cases) $(x_1,y_1)=(x_1,y_0)$. This is the starting point of Step 3.

\textit{Step 3: the map $x\mapsto y^*(x)$.}
Our aim is to increase $x$ from $x_1$ and to adjust $y=y_x$ (starting from $y_0$), so that $(x,y,z(x,y)) \in \sL^{a,a,b}_{3}$ and $(w(x,y),x,y,z(x,y)) \in \Gamma_4$, and then to keep increasing $x$ (and simultaneously adjusting $y=y_x$) until $(w(x,y), x , y, z(x,y)) \in \sL_4 \cap \Gamma_4$. 

Let $\sA_1=\{(x,y,z):\alpha_\mu< x< y < \beta_\mu,~ y<z < \beta_\nu \}$.
%For $\alpha_\mu < x < y < \beta_\mu$ and $y \leq z \leq \beta_\nu$ d
Define 
$\hat{\Lambda}: \sA_1 \to \R %[-\infty,\infty)
$ by
\[ \hat{\Lambda}(x,y,z) = a(y) + \frac{z-y}{y-x}
(a(y)-a(x)) - b(z). \] 
Extend the definition to $y=x<z$ by $\hat\Lambda(x,x+,z) = a(x) + (z-x) a'(x+) - b(z)$ and $\hat\Lambda(x-,x,z) = a(x) + (z-x) a'(x-) - b(z)$, %. Extend the definition of $\hat{\Lambda}$ 
and to $y = \beta_\mu<\infty$ by taking $a$ to have the value $a(\beta_\mu-)$ there: for $x<\beta_\mu < z$
\begin{equation}
    \label{eq:Lambdabeta}
\hat{\Lambda}(x,\beta_\mu,z) = a(\beta_\mu-) + \frac{z-\beta_\mu}{\beta_\mu-x}
(a(\beta_\mu-)-a(x)) - b(z).
\end{equation}

Let $\Lambda: \sA^< \to \R$ be given by $\Lambda(x,y) = \hat{\Lambda}(x,y,z(x,y))$.
Let $\Lambda(x,{ x+}) = \lim_{y \downarrow x} \hat{\Lambda}(x,y,z(x,y)) = a(x) + \lim_{z \uparrow \beta_\nu} \{ (z-x) a'(x+) - b(z) \}$.
Note that $\Lambda$ may not be continuous at the diagonal if $a'$ is not continuous, but $\Lambda$ is continuous on $\sA^<$.

Recall the definition of $B$ (see \eqref{eq:BoundaryB}) and that $\bar{x}= (g^{L})^{-1}(e^R)$.

Moreover, recall the final part of Standing Assumption~\ref{sass:payoffs} that there are no intervals 
$I \subseteq (\alpha_\mu,\beta_\mu)$ on which $a=b$ 
and $a$ is linear. Proposition~\ref{prop:Lambday} is the only place where this assumption is used.

%From now on consider this as a standing assumption.

\begin{prop}
\label{prop:Lambday}
%{\cblue Fix $x \in [x_1,\bar{x}]$.
%Suppose $\Lambda(x,y')=0$ for $y' \in (x,B(x)]$.} 
Suppose $(x,y')\in\sA^<$ with $x_1\leq x$ and $\Lambda(x,y')\leq 0$. Then $\Lambda(x,\cdot)$ is non-decreasing on $(x,y')$.

Furthermore, if $\Lambda(x,y')=0$, then  $\Lambda(x,y)>0$  for $y \in (y',B(x){]}$ and $\Lambda(x,y)<0$ for $y \in (x,y')$.

%Suppose $\Lambda(x,x+) < 0$.
%Then there exists a unique $y \in (x,B(x))$ such that $\Lambda(x,y)=0$.
\end{prop}

\begin{prop}
\label{prop:LambdaxB}
%$\Lambda(x,B(x)) \geq 0$ for  $x\in[x_0,\bar{x})\supseteq[x_1,\bar x)$ (with strict inequality on $(x_0,\bar x)$), and hence 
$\Lambda(x,B(x)) \geq 0$ for $x \in [{ x_0}, (g^L)^{-1}(\beta_\mu)]$. 
\end{prop}

Let $\tilde{x}= \inf \{ x \in (x_1, \beta_\mu) : \Lambda(x,x+) \geq 0 \}$ where $\inf \emptyset = \beta_\mu$. 

We now argue that $\tilde x>x_1$. It is enough to show that $\Lambda(x,x+)<0$ {on some interval immediately} to the right of $x_1$. Since $\Lambda (x_1,y_0)=0$ and $x_1<y_0$, by the final part of Proposition \ref{prop:Lambday} we have that $\Lambda (x_1,y)<0$ for each $y\in (x_1,y_0)$. Fix $y \in (x_1,y_0)$. Then, by the continuity of $\Lambda(\cdot,y)$, for small enough $\epsilon =\epsilon(x_1,y) \in (0,y-x_1)$ we must have that $\Lambda (x,y)<0$ for all $x\in[x_1,x_1+\epsilon(x_1,y))$. Then using the monotonicity of $\Lambda(x_1,\cdot)$ on $(x_1,y)$ (see the first part of Proposition~\ref{prop:Lambday}),
we have that $\Lambda(x,x+) < 0$. {Hence $\tilde{x} \geq x_1+\epsilon > x_1$.}

%and the intermediate value theorem we then have that, for each $x\in[x_1,x_1+\epsilon)$, there exists $y'_x>x$ with $\Lambda(x,y'_x)=$. Applying Proposition \ref{prop:Lambday} again, we conclude that $\Lambda(x,x+)<0$.)}

For $x \in [x_1, \tilde x)$ let $y^*=y^*(x)$ denote the unique point in $(x,B(x)]$ such that $\Lambda(x,y^*(x))=0$. We will show that $y^*$ exists on a suitable subset of $[x_1, \tilde x)$. Then we will extend the definition of $y^*$ to the whole of $[x_1, \tilde x)$ and show that this extended version is continuous as a map $x\mapsto y^*(x)$.

Write $[x_1,\tilde x)$ as a disjoint union $\sA^\Lambda_{>}\cup\sA^\Lambda_{=}\cup\sA^\Lambda_{<}$, where $\sA^\Lambda_{>}:=\{x\in[x_1,\tilde x):\Lambda(x,B(x))>0\}$, $\sA^\Lambda_{=}:=\{x\in[x_1,\tilde x):\Lambda(x,B(x))=0\}$ and $\sA^\Lambda_{<}:=\{x\in[x_1,\tilde x):\Lambda(x,B(x))<0\}$. Note that, due to Proposition \ref{prop:LambdaxB}, $\sA^\Lambda_{<}\subseteq ((g^L)^{-1}(\beta_\mu)\wedge\tilde x,\tilde x)$. Furthermore, $x_1\in \sA^\Lambda_{>}\cup\sA^\Lambda_{=}$.

\begin{cor}
\label{cor:existenceContinuity}
$y^*(x) \in (x,B(x)]$ exists for all $x \in [x_1, \tilde x)\setminus\sA^\Lambda_{<}$. Furthermore, by setting $y^*(x)=B(x)=\beta_\mu$ for $x\in\sA^\Lambda_<$,  we have that $x\mapsto y^*(x)$ is continuous on $[x_1, \tilde{x})$.
\end{cor}

\begin{lem}\label{lem:tildeX}
$\lim_{x\uparrow\tilde x}y^*(x)=\tilde x$.
\end{lem}

\textit{Step 4: finding $(w,x,y,z)\in\sL_4\cap\Gamma_4$.}
Define $\Upsilon:[x_1,\tilde{x} ) \to \R$ by
\[ \Upsilon(x) = a(x) + (w(x,y^*(x)) -x) \frac{b(z(x,y^*(x)) - a(x)}{z(x,y^*(x))-x} - b(w(x,y^*(x))) \]
Note that $\Upsilon$ has been defined in such a way that it does not depend on the value of $a(y^*(x))$. This is important in the case where $y^*(x)= \beta_\mu$.

\begin{lem}\label{lem:Upsilon0}
If $\Upsilon(x)=0$ then $(w(x,y^*(x)),x,y^*(x),z(x,y^*(x))) \in \sL_4 \cap \Gamma_4$.
\end{lem}

\begin{proof}
Clearly, $(w(x,y^*(x)),x,y^*(x),z(x,y^*(x))) \in \Gamma_4$ by definition.

Again, by definition, if $y^*(x)<\beta_\mu$ then $(x,y^*(x),z(x,y^*(x))) \in \sL^{a,a,b}_3$ where $\sL^{a,a,b}_3$ is defined in \eqref{eq:sLaab} so that 
\begin{equation}
    \label{eq:a=T}
a(y^*(x)) = T_{a,b}^{x,z(x,y^*(x))}(y^*(x)).
\end{equation}
Otherwise, if $y^*(x)=\beta_\mu$ then $\hat{\Lambda}(x,\beta_\mu,z(x,\beta_\mu)) \leq 0$ and then, by \eqref{eq:Lambdabeta},
\[ b(z(x,\beta_\mu)) \geq  
 a(\beta_\mu-) + \frac{z(x,\beta_\mu)-\beta_\mu}{\beta_\mu-x}
(a(\beta_\mu-)-a(x)), 
\]
or equivalently 
\[ a(\beta_\mu-) \leq \frac{z(x,\beta_\mu)-\beta_\mu}{z(x,\beta_\mu)-x} a(x) +   \frac{\beta_\mu-x}{z(x,\beta_\mu)-x} b(z(x,\beta_\mu)) = T_{a,b}^{x,z(x,\beta_\mu)}(\beta_\mu) \]
so that $(x, \beta_\mu,z(x,\beta_\mu)) \in \sL^{a,a,b}_3$.

Now suppose $\Upsilon(x)=0$. 
Then 
\[ a(x) = \frac{z(x,y^*(x))-x}{z(x,y^*(x))-w(x,y^*(x))} b(w(x,y^*(x))) + \frac{x - w(x,y^*(x))}{z(x,y^*(x))-w(x,y^*(x))} b(z(x,y^*(x))), \]
so that 
\begin{equation}
\label{eq:axL4}    
a(x)=T^{w(x,y^*(x)),z(x,y^*(x))}_{b,b}(x).
\end{equation}
Note that, for $w<x<z$, if $a(x)=T^{w,z}_{b,b}(x)$ then 
$T^{w,z}_{b,b} = T^{x,z}_{a,b} = T^{w,x}_{b,a}$. Then it follows from \eqref{eq:a=T} and \eqref{eq:axL4} that
\begin{equation}
\label{eq:ayL4}  a(y^*(x)) = T_{a,b}^{x,z(x,y^*(x))}(y^*(x)) = T_{b,b}^{w(x,y^*(x)),z(x,y^*(x))}(y^*(x)).
\end{equation}
Then, since \eqref{eq:axL4} and \eqref{eq:ayL4} hold, we have that
$(w(x,y^*(x)),x,y^*(x),z(x,y^*(x)) \in \sL_4$.

If $y^*(x) = \beta_\mu$ a similar argument gives that $(w(x,\beta_\mu),x,\beta_\mu,z(x,\beta_\mu)) \in \sL_4$.
\end{proof}

Since we are in Case (C3) we have that $\Upsilon(x_1)>0$.

%We want to argue that $\Upsilon$ is continuous and that there exists $x < \tilde{x} \wedge \beta_\mu$ such that $\Upsilon(x)<0$.

%Since $y^*$ is continuous (for $y < \tilde{x}$) and $z$ and $w$ are continuous, continuity of $\Upsilon$ is straightforward.

%{ We want to show that there exists $x \in (x_1,\tilde{x})$ such that $\Upsilon(x)=0$. Then, by Lemma \ref{lem:Upsilon0}, $(w(x,y^*(x)),x,y^*(x),z(x,y^*(x))) \in \sL_4 \cap \Gamma_4$. The proof of Theorem~\ref{thm2} is then complete.}

\begin{prop}\label{prop:solution}
    Suppose we are in Case (C3) and $\Upsilon(x_1)>0$. Then there exists $x^* \in (x_1,\tilde{x})$ such that $\Upsilon(x^*)=0$.
\end{prop}

\begin{proof}
Since $y^*$ is continuous on $[x_1,\tilde x)$ (recall Corollary \ref{cor:existenceContinuity}), and $z$ and $w$ are jointly continuous (see Corollary~\ref{cor:new:wzContinuous}), it follows that $\Upsilon$ is continuous
%Then, since $\Upsilon$ is continuous 
on $(x_1, \tilde{x} )$. Then since $\Upsilon(x_1)>0$ the result will follow if we can find
$x < \tilde{x}$ such that $\Upsilon(x)<0$.

%Suppose $\tilde{x} \geq \beta_\mu$. Then $y^*(x)>x$ on $[x_1,\beta_\mu)$. 

Recall the definitions of $\hat{w}$, $\hat{u}$ and $\hat{z}$ in Lemma~\ref{lem:SA3}. Due to convexity of $a$ and $b$, without loss of generality we can (and do) assume that $\hat u\in(\alpha_\mu,\beta_\nu)$ and $\hat w,\hat z\in(\alpha_\nu,\beta_\nu)$.
%{\cblue Suppose $x > x_1 \vee \hat{u}$.} 
Define $\hat{u}_0 = \inf\{ u : a(u) \leq T_{b,b}^{\hat{w},\hat{z}}(u) \}$; note that $\hat u_0\in[\alpha_\mu,\hat u)$ is finite and $\hat w\leq \hat u_0$.

We first show that if $x>\hat{u}_0$,
$z(x,y^*(x)) > \hat{z}$ and $w(x,y^*(x))< \hat{w}$ then $\Upsilon(x)<0$.

Let $T^x= T_{a,b}^{x, z(x,y^*(x))}$ be the line joining $(x,a(x))$ and $(z(x,y^*(x)),b(z(x,y^*(x))))$, and note that, by construction, $T^x=T_{a,a}^{x,y^*(x)}=T_{a,b}^{y^*(x),z(x,y^*(x))}$.
Suppose that $z(x,y^*(x)) > \hat{z}$. Then $T^x \leq a$ on $(-\infty, x)$ and moreover, $T^x \leq T^{\hat{w},\hat{z}}_{b,b}$ on $( -\infty, \hat{u}_0)$.  On the other hand, $T_{b,b}^{\hat w,\hat z}<b$ on $(-\infty,\hat w)$. Since $w(x,y^*(x))< \hat{w}\leq\hat u_0$, it follows that
\[ T^x(w(x,y^*(x))) = a(x) + \frac{(w(x,y^*(x))-x)}{(z(x,y^*(x))-x)}(b(z(x,y^*(x)))-a(x)) \leq T^{\hat{w},\hat{z}}_{b,b}(w(x,y^*(x)))< b(w(x,y^*(x))), \]
and therefore $\Upsilon(x)<0$.

{Case A:} $\tilde{x} =\beta_\mu$. 
Then $y^*(x)>x$ on $[x_1,\beta_\mu)$.
Then as $x \uparrow \beta_\mu$ we have $y^*(x) \rightarrow \beta_\mu$ and $z(x,y^*(x)) \uparrow \beta_\nu$ and $w(x,y^*(x)) \downarrow \alpha_\nu$.

In particular, for large enough $x$ we have $x> \hat{u}_0$, $z(x,y^*(x))> \hat{z}$ and $w(x,y^*(x)) < \hat{w}$. Then $\Upsilon(x)<0$.

{Case B:} $\tilde{x}<\beta_\mu$. As $x \uparrow \tilde{x}$ we have that $y^*(x) \rightarrow \tilde{x}$, $z(x,y^*(x)) \uparrow \beta_\nu$ and $w(x,y^*(x)) \downarrow \alpha_\nu$.

Suppose that $\tilde{x}> \hat{u}_0$. Then for $x$ close enough to $\tilde{x}$ we have $x>\hat{u}_0$, $z(x,y^*(x)) > \hat{z}$ and $w(x,y^*(x))< \hat{w}$. Then, just as in Case A, $\Upsilon(x)<0$.

Now suppose that $\tilde{x} \leq \hat{u}_0$. We find a contradiction and conclude that this case cannot happen. 
Let $z_0$ be the point above $\hat{u}_0$ where $T_{a,a}^{\hat{u}_0,\hat{u}}$
crosses $b$. Then $z_0<\hat{z}$, and for $x$ close enough to $\tilde{x}$ so that $y^*(x)<\hat{u}$,  $T_{a,a}^{x,y^*(x)}$ has a smaller slope than 
$T_{a,a}^{\hat{u}_0,\hat{u}}$ and
hence, since $(x,a(x))$, $(y^*(x),a(y^*(x))$ and $(z(x,y^*(x)), b(z(x,y^*(x)))$
are co-linear, we have that
$z(x,y^*(x))<z_0<\hat{z}$. But, $\lim_{x \uparrow \tilde{x}} z(x,y^*(x)) = \beta_\nu$, a contradiction.

\begin{comment}
Suppose that $\hat{u}_0 < \tilde{x} < \hat{u}_0$. For $x$ close enough to $\tilde{x}$ we have $\hat{u}_0 < x < y^*(x)< \hat{u}$. Then, either $z(x,y^*(x))< \hat{z}$ or $w(x,y^*(x)) > \hat{w}$, so that either $\liminf_{x \uparrow \tilde{x}} z(x,y^*(x)) \leq \hat{z} < \beta_\nu$ or $\limsup_{x \uparrow \tilde{x}} w(x,y^*(x)) \geq  \hat{w} > \alpha_\nu$. But $\lim_{x \uparrow \tilde{x}} z(x,y^*(x)) = \beta_\nu$ and $\lim_{x \uparrow \tilde{x}} w(x,y^*(x)) = \alpha_\mu$.

Finally, suppose that $\tilde{x} =\hat{u}$. Since $a(\hat{u})< T_{b,b}^{\hat{w},\hat{z}}(\hat{u})$ there exists an interval $I_\epsilon = (\hat{u}-\epsilon,\hat{u}+\epsilon)$ such that $a< T_{b,b}^{\hat{w},\hat{z}}$ on $I_\epsilon$. Choose $x<\tilde{x}=\hat{u}$ such that $y^*(x) < \hat{u}+\epsilon$ and $z(x,y^*(x))>\hat{z}$. Then $T_{a,b}^{x,z(x,y^*(x))} < T_{b,b}^{\hat{w},\hat{z}}$ on $(-\infty,x)$. Then, if also $w(x,y^*(x)) < \hat{w}$ (recall that $w(x,y^*(x)) \downarrow \beta_\nu$ so that this is true for $x$ close enough to $\tilde{x}$) we have that $T_{a,b}^{x,z(x,y^*(x))}(w(x,y^*(x)) < T_{b,b}^{\hat{w},\hat{z}}(w(x,y^*(x)) < b(w(x,y^*(x))$.
Then $\Upsilon(x)<0$.
\end{comment}

%I think this falls in the case $\tilde{x} >\hat{u}$ but we might need a more refined argument.

\end{proof}

%{ We want to show that there exists $x \in (x_1,\tilde{x})$ such that $\Upsilon(x)=0$. Then, by Lemma \ref{lem:Upsilon0}, $(w(x,y^*(x)),x,y^*(x),z(x,y^*(x))) \in \sL_4 \cap \Gamma_4$. The proof of Theorem~\ref{thm2} is then complete.}

{It follows from the proposition that in Case (C3) there exists $x \in (x_1,\tilde{x})$ such that $\Upsilon(x)=0$. Then, by Lemma \ref{lem:Upsilon0}, $(w(x,y^*(x)),x,y^*(x),z(x,y^*(x))) \in \sL_4 \cap \Gamma_4$. The proof of Theorem~\ref{thm2} is then complete.
Further, we have constructed an optimal model and an optimal superhedge. The model and superhedge are illustrated in Figure~\ref{fig:case3final}.}

\begin{figure}
	\centering
\begin{tikzpicture}
\def\el{-18};
\def\er{-4};
\def\xl{-14};
\def\xr{-8};
\begin{axis}[
width=4.421in,
height=5.566in,
at={(0.758in,0.481in)},
scale only axis,
xmin=-10,
xmax=30,
ymin=-5,
ymax=54,
axis line style={draw=none},
ticks=none
]
\draw[thin,gray] (-2,0)--(22,0);
\draw[thin,gray] (-2,30)--(22,30);

\draw[blue,dotted, very thick] (2,20) to[out=280, in=180] (10,4) to[out=360, in=250] (19,17);
\draw[blue,dashed, thick] (0,13) to[out=300, in=180] (11,1) to[out=360, in=260] (20,16);
\draw[red] (4,5.2)--(19,10.5);
\node[circle,fill=red,inner sep=0pt,minimum size=5pt] at (4,5.2) {};
\node[circle,fill=red,inner sep=0pt,minimum size=5pt] at (6.1,5.9) {};
\node[circle,fill=red,inner sep=0pt,minimum size=5pt] at (16.6,9.7) {};
\node[circle,fill=red,inner sep=0pt,minimum size=5pt] at (19,10.5) {};

\draw[thin,gray,dotted] (-2,30)--(22,54);
\draw[thin,gray,loosely dashed] (4,25)--(4,51)--(6.1,51)--(6.1,28);
\draw[thin,gray,loosely dashed] (4,36)--(19,36);
\draw[thin,gray,loosely dashed] (16.6,28)--(16.6,51)--(19,51)--(19,25);
\draw[thin,gray,loosely dashed] (6.1,51)--(16.6,51)--(16.6,54)--(6.1,54);

\draw[red, thick] (19,51) -- (22,54);
\draw[red, thick] (-2,30) -- (4,36);

\node[below,scale=0.8] at (6.1,28) {$x^*$};
\node[below,scale=0.8] at (16.6,28) {$y^*(x^*)$};
\node[below,scale=0.8] at (2,25) {$w(x^*,y^*(x^*))$};
\node[below,scale=0.8] at (20,25) {$z(x^*,y^*(x^*))$};

\draw[thin,gray,loosely dashed] (4,22)--(4,-3);
\draw[thin,gray,loosely dashed] (22,54)--(22,30);
\draw[thin,gray,loosely dashed] (19,22)--(19,-3);
\draw[thin,gray,loosely dashed] (6.1,-1)--(6.1,27);
\draw[thin,gray,loosely dashed] (4,36)--(19,36);
\draw[thin,gray,loosely dashed] (16.6,27)--(16.6,-1);
\node[below,scale=0.8] at (6.1,-1) {$x^*$};
\node[below,scale=0.8] at (16.6,-1) {$y^*(x^*)$};
\node[below,scale=0.8] at (4.1,-3) {$w(x^*,y^*(x^*))$};
\node[below,scale=0.8] at (19,-3) {$z(x^*,y^*(x^*))$};

\draw[thin,gray,loosely dashed] (-2,36) -- (4,36);
\node[left,scale=0.8] at (-2,36) {$w(x^*,y^*(x^*))$};
\draw[thin,gray,loosely dashed] (19,51) -- (22,51);
\node[right,scale=0.8] at (22,51) {$z(x^*,y^*(x^*))$};
\filldraw[ opacity=0.1, red] (4,36)--(4,51)--(6.1,51)--(6.1,36)--(4,36);
\filldraw[ opacity=0.1, red] (16.6,36)--(16.6,51)--(19,51)--(19,36)--(16.6,36);
\filldraw[ opacity=0.3, blue] (6.1,51)--(16.6,51)--(16.6,54)--(6.1,54)--(6.1,51);
\filldraw[ opacity=0.3, blue] (6.1,36)--(16.6,36)--(16.6,30)--(6.1,30)--(6.1,36);
\draw[thin,gray,loosely dashed] (-2,30)--(-2,28);
\node[below,scale=0.8] at (-2,28) {$\alpha_\mu=\alpha_\nu$};
\draw[thin,gray,loosely dashed] (22,30)--(22,28);
\node[below,scale=0.8] at (22,28) {$\beta_\mu=\beta_\nu$};
\node[red,scale=0.8] at (11,9.5) {$\Psi_{x^*,y^*(x^*)}$};

\node[blue,scale=0.8] at (17.5,14.5) {$a$};
\node[blue,scale=0.8] at (17.7,5) {$b$};
\end{axis}
\end{tikzpicture}
\caption{The Case (C3). The shaded areas in the top part of the figure represent the sets (and associated exercise rules) on which the optimal models $\sM^*$ concentrate: the Bermudan option is exercised at time-1 if $Z_1\notin(x^*,y^*(x^*))$, and the mass in $(\alpha_\mu,w(x^*,y^*(x^*)))\cup(z(x^*,y^*(x^*)),\beta_\mu)$ stays put, while the mass in $(w(x^*,y^*(x^*)),x^*)\cup(y^*(x^*),z(x^*,y^*(x^*)))$ is mapped to $(w(x^*,y^*(x^*)),z(x^*,y^*(x^*)))$. On the other hand, if $Z_1\in(x^*,y^*(x^*))$, then the option will be exercised at time-2 and the mass in $x^*,y^*(x^*))$ is mapped to the tails $(\alpha_\nu,w(x^*,y^*(x^*)))\cup(z(x^*,y^*(x^*)),\beta_\nu)$.
The bottom part of the figure depicts a candidate convex function $\Psi_{x^*,y^*(x^*)}$. In particular, $\Psi_{x^*,y^*(x^*)}=b$ on $(\alpha_\nu,w(x^*,y^*(x^*))]\cup[z(x^*,y^*(x^*)),\beta_\nu)$ and $\Psi_{x^*,y^*(x^*)}=T_{b,a}^{w(x^*,y^*(x^*)),x^*}=T_{a,a}^{x^*,y^*(x^*)}=T_{b,a}^{y^*(x^*),z(x^*,y^*(x^*))}$ on $[w(x^*,y^*(x^*)),z(x^*,y^*(x^*))]$.  }
\label{fig:case3final}
\end{figure}

\section{$\sP^{can} = \sP^{rand}$ }
\label{sec:can=rand}

\begin{prop}\label{prop:can=rand}
Suppose the Dispersion Assumption (Definition~\ref{def:dispersion}) holds. Then $\sP^{can} = \sP^{rand} = \sP = \sD$.

In particular, there is a sequence of canonical models $(\sM_n)_{n \geq 1}$ where $\sM_n \in M^{can}(\mu,\nu)$ induced by a sequence of martingale couplings $(\pi_n)_{n \geq 1}$ where $\pi_n \in \Pi_M(\mu,\nu)$ and associated stopping times $\tau_n$ such that $\lim_n \E^{\pi_n}[c(X_{\tau}^n, \tau^n)] = \sP$.

On the other hand there exist $(c=(a,b);\mu,\nu)$ such that, for each $\sM \in M^{can}(\mu,\nu)$ and each stopping time $\tau$, we have $\E^\sM[c(Z_\tau,\tau)] < \sP^{can}$ so that there is no maximizer in $M^{can}(\mu,\nu)$ for $\sP$.
\end{prop}

\begin{remark}
    For each canonical model $\sM\in M^{can}(\mu,\nu)$ we can define the Snell envelope process $S^\sM$ (see Snell \cite{Snell:52}), by setting $S^\sM_2=c(Z_2,2)=b(Z_2)$, $S^\sM_1=\max\{c(Z_1,1)=a(Z_1),\mathbb{E}^\sM[S^\sM_2\lvert \sF^{can}_1]\}$ and $S^\sM_0=\mathbb{E}^\sM[S^\sM_1\lvert\sF^{can}_0]$; here we use that, in our setup, stopping at time-0 is not allowed. It is well known  that $\sup_{\tau\in\sT^{can}_{1,2}}\mathbb{E}^\sM[c(Z_\tau,\tau)]=S^\sM_0$. Now, since $\sF^{can}_0$ is trivial, $\sF^{can}_1$ is generated by $Z_1\sim\mu$ and each $\sM$ is induced by a martingale coupling $\pi\in\Pi_M(\mu,\nu)$, we have that $\sP^{can}(\mu,\nu;a,b)=\sup_{\pi\in\Pi_M(\mu,\nu)}\int \max\left\{a(x),\int b(y)\pi_x(dy)\right\}\mu(dx)$. This corresponds to a weak martingale optimal transport problem with a cost function $(x,p)\mapsto \tilde c(x,p)=\max\left\{a(x),\int b(y)p(dy)\right\}$; here $x\in\R$ and $p$ is an integrable probability measure on $\R$. 
    
    %One of the requirements {\cgreen Surely this is part of a list of sufficient conditions; you make it sound like a necessary one} for $\inf_{\pi\in\Pi_M(\mu,\nu)}\int\tilde c(x,\pi_x)\mu(dx)$ to admit a minimizer is that, for each $x\in\R$, $p\mapsto \tilde c(x,p)$ is convex (see, for example, Beiglb\"ock et al. \cite[Theorem 2.4]{BJMP:23}).  By symmetry, we have that $\sP^{can}(\mu,\nu;a,b)$ admits a maximizing martingale coupling $\pi^*$, if $p\mapsto \max\left\{a(x),\int b(y)p(dy)\right\}$ is concave. Since this is only possible  if $p\mapsto \max\left\{a(x),\int b(y)p(dy)\right\}$ is linear, we have that, for general $a,b$, the existence of an optimizer is not guaranteed. This is consistent with our findings in Proposition \ref{prop:can=rand}.

Beiglb\"ock et al. \cite{BJMP:23} study the weak martingale optimal transport problem $\inf_{\pi\in\Pi_M(\mu,\nu)}\int\tilde c(x,\pi_x)\mu(dx)$ under an assumption that $p\mapsto \tilde c(x,p)$ is convex for each $x\in\R$. This is part of a set of sufficient conditions given in Beiglb\"ock et al. \cite[Theorem 2.4]{BJMP:23} for there to exist a minimizing martingale coupling $\pi^*$.
Translating their problem to our setting (i.e., moving from a minimisation problem to a maximisation problem) their existence of a maximiser within the class of canonical models result holds under the assumption that $p\mapsto \max\left\{a(x),\int b(y)p(dy)\right\}$ is concave. But, (except when $p\mapsto \max\left\{a(x),\int b(y)p(dy)\right\}$ is linear) this is not the case. Hence, Theorem~2.4 of \cite{BJMP:23} does not apply and we need not expect existence of a maximiser in the space of canonical models. This is what we find: in general, the maximiser is only attained in the larger space of randomised models.

\begin{comment}
{\cblue I am not sure if we need this.} {\cgreen I agree, and would suggest removing this. I think $c^{**}$ is nonsense in our case where $c$ is convex.} {\cblue On the other hand, Acciaio et al. \cite[Theorem 3.9]{ABP:21} showed that 
$$
\sup_{\pi\in\Pi_M(\mu,\nu)}\int\tilde c(x,\pi_x)\mu(dx)=\sup_{\pi\in\Pi_M(\mu,\nu)}\int\tilde c^{**}(x,\pi_x)\mu(dx),
$$
where, for each $x$, $\tilde c^{**}(x,\cdot)$ is the concave envelope of $p\mapsto \tilde c(x,p)$. If the original problem $\sP^{can}(\mu,\nu;a,b)=\sup_{\pi\in\Pi_M(\mu,\nu)}\int \tilde c(x,\pi_x)\mu(dx)$ (with $\tilde c(x,p)=\max\left\{a(x),\int b(y)p(dy)\right\}$) admits an optimizer $\pi^*$, then $\sP^{can}(\mu,\nu;a,b)=\sup_{\tau\in\sT^{can}_{1,2}}\mathbb{E}^{\sM^*}[c(Z_\tau,\tau)]=\mathbb{E}^{\sM^*}[c(Z_{\tau^*},\tau^*)]$, where $\sM^*$ is the martingale model induced by $\pi^*$ and $\tau^*=\inf\{t\in{1,2}:S^{\sM^*}_t=c(Z_t,t)\}=I_{\{Z_1\in B^*\}}+2I_{\{Z_1\notin B^*\}}$ is an optimal stopping time induced by $B^*=\{x\in\R:a(x)>\int b(y)\pi_x^*(dy)\}$. In other words, an optimizer $\pi^*$ also allows one to find an optimal stopping region $B^*$. However, even though the modified problem admits a maximizer (due to concavity of $\tilde c^{**}$), it is not clear how it can (if at all) be related to the underling robust optimal stopping problem (beyond the fact that it produces the same value).}
\end{comment}
\end{remark}

\begin{proof}[Proof of Proposition \ref{prop:can=rand}]
First we show that $\sP^{can} = \sP^{rand}$.

In Cases (C1) and (C3) we have constructed an optimal model $\sM \in M^{can}$ and associated stopping time %on the canonical space 
for which the expected payoff under the optimal exercise rule attains $\sP = \sD$. Therefore it is sufficient to only consider Case (C2). The result will follow if we can find a sequence of martingale couplings $(\pi_n)_{n \geq 1}$ with $\pi_n \in \Pi_M(\mu,\nu)$ and sequence of sets $(B_n)_{n \geq 1}$ with $B_n \in \sB((\alpha_\mu,\beta_\mu))$ such that
\[ \lim_n %\sup_{B_n \in \sB(\alpha_\mu,\beta_\mu)} 
\left\{  \int_{B_n} a(x) \mu(dx) + \int_{(\alpha_\mu,\beta_\mu) \setminus B_n} b(y) \pi_x(dy)\mu(dx) \right\} = \sP  = \sP^{rand}. \]

\noindent{Step 1}: Suppose $\xi$ is an absolutely continuous measure with density $\eta_\xi$ with support in a closed interval $[c_\xi,d_\xi]$ and suppose $\xi \leq \chi$ where $\chi$ is an absolutely continuous measure with density $\rho_\chi$. Then $0 \leq \eta_\xi \leq \rho_\chi$ on $[c_\xi,d_\xi]$ and $0 = \eta_\xi \leq \rho_\chi$ on $\R \setminus [c_\xi,d_\xi]$. Trivially $\xi \leq_E \chi$. Later $\chi$ will play the role of the initial law and $\xi$ the target law restricted to an interval.

Let $\zeta = \xi(\R) \delta_{\bar{\xi}}$  so that $\zeta \leq_{cx} \xi \leq \chi$. Then $\zeta \leq_{cx} S^\chi(\zeta)$. Further, $S^\chi(\zeta) = \chi|_{(c \chi,d_\chi)}$ where $c_\xi \leq c_\chi \leq d_\chi \leq d_\xi$ and $c_\chi < d_\chi$ unless $\xi(\R) = 0$. We also have that $S^\chi(\zeta) \leq_{cx} \xi$. (To see this note that $S^\chi(\zeta) = \xi|_{(c \chi,d_\chi)} + (\chi -\xi)|_{(c \chi,d_\chi)}$ and that $(\chi -\xi)|_{(c \chi,d_\chi)} \leq_{cx} \xi|_{[c_\xi,d_\xi] \setminus (c \chi,d_\chi)}$, as the two measures have the same mean and mass and the first measure has support inside an interval whereas the second measure has support outside this interval.) It follows that for any convex function $A$ we have
$\xi(\R) A(\bar{\xi}) \leq \int_{(c \chi,d_\chi)} A d \chi \leq \int_{(c_\xi,d_\xi)} A d \xi$.

\noindent{Step 2}: Let $[c,d]$ be fixed. We say that $\Gamma$ determines a (finite) partition of $[c,d]$ if $\Gamma = \{ (t_j)_{0 \leq j \leq J} \}$ where $0<J<\infty$ and $c = t_0 < t_1 < \cdots < t_j < t_{j+1} < \cdots < t_J = d$. Let $(\Gamma_n)_{ n \geq 1}$ be a nested sequence of sets determining partitions (i.e., $\Gamma_n = \{ c=t^n_0, t^n_1, \ldots t^n_{J(n)}=d \} \subseteq \Gamma_{n+1}$) for which the mesh size tends to zero (i.e., $\lim_{n} \max_{1 \leq j \leq J(n)} ( t^n_j - t^n_{j-1}) = 0$).

Let $\xi$ and $\chi$ be as in the previous step and decompose $\xi$ as
\[ \xi = \sum_{j = 1}^{J(n)} \xi^n_j \] % \xi|_{(t^n_{j-1},t^n_j)} \] % \delta_{{\bar{\xi}}^n_j} \]
where $\xi^n_j = \xi|_{(t^n_{j-1},t^n_j)}$. Let $\zeta^n_j = \xi((t^n_{j-1},t^n_j)) \delta_{{\bar{\xi}}^n_j}$. Note that
$\zeta^n_j \leq_{cx} \xi^n_j$. Further, by the results of Step 1 (applied to $\xi^n_j$ and $\chi^n_j : = \chi|_{(t^n_{j-1},t^n_j)}$) we have that there exists $s^n_{j,-}, s^n_{j,+}$ with $t^n_{j-1} \leq s^n_{j,-} \leq s^n_{j,+} \leq t^n_{j}$ such that $\chi|_{(s^n_{j,-},s^n_{j,+)}} \leq_{cx} \xi^n_j$. Then we have
\[ \zeta^n_j \leq_{cx} \chi|_{(s^n_{j,-},s^n_{j,+)}}   \leq_{cx} \xi^n_j . \]
Now set $B_n = \cup_{1 \leq j \leq J(n)} (s^n_{j,-},s^n_{j,+})$.
Then
\[ \zeta^n := \sum_{j=1}^{J(n)} \zeta^n_j \leq_{cx} %\sum_{j=1}^{J(n)} \chi^n_j =
\chi|_{B_n} \leq_{cx} \sum_{j=1}^{J(n)} \xi^n_j = \xi. \]
Moreover, $\zeta^n$ converges weakly to $\xi$ by construction. Therefore for any convex function $A$ we have
\begin{equation} \label{eq:limit} \lim_n \int_{[c,d]} A(x) \zeta^n(dx) = \lim_n \int_{x \in B_n}  A(x)\chi(dx) = \int_{x \in [c,d]} A(x) \xi(dx) \end{equation}

\noindent{Step 3}:
Suppose that $(a,b)$ and $(\mu,\nu)$ are such that we are in Case (C2). Let $(y_-,x_-,z_-, z_+,x_+,y_+)$ be the six points that arise in the construction of a (randomised) martingale coupling and let $\hat{\pi}^*$ be an optimiser. Recall Section \ref{sec:C2}.

Note that $\hat{\pi}^*$ can be written as
\[ \hat{\pi}^* = \hat{\pi}^*I_{ \{ x \in (z_-,z_+) \} } + \hat{\pi}^* I_{ \{ x \in (\alpha_\mu,\beta_\mu) \setminus (z_-,z_+) \} } =: \hat{\pi}^*_1 + \hat{\pi}^*_2 \]
For $i=1,2$, let $\nu_i$ be given by $\nu_i(dy) = \int_{x \in (\alpha_\mu,\beta_\mu)} \int_{u\in(0,1)}\hat\pi^*_i(dx,du, dy)$. Furthermore, let $\tilde{\nu}_1 = \nu_1 - \nu|_{(z_-,z_+)}$ and observe that $(\alpha_\nu,y_-] \cup [y_+,\beta_\nu)$ is a support of $\tilde{\nu}_1$ (i.e., $\tilde{\nu}_1(\R)=\tilde{\nu}_1((\alpha_\nu,y_-] \cup [y_+,\beta_\nu))$). Note that $\hat{\pi}_2^*$ does not use randomisation so that we can and do identify $\hat{\pi}^*_2$ with a martingale coupling $\pi^*_2$ on the canonical model given by $\pi^*_2(dx,dy) = \int_{u \in (0,1)} \hat{\pi}^*_{2}(dx,du,dy)$. 

Now we construct a sequence $(\pi^n)_{n \geq 1}$ of martingale couplings of $\mu|_{(z_-,z_+)}$ and $\nu_1 = \nu|_{(z_-,z_+)} + \tilde{\nu}_1$ and, { each of which can be constructed on a canonical model.}

We let $[c,d]=[z_-,z_+]$, $\xi = \nu|_{(z_-,z_+)}$ and $\chi = \mu|_{(z_-,z_+)}$. By the Dispersion Assumption, $[c,d]\subseteq[e^L,e^R]$ and $\xi \leq \chi$ on $[c,d]$.

By Step 2 above we find a sequence of sets $B_n\subseteq(z_-,z_+)$ such that $S^{\nu_1}(\mu|_{B_n}) = \nu|_{(z_-,z_+)}$. We let $\pi^n$ be a martingale coupling in which $\pi^n = \pi^{n,1} + \pi^{n,2}$ where $\pi^{n,1} \in \Pi_M( \mu|_{B_n},\nu|_{(z_-,z_+)})$ and $\pi^{n,2} \in \Pi_M( \mu|_{(z_-,z_+) \setminus B_n}, \tilde{\nu}_1)$.

Now we combine $\pi^n$ and $\pi^*_2$ (to give a martingale coupling of $\mu$ and $\nu$), by setting $\pi^{*,n} := \pi^n+ \pi^*_2$.  For each $n\geq 1$, {since the supports of $\mu|_{B_n}$ and $\mu|_{(z_-,z_+)\setminus B_n}$ and $\mu|_{\R \setminus (z_-,z_+)}$ are disjoint, we can do this in a way that $\pi^{*,n}$ is associated with a canonical model.}

Finally, let $\tau^n = 1$ on $(\alpha_\mu,x_-) \cup (x_+,\beta_\mu) \cup B_n$ and $\tau^n=2$ otherwise.
Then the expected payoff of the Bermudan option under $\pi^{*,n}$ (with disintegration $\pi^{*,n}_x$) and stopping rule $\tau^n$ is
\begin{eqnarray*}
\lefteqn{\int_{x \in (\alpha_\mu,x_-) \cup (x_+,\beta_\mu) \cup B_n } a(x) \mu(dx) + \int_{ x \in (x_-,z_-) \cup (z_+,x_+) \cup ((z_-,z_+) \setminus B_n)} \int_{y\in\R}  b(y) \pi^{*,n}_x(dy) \mu(dx) } \\
& = & \int_{ (\alpha_\mu,x_-) \cup (x_+,\beta_\mu)} a(x) \mu(dx) + \int_{ x \in (z_-,z_+)}  a(x) \frac{\eta(x)}{\rho(x)} \mu(dx) \\
&& + \int_{x \in B_n } a(x) \mu(dx) - \int_{ x \in (z_-,z_+)}  a(x) \frac{\eta(x)}{\rho(x)} \mu(dx)  \\
&& + \int_{y\in\R} b(y) \int_{ x \in (x_-,z_-) \cup (z_+,x_+) \cup ((z_-,z_+) \setminus B_n)}  \pi^{*,n}_x(dy) \mu(dx)  \\
& = & \int_{x \in (\alpha_\mu,x_-) \cup (x_+,\beta_\mu)}  a(x) \mu(dx) + \int_{ x \in (z_-,z_+)} a(x)  \nu(dx) + \int_{y \in (\alpha_\nu, y_-) \cup (y_+, \beta_\nu)} b(y) (\nu - \mu)(dy) \\
&& + \int_{x \in B_n } a(x) \mu(dx) - \int_{ x \in (z_-,z_+)}  a(x) \nu(dx)  \\
& = & \sP^{rand} + \int_{x \in B_n } a(x) \mu(dx) - \int_{x \in (z_-,z_+) } a(x) \nu(dx) \rightarrow  \sP^{rand}
\end{eqnarray*}
using the second equality in \eqref{eq:limit} for $\chi = \mu|_{(z_-,z_+)}$ and $\xi = \nu|_{(z_-,z_+)}$ in the last line.

Now we show that there exist payoffs $(a,b)$ and pairs $(\mu,\nu)$ such that for any $\pi \in \Pi_M(\mu,\nu)$ (and the corresponding $\sM_\pi\in M^{can}(\mu,\nu)$) and $S \in \sB((\alpha_\mu,\beta_\nu))$ we have 
\[ \E^{\sM_\pi}[a(X) I_{\{X \in S\}} + b(Y) I_{\{X \notin S\}}] < \sP^{can}. \]

Suppose $(a,b)$ and $(\mu,\nu)$ are such that we are in Case (C2), $a$ is strictly convex and $a>b$ on $(z_-,z_+)$. Also recall that {under the Dispersion Assumption} the densities of $\mu$ and $\nu$ are strictly positive on $(z_-,z_+)$.

Recall the inequalities \eqref{eq:adual} and \eqref{eq:bdual}, and suppose $(\phi,\psi,\theta_1,\theta_2) = (\phi^*=(a-\psi^*)^+,\psi^*, -(\psi^*)',0)$ is the set of optimal dual variables. If a model $\sM \in M^{can}(\mu,\nu)$ (induced by $\pi \in \Pi_{M}(\mu,\nu)$) and associated stopping rule $\{ \tau = 1 \}= \{X \in S \}$ attains the maximum then we must have that ($\pi$ almost surely) 
\[ a(X) I_{ \{ X \in S \} } + b(Y) I_{ \{ X \notin S \} } = (a - \psi^*)^+(X) + \psi^*(Y)  - I_{ \{ x \in S \} } (\psi^*)'(X) (Y-X). \]
{The equality $a(x) I_{ \{ x \in S \} } + b(y) I_{ \{ x \notin S \} } = (a - \psi^*)^+(x) + \psi^*(y)  - I_{ \{ x \in S \} } (\psi^*)'(x) (y-x)$ gives us that for $x \in S$, $a(x) = (a(x) - \psi^*)^+(x) + \psi^*(y)  - (\psi^*)'(x) (y-x) \geq a(x) - \psi^*(x) + \psi^*(x) = a(x)$ and for $x \notin S$, $b(y) = (a(x)-\psi^*(x))^+ + \psi^*(y) \geq b(y)$. 
In particular, $\pi$ can only assign mass to $(x,y)\in\R^2$ such that we have equality, i.e. $\pi$ is concentrated on}
\begin{eqnarray}
    \lefteqn{
\left( \{ x \in S \} \cap \{ a(x) \geq \psi^*(x) \} \cap  \{\psi^*(y) = \psi^*(x) + (y-x)  (\psi^*)'(x) \} \right) } \label{eq:xinS}\\
&& \hspace{20mm} \cup \left( \{ x \notin S \} \cap \{ a(x) \leq \psi^*(x) \} \cap  \{ \psi^*(y) = b(y) \} \right). \nonumber 
\end{eqnarray}

Since $\pi \in \Pi_M(\mu,\nu)$, $a=\psi^*$ is strictly convex and $a=\psi^*>b$ on $(z_-,z_+)$, we find that for $B \in \sB((z_-,z_+))$ we must have $\nu(B)=\int_{y\in B}\int_{x\in\R}\pi(dx,dy)=\int_{\{y\in B\}\cap\{x=y\}\cap\{x\in S\}}\pi(dx,dy)$. 
Suppose $\nu ((z_-,z_+) \setminus S)>0$.
Then $0 < \nu ((z_-,z_+) \setminus S) = \int_{\{y\in ((z_-,z_+) \setminus S \} \cap\{x\in S\}} I_{\{y=x\}}\pi(dx,dy) = 0$. Hence, up to a null set, $(z_-,z_+) \subseteq S$. Then, since $\psi$ convex everywhere, $a$ is strictly convex and $a(x)= \psi(x)$ on $(z_-,z_+)$, and using the top line of \eqref{eq:xinS}, 
\begin{eqnarray*} \pi\lvert_{\{ x \in (z_-,z_+) \} } & = & \pi\lvert_{\{ x \in (z_-,z_+) \}} I_{\{ a(x) \geq \psi^*(x) \} }  I_{ \{\psi^*(y) = \psi^*(x) + (y-x)  (\psi^*)'(x) \} } \\
& = & \pi\lvert_{\{ x \in (z_-,z_+)} I_{\{y=x \}}. \end{eqnarray*}
Then $\pi\lvert_{(z_-,z_+)\times\R}(dx,dy) =\pi\lvert_{(z_-,z_+)\times(z_-,z_+)} (dx,dy) I_{ \{y=x\} }=\pi\lvert_{\R \times(z_-,z_+)}(dx,dy)$. 
Since $\pi$ only charges the diagonal on $\{ x \in (z_-,z_+) \}$ the disintegration of $\pi$ with respect to $x$ must satisfy $\pi(dx,dy) I_{\{x \in (z_-,z_+)\}} = \mu(dx) \pi_x(dy) I_{\{x \in (z_-,z_+)\}} = \mu(dx) \delta_x(dy) I_{\{x \in (z_-,z_+)\}}$.
Then, 
\[ \nu((z_-,z_+)) = \pi(\R \times (z_-,z_+)) = \pi((z_-,z_+)\times\R) = \int_{\R^2} \mu(dx) \pi_x(dy) I_{\{x \in (z_-,z_+)\}} = \mu((z_-,z_+)); \] 
but this contradicts the Dispersion Assumption.
\end{proof}

\section{Further remarks and extensions}
\subsection{Dropping the assumption that $a \geq b$}

In the main text of this section we assumed that $a \geq b$. Here we show that this assumption is not necessary and that the results remain true without this assumption. Instead of requiring $a$ and $b$ are convex and $a \geq b$ it is sufficient that $b$ and $a \vee b$ are convex.

First, we consider the pricing problem. Fix a model $\sM\in M(\mu,\nu)$ and consider the model-based price of the Bermudan claim:
$$P(\sM;a,b):=\sup_{\tau \in {\sT}_{1,2}} \E^{\sM}[ a (Z_1)I_{\{{\tau}=1\}}+b(Z_2)I_{\{{\tau}=2\}}]$$
\begin{lem}\label{lem:aMAXb}
    Suppose $b$ is convex. For all $a$ and all $\sM\in M(\mu,\nu)$ we have $P(\sM;a,b)=P(\sM;a\vee b,b)$. 
\end{lem}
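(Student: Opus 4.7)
The plan is to prove both inequalities. The inequality $P(\sM;a,b) \leq P(\sM;a\vee b,b)$ is immediate since $a \vee b \geq a$, so any stopping time $\tau$ admissible in the first problem gives a payoff bounded by the corresponding payoff in the second. The work is in the reverse inequality.

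For the reverse, I would fix a stopping time $\tau \in \sT_{1,2}$ for the problem $P(\sM;a\vee b,b)$ and construct a stopping time $\tilde{\tau}$ achieving at least as much value in $P(\sM;a,b)$. Write $\{\tau=1\} = A_1 \cup A_2$ where
$A_1 = \{\tau=1, a(Z_1)\geq b(Z_1)\}$ and $A_2 = \{\tau=1, a(Z_1)<b(Z_1)\}$. Both sets lie in $\sF_1$. Define $\tilde{\tau}=1$ on $A_1$ and $\tilde{\tau}=2$ on $A_2 \cup \{\tau=2\}$; this is a $\bF$-stopping time valued in $\{1,2\}$. On $A_1$ the integrand for $(a\vee b,b)$ equals $a(Z_1)$, while on $A_2$ it equals $b(Z_1)$.

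The key step is then the comparison
\[
\E^{\sM}\!\left[b(Z_2)I_{A_2}\right] \;\geq\; \E^{\sM}\!\left[b(Z_1)I_{A_2}\right],
\]
which follows from the martingale property $\E^{\sM}[Z_2\mid \sF_1]=Z_1$, convexity of $b$, and conditional Jensen's inequality (since $A_2\in\sF_1$). Combining this with the equality of the $\{\tau=2\}$ contributions yields
\[
\E^{\sM}\!\left[(a\vee b)(Z_1)I_{\{\tau=1\}} + b(Z_2)I_{\{\tau=2\}}\right] \;\leq\; \E^{\sM}\!\left[a(Z_1)I_{\{\tilde{\tau}=1\}} + b(Z_2)I_{\{\tilde{\tau}=2\}}\right] \;\leq\; P(\sM;a,b).
\]
Taking the supremum over $\tau$ on the left gives $P(\sM;a\vee b,b) \leq P(\sM;a,b)$.

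There is no real obstacle here; the only subtlety is to verify that $\tilde\tau$ is a genuine stopping time (which is clear because the decision to stop at $1$ depends only on $Z_1$-measurable information via $A_1$) and that convexity of $b$ is the precise ingredient needed for the Jensen step. Convexity of $a\vee b$ is not invoked in this lemma; it will be needed later to ensure the dual problem for $(a\vee b,b)$ fits the framework of Theorem~\ref{thm:simpledual}.
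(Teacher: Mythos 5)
Your proof is correct and rests on exactly the same mechanism as the paper's: on the event $\{\tau=1\}\cap\{a(Z_1)<b(Z_1)\}$ you postpone exercise to time 2 and use the martingale property with conditional Jensen for the convex $b$ to see this can only increase the value. The only difference is organisational: the paper routes both payoff pairs through the restricted class $\hat{\sT}_{1,2}$ of stopping times that never stop at time 1 on $\{a(Z_1)<b(Z_1)\}$, whereas you apply the modification once to the $(a\vee b,b)$ problem and dispose of the other direction by the trivial bound $a\leq a\vee b$.
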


\begin{proof}
Fix $\sM \in M(\mu,\nu)$. Define $\hat{\sT}_{1,2} = \{ \tau \in \sT_{1,2} : \PP( \{\tau=1\} \cap \{a(Z_1)<b(Z_1)\}=0\}$ and $\hat{P}(\sM,a,b) = \sup_{\hat{\tau} \in \hat{\sT}_{1,2}} \E^{\sM}[ a (Z_1)I_{\{\hat{\tau}=1\}}+b(Z_2)I_{\{\hat{\tau}=2\}}]$.

Fix $\tau \in \sT_{1,2}$ and define $A=A_\tau = \{\tau=1\} \cap \{a(Z_1)<b(Z_1)\} %= I_{ \{\hat{\tau}=1 \}}
$ and $\hat{\tau}=\hat\tau(\tau)$ by $\hat{\tau}=\tau$ on $A^c$ and $\hat{\tau}=2$ on $A$ (note that $A\in\sF_1$ and $\hat\tau\in\sT_{1,2}$). Then, %\hat{\tau} \in \hat{\Tau}_{1,2}$ and 
$\{ \tau=1 \} \setminus A = \{ \hat{\tau} = 1 \}$ and
\begin{eqnarray*} 
a(Z_1)I_{\{\tau=1\}}+b(Z_2)I_{\{\tau=2\}}
& = & a(Z_1)I_{ \{ \tau=1 \} \setminus A} + a(Z_1) I_{A}  +b(Z_2)I_{\{\tau=2\}} \\
& \leq & a(Z_1)I_{\{ \hat{\tau}=1 \} } + b(Z_1) I_{A}  +b(Z_2)I_{\{\tau=2\}}.
\end{eqnarray*}
Then, by the convexity of $b$, $\E^{\sM}[b(Z_1) I_{A}] \leq \E^{\sM}[b(Z_2)I_A]$ and then, 
\[ \E^{\sM}[ a(Z_1)I_{\{\tau=1\}}+b(Z_2)I_{\{\tau=2\}}]
\leq \E^{\sM}[ a(Z_1)I_{\{\hat{\tau}=1\}}+b(Z_2)I_{\{\hat{\tau}=2\}}] \]
with the inequality being strict if $\PP(A)>0$.
It follows that
\begin{equation}
    \label{eq:hattau}
\sup_{\tau \in \sT_{1,2}} \E^{\sM}[ a(Z_1)I_{\{\tau=1\}}+b(Z_2)I_{\{\tau=2\}}]
\leq \sup_{\hat{\tau} \in \hat{\sT}_{1,2}} \E^{\sM}[ a (Z_1)I_{\{\hat{\tau}=1\}}+b(Z_2)I_{\{\hat{\tau}=2\}}] .
\end{equation}
Since the reverse inequality is trivial we must have equality in \eqref{eq:hattau}.
It is clear that for $\hat{\tau} \in \hat{\sT}_{1,2}$ we have that, except on a set of measure zero, $a(Z_1) I_{ \{ \hat{\tau}=1 \} } = (a \vee b)(Z_1) I_{ \{ \hat{\tau}=1 \} }$. Therefore, 
\begin{eqnarray*}
    \hat{P}(\sM;a,b) & = & {\sup_{\hat{\tau} \in \hat{\sT}_{1,2}}} \E^{\sM}[ a (Z_1)I_{\{\hat{\tau}=1\}}+b(Z_2)I_{\{\hat{\tau}=2\}}]  \\
    & = & \sup_{\hat{\tau} \in \hat{\sT}_{1,2}} \E^{\sM}[ (a \vee b) (Z_1)I_{\{\hat{\tau}=1\}}+b(Z_2)I_{\{\hat{\tau}=2\}}] = \hat{P}(\sM;a \vee b,b).
\end{eqnarray*}
Then, with the outer equalities both following from applications of the version of \eqref{eq:hattau} with equality, we conclude that
\[ P(\sM;a,b) = \hat{P}(\sM;a,b) = \hat{P}(\sM;a \vee b,b) = P(\sM; a \vee b, b). \]
\end{proof}
Now we turn to the dual problem and the hedging cost.
\begin{lem}\label{lem:aMAXbdual}
    Suppose $b$ is convex. For all $a$ we have $\sD(\mu,\nu;a,b)=\sD(\mu,\nu;a\vee b,b)$. 
\end{lem}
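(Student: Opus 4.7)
The plan is to leverage Theorem~\ref{thm:simpledual} (which applies here since $b$ is convex) to pass to the restricted dual $\tilde\sD$, observe that the restricted dual problems for $(a,b)$ and $(a\vee b,b)$ coincide, and then pass back.

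The inequality $\sD(\mu,\nu;a,b)\le \sD(\mu,\nu;a\vee b,b)$ is immediate from the definitions: if $(\phi,\psi,\theta)\in\sS(a\vee b,b)$ then, since $a\le a\vee b$, we have $a(x)\le (a\vee b)(x)\le\phi(x)+\psi(y)+\theta_1(x)(y-x)$, so $\sS(a\vee b,b)\subseteq\sS(a,b)$. The hedging cost $HC(\phi,\psi)$ depends only on $\phi,\psi$, so taking infima gives the claimed inequality.

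For the reverse inequality, I will apply Theorem~\ref{thm:simpledual} to both payoff pairs; this is legitimate because $b$ is convex and Theorem~\ref{thm:simpledual} places no convexity requirement on the first payoff. Thus $\sD(\mu,\nu;a,b)=\tilde\sD(\mu,\nu;a,b)$ and $\sD(\mu,\nu;a\vee b,b)=\tilde\sD(\mu,\nu;a\vee b,b)$. The key observation is the pointwise identity
\[
(a(x)-\psi(x))^+ \;=\; ((a\vee b)(x)-\psi(x))^+ \qquad\text{for every } \psi\in\tilde\sS(b).
\]
Indeed, at a point $x$ with $a(x)\ge b(x)$ we have $(a\vee b)(x)=a(x)$ and the identity is trivial; at a point with $a(x)<b(x)\le\psi(x)$, both sides equal zero (the left since $a(x)\le\psi(x)$, the right since $(a\vee b)(x)=b(x)\le\psi(x)$).

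Consequently $\widetilde{HC}(\psi)$ takes the same value whether computed with payoff $a$ or $a\vee b$, for every $\psi\in\tilde\sS(b)$. Since the feasible set $\tilde\sS(b)$ is determined by $b$ alone, taking the infimum yields $\tilde\sD(\mu,\nu;a,b)=\tilde\sD(\mu,\nu;a\vee b,b)$, and chaining equalities gives
\[
\sD(\mu,\nu;a,b)=\tilde\sD(\mu,\nu;a,b)=\tilde\sD(\mu,\nu;a\vee b,b)=\sD(\mu,\nu;a\vee b,b).
\]
There is no real obstacle once one notices that $\tilde\sD$ depends on $a$ only through $(a-\psi)^+$ evaluated against convex $\psi\ge b$, and that this quantity is insensitive to replacing $a$ by $a\vee b$; the only subtlety is being careful to invoke Theorem~\ref{thm:simpledual} in both directions, which requires convexity of $b$ but not of $a$ or $a\vee b$.
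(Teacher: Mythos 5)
Your proof is correct and follows essentially the same route as the paper: both invoke Theorem~\ref{thm:simpledual} (which needs only convexity of $b$) to reduce to the restricted dual, and then use the pointwise identity $(a-\psi)^+=((a\vee b)-\psi)^+$ for $\psi\in\tilde\sS(b)$, which holds since $\psi\geq b$ forces both sides to vanish on $\{a<b\}$. The extra direct inclusion argument for $\sD(\mu,\nu;a,b)\leq\sD(\mu,\nu;a\vee b,b)$ is harmless but redundant given the chain of equalities.
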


\begin{proof} 
By Theorem~\ref{thm:simpledual} we have
$\sD(\mu,\nu;a,b) %= \inf_{\phi,\psi,\theta \in \sS(a,b)} HC(\phi,\psi) 
= \tilde{\sD}(\mu,\nu;a,b)
$. Similarly, ${\sD}(\mu,\nu;a \vee b,b) = \tilde{\sD}(\mu,\nu;a \vee b,b)$.

Recall that $\tilde{\sD}(\mu,\nu;a,b) = \inf_{\psi \in \tilde{\sS}(b)} \int (a - \psi)^+ d \mu + \int \psi d \nu$. 
Suppose $\psi \in \tilde{S}(b)$ so that
$\psi \geq b$. Then on $\{a<b\}$ we have $(a-\psi)^+ = 0 = (b-\psi)^+$ and hence $(a-\psi)^+ = (a \vee b - \psi)^+$ everywhere.
It follows that $\tilde{\sD}(\mu,\nu;a,b) = \inf_{\psi \in \tilde{\sS}(b)} \int (a \vee b - \psi)^+ d \mu + \int \psi d \nu
= \tilde{\sD}(\mu,\nu;a \vee b,b)$. The result now follows.
%Hence 
%\[ \sD(\mu,\nu;a,b)  = \tilde{\sD}(\mu,\nu;a,b)}  = \tilde{\sD}(\mu,\nu;a \vee b,b) = {\sD}(\mu,\nu;a \vee b,b) \]
\end{proof}

\begin{theorem}
    \label{thm:5}
    Suppose $a \vee b$ and $b$ are convex. Then 
\[ \E^{\sM^{*}}[ a(Z_1) I_{ \{ \tau^*=1 \} } +  b(Z_2) I_{ \{ \tau^*=2 \} } ]   
=\int (a- \psi^{*})^+ d\mu+\int \psi^{*} d\nu, \]
where, depending on which Case (C1)-(C3) {arises from the laws $\mu$ and $\nu$} and payoffs $a\vee b$ and $b$, $(\sM^{*},\tau^*)$ and $\psi^*$ are chosen to be as in Sections \ref{sec:C1}, \ref{sec:C2} and  \ref{sec:C3}, respectively.

It follows that $(\sM^{*},\tau^*)$ gives the highest model-based price of the Bermudan option, and $\psi^{*}$ generates the cheapest superhedge.
There is no duality gap.
%It follows that $(\sM^{*},\tau^*)$ gives the highest model-based price of the Bermudan option, and $\psi^{*}$ generates the cheapest superhedge.There is no duality gap.
\end{theorem}

\begin{proof}
    It follows from Lemmas~\ref{lem:aMAXb} and \ref{lem:aMAXbdual}, and the fact that there is no duality gap for the problem with payoffs $(a \vee b, b$), that $\sP(\mu,\nu,a,b)=\sD(\mu,\nu,a,b)$. The fact that the stated models (together with associated stopping times) and the stated superhedging strategies are optimal, then also follows from their optimality in the case with payoffs $(a \vee b,b)$.
\end{proof}
\subsection{Stopping at time-0}

From a financial point of view, it is natural to allow immediate exercise for Bermudan (or American) options; of course the initial distribution of the underlying asset is trivial (i.e., $Z_0\sim\delta_{\bar\nu}$). On the other hand, in the earlier parts of the paper the set $\sT$ of admissible stopping rules did not include $t=0$. In this subsection, by allowing {the option holder} to stop immediately, but restricting the stopping rules to take values in $\{0,1\}$, we show how to recover the highest no-arbitrage price of the Bermudan option, together with the cheapest superhedging strategy. In particular, our goal is to solve
$$
\sup_{\sM \in M_0(\mu=\delta_{\bar\nu},\nu)} \sup_{\tau \in \sT_{0,1}} \E^{\sM}[ a(Z_0)I_{\{\tau=0\}}+b(Z_1)I_{\{\tau=1\}}],
$$
where $M_0(\mu=\delta_{\bar\nu},\nu)$ is the set of consistent models (i.e., filtered probability spaces $(\Omega, \sF, \bF, \QQ)$) supporting a stochastic process $Z=(Z_0,Z_1)$ such that $Z$ is a $(\bF,\QQ)$-martingale with given marginals $Z_0 \sim \mu =\delta_{\bar\nu}$ and $Z_1 \sim \nu$, and $\sT_{0,1}$ is the set of $\bF$-stopping times taking values in $\{0,1\}$. 
{We still assume that $b$ is convex and $\nu$ is continuous %with (strictly) positive density $\eta$ 
on $(\alpha_\nu,\beta_\nu)$.}

%As before, {\cblue we suppose symmetry of $\nu$ about zero} and hence that $\bar\nu=0$.

Note that, by considering candidate stopping rules $\tau_0=0$ and $\tau_1=1$, we have that
\begin{equation}
    \label{eq:t01}
\sup_{\sM \in M_0(\mu=\delta_{\bar\nu},\nu)} \sup_{\tau \in \sT_{0,1}} \E^{\sM}[ a(Z_0)I_{\{\tau=0\}}+b(Z_1)I_{\{\tau=1\}}]\geq a(\bar\nu)\vee\int b(y)\nu(dy).
\end{equation}
On the other hand, the right hand side is attained if we only consider models with (canonical) filtration generated by $Z$. Indeed, in this case $\sF_0$ is trivial (since $Z_0\sim\mu={\delta_{\bar\nu}}$), and thus $\sT_{0,1}=\{0,1\}$. We now show that, by considering a richer probabilistic structure (similarly as in Subsection \ref{sec:C2}), the inequality in \eqref{eq:t01} is strict.

{Suppose that $b$ is convex (and not linear). Further suppose that $b(\bar\nu)<a(\bar\nu)$ (otherwise, under any model, it is always optimal to stop at time-1) and $a(\bar{\nu})<\sup_{\beta_l < x < \bar\nu < y < \beta_r} \left\{ \frac{y-\bar\nu}{y-x} b(x) + \frac{\bar\nu - x}{y-x} b(y) \right\}$ (otherwise, under any model, it is always optimal to stop at time-0). Let $(f,g)$ with $f < \bar\nu < g$ solve
\begin{equation}
\label{eq:fgdef}    
\int_f^g (y - \bar\nu) \nu(dy) = 0;
\hspace{10mm}  \frac{b(g)-b(f)}{g-f} = \frac{a(\bar\nu) - b(f)}{\bar\nu-f}. 
\end{equation}
Set $\Lambda = \Lambda_{\nu,a,b} = \frac{b(g)-b(f)}{g-f}$. Then also $\Lambda = \frac{b(g) -a(\bar\nu)}{g-\bar\nu}$.

Let $L(x)= a(\bar\nu) + \Lambda(x- \bar\nu)$
and set $\psi=\max\{b,L\}$. Then by construction $\psi = L$ on $[f,g]$ and $\psi = b$ on $[\alpha_\nu,f] \cup [g,\beta_\nu]$ (note that $\psi(\bar\nu)=L(\bar\nu)=a(\bar\nu)$).
Further, $\psi$ is convex and $\psi\geq b$ on $\R$, and thus $\psi$ generates a superhedge with total cost
\begin{eqnarray*}
\int(a-\psi)^+d\delta_{\bar\nu}+\int\psi d\nu = \int\psi d\nu & = & \int_{(\alpha_\nu,f]\cup[g,\beta_\nu)}b(y)\nu(dy)+ \int_f^g L(y) \nu(dy)  \\
& =&\int_{(\alpha_\nu,f]\cup[g,\beta_\nu)}b(y)\nu(dy)+a(\bar\nu)\nu((f,g))
\end{eqnarray*}
where we use the first part of \eqref{eq:fgdef} to rewrite the final term.
Note that $\int\psi d\nu>a(\bar\nu)\vee\int bd\nu$.

Then, the optimal model is obtained by stopping a $\nu((f,g))$ amount of mass at time-0 at location $\bar\nu$ (this is achieved by working with additional uniform random variable $U$, and a stopping time $\tau$ such that $\{\tau=0\}=\{U\leq\nu((f,g))\}$), while the remaining $1-\nu((f,g))$ mass at $\bar\nu$ is mapped to $\nu\lvert_{\{(\alpha_\nu,f]\}\cup\{[g,\beta_\nu)\}}$. It follows that
\begin{align*}
\sup_{\sM \in M_0(\mu=\delta_{\bar\nu},\nu)} \sup_{\tau \in \sT_{0,1}} &\E^{\sM}[ a(Z_0)I_{\{\tau=0\}}+b(Z_1)I_{\{\tau=1\}}]\\&=\int_{(\alpha_\nu,f]\cup[g,\beta_\nu)}b(y)\nu(dy)+a(\bar\nu)\nu((f,g))> a(\bar\nu)\vee\int b(y)\nu(dy).
\end{align*}
}
Again, models with canonical filtration are not rich enough.
\begin{remark}
    In fact, the conclusions of this section hold for arbitrary (not necessarily continuous) $\nu$. Indeed, let $R,S:(0,1)\to \R$ be the supporting functions of the lifted left-curtain martingale coupling (see, for example, Hobson and Norgilas \cite{HobsonNorgilas:19x}). Then, $R$ (resp., $S$) is non-increasing (resp., non-decreasing), and for each $u\in(0,1)$ {there exists $g\in[S(u-),S(u+)]$, $f\in[R(u-),R(u+)]$, $\chi_g \in [0, \nu( \{g \})]$ and $\chi_f \in [0, \nu( \{f \})]$ such that
    $\int_{(f,g)} d\nu + \chi_g + \chi_f = u$ and 
    $\int_{(f,g)} y d\nu + g \chi_g + f \chi_f = u \bar{\nu}$;
    %and $g\in[S(u-),S(u+)]$, $f\in[R(u-),R(u+)]$, we have that $\int^g_fy\nu(dy)=u\bar\nu$, 
    this} is equivalent to the the first part of \eqref{eq:fgdef} (note that the `flat' sections and the jumps of either $S$ or $R$ (or both) correspond to the atoms and intervals of no-mass for $\nu$, respectively).
    
    For $x<y$ let $L_{x,y}$ be the line that goes through $(x,b(x))$ and $(y,b(y))$. Then, either $R(0+)<S(0+)$ and $L^1:=L_{R(0+),S(0+)}(\bar\nu)\geq a(\bar\nu)$, or there exists $u^*\in(0,1)$ and $f\in[R(u^*+),R( u^*-)]$, $g\in[S(u^*-),S(u^*+)]$ such that $L^2:=L_{f,g}(\bar\nu)=a(\bar\nu)$ (so that $(f,g)$ satisfies the second part of \eqref{eq:fgdef}).

    In the first case we take $\psi=\max\{b,L^1\}$. Then, {since $\psi \geq a(\bar{\nu})$,} the total superhedging cost is $\int\psi d\nu=\int bd\nu$ and thus any model $\sM$ (together with a stopping time $\tau=2$) is optimal.
    
    In the second case, $u^*\geq \nu((f,g))$ is such that, for some $0\leq\chi_f\leq\nu(\{f\})$ and $0\leq\chi_g\leq\nu(\{g\})$, we have $u^*\delta_{\bar\nu}\leq_{cx}\nu^*:=\nu\lvert_{(f,g)}+\chi_f\delta_f+\chi_g\delta_g$. By taking $\psi=\max\{b,L^2\}$ (note that $\psi(z)=b(z)=L^2(z)$ for $z\in\{f,g\}$) we obtain a superhedge with total cost $\int\psi d\nu=\int\psi d(\nu-\nu^*)+\int L^2d\nu^*=\int\psi d(\nu-\nu^*)+a(\bar\nu)u^*$. Then the optimal model is obtained by stopping an amount of mass $u^*=\nu^*([f,g])=\nu^*(\R)$ at time-0 at location $\bar\nu$ (again, this is achieved by working with an additional uniform random variable $U$, and a stopping time $\tau$ such that $\{\tau=0\}=\{U\leq u^*)\}$), while the remaining $(1-u^*)$ mass at $\bar\nu$ is mapped to $(\nu-\nu^*)$. Note that, of the $\nu$-mass at $f$ at time-1, only an amount $(\nu(\{f\})-\chi_f)$ is `exercised' at time-1---the {other $\chi_f$ amount was exercised at time zero}; and similarly for $g$.
\end{remark}

\bibliographystyle{plainnat}

\appendix
\section{Proofs}\label{appendix}
\subsection*{Proof of Theorem \ref{thm1}}
{\em Case 1: 4-point construction.}
Write $\Psi_4$ as shorthand for $\Psi^{y_-,x_-,x_+,y_+}_{4,a,b}$.
%Since, by assumption, $\Psi_4$ is convex, it is continuous at $y_-$ and $y_+$, and thus  $\Psi_4(w)=b(w)$ for $w\in\{y_-,y_+\}$. 
{Since $(y_-,x_-,x_+,y_+) \in \sL^4_{a,b}$,}
on $[y_-,y_+]$, $\Psi_4=T^{y_-,y_+}_{b,b}$ is the straight-line passing through $(y_-,b(y_-))$, $(x_-,a(x_-))$, $(x_+,a(x_+))$ and $(y_+,b(y_+))$, {at least if $a$ is continuous at $x_-$ and $x_+$}. Then, by the convexity of $b$, we have that $\Psi_4\geq b$ on $[y_-,y_+]$ and thus $\Psi_4\geq b$ on $\R$. It follows that $\Psi_4\in\tilde\sS(b)$. Also note that, by the convexity of $a$, we have that $\Psi_4\geq a$ on $[x_-,x_+]$ and $\Psi^{y_-,x_-,x_+,y_+}\leq a$ on $\R\setminus[x_-,x_+]$. To ease the notation, set $\psi=\Psi_4=\Psi^{y_-,x_-,x_+,y_+}_{4,a,b}$, $\phi=(a-\psi)^+$ and $T=T^{y_-,y_+}_{b,b}$. Then
\begin{eqnarray*}
\lefteqn{\int_\R\phi(u)\mu(du)+\int_\R\psi(v)\nu(dv)} \\
&=&\int_{(\alpha_\mu,y_-)\cup(y_+,\beta_\mu)}(a(u)-b(u))\mu(du)+\int_{(y_-,x_-)\cup(x_+,y_+)}(a(u)-T(u))\mu(du)\\
&& \quad +\int_{(\alpha_\nu,y_-)\cup(y_+,\beta_\nu)}b(v)\nu(dv)+\int_{(y_-,y_+)}T(v)\nu(dv)\\
&=&\int_{(\alpha_\mu,x_-)\cup(x_+,\beta_\mu)}a(u)\mu(du)+\int_{(\alpha_\nu,y_-)\cup(y_+,\beta_\nu)}b(v)(\nu-\mu)(dv)\\
&& \quad +\left\{\int_{(y_-,y_+)}T(v)\nu(dv)-\int_{(y_-,x_-)\cup(x_+,y_+)}T(u)\mu(du)\right\}.
\end{eqnarray*}
Now, by assumption, $\Pi_M^{y_-,x_-,x_+,y_+}(\mu,\nu)\neq\emptyset$. Then, by Lemma \ref{lem:GammaPsi4}, $(y_-,x_-,x_+,y_+)\in\Gamma^{\mu,\nu}_4$. Hence (see the definition of $\Gamma^{\mu,\nu}_4$ given in \eqref{eq:Gamma4def}) we must have that $\mu\leq\nu$ on $(-\alpha_\nu,y_-)\cup(y_+,\beta_\nu)$, so that, on $(-\alpha_\nu,y_-)\cup(y_+,\beta_\nu)$, $(\nu-\mu)$ is a well-defined (non-negative) measure. 

We also argue that the final bracketed term vanishes. Indeed, since $T$ is a line, it is enough to show that  $\mu|_{(y_-,x_-) \cup(x_+,y_+)}$ and  $\nu|_{(y_-,y_+)}$ have the same total mass and mean.  But this is immediate from the fact that $\mu|_{(y_-,x_-) \cup(x_+,y_+)} \leq_{cx} \nu|_{(y_-,y_+)}$ (here we again use that $(y_-,x_-,x_+,y_+)\in\Gamma^{\mu,\nu}_4$ together with \eqref{eq:Gamma4def}).

Finally, note that, under each $\pi\in\Pi_M^{y_-,x_-,x_+,y_+}(\mu,\nu)$, the mass in $(x_-,x_+)$ is mapped to $(-\alpha_\nu,y_-)\cup(y_+,\beta_\nu)$ and $\mu\lvert_{(x_-,x_+)}\leq_{cx} (\nu-\mu)\lvert_{(-\alpha_\nu,y_-)\cup(y_+,\beta_\nu)}$; see Lemma \ref{lem:musthave4} and \eqref{eq:4points3}. In particular, $(\nu - \mu)(dv)\left[I_{ \{ v \geq y_+ \} } +I_{ \{ v \leq y_- \}} \right] = \int_{w\in(x_-,x_+)} \pi_w(dv) \mu(dw) $. It follows that
\begin{eqnarray*}
\lefteqn{\int_\R\phi(u)\mu(du)+\int_\R\psi(v)\nu(dv)} \\
&=&\int_{(\alpha_\mu,x_-)\cup(x_+,\beta_\mu)}a(u)\mu(du)+\int_{(\alpha_\nu,y_-)\cup(y_+,\beta_\nu)}b(v)(\nu-\mu)(dv)\\
& = & \E^{\sM_\pi}[a(Z_1) I_{ \{ \tau^{x_-,x_+}=1 \} } + b(Z_2) I_{ \{ \tau^{x_-,x_+} = 2 \} }],
\end{eqnarray*}
which finishes the proof in this case.

{\em Case 2: 5-point construction.}
Since, by assumption, $\Psi_5 := \Psi_{5,a,b}^{y_-,x_-,z,x_+,y_+}$ is convex, it is continuous at each $w\in\{y_-,x_-,z,x_+,y_+\}$, and therefore $\Psi_5(y_-) %^{y_-,x_-,z,x_+,y_+}
=b(y_-)=T^{y_-,x_-}_{b,a}(y_-)$, $\Psi_5%^{y_-,x_-,z,x_+,y_+}
(z)=T^{y_-,x_-}_{b,a}(z)=T^{x_+,y_+}_{a,b}(z)$ and $\Psi_5(y_+)%^{y_-,x_-,z,x_+,y_+}
=b(y_+)=T^{x_+,y_+}_{a,b}(y_+)$. Then, since $a$ is convex and (again by assumption) $\Psi_5%^{y_-,x_-,z,x_+,y_+}
(z) \geq a(z)$, we have that $\Psi_5%^{y_-,x_-,z,x_+,y_+}
\geq a\geq b$ on $(x_-,x_+)$ and $\Psi_5\leq a$ on $\R\setminus(x_-,x_+)$. By applying a similar argument (together with the convexity of $b$) we also have that $\Psi_5%^{y_-,x_-,z,x_+,y_+}
\geq b$ on $(y_-,y_+)$ (recall that $\Psi_5=b$ on $\R\setminus(y_-,y_+)$). It follows that $\Psi_5%^{y_-,x_-,z,x_+,y_+}
\in\tilde\sS(b)$. To ease the notation, set $\psi=\Psi_5 = \Psi_{5,a,b}^{y_-,x_-,z,x_+,y_+}$, $\phi=(a-\psi)^+$, $T_-=T^{y_-,x_-}_{b,a}$ and $T_+=T^{x_+,y_+}_{a,b}$. Then
		\begin{eqnarray*}
		\int_\R\phi(w)\mu(dw)&= &\int_{\alpha_\mu}^{y_-}(a(w)-b(w))\mu(dw)+\int_{y_-}^{x_-}(a(w)-T_-(w))\mu(dw)\\
		&& \hspace{5mm} +\int_{x_+}^{ y_+}(a(w)-T_{+}(w))\mu(dw)+\int_{y_+}^{\beta_\mu}(a(w)-b(w))\mu(dw)
		\end{eqnarray*}
		and
	%		\begin{eqnarray*}
\[		\int_\R\psi(v)\nu(dv)=\int_{\alpha_\nu}^{y_-}b(v)\nu(dv)+\int_{y_-}^{z} T_{-}(v)\nu(dv)
%\\&& \hspace{10mm} 
+\int_{z}^{ y_+}T_{+}(v)\nu(dv)+\int_{y_+}^{\beta_\nu} b(v)\nu(dv).
\] %		\end{eqnarray*}
		It follows that
\begin{eqnarray*}
\lefteqn{\int_\R\phi(w)\mu(dw)+\int_\R\psi(v)\nu(dv) }  \\
	&=&\int_{(\alpha_\mu,x_-)\cup(x_+,\beta_\mu)}a(w)\mu(dw)+\int_{(\alpha_\nu,y_-)\cup( y_+,\beta_\nu)}b(v)(\nu-\mu)(dv)\\
		&& \hspace{20mm}+\left\{\int_{y_-}^{z}  T_{-}(v)\nu(dv)-\int_{y_-}^{x_-}T_{-}(w)\mu(dw)\right\} \\
		&& \hspace{20mm} 
        + \left\{\int_{z}^{ y_+} T_+(v)\nu(dv)-\int_{x_+}^{y_+} T_{+}(w)\mu(dw)\right\}.
		\end{eqnarray*}
		However, similarly as in Case 1, the bracketed terms in the last two lines vanish, whilst $(\nu-\mu)$ is a well-defined (non-negative) measure on $\R\setminus(y_-,y_+)$. For the latter, note that, since $\Pi_M^{y_-,x_-,z,x_+,y_+}(\mu,\nu)\neq\emptyset$, by Lemma \ref{lem:GammaPsi5} we have that $(y_-,x_-,z,x_+,y_+)\in\Gamma^{\mu,\nu}_5$. Then by \eqref{eq:Gamma5def} it follows that $\mu\leq\nu$ on $(-\alpha_\nu,y_-)\cup(y_+,\beta_\nu)$ (see also \eqref{eq:5points1}). For the former, again using that $(y_-,x_-,z,x_+,y_+)\in\Gamma^{\mu,\nu}_5$ (together with \eqref{eq:Gamma5def}), we have both $\mu|_{(y_-,x_-)}\leq_{cx} \nu|_{(y_-,z)}$ and  $\mu|_{(x_+,y_+)}\leq_{cx}\nu|_{(z,y_+)}$, and thus the bracketed terms vanish due to the linearity of $T_-$ and $T_+$. Also note that, under each $\pi\in\Pi_M^{y_-,x_-,z,x_+,y_+}(\mu,\nu)$, the mass in $(x_-,x_+)$ is mapped to $(-\alpha_\nu,y_-)\cup(y_+,\beta_\nu)$ and $\mu\lvert_{(x_-,x_+)}\leq_{cx} (\nu-\mu)\lvert_{(-\alpha_\nu,y_-)\cup(y_+,\beta_\nu)}$; see \eqref{eq:5points4} and Lemma \ref{lem:musthave5}. In particular, $(\nu - \mu)(dv)\left[I_{ \{ v \geq y_+ \} } +I_{ \{ v \leq y_- \}} \right] = \int_{w\in(x_-,x_+)} \pi_w(dv) \mu(dw) $.
        
        Finally, fix $\pi\in\Pi^{y_-,x_-,z,x_+,y_+}_M(\mu,\nu)$ and let $\sM_\pi\in M^{can}(\mu,\nu)$ be the corresponding canonical model. Also, recall the definition of the candidate stopping time $\tau^{x_-,x_+}\in\sT^{can}_{1,2}$, see \eqref{eq:tau4point}. Then
\begin{eqnarray*}
\lefteqn{\int_\R\phi(w)\mu(dw)+\int_\R\psi(v)\nu(dv)}\\ 
& = & \int_{w \in (\alpha_\mu,x_-) \cup (x_+,\beta_\mu)} a(w) \pi_w(dv) \mu(dw)  +
\int_{w \in (x_-,x_+)} \int_{v\in\R} b(v) \pi_w(dv) \mu(dw)  \\
& = & \E^{\sM_\pi}[a(Z_1) I_{ \{ \tau^{x_-,x_+}=1 \} } + b(Z_2) I_{ \{ \tau^{x_-,x_+} = 2 \} }],
\end{eqnarray*}
which finishes the proof in this case.
%\end{proof}
\subsection*{Proof of Lemma \ref{lem:couplingsDispersionAssumption}}

      To prove the claim, one needs to construct the functions $(f^L,g^L)$ and $(f^R,g^R)$ under the Dispersion Assumption, and then show that they satisfy the required properties. See, for example, Henry-Labord\`ere and Touzi \cite[Section 3.4]{HenryLabordereTouzi:16} where the authors construct the supporting functions by solving a system of coupled ODEs. In particular, under the Dispersion Assumption, define $f^L(x)=x=g^L(x)$ on $(\alpha_\mu,e^L]$, while on $(e^L,\beta_\mu)$ let $(f^L,g^L)$ be the solutions to \eqref{eq:fgode}. 
    
    Alternatively, a more general construction via potential functions is provided by Hobson and Norgilas \cite{HobsonNorgilas:22}. Fix $x\in(\alpha_\mu,\beta_\mu)$.  By Theorem \ref{thm:curtain}, the left-curtain coupling embeds $\mu_{\alpha_\mu}^x$  into $\nu$ via the shadow measure $S^\nu(\mu_{\alpha_\mu}^x)$. Then the main insight of Hobson and Norgilas \cite{HobsonNorgilas:22} is that we can recover the locations $(f^L(x),g^L(x))$ from the potential function $P_{S^\nu(\mu_{\alpha_\mu}^x)}$. In particular, by Proposition \ref{prop:potential} we have that $P_{\nu-S^\nu(\mu_{\alpha_\mu}^x)}=(P_\nu-P_{\mu_{\alpha_\mu}^x})^c$. In the case $(P_\nu-P_{\mu_{\alpha_\mu}^x})^c(x)=(P_\nu-P_{\mu_{\alpha_\mu}^x})(x)$ one sets $f^L(x)=x=g^L(x)$. On the other hand, if $(P_\nu-P_{\mu_{\alpha_\mu}^x})^c(x)<(P_\nu-P_{\mu_{\alpha_\mu}^x})(x)$, then $(P_\nu-P_{\mu_{\alpha_\mu}^x})^c$ is linear and satisfies $(P_\nu-P_{\mu_{\alpha_\mu}^x})^c<(P_\nu-P_{\mu_{\alpha_\mu}^x})$ on some open interval $I_x\ni x$. In this case $(f^L(x),g^L(x))=I_x$, i.e., the locations of the supporting functions coincide with the end-points of $I_x$. 

    Under the Dispersion Assumption, we have that $\mu\leq\nu$ (and thus also $\mu_{\alpha_\mu}^x\leq\mu\leq\nu$ for $x\leq e^L$) on $(\alpha_\mu,e^L]$. Then, for $x\in(\alpha_\mu,e^L]$, $(P_\nu-P_{\mu_{\alpha_\mu}^x})$ is convex and therefore $(P_\nu-P_{\mu_{\alpha_\mu}^x})^c=(P_\nu-P_{\mu_{\alpha_\mu}^x})$. It follows that $f^L(x)=x=g^L(x)$. On the other hand, for $x\in(e^L,\beta_\mu)$, we have that $(P_\nu-P_{\mu_{\alpha_\mu}^x})^c(x)<(P_\nu-P_{\mu_{\alpha_\mu}^x})(x)$ (this follows from the fact that $(P_\nu-P_{\mu_{\alpha_\mu}^x})$ is strictly convex and increasing on $(\alpha_\mu,e^L]$, strictly concave on $(e^L,e^R\wedge x)$ and again strictly convex on $(e^R\wedge x,\beta_\mu)$; note that $(P_\nu-P_{\mu_x})$ is strictly positive and continuous on $(\alpha_\nu,\beta_\nu)$). In this case $(f^L(x),g^L(x))=I_x$.  One can show that $f^L,g^L$ are indeed continuous and strictly monotone (since the densities $\rho,\eta$ are strictly positive on the respective intervals). The case of the right-curtain martingale coupling can be proved using symmetric arguments.
\subsection*{Proof of Lemma \ref{lem:new:FormOfShadow}}
Since (by assumption) $y<\beta_\mu$, we have that either $x<(g^L)^{-1}(\beta_\mu)$ or $(g^L)^{-1}(\beta_\mu)\leq x<y<\beta_\mu =B(x)$. Further note that $y>B(x)$ (resp., $y=B(x)$) is equivalent to $g^L(x)< f^R(y)$ (resp., $g^L(x)=f^R(y)$).

We first prove Case (i).  Note that by Corollary \ref{cor:LCandRC} we have that $S^\nu(\mu_{\alpha_\mu}^x)=\mu_{\alpha_\mu}^{f^L(x)} + \nu_{f^L(x)}^{g^L(x)}$ and $S^{\nu}(\mu_y^{\beta_\mu})= \nu_{f^R(y)}^{g^R(y)} + \mu_{g^R(y)}^{\beta_\nu}$. On the other hand, by the associativity of the shadow measure (recall  Proposition~\ref{prop:shadow_assoc}),  we have that $S^\nu(\bar\mu^y_x)=S^\nu(\mu_{\alpha_\mu}^x+\mu_y^{\beta_\mu})=S^\nu(\mu_{\alpha_\mu}^x)+S^{\nu-S^\nu(\mu_{\alpha_\mu}^x)}(\mu_y^{\beta_\mu})$, and thus we are left to prove that $S^{\nu-S^\nu(\mu_{\alpha_\mu}^x)}(\mu_y^{\beta_\mu})=S^{\nu}(\mu_y^{\beta_\mu})$. For this, observe that, since $g^L(x)\leq f^R(y)$, the supports of the shadow measures $S^\nu(\mu_{\alpha_\mu}^x)$ and $S^\nu(\mu_y^{\beta_\mu})$ are disjoint, and therefore $S^{\nu}(\mu_y^{\beta_\mu})\leq (\nu-S^\nu(\mu_{\alpha_\mu}^x))$. Then, by applying Lemma \ref{lem:specialMeasures1} with $\xi=\mu_y^{\beta_\mu}$, $\eta=(\nu-S^\nu(\mu_{\alpha_\mu}^x))$ and $\chi=\nu$, we obtain $S^{\nu-S^\nu(\mu_{\alpha_\mu}^x)}(\mu_y^{\beta_\mu})=S^{\nu}(\mu_y^{\beta_\mu})$, as claimed. Finally, $\mu^y_x\leq_{cx}(\nu-\mu)^{f^L(x)}_{\alpha_\nu}+\nu_{g^L(x)}^{f^R(y)}+(\nu-\mu)_{g^R(y)}^{\beta_\nu}$ is a direct consequence of the associativity of the shadow measure (Proposition~\ref{prop:shadow_assoc}), since $\mu^y_x\leq_{cx}(\nu-S^\nu(\bar\mu^y_x)$. This completes the proof in Case (i).

Note that, in Case (i) and with $y=B(x)$ (i.e., $g^L(x)=f^R(y)$), we have that $S^\nu(\bar{\mu}_x^y) = \mu_{\alpha_\mu}^{f^L(x)} + \nu_{f^L(x)}^{g^R(y)} + \mu_{g^R(y)}^{\beta_\nu}$. Then by taking $w=f^L(x)$ and $z=g^R(y)$, we immediately obtain the result in Case (ii) with $y=B(x)$.

We now prove Case (ii) for $y<B(x)$; note that in this case $x$ can take values in $(\alpha_\mu,\beta_\mu)$. First, by the associativity of the shadow measure we have that $S^\nu(\bar\mu^y_x)=S^\nu\left(\bar\mu_x^{B(x)}+\mu_y^{B(x)}\right)=S^\nu(\bar\mu_x^{B(x)})+S^{\nu-S^\nu(\bar\mu_x^{B(x)})}(\mu_y^{B(x)})$. Now, by the first part of the proof, $S^\nu(\bar\mu_x^{B(x)})=\mu_{\alpha_\mu}^{f^L(x)}+\nu_{f^L(x)}^{g^R(B(x))}+\mu_{g^R(B(x))}^{\beta_\mu}$ and $S^{\nu-S^\nu(\bar\mu_x^{B(x)})}(\mu_y^{B(x)})=S^{(\nu-\mu)_{\alpha_\nu}^{f^L(x)}+(\nu-\mu)_{g^R(B(x))}^{\beta_\nu}}(\mu_y^{B(x)})$. Note that $f^L(x)\leq x<y<B(x)\leq g^R(B(x))$, and therefore $(\nu-\mu)_{\alpha_\nu}^{f^L(x)}+(\nu-\mu)_{g^R(B(x))}^{\beta_\nu}$ and $\mu_y^{B(x)}$ have disjoint supports. Then by Lemma \ref{lem:specialMeasures2}, $S^{\nu-S^\nu(\bar\mu_x^{B(x)})}(\mu_y^{B(x)})=(\nu-\mu)_{w}^{f^L(x)}+(\nu-\mu)_{g^R(B(x))}^{z}$for some $w<f^L(x)\leq e^L<e^R\leq g^R(B(x))<z$ (uniqueness follows from the fact that, due to the Dispersion Assumption, $(\nu-\mu)_{\alpha_\nu}^{f^L(x)}+(\nu-\mu)_{g^R(B(x))}^{\beta_\nu}$ admits density that is strictly positive on $(\alpha_\nu,f^L(x))\cup(g^R(y),\beta_\nu)$). It follows that $S^\nu(\bar\mu^y_x)=\mu_{\alpha_\mu}^w + \nu_w^z + \mu_{z}^{\beta_\nu}$ and (again by the associativity of the shadow measure) $\mu^y_x\leq_{cx}(\nu-\mu)^{w}_{\alpha_\nu}+(\nu-\mu)_{z}^{\beta_\nu}$, which finishes the proof.

\subsection*{Proof of Lemma \ref{lem:Ccont}}
We only prove the claim regarding $C^L$ (the case for $C^R$ uses symmetric arguments). By definition, for each $x\in(e^L,\beta_\mu)$, $C^L( g^L(x)) = a(x) + \frac{g^L(x) - x}{x-f^L(x)} (a(x) - b(f^L(x)))$, and the continuity of $C^L$ on $(e^L,\beta_\nu)$ follows from the continuity of $f^L,g^L,a,b$ on $(e^L,\beta_\mu)$.

Since $a\geq b$ on $(\alpha_\mu,\beta_\mu)$ and $b$ is convex (and finite on $(\alpha_\nu,\beta_\nu)\supseteq(\alpha_\mu,\beta_\mu)$), we have that for all $x\in(e^L,\beta_\mu)$,
$$
C^L(g^L(x))\geq a(x)+(g^L(x)-x)\frac{b(x)-b(f^L(x))}{x-f^L(x)}\geq a(x)+(g^L(x)-x)b'(f^L(x));
$$ 
recall that $b'(x)$ denotes the right derivative of $b$ at $x$. 
Choose $x_1 \in (e^L,\beta_\mu)$; then for $x<x_1$, $f^L(x)>f^L(x_1)$ and $b'(f^L(x)) \geq b'(f^L(x_1))$.
Furthermore, $b$ is finite on $(\alpha_\nu,\beta_\nu)$, and therefore $b'(x)\in\R$ for all $x\in(\alpha_\nu,\beta_\nu)$. Then, since $e^L\in
%(\alpha_\mu,\beta_\mu)\subseteq
(\alpha_\nu,\beta_\nu)$, 
$$
\liminf_{z\downarrow e^L}C^L(z)=\liminf_{x\downarrow e^L}C^L(g^L(x))\geq [\liminf_{x\downarrow e^L} a(x)]+b'(f^L(x_1))[\liminf_{x\downarrow e^L}(g^L(x)-x)]=a(e^L),
$$
which finishes the proof.
\subsection*{Proof of Lemma \ref{lem:a=betc}}
We only prove the first statement. Suppose that $a(e^L) > b(e^L)$. Then, 
     since $C^L( g^L(x)) = a(x) + \frac{g^L(x) - x}{x-f^L(x)} (a(x) - b(f^L(x)))$, by the continuity of $f^L$ at $e^L$, there exists $\epsilon>0$ such that for $e^L<x<e^L+\epsilon$, $a(x) > b(f^L(x))$. Then $C^L( g^L(x)) > a(x)$ on $(e^L,e^L+\epsilon)$ and $\hat{z}^L>e^L+\epsilon$. 
     It follows that if $\hat z^L=e^L$ then $a(e^L)=b(e^L)$.

Now suppose that $a(e^L) = b(e^L)$. If $c$ is a convex function and if $w>v>u$ then $c(w) \geq c(v) + (w - v) \frac{c(v)-c(u)}{v-u}$. Thus, defining $c$ by $c(w) = b(w)I_{ \{ w<e^L \}} + a(w) I_{ \{ w \geq e^L \} }$ and fixing  $x\in(e^L,(g^L)^{-1}(e^R)]$ (so that $g^L(x)\in(e^L,e^R]$), we have that $a(g^L(x)) {\geq} a(x) + (g^L(x)-x) \frac{a(x)-b(f^L(x))}{x - f^L(x)} = C^L(g^L(x))$. Hence $\hat{z}^L \leq g^L(x)$. Since $x\in(e^L,(g^L)^{-1}(e^R)]$ is arbitrary and $\lim_{x \downarrow e^L}g^L(x) = e^L$ we have that $\hat{z}^L=e^L$.

The final statement then follows from the fact that $f^L$ and $g^L$ are continuous and $g^L(e^L)=e^L$.
\subsection*{Proof of Lemma \ref{lem:PropertiesofsA}}
It follows from Lemma~\ref{lem:new:FormOfShadow}(ii) that, for each $(x,y) \in \sA^<$, $S^\nu(\bar{\mu}_x^y) = \mu_{\alpha_\mu}^{w} + \nu_w^z + \mu_z^{\beta_\mu}$ for some (unique) $w \leq e^L\wedge x < e^R\vee y \leq z$.  We set $w=w(x,y)$ and $z=z(x,y)$.

Next we prove the continuity, monotonicity and limit properties of $w$ and $z$ in $x$ and $y$. We fix $x$ and consider varying $y \in \sA^<_x$. The argument for fixed $y$ and varying $x\in\sA^<_y$ is similar.

Suppose $x <y< y' \leq B(x)\wedge\beta_\mu$. Then by the associativity property of the shadow measure (Proposition~\ref{prop:shadow_assoc})
\[  S^\nu( \bar{\mu}_x^y) = S^\nu( \bar{\mu}_x^{y'}) + S^{\nu - S^\nu(\bar{\mu}_x^{y'})}(\mu_y^{y'}). \]
Moreover, by Lemma~\ref{lem:new:FormOfShadow}(ii),
\[ S^\nu( \bar{\mu}_x^y) = \mu_{\alpha_\mu}^{w(x,y)} + \nu_{w(x,y)}^{z(x,y)}  + \mu_{z(x,y)}^{\beta_\mu} \quad\textrm{and} \quad S^\nu( \bar{\mu}_x^{y'}) = \mu_{\alpha_\mu}^{w(x,y')} + \nu_{w(x,y')}^{z(x,y')}  + \mu_{z(x,y')}^{\beta_\mu}; \]
where both $w(x,y) \leq e^L\wedge x<e^R\vee y \leq z(x,y)$ and $w(x,y') \leq e^L\wedge x<e^R\vee y' \leq z(x,y')$.
It follows that $\nu - S^\nu(\bar{\mu}_x^{y'})=(\nu-\mu)_{\alpha_\nu}^{w(x,y')}+(\nu-\mu)_{z(x,y')}^{\beta_\nu}$ and $\mu^{y'}_y$ have disjoint supports, and thus, by Lemma \ref{lem:specialMeasures2}, $S^{\nu - S^\nu(\bar{\mu}_x^{y'})}(\mu_y^{y'})=(\nu-\mu)^{w(x,y')}_{\hat w}+(\nu-\mu)^{\hat z}_{z(x,y')}$ for some $\hat w < w(x,y')$ and $\hat z>z(x,y')$. Hence
$$
\mu_{\alpha_\mu}^{w(x,y)} + \nu_{w(x,y)}^{z(x,y)}  + \mu_{z(x,y)}^{\beta_\mu}=S^\nu( \bar{\mu}_x^y)=S^\nu( \bar{\mu}_x^{y'}) + S^{\nu - S^\nu(\bar{\mu}_x^{y'})}(\mu_y^{y'})=\mu_{\alpha_\mu}^{\hat w} + \nu_{\hat w}^{\hat z}  + \mu_{\hat z}^{\beta_\mu},
$$
and therefore $w(x,y)=\hat w < w(x,y')$ and $z(x,y)=\hat z>z(x,y')$, as required.

\begin{comment}
{\cblue from $S^\nu( \bar{\mu}_x^y) \geq S^\nu( \bar{\mu}_x^{y'})$ that
\[ \mu_{\alpha_\mu}^{w(x,y)} + \nu_{w(x,y)}^{z(x,y)}  + \mu_{z(x,y)}^{\beta_\mu} \geq \mu_{\alpha_\mu}^{w(x,y')} + \nu_{w(x,y')}^{z(x,y')}  + \mu_{z(x,y')}^{\beta_\mu} .\]
Then $\nu_{w(x,y)}^{w(x,y')} + \nu_{z(x,y')}^{z(x,y)} \geq \mu_{w(x,y)}^{w(x,y')} + \mu_{z(x,y')}^{z(x,y)}$ (where for example, $\nu_u^v = - \nu_v^u$ when $v<u$). Then,  since $\mu < \nu$ on $(\alpha_\nu,e^L) \cup (e^R,\beta_\nu)$, it follows that $w(x,y') \leq w(x,y)$ and $z(x,y) \leq z(x,y')$.
Since $S^{\nu - S^\nu(\bar{\mu}_x^y)} (\mu_y^{y'})\neq 0$ we must have that at least one of these inequalities is strict. But then both must be strict -- for example, if $w(x,y)=w(x,y')$ then compare their supports we see that $\mu_y^{y'}$ cannot be in convex order with
$(\nu-\mu)_{z(x,y')}^{z(x,y)}$ -- a contradiction. }
\end{comment}

For the right-continuity, fix $y$ and consider the limit $y' \downarrow y$. We know that
\[  S^{\nu - S^\nu(\bar{\mu}_x^y)}(\mu_y^{y'}) = (\nu-\mu)_{w(x,y)}^{w(x,y')} + (\nu-\mu)_{z(x,y')}^{z(x,y)}. \]
As $y' \downarrow y$ we see that $S^{\nu - S^\nu(\bar{\mu}_x^y)}(\mu_y^{y'})(\R)=\mu_y^{y'}(\R) \downarrow 0$ and hence $(\nu-\mu)_{w(x,y)}^{w(x,y')}(\R) \downarrow 0$ and $(\nu-\mu)_{z(x,y')}^{z(x,y)}(\R) \downarrow 0$. Since, by the Dispersion Assumption (recall Definition \ref{def:dispersion}), $(\nu-\mu)$ admits a strictly positive density on $(\alpha_\nu,e^L)\cup(e^R,\beta_\nu)$, we find that $w(x,y') \downarrow w(x,y)$ and $z(x,y') \uparrow z(x,y)$.
For the left-continuity we can let $y \uparrow y'$ and apply a similar argument. Hence $w(x, \cdot)$ and $z(x,\cdot)$ are continuous.

Finally, still fixing $x$, note that if $y \downarrow x$ then $(\nu-S^\nu(\bar\mu_x^y))(\R)={\mu}_x^y(\R)\downarrow0$. But $(\nu-S^\nu(\bar\mu_x^y))=(\nu-\mu)_{\alpha_\nu}^{w(x,y)} + (\nu-\mu)_{z(x,y)}^{\beta_\nu}$, and therefore $ (\nu-\mu)_{\alpha_\nu}^{w(x,y)}(\R)\downarrow0$ and  $(\nu-\mu)_{z(x,y)}^{\beta_\nu}(\R)\downarrow 0$. Then, again using that $(\nu-\mu)$ admits a strictly positive density on $(\alpha_\nu,e^L)\cup(e^R,\beta_\nu)$, we get that $w(x,y) \downarrow \beta_\nu$ and $z(x,y) \uparrow \beta_\nu$. 

\subsection*{Proof of Proposition \ref{prop:Lambday}}
Fix $(x,y')\in\sA^<$ such that $\Lambda(x,y')=0$, and note that $T_{a,a}^{x,y'}$ is well-defined and $T_{a,a}^{x,y'}(z(x,y'))\leq b(z(x,y'))$. Furthermore, the convexity of $a$ ensures that $a\leq T_{a,a}^{x,y'}$ on $[x,y']$ and $a\geq T_{a,a}^{x,y'}$ on $\R\setminus [x,y']$. Similarly, by the convexity of $a$ and $b$, and the fact that $a\geq b$, we have that $b\geq T^{a,a}_{x,y'}$ on $(z(x,y'),\infty)$; also, if $\Lambda(x,y')=0$, then $b\leq T^{a,a}_{x,y'}$ on $[x,z(x,y')]$.

We first prove the monotonicity of $y\to\Lambda(x,y)$ on $(x,y')$. Since $b$ is convex, $a\geq b$, and $T_{a,a}^{x,y'}(z(x,y'))\leq b(z(x,y'))$, we have that $b\geq T_{a,a}^{x,y'}$ and (more importantly) $(T_{a,a}^{x,y'}-b)$ is {non-increasing} on $(z(x,y'),\infty)$. Then since $z(x,\cdot)$ non-increasing on $(x,y')$ (recall Lemma \ref{lem:PropertiesofsA}) it follows that $y\mapsto\Lambda(x,y)=T_{a,a}^{x,y'}(z(x,y))-b(z(x,y))$ is non-decreasing on $(x,y')$. (Here, in the case $y'=\beta_\mu$, we use $T_{a,a}^{x,y'}(z)=a(\beta_\mu-)+\frac{z-\beta_\mu}{\beta_\mu-x}(a(\beta_\mu-)-a(x))$ and $\hat\Lambda(x,\beta_\mu,z)=T_{a,a}^{x,y'}(z)-b(z)$ as in \eqref{eq:Lambdabeta}.) 

We now suppose that $\Lambda(x,y')=0$. We will only prove that $\Lambda(x,y)>0$  for $y \in (y',B(x){]}$; that $\Lambda(x,y)<0$ for $y<y'$ can be proved in an identical fashion. 

Assume that $y'<B(x)$, else there is nothing to prove.
Fix $y\in(y',B(x)]$ and note that $x<z(x,y)<z(x,y')$. Then $a(y)\geq T_{a,a}^{x,y'}(y)$ (from which it follows that $T_{a,a}^{x,y}\geq T_{a,a}^{x,y'}$ on $[x,\infty)$) and $b(z(x,y))\leq T_{a,a}^{x,y'}(z(x,y))$. It follows that $\Lambda(x,y)=T_{a,a}^{x,y}(z(x,y))-b(z(x,y))\geq 0$. It is left to prove that $\Lambda(x,y)=0$ cannot happen.

Suppose that $\Lambda(x,y)=0$. Then we must have that $T_{a,a}^{x,y}(z(x,y))=T_{a,a}^{x,y'}(z(x,y))=b(z(x,y))$, and therefore $T_{a,a}^{x,y}=T^{a,a}_{x,y'}$. However, due to the convexity of $a$ (resp., convexity of $b$ and the fact that $a\geq b$) this is only possible if $a=T_{a,a}^{x,y'}$ on $[x,y]$ (resp., $b=T_{a,a}^{x,y'}$ on $[x,z(x,y')]$). It follows that $a=T_{a,a}^{x,z'}=b$ on $(x,y)\cap(x,z(x,y'))\neq\emptyset$, which contradicts Standing Assumption~\ref{sass:payoffs}. We conclude that $\Lambda(x,y)>0$.

\subsection*{Proof of Proposition \ref{prop:LambdaxB}}
For $\bar{x} \leq x \leq (g^L)^{-1}(\beta_\mu)$ we have $z(x,B(x))=B(x)$ and $\Lambda(x,B(x)) = \hat{\Lambda}(x,B(x),B(x))= a(B(x)) - b(B(x)) \geq 0$. In order to complete the proof it is sufficient to consider $x \in [x_0, \bar{x})$.

 Observe that, for $x\in[x_0,\bar x)$ we have that $x<B(x)<e^R<z(x,B(x))$, and therefore both $T_{a,a}^{x,B(x)}$ and $T_{a,b}^{B(x),z(x,B(x))}$ are well-defined. Then $\Lambda(x,B(x))=0$ if and only if $T_{a,a}^{x,B(x)}=T_{a,b}^{B(x),z(x,B(x))}$ (or equivalently, $\tilde S_{a,a}^{x,B(x)}=\tilde S_{a,b}^{B(x),z(x,B(x))}$), and $\Lambda(x,B(x))>0$ if and only if $T_{a,a}^{x,B(x)}>T_{a,b}^{B(x),z(x,B(x))}$ on $[B(x),\infty)$ (or equivalently, $\tilde S_{a,a}^{x,B(x)}>\tilde S_{a,b}^{B(x),z(x,B(x))}$).

By construction, since we are in either Case $(a)$ or Case $(b)$, we have that $\tilde S_{a,a}^{x_0,y_0}\geq\tilde S_{a,b}^{y_0,z(x_0,y_0)}$, and thus $\Lambda(x_0,y_0=B(x_0))\geq0$.  We are left to prove that $\Lambda(x,B(x)) \geq  0$ for $x\in(x_0,\bar x)$.

By the definition of $B$, we have that $w(x,B(x))=f^L(x)$ and $z(x,B(x))=g^R(B(x))$, for all $x\in(\alpha_\mu,(g^L)^{-1}(\beta_\mu))$. In particular, $w(\cdot,B(\cdot))$ is increasing (resp., decreasing) on $(\alpha_\mu,e^L)$ (resp., $(e^L,\beta_\mu)$), while $z(\cdot,B(\cdot))$ is decreasing (resp., increasing) on $(\alpha_\mu,B^{-1}(e^R))$ (resp.,  $(B^{-1}(e^R),\beta_\mu)$)

We show that $\tilde S_{a,a}^{x,B(x)} \geq \tilde S_{a,b}^{B(x),z(x,B(x))=g^R(B(x))}$ for $x\in(x_0,\bar x)$. It then follows that $\Lambda(x,B(x))\geq 0$ for $x\in(x_0,\bar x)$.  Since $B(\cdot)$ is non-decreasing, by the convexity of $a(\cdot)$ we have that $\tilde S_{a,a}^{x_0,y_0=B(x_0)}\leq\tilde S_{a,a}^{x,B(x)}$ for all $x\in[x_0,(g^L)^{-1}(\beta_\mu))$. On the other hand, on $[x_0,\bar x)$, $x\mapsto z(x,B(x))=g^R(B(x))$ is non-increasing, and therefore, using that $a\geq b$ and the convexity of $a(\cdot)$ and $b(\cdot)$, we have that, $\tilde S_{a,b}^{y_0=B(x_0),z(x_0,y_0)=g^R(B(x_0))}\geq \tilde S_{a,b}^{B(x),z(x,B(x))=g^R(B(x))}$ on $[x_0,\bar x)$. It follows that, for all $x\in[x_0,\bar x)$, $\tilde S_{a,b}^{B(x),z(x,B(x))=g^R(B(x))}\leq \tilde S_{a,b}^{y_0=B(x_0),z(x_0,y_0)=g^R(B(x_0))}\leq\tilde S_{a,a}^{x_0,y_0=B(x_0)}\leq\tilde S_{a,a}^{x,B(x)}$, and thus $\Lambda(x , B(x))\geq 0$.

\subsection*{Proof of Corollary \ref{cor:existenceContinuity}}
We have that $\Lambda(x,B(x))\geq 0$ for all $x \in \sA^\Lambda_{>}\cup\sA^\Lambda_{=}=([x_1, \tilde x)\setminus\sA^\Lambda_{<})$. Then the existence of $y^*(x)$ follows from the definition of $\tilde x$ and the intermediate value theorem, while the uniqueness follows from Proposition~\ref{prop:Lambday}. We are left to prove the continuity.

We first claim that $\Lambda(x,y)<0$ for all $x\in[x_1,\tilde x)$ and $y\in(x,y^*(x))$. Indeed, $\Lambda(x,y^*(x))\leq0$ for all $x\in[x_1,\tilde x)$, and thus the claim follows by Proposition \eqref{prop:Lambday}. 

Fix $x\in[x_1,\tilde x)$ and $y\in(x,y^*(x))$. Then $\Lambda(x,y)<0$. By the continuity of $\Lambda(\cdot,y)$, there exists a small enough $\epsilon\in(0,y\wedge\tilde x-x)$ such that $\Lambda(x',y)<0$ for all $x'\in[x,x+\epsilon)$. Recall that $B$ is non-decreasing. If $x'\in\sA^\Lambda_>\cup\sA^\Lambda_=$, then $y<y^*(x')\leq B(x')$. If $x'\in\sA^\Lambda_<$, then $y<y^*(x)\leq B(x)\leq\beta_\mu=y^*(x')$. In either case we have that $y<y^*(x')$ for all $[x,x+\epsilon)$, and therefore $\liminf_{x'\downarrow x}y^*(x')\geq y$. Since  $y\in(x,y^*(x))$ was arbitrary, we conclude that $\liminf_{x'\downarrow x}y^*(x')\geq y^*(x)$.

We now show that $\limsup_{x'\downarrow x}y^*(x')\leq y^*(x)$. Fix $x\in\sA^\Lambda_>$ so that $y^*(x) < B(x)$. Choose $y\in(y^*(x),B(x))$ so that $\Lambda(x,y)>0$. Then by the continuity of $\Lambda(\cdot,y)$, there exists a small enough $\epsilon\in(0,(y\wedge\tilde x-x))$ such that $\Lambda(x',y)>0$ for all $x'\in[x,x+\epsilon)$. Then $y^*(x')<y$ for all $x'\in[x,x+\epsilon)$ and thus $\limsup_{x' \downarrow x} y^*(x') \leq y$. But $y\in (y^*(x),B(x))$ was arbitrary, and thus $\limsup_{x' \downarrow x} y^*(x') \leq y^*(x)$. Now  suppose that $x\in\sA^\Lambda_=\cup\sA^\Lambda_<$, so that $y^*(x)=B(x)$. Then, for $x'\in(x,\tilde x)$, we have $y^*(x') \leq B(x')$, and since $B$ is continuous, $\limsup_{x' \downarrow x} y^*(x') \leq \limsup_{x' \downarrow x} B(x') = B(x) = y^*(x)$. 

We conclude that $\liminf_{x' \downarrow x} y^*(x')=\limsup_{x' \downarrow x} y^*(x')=y^*(x)$, which proves the right-continuity on $[x_1,\tilde x)$. The argument for the left-continuity on $(x_1,\tilde x)$ is similar.

\subsection*{Proof of Lemma \ref{lem:tildeX}}
Since $x< y^*(x)\leq B(x)$ for all $x\in[x_1,\tilde x)$, the claim  immediately follows if $\tilde x=\beta_\mu$. In the rest of the proof we assume that $\tilde x<\beta_\mu$.

By Corollary \ref{cor:existenceContinuity}, $x<y^*(x)$ for $x\in[x_1,\tilde x)$, and therefore $\limsup_{x \uparrow \tilde{x}}y^*(x)\geq\liminf_{x \uparrow \tilde{x}}y^*(x)\geq\tilde{x}$.
To prove that we have equalities throughout, we argue by contradiction. Suppose $y=\limsup_{x \uparrow \tilde{x}}y^*(x)>\tilde{x}$. Note that, since $\tilde x<\beta_\mu$ and $y^*\leq B$ on $[x_1,\tilde x)$, we have that $y\leq\lim_{x\uparrow\tilde x}B(x)=B(\tilde x)<\infty$. 

Let $(x_n)_{n\geq 1} \uparrow \tilde{x}$ be such that $(y^*(x_n))_{n\geq 1} \to y$. Then 
$$
0 \geq \Lambda(x_n,y^*(x_n))=T_{a,a}^{x_n, y^*(x_n)}(z(x_n,y^*(x_n))) - b(z(x_n,y^*(x_n)));
$$
note that strict inequality is possible in the case $x_n\in\sA^\Lambda_<$, since then $y^*(x_n)=B(x_n)=\beta_\mu$ and, by definition, $\Lambda(x,B(x))<0$ for all $x\in\sA^\Lambda_<$. Also, $x_n<y^*(x_n)$ for each $n\geq 1$ (resp., $\tilde x<y$), and thus $T_{a,a}^{x_n,y^*(x_n)}$ (resp.,  $T_{a,a}^{\tilde x,y}$) is well-defined. Taking limits, and using that $z(x_n,y^*(x_n)) \rightarrow z(\tilde{x},y)$ by the joint continuity of $z(\cdot,\cdot)$, we get
$$ 0 \geq T_{a,a}^{\tilde{x}, y}(z(\tilde{x},y)) - b(z(\tilde{x},y)) = \Lambda(\tilde{x},y).
$$
Since $y>\tilde{x}$, using Proposition~\ref{prop:Lambday} we have that $\Lambda(\tilde{x},y') <0$ for all $y' \in (\tilde{x},y)$. By the continuity of $\Lambda(\cdot,y')$, there exists (small enough) $\epsilon>0$ such that $\Lambda(x,y') <0$ for all $x \in [\tilde{x},\tilde{x}+\epsilon)\subset[\tilde x,\beta_\mu)$. Then (again by Proposition~\ref{prop:Lambday}) $\Lambda(x,x+) <0$ for all $x \in [\tilde{x},\tilde{x}+\epsilon)$, contradicting the definition of $\tilde{x}$.
\end{document}

\newpage

\appendix
\section{Idea for alternative approach}

For a continuous measure $\xi$ and for $u \leq v$ let $\xi_u^v = \xi|_{(u,v)}$ and $\bar{\xi}_x^y = \xi - \xi_u^v = \xi|_{(-\infty,u)\cup(v,\infty)}$. For $v<u$ let $\xi_u^v = - \xi_v^u$. 

Throughout this section we assume that $\mu \leq_{cx} \nu$ satisfy the Dispersion Assumption~\ref{def:dispersion}.

\begin{lem}\label{lem:couplingsDispersionAssumptionX}
    %Suppose that $\mu\leq_{cx}\nu$.
    Let $(f^L,g^L)$ (resp., $(f^R,g^R)$) be the functions that support the left-curtain (resp., right-curtain) coupling (see Theorem \ref{thm:curtain}). Then
\begin{enumerate}
        \item $f^L,g^L,f^R,g^R:(\alpha_\mu,\beta_\mu)\to(\alpha_\nu,\beta_\nu)$ are continuous;
        \item $\{x:f^L(x)=x=g^L(x)\}=(\alpha_\mu,e^L]$ and $\{x:f^R(x)=x=g^R(x)\}=[e^R,\beta_\mu)$;
        \item $g^L$ (resp., $f^R$) is strictly increasing, while $f^L$ (resp., $g^R$) is strictly decreasing on $(e^L,\beta_\mu)$ (resp.,   $(\alpha_\mu,e^R)$).
        \item { $g^L>id$ on $(e^L,\beta_\mu)$ (resp., $f^R<id$ on $(\alpha_\mu,e^R)$.}
    \end{enumerate}
    Further $\lim_{x\uparrow\beta_\mu}f^L(x)=\alpha_\nu=\lim_{x\downarrow\alpha_\mu}f^R(x)$ and $\lim_{x\uparrow\beta_\mu}g^L(x)=\beta_\nu=\lim_{x\downarrow\alpha_\mu}g^R(x)$.
\end{lem}
\begin{proof} 
    To prove the claim, one needs to construct the functions $(f^L,g^L)$ and $(f^R,g^R)$ under each set of assumptions, and then show that they satisfy the required properties. See, for example, Henry-Labord\`ere and Touzi \cite[Section 3.4]{HenryLabordereTouzi:16} where the authors construct the supporting functions by solving a system of coupled ODEs. In particular, under the Dispersion Assumption, define $f^L(x)=x=g^L(x)$ on $(\alpha_\mu,e^L]$, while on $(e^L,\beta_\mu)$ let $(f^L,g^L)$ be the solutions to \eqref{eq:fgode}. 
    
    Alternatively, a more general construction via potential functions is provided by Hobson and Norgilas \cite{HobsonNorgilas:22}. Fix $x\in(\alpha_\mu,\beta_\mu)$.  By Theorem \ref{thm:curtain}, the left-curtain coupling embeds $\mu_{\alpha_\mu}^x$  into $\nu$ via the shadow measure $S^\nu(\mu_{\alpha_\mu}^x)$. Then the main insight of Hobson and Norgilas \cite{HobsonNorgilas:22} is that we can recover the locations $(f^L(x),g^L(x))$ from the potential function $P_{S^\nu(\mu_{\alpha_\mu}^x)}$. In particular, by Proposition \ref{prop:potential} we have that $P_{\nu-S^\nu(\mu_{\alpha_\mu}^x)}=(P_\nu-P_{\mu_{\alpha_\mu}^x})^c$. In the case $(P_\nu-P_{\mu_{\alpha_\mu}^x})^c(x)=(P_\nu-P_{\mu_{\alpha_\mu}^x})(x)$ one sets $f^L(x)=x=g^L(x)$. On the other hand, if $(P_\nu-P_{\mu_{\alpha_\mu}^x})^c(x)<(P_\nu-P_{\mu_{\alpha_\mu}^x})(x)$, then $(P_\nu-P_{\mu_{\alpha_\mu}^x})^c$ is linear and satisfies $(P_\nu-P_{\mu_{\alpha_\mu}^x})^c<(P_\nu-P_{\mu_{\alpha_\mu}^x})$ on some open interval $I_x\ni x$. In this case $(f^L(x),g^L(x))=I_x$, i.e., the locations of the supporting functions coincide with the end-points of $I_x$. 

    Under the Dispersion Assumption, we have that $\mu\leq\nu$ (and thus also $\mu_{\alpha_\mu}^x\leq\mu\leq\nu$ for $x\leq e^L$) on $(\alpha_\mu,e^L]$. Then, for $x\in(\alpha_\mu,e^L]$, $(P_\nu-P_{\mu_{\alpha_\mu}^x})$ is convex and therefore $(P_\nu-P_{\mu_{\alpha_\mu}^x})^c=(P_\nu-P_{\mu_{\alpha_\mu}^x})$. It follows that $f^L(x)=x=g^L(x)$. On the other hand, for $x\in(e^L,\beta_\mu)$, we have that $(P_\nu-P_{\mu_{\alpha_\mu}^x})^c(x)<(P_\nu-P_{\mu_{\alpha_\mu}^x})(x)$ (this follows from the fact that $(P_\nu-P_{\mu_{\alpha_\mu}^x})$ is strictly convex and increasing on $(\alpha_\mu,e^L]$, strictly concave on $(e^L,e^R\wedge x)$ and again strictly convex on $(e^R\wedge x,\beta_\mu)$; note that $(P_\nu-P_{\mu_x})$ is strictly positive and continuous on $(\alpha_\nu,\beta_\nu)$). In this case $(f^L(x),g^L(x))=I_x$.  One can show that $f^L,g^L$ are indeed continuous and strictly monotone (since the densities $\rho,\eta$ are strictly positive on the respective intervals). The case of the right-curtain martingale coupling can be proved using symmetric arguments.
\end{proof}

\begin{cor}\label{cor:LCandRCx}
Suppose that $\mu\leq_{cx}\nu$ satisfy the Dispersion Assumption (Definition \ref{def:dispersion}).

For each $x\in(\alpha_\mu,\beta_\mu)$ and $B\in \sB ((\alpha_\nu,\beta_\nu))$,
\begin{enumerate}
\item[(i)] $\pi^L((\alpha_\mu,x]\times B)=S^\nu(\mu_{\alpha_\mu}^x)(B)=\mu_{\alpha_\mu}^{f^L(x)}(B)+\nu_{f^L(x)}^{g^L(x)}(B)$
and ${\mu}_x^{\beta_\mu} \leq_{cx} \nu-S^\nu(\mu_{\alpha_\mu}^{x})$; 
\item[(ii)] $\pi^R([x,\beta_\mu)\times B)=S^\nu(\mu_{x}^{\beta_\mu})(B)=\mu^{\beta_\mu}_{g^R(x)}(B)+\nu_{f^R(x)}^{g^R(x)}(B)$
and ${\mu}^x_{\alpha_\mu} \leq_{cx} \nu-S^\nu(\mu^{\beta_\mu}_{x})$.
\end{enumerate}
\end{cor}
\begin{proof}
    This follows immediately from the definitions and properties of the shadow measure and the left- and right-curtain martingale couplings; see Definition \ref{def:shadow} and Theorem \ref{thm:curtain}.

\end{proof}

Define $B:(\alpha_\mu,\beta_\mu) \to (\alpha_\mu, \beta_\mu]\cap(\alpha_\nu,\beta_\nu)$ by
\[ B(x) = \begin{cases} (f^R)^{-1}(x)  & \alpha_\mu < x \leq e^L  \\
 (f^R)^{-1}(g^L(x))  &  e^L < x \leq \bar{x}  \\
g^L(x)  & \bar{x} < x \leq (g^L)^{-1}(\beta_\mu) \\
\beta_\mu & (g^L)^{-1}(\beta_\mu) < x < \beta_\mu \end{cases}  \]

\begin{lem}
\label{lem:new:formofShadow}
Suppose that $\mu\leq_{cx}\nu$ satisfy the Dispersion Assumption (Definition \ref{def:dispersion}).

Suppose $e^L \leq x < y \leq e^R$.

Suppose $y \leq B(x)$. Then
\[ S^\nu(\bar{\mu}_x^y) = \mu_{\alpha_\mu}^w + \nu_w^z + \mu_{z}^{\beta_\nu} \]
for some $w \leq e^L < e^R \leq z$.

Conversely, suppose $y>B(x)$. Then
\[ S^\nu(\bar{\mu}_x^y) = \mu_{\alpha_\mu}^{f^L(x)} + \nu_{f^L(x)}^{g^L(x)} + \nu_{f^R(y)}^{g^R(y)} + \mu_{g^R(y)}^{\beta_\nu} .\]
\end{lem}

\begin{proof}
Consider $P_\nu - P_{\bar{\mu}_x^y}$. Using that $e^L \leq x < y \leq e^R$ we have that this function is convex on $(\alpha_\mu,e^L)$, concave on $(e^L,x)$, convex on $(x,y)$, concave on $(y,e^R)$ and convex on $(e^R,\beta_\nu)$. It follows that the set where $P_\nu -  P_{\bar{\mu}_x^y} > (P_\nu -  P_{\bar{\mu}_x^y})^c$ is either a single interval containing $(e^L,e^R)$ or two intervals, one containing $e^L$ and one containing $e^R$. 

First we argue that it is two intervals if and only if $g^L(x) < f^R(y)$. Note that this last condition is equivalent to $y > B(x)$.

From properties of the left and right curtain coupling (and especially Corollary~\ref{cor:LCandRC}) we have that $S^\nu(\mu_{\alpha_\mu}^x) = \mu_{\alpha_\mu}^{f^L(x)} + \nu_{f^L(x)}^{g^L(x)}$ and $S^\nu(\mu_y^{\beta_\mu}) = \nu_{f^R(x)}^{g^R(x)} + \mu_{g^R(x)}^{\beta_\nu}$.
Then, if $g^L(x) \leq f^R(y)$ we have that $S^\nu(\mu_{\alpha_\mu}^x) + S^\nu(\mu_y^{\beta_\mu}) \leq \nu$ and that 
\[ S^\nu(\bar{\mu}_x^y) = S^\nu(\mu_{\alpha_\mu}^x) + S^\nu(\mu_y^{\beta_\mu}) = \mu_{\alpha_\mu}^{f^L(x)} + \nu_{f^L(x)}^{g^L(x)} + \nu_{f^R(y)}^{g^R(y)} + \mu_{g^R(y)}^{\beta_\nu} \]
Then 
\[ \nu - S^\nu(\bar{\mu}_x^y)  = (\nu-\mu)_{\alpha_\mu}^{f^L(x)} + \nu_{g^L(x)}^{f^R(y)} + (\nu -\mu)_{g^R(y)}^{\beta_\nu} \]
and $P_\nu - P_{S^\nu(\bar{\mu}_x^y)}$ is (strictly) convex on $(\alpha_\mu,f^L(x))$, linear on $(f^L(x),g^L(x))$, (strictly) convex on 
$(g^L(x),f^R(y))$ linear on $(f^R(y),g^R(y))$ and (strictly) convex on $(g^R(y),\beta_\nu)$. Since $P_\nu - P_{\bar{\mu}_x^y}$ is  strictly convex on $(\alpha_\nu,e^L)$, $(x,y)$ and $(e^R,\beta_\mu)$ the intervals where $P_\nu - P_{S^\nu(\bar{\mu}_x^y)}$ is linear are precisely the intervals where $P_\nu - P_{\bar{\mu}_x^y} > (P_\nu - P_{\bar{\mu}_x^y})^c$; in particular there are two of them.

Conversely, suppose 
$P_\nu -  P_{\bar{\mu}_x^y} > (P_\nu -  P_{\bar{\mu}_x^y})^c$ on two intervals. Let the endpoints of these intervals be $(f_1,g_1)$ and $(f_2,g_2)$ respectively where $g_1<f_2$. Then from the shape of $P_\mu - P_{\bar{\mu}_x^y}$ (and especially the fact that the endpoints must be located in regions where this function is convex) we conclude that $f_1 < e^L<x<g_1 < f_2 < y < e^R < g^2$.

Note that for $u \leq y$, $P_{\mu_y^{\beta_\mu}}(u) = 0$ and hence $(P_\nu - P_{\bar{\mu}_x^y})(u) = P_\nu(u) - P_{\mu_{\alpha_\mu}^x}(u)-P_{\mu_y^{\beta_\mu}}(u) = P_\nu(u) - P_{\mu_{\alpha_\mu}^x}(u)$ and then also
$(P_\nu - P_{\bar{\mu}_x^y})^c = (P_\nu - P_{\mu_{\alpha_\mu}^x})^c$ on $(-\infty,y)$. But $P_\nu  - (P_\nu - P_{\mu_{\alpha_\mu}^x})^c = P_{S^\nu(\mu_{\alpha_\mu}^x)}$, and so the endpoints of the interval on which $P_\nu - P_{\mu_{\alpha_\mu}^x}$ is linear are $f^L(x)$ and $g^L(x)$. 
Then $f_1 = f^L(x)$ and $g_1 = g^L(x)$.

Applying a corresponding argument for $v \geq x$ and using $C_\xi$ given by $C_\xi(k) = \int_{k}^\infty (u-k) \xi(du)$ we find that $(C_\nu - C_{\bar{\mu}_x^y})^c$ is linear on the interval $[f^R(x),g^R(x)]$. This translates back to the fact that $(P_\nu - P_{\bar{\mu}_x^y})^c$ is also linear on the interval $[f^R(x),g^R(x)]=[f_2,g_2]$. 

It follows that $g^L(x)=g_1<f_2<f^R(y)$ as claimed. Then also 
\[ S^\nu(\bar{\mu}_x^y) = \mu_{\alpha_\mu}^{f^L(x)} + \nu_{f^L(x)}^{g^L(x)} + \nu_{f^R(y)}^{g^R(y)} + \mu_{g^R(y)}^{\beta_\nu} .\]

More importantly it follows that if there is just a single interval on which $(P_\nu - P_{\bar{\mu}_x^y}) > (P_\nu - P_{\bar{\mu}_x^y})^c$ then we must have $g^L(x) \geq f^R(y)$. This is equivalent to $y \leq B(x)$. Let the endpoints of this interval be $w$ and $z$. Then $w<e^L<x$ and $z>e^R$ and for $u \leq w$,
\[ P_{S^\nu(\bar{\mu}_x^y)}(u) = (P_\nu - (P_\nu - P_{\bar{\mu}_x^y})^c)(u) = P_\nu - (P_\nu - P_{\bar{\mu}_x^y})(w) = P_{\bar{\mu}_x^y}(u) = P_\mu(u) \]
so that $S^\nu(\bar{\mu}_x^y) = \mu$ on $(-\infty,w)$. A similar argument gives that $S^\nu(\bar{\mu}_x^y) = \mu$
on $(z,\infty)$. Finally, on $(w,z)$ we have that up to a linear function 
$P_{S^\nu(\bar{\mu}_x^y)} = P_\nu$; on taking second derivatives we conclude that $S^\nu(\bar{\mu}_x^y) = \nu$ on $(w,z)$. Putting this all together,
\[ S^\nu(\bar{\mu}_x^y) = \mu_{\alpha_\mu}^w + \nu_w^z + \mu_{z}^{\beta_\nu} \]
for some $w \leq e^L < e^R \leq z$. 

\begin{comment}
Then $S^\nu(\mu_{\alpha_\mu}^x) = \mu_{\alpha_\mu}^{f^L(x)} + \nu_{f^L(x)}^{g^L(x)}$. Moreover
\[ S^{\nu - S^\nu(\mu_{\alpha_\mu}^x)}
(\mu_y^{\beta_\mu}) = \mu_z^{\beta_\mu} + (\nu - \mu)_{w}^{f^L(x)} \]
Adding these expressions we find
\[ S^\nu(\bar{\mu}_x^y) = \mu_{\alpha_\mu}^w + \nu_w^z + \alpha_{z}^{\beta_\nu} \]
for some $w \leq e^L < e^R \leq z$.
\end{comment}

\end{proof}

\begin{cor}
\label{cor:new:formofShadow}
Suppose that $\mu\leq_{cx}\nu$ satisfy the Dispersion Assumption (Definition \ref{def:dispersion}).

Suppose the dispersion assumption holds. %Suppose $e^L \leq x < y \leq e^R$.

Suppose $y \leq B(x)$. Then
\begin{equation}
    \label{eq:Shadowmuxy}
S^\nu(\bar{\mu}_x^y) = \mu_{\alpha_\mu}^w + \nu_w^z + \mu_{z}^{\beta_\nu} 
\end{equation}
for some $w \leq e^L < e^R \leq z$.
\end{cor}

\begin{proof}
If $e^L \leq  x < y \leq e^R$ then the result is covered by Lemma~\ref{lem:new:formofShadow}.

Note that if $x \leq e^L < e^R \leq y$ then $y>B(x)$. Otherwise, we must have that either $(x< y \wedge e^L,y<e^R)$ or $(x>e^L,y> x \vee e^R)$. Then $P_\nu - P_{\bar{\mu}_x^y}$ is convex, and then concave and then convex again, so that there is a single interval on which $(P_\nu - P_{\bar{\mu}_x^y})^c$ is linear.
Just as in the proof of Lemma~\ref{lem:new:formofShadow} we can find $w<e^L$ and $z>e^R$ such that $[w,z]$ is the (unique) interval on which $P_\nu - P_{\bar{\mu}_x^y} > (P_\nu - P_{\bar{\mu}_x^y})^c$. Then the shadow of $\bar{\mu}_x^y$ is as given in \eqref{eq:Shadowmuxy}    
\end{proof}

\begin{cor}
Suppose that $\mu\leq_{cx}\nu$ satisfy the Dispersion Assumption (Definition \ref{def:dispersion}).
    
Suppose that $y \leq B(x)$. Then
    $\mu_x^y \leq_{cx} (\nu-\mu)_{\alpha_\nu}^w + (\nu-\mu)_z^{\beta_\nu}$.

%Suppose that $y>B(x)$. 
\end{cor}

%{\cgreen This directly replaces the second half of Lemma 6.8}

\begin{proof}
This follows from \eqref{eq:Shadowmuxy} and the associativity property of the shadow measure (Proposition~\ref{prop:shadow_assoc}).
which implies that $\mu_x^y \leq_{cx} \nu - S^\nu(\bar{\mu}_x^y)$.\end{proof}

\newpage

{\cblue
\section{The cases (C1), (C2) and (C3) are mutually exclusive.}

We show that Cases (C1)(a) and (C3) are mutually exclusive, from which the full result follows. In particular, in Corollary~\ref{cor:C3capC1=0} below we show that if $\hat{z}^L \geq \hat{z}^R$ and there exists $z^* \in (e^R,e^L)$ for which $C^L(z^*)=C^R(z^*) \geq a(z^*)$ and $\Psi$ defined in \eqref{eq:Bdef} is convex 
%(or if $\hat{z}^L = \hat{z}^R$ and $\Psi$ in \eqref{eq:Bdef} defined using $z^* = \hat{z}^L = \hat{z}^R$ is convex)
%(and $C(z^*)> a(z^*)$) 
then there can be no other $z \in (e^L,e^R)$ for which $C^L(z)=C^R(z)$ and then we must be in Case (C1), and then $\Psi$ is uniquely defined. Note that we do not prove uniqueness in the other case; if $z_1\in (e^L,e^R)$ is such that $C^L(z_1)= C^R(z_1){ > a(z_1)}$ but $\Psi$ is not convex, there may be other $z_2 \in (e^L,e^R)$ for which 
$C^L(z_2)= C^R(z_2)$ but for any such $z_2$, $\Psi$ must again fail to be convex.

For $x\in(e^L,\beta_\mu)$ (resp., $x\in(\alpha_\mu,e^R)$), let $S^L_x$ (resp., $S^R_x$) be the slope of $T^L_x$ (resp., $T^R_x$).
{\bf  Do we use this notation anywhere else. If so we might need it.}

%{\cblue Should we move this and the proof of Corollary~\label{cor:C3capC1=0}   to an appendix?}

\begin{lem} 
\label{lem:decreasingS}
Suppose we are in Case (a) of (C1) above. 

Let $\hat{x}^L = (g^L)^{-1}(\hat z^L)$. 
Then $S^L_x$ is decreasing in $x$ on $(e^L,\hat{x}^L)$.

    Similarly, if 
    $\hat{x}^R = (f^R)^{-1}(\hat{z}^R)$ 
    then $S^R_x$ is decreasing on $(\hat{x}^R,e^R)$.
\end{lem}

\begin{proof}
    We prove the first statement; the proof of the second follows from a symmetrical argument.

    Suppose $e^L<x<x'<\hat{x}^L$. Then by the properties of the left curtain coupling, $f^L(x') < f^L(x) < e^L < x < x'\wedge g^L(x) { \leq} x'\vee g^L(x) < g^L(x')$. Suppose that $x'<g^L(x)$. 
    
    Since $x< \hat{x}^L$ we have $T^L_{x}(g^L(x)) > a(g^L(x))$ and then since $a$ is convex we must have that $a$ lies (strictly) below $T^L_{x}$ on $(x,x'] \subseteq (x,g^L(x))$.
    Then $a(x')< T^L_{x}(x') = a(x) + (x'-x) S^L_{x}$.

    Similarly, since $b$ is convex and $T^L_{x}(g^L(x)) > a(g^L(x)) \geq b(g^L(x))$, we have that 
    $T^L_{x} > b$ on $(f^L(x),x')$ with $T^L_{x}(f^L(x))= b(f^L(x))$ and  $b(y) > T^L_{x}(y)$ for $y<f^L(x)$. Then, since $f^L(x')<f^L(x)$, we have $b(f^L(x')) > T^L_{x}(f^L(x')) = b(f^L(x)) + (f^L(x') - f^L(x)) S^L_{x}$.

    Putting these statements together,
    \begin{equation}
        \label{eq:SLx}
     S^L_{x'} = \frac{a(x')-b(f^L(x'))}{x'-f^L(x')} < \frac{a(x) - b(f^L(x))}{x'-f^L(x')}  + S^L_{x}  \frac{(x'-x) + (f^L(x) - f^L(x'))}{x' - f^L(x')}. %S^L_{x^*} 
    \end{equation}

But $\frac{a(x)-b(f^L(x))}{x' - f^L(x')} =\frac{a(x) - b(f^L(x))}{x - f^L(x)} \frac{x - f^L(x)}{x' - f^L(x')} = S^L_x\frac{x - f^L(x)}{x' - f^L(x')}$. Substituting this into \eqref{eq:SLx} we find that $S^L_{x'} < S^L_x$.

Finally, if $g^L(x) \leq x'$ we can find $n \in \NN$ and a sequence $x=x_0< \ldots < x_k < x_{k+1} < \ldots < x_n=x'$ such that $x_{k+1} < g^L(x_k)$ for all $k=0,...,n-1$. By the above argument, we find that $S_{x_{k+1}} <S_{x_k}$ for all $k=0,...,n-1$, so that again $S_{x'}< S_x$.
\end{proof}

\begin{cor}
\label{cor:C3capC1=0}
Suppose 
$e^L < \hat{z}^R \leq \hat{z}^L < e^R$ so that there exists $z^* \in (e^L,e^R)$ such that $C^L(z^*)= C^R(z^*) \geq a(z^*)$.

Suppose $\Psi$ defined in \eqref{eq:Bdef} is convex. Then $z^*$ is the unique point $z \in (e^L,e^R)$ such that  $C^L(z)=C^R(z)$. 

\end{cor}

\begin{proof}
    %The first statement follows from the intermediate value theorem applied to $C^L-C^R$ on $[\hat{z}^R, \hat{z}^L]$. In particular, $z^*\in[\hat z^R,\hat z^L]$. 
    
    Recall that $x_L^* := (g^L)^{-1}(z^*)$ and $x_R^* := (f^R)^{-1}(z^*)$.    
    Note that, since $z^*\in [\hat z^R,\hat z^L]$, we have that $T^L_{x^*_L}(z^*)=T^R_{x^*_R}(z^*) \geq a(z^*)$, and therefore (by the convexity of $a$) we must have that $T^L_{x^*_L} \leq a$ on $(-\infty,x^*_L)$ and $T^L_{x^*_L} \geq a$ on $(x^*_L,z^*=g^L(x^*_L)]$, and $T^R_{x^*_R} \leq a$ on $(x^*_R,\infty)$ and $T^R_{x^*_R} \geq a$ on $[z^*=f^R(x^*_R),x^*_R)$.

    Suppose $z \in (e^L,z^*)$. Then $z = g^L(x)$ for some $x \in (e^L, x_L^*)$ (namely, $x=(g^L)^{-1}(z)$). Then $T^L_x(x)=a(x) \geq T^L_{x^*_L}(x)$ and (by Lemma~\ref{lem:decreasingS}) $S^L_x > S^L_{x_L^*}$, and it follows that $C^L(z) = C^L(g^L(x)) = T^L_x(g^L(x)) > T^L_{x_L^*}(g^L(x)) = T^L_{x^*_L}(z)$. (Indeed we can conclude $T^L_x(w)>T^L_{x^*_L}(w)$ for any $w>x$.) By a similar argument, for $z \in (z^*,\hat{z}^L)$, $C^L(z) < T^L_{x_L^*}(z)$. Applying similar arguments to the right-curtain coupling we find that $C^R(z) < T^R_{x^*_R}(z)$ for $z \in (\hat{z}^R,z^*)$ and $C^R(z) > T^R_{x^*_R}(z)$ for $z \in (z^*,e^R)$. Finally, since $\Psi$ is convex by hypothesis, for $z<z^*$ we have
    $T^R_{x^*_R}(z) \leq T^L_{x^*_L}(z)$. %{\cblue what if we have a straight line?}. 
    Hence $C^R(z)<T^R_{x^*_R}(z) \leq T^L_{x^*_L}(z) < C^L(z)$ for $z \in (\hat{z}^R,z^*)$ %({\cblue I think we only have this for $z\in[\hat z^R,z^*)$}); 
    a similar argument gives $C^L>C^R$ on $(z^*,\hat{z}^L)$. %({\cblue I think we only have this for $z\in(z^*,\hat z^L]$}). 
    Hence $z^*$ is the unique point in $(\hat z^R,\hat z^L)$ at which $C^R = C^L$. 

To complete the proof we show that for $z \in (e^L,\hat{z}^R]$ (here without loss of generality we assume that $e^L<\hat{z}^R$, since otherwise there is nothing to prove) we have that $C^L(z)>a(z) \geq C^R(z)$ (with a reverse set of inequalities holding on $ [\hat{z}^L,e^R)$, under the assumption that $\hat{z}^L<e^R$). 
But $C^L(z) > a(z)$ since $z \leq \hat{z}^R < z^*<\hat{z}^L$, and thus we are left to show that $a(z) \geq C^R(z)$ for $z \in (e^L,\hat{z}^R]$. For $z \in (e^L, \hat{z}^R]$, let $x= (f^R)^{-1}(z)$. If $z=\hat{z}^R$, then by the continuity of $C^R$ and the definition of $\hat z^R$ we have that $C^R(z)=a(z)$, and thus in the rest of the proof we assume that $z\in (e^L,\hat{z}^R)$. Then $z<x<\hat x^R=(f^R)^{-1}(\hat{z}^R)$.

Suppose that $x\in[\hat{z}^R,\hat{x}^R)$. Since $a$ is convex and $T^R_{\hat x^R}(y)=a(y)$ for $y\in\{\hat{z}^R,\hat{x}^R\}$, we have that $T^R_{\hat x^R}(x)\geq a(x)=T^R_x(x)$. On the other hand, since $b$ is convex and $a\geq b$ on $(\alpha_\mu,\beta_\mu)$, using that $T^R_{\hat x^R}(\hat x^R)=a(\hat x^R)\geq b(\hat x^R)$ and $T^R_{\hat x^R}(g^R(\hat x^R))=b(g^R(\hat x^R))$ we have that $T^R_{\hat x^R}\leq b$ on $[g^R(\hat x^R),\infty)$. In particular, $T^R_{\hat x^R}(g^R(x))\leq b(g^R(x))=T^R_x(g^R(x))$, where we use that $x<\hat x^R$. Since $T^R_{\hat x^R}(x)\geq T^R_x(x)$ and $T^R_{\hat x^R}(g^R(x))\leq T^R_x(g^R(x))$, it follows that $S^R_{\hat x^R}\leq S^R_x$, and therefore $T^R_x\leq T^R_{\hat x^R}$ on $(-\infty, x]$. Hence $C^R(z)=T^R_x(z)\leq T^R_{\hat x^R}(z)\leq a(z)$ as claimed, where the last inequality follows from the fact that $a$ is convex and $T^R_{\hat{x}_R}(y)=a(y)$ for $y\in\{\hat{z}^R,\hat{x}^R\}$. 

Now suppose that $x<\hat{z}^R$, so that $x\in(z,\hat z^R)$. We will use again (as in the previous case) that (due to the convexity of $a$ and $b$) $T^R_{\hat x^R}\leq a$ on $(-\infty,\hat z^R]$ and $T^R_{\hat x^R}\leq b$ on $[g^R(\hat x^R),\infty)$. Since $x<\hat{z}^R<\hat x^R$, we have that $g^R(x)>g^R(\hat x^R)$, and therefore $T^R_{\hat x^R}(x)\leq a(x)=T^R_x(x)$ and $T^R_{\hat x^R}(g^R(x))\leq b(g^R(x))=T^R_x(g^R(x))$. It follows that $T^R_x\geq T^R_{\hat x^R}$ on $[x,g^R(x)]$. But then, since $a$ is convex, using that $T^R_x(x)=a(x)$ and $T^R_x(\hat z^R)\geq  T^R_{\hat x^R}(\hat z^R)=a(\hat z^R)$, we have that $a\geq T^R_x$ on $(-\infty,x]$. It follows that $a(z)\geq T^R_x(z)=C^R(z)$, which finishes the proof.

%%%what follows is all commented out

%$z < x_R$ and $T^R_{x_R}$ interects $a$ at $x_R$ and another point above $x^*_R > x_R$.}

\end{proof}

\end{document}